\renewcommand{\arraystretch}{1.2}
\definecolor{Darkblue}{rgb}{0,0,0.4}
\definecolor{Brown}{cmyk}{0,0.61,1.,0.60}
\definecolor{Purple}{cmyk}{0.45,0.86,0,0}
\definecolor{Darkgreen}{rgb}{0.133,0.543,0.133}
\newif\ifdraft 
\newcommand{\namedref}[2]{\hyperref[#2]{#1~\ref*{#2}}}
\newcommand{\propref}[1]{\hyperref[#1]{property~(\ref*{#1})}}
\newcommand{\blind}{the authors}
\newcommand{\Blind}{The authors}
\newtheorem{theorem}{Theorem}
\newtheorem{lemma}{Lemma}
\newtheorem{definition}{Definition}
\newtheorem{claim}{Claim}
\newtheorem{observation}{Observation}
\newtheorem{question}{Question}
\newtheorem{remark}{Remark}
\newcommand{\poly}{\mathrm{poly}}
\newcommand{\polylog}{\mathrm{polylog}}
\newcommand{\R}{\mathbb{R}}
\newcommand{\N}{\mathbb{N}}
\newcommand{\supp}{\mathrm{supp}}
\newcommand{\dm}{\mathrm{diam}}
\newcommand{\cost}{\mathsf{Cost}}
\newcommand{\tw}{\mathsf{tw}}
\newcommand{\pw}{\mathsf{pw}}
\newcommand{\leavs}{\mathsf{leaves}}
\renewcommand{\int}{\mathsf{Int}}
\newcommand{\hop}{\mathrm{hop}}
\definecolor{forestgreen}{rgb}{0.13, 0.55, 0.13}
\DeclareMathOperator*{\argmax}{arg\,max}
\def\eps{\epsilon}
\DeclareMathAlphabet{\mathpzc}{OT1}{pzc}{m}{it}
\newcommand{\etal}{{\em et al. \xspace}}
\newcommand{\SPD}{\textsc{SPD}~}
\newcommand{\RSPD}{\textsc{RSPD}~}
\newlength{\dhatheight}
\newcommand {\ignore} [1] {}
\newcommand{\initOneLiners}{%
	\setlength{\itemsep}{0pt}
	\setlength{\parsep }{0pt}
	\setlength{\topsep }{0pt}
}
\newenvironment{OneLiners}[1][\ensuremath{\bullet}]
{\begin{list}
		{#1}
		{\initOneLiners}}
	{\end{list}}
\title{Low Treewidth Embeddings of Planar and Minor-Free Metrics}
\author{Arnold Filtser
}
\affil{Bar-Ilan University, \texttt{arnold273@gmail.com}}
\author{Hung Le\thanks{The research was supported by the start-up grant of UMass Amherst and  and by the National Science Foundation under Grant No. CCF-2121952.}}
\affil{University of Massachusetts at Amherst, \texttt{hungle@cs.umass.edu}}
\date{}
\begin{document}
\maketitle
\begin{abstract}
Cohen-Addad, Filtser, Klein and Le [FOCS'20] constructed a stochastic embedding of minor-free graphs of diameter $D$ into graphs of treewidth $O_{\eps}(\log n)$ with expected additive distortion $+\epsilon D$. Cohen-Addad  \etal then used the embedding  to design the first quasi-polynomial time approximation scheme (QPTAS) for the capacitated vehicle routing problem. Filtser and Le [STOC'21] used the embedding (in a different way) to design a QPTAS for the metric Baker's problems in minor-free graphs. In this work, we devise a new embedding technique to improve the treewidth bound of  Cohen-Addad  \etal exponentially to $O_{\eps}(\log\log n)^2$. As a corollary, we obtain the first efficient PTAS for the capacitated vehicle routing problem in minor-free graphs. We also significantly improve the running time of the QPTAS for the metric Baker's problems in minor-free graphs from   $n^{O_{\eps}(\log(n))}$ to  $n^{O_{\eps}(\log\log(n))^3}$.

Applying our embedding technique to planar graphs, we obtain a deterministic embedding of planar graphs of diameter $D$ into graphs of treewidth $O((\log\log n)^2)/\eps)$ and  additive distortion $+\epsilon D$ that can be constructed in nearly linear time.  Important corollaries of our result  include a bicriteria PTAS for metric Baker's problems and a PTAS for the vehicle routing problem with bounded capacity in planar graphs, both run in \emph{almost-linear} time. The running time of our algorithms is significantly better than previous algorithms that require quadratic time.

A key idea in our embedding is the construction of an (exact) emulator  for tree metrics with treewidth $O(\log\log n)$ and hop-diameter $O(\log \log n)$. This result may be of independent interest.

\end{abstract}

\newpage


\setcounter{tocdepth}{2} 
\tableofcontents
    \newpage
    \pagenumbering{arabic}

\section{Introduction}
Metric embedding is an influential algorithmic tool that has been applied to many different settings, for example,   approximation/sublinear/online/distributed algorithms \cite{LLR95,AMS99Sketch,BCLLM18,KKMPT12}, machine learning \cite{GKK17}, computational biology \cite{HBKKW03}, and computer vision\cite{AS03}. The fundamental idea of metric embedding in solving an algorithmic problem is to embed an input metric space to a host metric space that is  ``simpler'' than the input metric space, solve the problem in the (simple) host metric space and then map the solution back to a solution of the input metric space. In this algorithmic pipeline, the structure of the host metric space plays a decisive role.

In their seminal result, Fakcharoenphol, Rao and Talwar~\cite{FRT04} (improving over Bartal \cite{Bar96,Bartal98}, see also \cite{Bartal04}) constructed a stochastic embedding of an arbitrary $n$-point metric space to a tree with expected multiplicative distortion $O(\log n)$; the distortion was shown to be optimal \cite{Bar96}. One may hope to get better distortion by constraining the structure of the input metric space,  or enriching the host space. Shattering such hopes, Carroll and Goel~\cite{CG04} (implicitly)
showed that there in an infinite family of planar graphs, such that every deterministic embedding into treewidth $n^{\frac13}$ graphs will have multiplicative distortion $\Omega(n^{\frac13})$.\footnote{This lower bound is achieved by applying \Cref{lm:CGLB} on the $n-1$-subdivision of the  $n\times n$ grid.}
Furthermore, Chakrabarti, Jaffe, Lee, and Vincent~\cite{CJLV08} and Carroll and Goel~\cite{CG04} showed that any stochastic embedding of planar graphs into graphs with \emph{constant} treewidth requires expected distortion $\Omega(\log n)$. 
In fact, one can tweak the construction of Chakrabarti \etal~\cite{CJLV08} to show that even embedding into poly-logarithmic treewidth graph still requires poly-logarithmic expected distortion. Specifically, we show that any stochastic embedding of planar graphs into treewidth $\approx \log^{\frac13}n$ graphs incurs expected distortion $\approx \log^{\frac13}n$ (see \Cref{thm:multLBStochastic}).

Bypassing this roadblock,  Fox-Epstein, Klein, and Schild~\cite{FKS19}  studied \emph{additive embeddings}:   a $\Delta$-additive embedding $f: V(G)\rightarrow V(H)$ of a graph $G$ to a graph $H$ is an embedding such that:
\begin{equation*}\label{eq:additive}
	\mbox{for every }u,v\in V(G),\qquad d_G(u,v) \leq d_H(f(u), f(v)) \leq d_G(u,v) + \Delta~.
\end{equation*}

The parameter $\Delta$ is the \emph{additive distortion} of the $\Delta$-additive embedding $f$. Fox-Epstein \etal~\cite{FKS19} showed that planar graphs of diameter $D$ admit a (deterministic) $(\eps D)$-additive embedding  into graphs of treewidth $O(\epsilon^{-c})$ for some universal constant $c$. The constant treewidth bound (for a constant $\eps$) sharply contrasts  additive embeddings with multiplicative embeddings  (where the treewidth is polynomial).  Their motivation for developing the additive embedding was to design PTASes for the metric Baker's problems in planar graphs.  

In a seminal paper~\cite{Baker94}, Baker designed PTASes for several problems in planar graphs such as independent set, dominating set, and vertex cover, where vertices have non-negative measures (or weights).
Note that these problems are APX-hard in general graphs. Baker's results subsequently inspired the development of powerful algorithmic frameworks for planar graphs, such as deletion decomposition~\cite{Baker94,Eppstein2000,DHK05}, contraction decomposition~\cite{Klein05,DHM07,DHK11}, and bidimensionality~\cite{DH05,FLRS11}. Metric Baker's problems generalize Baker's problems in that vertices in the solution must be at least/at most a distance $\rho$ from each other for some input parameter $\rho$. The most well-studied examples of  metric Baker's problems include  $\rho$-independent set, $\rho$-dominating set, $(k,r)$-center. Metric Baker problems have been studied in the context of parameterized complexity~\cite{DFHT05,MP15,BL16,KLP20} where (a) the input graphs are restricted to subclasses of minor-free graphs, such as planar and bounded treewidth graphs, and (b) parameters such as $\rho$ and/or the size of the optimal solution are small. When $\rho$ is a constant, Baker's layering technique can be applied to obtain a linear time PTAS for unweighted planar graphs ~\cite{Baker94} and efficient PTASes for unweighted minor-free graphs~\cite{DHK05}.  However, the most challenging case is when the graph is weighted, and $\rho$ is part of the input; even when restricted to bounded treewidth graph, (single-criteria) PTASes are not known for metric Baker problems.\footnote{In bounded treewidth graphs, no single-criteria algorithm for metric Baker problems are known where the objective of the approximation is the measure of the set. However, there is a single-criteria PTAS where the objective of the approximation is the parameter $\rho$ (and the measure is fixed). See \Cref{lm:KLS-IS-DS}.} 
Marx and Pilipczuk~\cite{MP15} showed that, under Exponential Time Hypothesis (ETH), $\rho$-independent/dominating set problems cannot be solved in time $f(k)n^{o(\sqrt{k})}$ when the solution size is at most $k$. As observed by Fox-Epstein \etal \cite{FKS19},  the result of Marx and Pilipczuk~\cite{MP15} implies that, under ETH, there is no (single-criteria) efficient PTAS for   $\rho$-independent/dominating set problems in planar graphs. However, for the case of uniform measure (i.e. $\forall v$, $\mu(v)=1$), a (non-efficient) PTAS can be obtained via local search~\cite{FL21}. 

Fox-Epstein \etal \cite{FKS19} bypassed the ETH lower bound by designing a \emph{bi-criteria} efficient  PTAS for $\rho$-independent/dominating set problems in planar graphs using their additive distortion embedding of planar graphs into bounded treewidth graphs. Since the treewidth of their embedding is $O(\eps^{-c})$ for some constant $c\geq 19$ (see Section 6.5 in~\cite{FKS19}), and the running time to construct  the embedding  is $n^{O(1)}$ for an unspecified constant in the exponent due to the embedding step, the running time of their PTAS is $2^{O(\eps^{-(c+1)})}n^{O(1)}$~\cite{FKS19}.   In their paper, they noted:

\begin{quote}``Admittedly, in our current proof, the treewidth is bounded by a polynomial of very high degree in $1/\eps$. There is some irony in the fact that our approach to achieving an efficient PTAS yields an algorithm that is inefficient in the constant dependence on $\eps$.''
\end{quote}

Given the state of affairs, the following problem nauturally arises:
	
	\begin{question}\label{q:baker-planar} Can we design a PTAS for metric Baker's problems with (almost) linear running time? Can we obtain a PTAS with a more practical dependency on $\eps$?
	\end{question}

Cohen-Addad, Filtser, Klein, and Le \cite{CFKL20} studied additive embeddings in a more general setting of  $K_r$-minor-free graphs for any fixed $r$. They proved a strong lower bound on treewidth against \emph{deterministic} additive embeddings. Specifically, they showed (Theorem 3 in~\cite{CFKL20}) that there is an $n$-vertex $K_6$-minor-free graph such that any deterministic additive embedding into a graph of treewidth $o(\sqrt{n})$ must incur a distortion at least $\frac{D}{20}$. On the other hand, they showed that randomness helps reduce the treewidth exponentially. In particular, they constructed a \emph{stochastic} additive embedding of $K_r$-minor-free graphs into graphs with treewidth $O(\frac{\log n}{\epsilon^2})$ and \emph{expected} additive distortion $+\epsilon D$. 
Specifically, there is distribution $\mathcal{D}$ over dominating embeddings (i.e. no distances shrink) into treewidth $O(\frac{\log n}{\epsilon^2})$ graphs such that $\forall u,v,~\mathbb{E}_{f,H_f}[d_{H_f}(f(u),d(v))]\le d_G(u,v)+\eps D$.

Their primary motivation was to design a quasi-polynomial time approximation scheme (QPTAS) for the bounded-capacity vehicle routing problem (VRP) in  $K_r$-minor-free graphs. In  this problem, we are given an edge-weighted graph $G=(V,E,w)$,  a set of clients $K\subseteq V$, a depot $r\in V$, and the capacity  $Q\in \mathbb{Z}_+$ of the vehicle. We are tasked with finding a collection of tours $\mathcal{S} = \{R_1,R_2,\ldots\}$ of \emph{minimum cost} such that each tour, starting from $r$ and ending at $r$, visits at most $Q$ clients  and every client is visited by at least one tour; the cost of $\mathcal{S}$ is the total  weight of all tours in $\mathcal{S}$.   (See \Cref{def:vehicle-routing} for a more formal definition.) The VRP was introduced by Dantzig and Ramser~\cite{DR59} and has been extensively studied since then; see the survey by Fisher~\cite{Fisher95}. The problem is APX-hard, as it is a generalization of the  Travelling Salesperson Problem (TSP) when $|Q| = |V|$, which is APX-hard~\cite{PY93}, and admits a constant factor approximation~\cite{HR85}. To get a $(1+\epsilon)$-approximation, it is necessary to restrict the structures of the input graph.  Fundamental graph structures that have long been studied are low dimensional Euclidean (or doubling) spaces, planarity and minor-freeness. 

When the capacity $Q$ is a part of the input, a QPTAS for VRP in Euclidean space of constant dimension is known~\cite{DM15,ACL09}; it remains a major open problem to design a PTAS for VRP even for the Euclidean plane. For trees, a PTAS was only obtained by a recent work of Mathieu and Zhou~\cite{MZ21}, improving upon the QPTAS of Jayaprakash and Salavatipour~\cite{JS22}. No PTAS is known beyond trees, such as planar graphs or bounded treewidth graphs. For the \emph{unsplittable} demand version of the problem on trees, Becker~\cite{Becker18} showed that the problem is APX-hard.  

Going beyond trees, it is natural to restrict the problem further by considering \emph{constant} $Q$.  In this regime, Becker \etal \cite{BKS19} designed the first (randomized) PTAS for bounded-capacity VRP  in planar graphs with running time $n^{O_{\epsilon}(1)}$, improving upon the earlier QPTAS by Becker, Klein and Saulpic~\cite{BKS17}. Recently, Cohen-Addad \etal~\cite{CFKL20} obtained the first efficient PTAS for the problem in planar graphs with running time $O_{\epsilon}(1)n^{O(1)}$. Their algorithm uses the embedding of~\cite{FKS19} as a blackbox, and hence, suffers the drawback of the embedding: the exponent of $n$ in the running time is  unspecified due to the embedding step. 

In $K_r$-minor-free graphs, Cohen-Addad \etal\cite{CFKL20} designed a QPTAS for bounded-capacity VRP in $K_r$-minor-free graphs using their (stochastic) embedding  of $K_r$-minor-free graphs into graphs with treewidth $O(\frac{\log n}{\epsilon^2})$. More precisely, their algorithm has running time $(\log n)^\tw n^{O(1)}$ where $\tw$ is the treewidth of the embedding. Thus, any significant improvement over treewidth bound $O(\frac{\log n}{\epsilon^2})$, such as $O(\log n/\log\log n)$, would lead to a PTAS. Their work left the following questions as open problems.

	\begin{question}\label{q:minor-emb}  Can we devise an additive embedding of  $K_r$-minor-free graphs into graphs with treewidth $O(\log(n)/\log\log(n))$? Can we design a PTAS for the   bounded-capacity VRP in $K_r$-minor-free graphs? Can we improve the running time of the PTAS for the  bounded-capacity VRP in planar graphs to (almost) linear?
	\end{question}

We remark that the dependency on $\eps$ of all  known approximation schemes for bounded-capacity VRP problem is \emph{doubly exponential} in $1/\eps$~\cite{BKS19,CFKL20}. Reducing the dependency to singly exponential in $1/\eps$ is a fascinating open problem.

One drawback of the stochastic embedding with additive distortion is that it cannot be applied directly to design (bicriteria) PTASes for metric Baker problems in minor-free graphs. To remedy this drawback, \blind \cite{FL21} recently introduced \emph{clan embedding} and \emph{Ramsey type embedding} with additive distortions. A clan embedding of a graph $G$ to a graph $H$ is pair of maps $(f,\chi)$ where $f$ is  a \emph{one-to-many} embedding $f: V(G) \rightarrow 2^{V(H)}$ that maps each vertex $x \in V(G)$ to a subset of vertices $f(x)\subseteq V(H)$ in $H$, called \emph{copies} of $x$ (where each set $f(x)\neq\emptyset$ is not empty, and every two sets of $x\ne y$ are $f(x)\cap f(y)=\emptyset$ are disjoint), and $\chi: V(H)\rightarrow V(H)$ maps each vertex $x$ to a vertex $\chi(x) \in f(x)$, called the \emph{chief} of $x$. Furthermore, $f$ must be \emph{dominating}:
\begin{equation*}
\forall x,y\in V(G),\qquad d_G(x,y) \leq \min_{x' \in f(x), y' \in f(y)}d_H(x',y')~.
\end{equation*}
A clan embedding $(f,\chi)$ has additive distortion $+\Delta$ if for every $x,y \in V(G)$:
\begin{equation}\label{eq:clanDef}
\min_{x' \in f(x)}d_H(x',\chi(y)) \leq d_G(x,y)  + \Delta
\end{equation}
That is, there is some vertex in the clan of $x$ which is close to the chief of $y$.  Note that the distortion guarantee is in the worst case. That is, \Cref{eq:clanDef} holds for every pair $x,y\in V(G)$, and every embedding $f\in\supp(\mathcal{D})$ in the support.

A Ramsey type embedding is a (stochastic) one-to-one embedding in which there is a subset of vertices $M\subseteq V$ such that every vertex is included in $M$ with a probability at least $1-\delta$, for a given parameter $\delta \in (0,1)$, and for every vertex $u\in M$, the additive distortion of the distance from $u$ to every other vertex in $V$ is $+\Delta$. See \Cref{def:ramsey} for a formal definition.

\Blind~\cite{FL21} showed previously that for any given $K_r$-minor-free graph and parameters $\eps,\delta\in(0,1)$, one can construct a distribution $\mathcal{D}$ over clan embeddings into $O_r(\frac{\log^2n}{\eps\delta})$-treewidth graphs with additive distortion $+\eps D$ ($D$ being the diameter) and such that the expected clan size $\mathbb{E}[|f(x)|]$ of every $x \in V(G)$ is bounded by $1+\delta$.  For Ramsey type embeddings, the treewidth is also $O_r(\frac{\log^2n}{\eps\delta})$ for an additive distortion $+\eps D$~\cite{FL21}.

Using the clan embedding and  Ramsey type embedding, \blind obtained a QPTAS for metric Baker's problem in $K_r$-minor-free graphs with running time $n^{O_r(\eps^{-2}\log(n)\log\log(n))}$~\cite{FL21}.  The precise running time of the algorithm is $(\log n)^{O(\tw)} n^{O(1)}$ where $\tw$ is the treewidth of the embeddings. The key questions are:

\begin{question}\label{q:clan-Ramsey-minor} Can we improve the treewidth bound $O(\log^2(n))$ of the clan embedding and Ramsey embedding? Can we design a PTAS for metric Baker's problems in $K_r$-minor-free graphs? 
\end{question}

\subsection{Our Results}

In this paper, we provide affirmative answers to \Cref{q:baker-planar} and \Cref{q:minor-emb}, while making significant progress toward \Cref{q:clan-Ramsey-minor}. 

We construct  a new stochastic additive embedding of $K_r$-minor-free graphs into graphs with treewidth $O_r(\frac{(\log \log n)^2}{\epsilon^2})$. Our treewidth bound improves \emph{exponentially} over the treewidth bound of Cohen-Addad \etal~\cite{CFKL20}. 
See \Cref{tab:embeddings} for a summary of new and previous embeddings.

\begin{restatable}[Embedding Minor-free Graphs to Low Treewidth Graphs]{theorem}{MinorToTreewidth}
	\label{thm:MinorToTreewidth}
	Given an $n$-vertex $K_r$-minor-free graph $G(V,E,w)$ of diameter $D$, we can construct in polynomial time a stochastic additive embedding $f: V(G) \rightarrow H$ into a distribution over graphs $H$ of treewidth at most $O_r(\epsilon^{-2}(\log \log n)^2)$ and expected additive distortion $\epsilon D$. 
\end{restatable}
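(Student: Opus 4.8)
The plan is to keep the shortest-path separator framework underlying the embedding of Cohen-Addad, Filtser, Klein and Le~\cite{CFKL20} and to surgically replace the one ingredient responsible for the $O(\log n)$ treewidth. Recall the shape of their construction: one builds a laminar hierarchical decomposition of the $K_r$-minor-free graph $G$ by recursively deleting $O_r(\epsilon^{-1})$ shortest paths that split a cluster into a bounded number of smaller clusters; along each such path one places a \emph{randomly shifted net} of $O(\epsilon^{-1})$ \emph{portals}; and one forms the host graph by joining every vertex $v$ to the portals of the clusters containing $v$, with edge weights equal to the relevant (snapped) $G$-distances. This host graph has a natural tree decomposition indexed by the \emph{cluster tree} $\mathcal{L}$ — whose depth is $O(\log n)$ because clusters shrink geometrically — in which the bag of a cluster $C$ must contain the $O_r(\epsilon^{-2})$ portals of $C$ \emph{plus those of every ancestor of $C$}. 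The ``every ancestor'' term is exactly the $\Theta(\log n)$ blow-up; the expected total additive distortion is $+\epsilon D$, controlled by the random nets, and I will take this part as given.

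The new ingredient is an \emph{exact emulator for tree metrics}: for every edge-weighted tree $T$ on $m$ vertices there is a graph $\widehat{T}$ with $V(T)\subseteq V(\widehat{T})$, $|V(\widehat{T})|=\mathrm{poly}(m)$, and $d_{\widehat{T}}(u,v)=d_{T}(u,v)$ for all $u,v\in V(T)$, whose treewidth is $O(\log\log m)$ and in which every pair of vertices of $V(T)$ is joined by a shortest path of $\widehat{T}$ using only $O(\log\log m)$ edges. I would prove this by recursion on $m$: partition $T$ by removing $O(\sqrt m)$ edges into $\Theta(\sqrt m)$ connected parts each of $O(\sqrt m)$ vertices; recursively emulate each part; recursively emulate the $\Theta(\sqrt m)$-vertex quotient tree obtained by contracting the parts; and splice these together through a bounded-width interface gadget that lets a vertex enter the quotient emulator, travel, and descend into another part in $O(1)$ additional hops. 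Since $m\mapsto\sqrt m$ halves $\log m$, the recursion has depth $O(\log\log m)$, and one checks that both the treewidth and the hop-diameter grow only additively with the recursion depth.

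The embedding is then assembled as follows. Cut the cluster tree $\mathcal{L}$ into $O(\log n/\log\log n)$ horizontal \emph{blocks} of $O(\log\log n)$ consecutive levels; inside a block we store, as before, all $O(\log\log n)$ clusters of a given root-to-leaf segment together with their portals, contributing width $O_r(\epsilon^{-2}\log\log n)$. Across blocks, contract each block to a single node to obtain the \emph{block tree} (of depth $O(\log n/\log\log n)$ and $\mathrm{poly}(n)$ vertices), and apply the tree-emulator lemma to it with edge weights encoding the snapped distances between a block's portal net and its parent's; this turns the ``every ancestor block'' dependency into a depth-$O(\log\log n)$, treewidth-$O(\log\log n)$ relay structure. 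A tree decomposition of the full host graph then has bags consisting of $O(\log\log n)$ block-nodes, each carrying $O_r(\epsilon^{-2}\log\log n)$ portals, for total width $O_r(\epsilon^{-2}(\log\log n)^2)$. Because the emulator is \emph{exact}, it contributes no distortion, so the expected additive distortion remains $+\epsilon D$; and every step (the minor-free decomposition, the nets, the emulator) runs in polynomial time.

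The main obstacle is the emulator lemma: forcing treewidth \emph{and} hop-diameter to be simultaneously $O(\log\log m)$ is genuinely delicate, since low hop-diameter pushes toward adding many shortcut edges while low treewidth pushes toward keeping the graph tree-like, and the interface gadget must be designed so that the $O(\sqrt m)$ boundary vertices, the quotient emulator, and the per-part emulators can all be placed in one tree decomposition whose width increases only by $O(1)$ per recursion level rather than multiplicatively. A secondary, more bookkeeping-heavy point is to arrange the minor-free decomposition so that cross-level distances genuinely form a tree metric on the block tree — i.e., that the portal net of each cluster refines that of its parent up to $\pm\Theta(\epsilon D)$ — and to re-verify that the $+\epsilon D$ distortion guarantee survives once the ancestor chain has been folded into the emulator.
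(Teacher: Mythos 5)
The core new ingredient you propose --- an exact emulator for tree metrics with treewidth and hop-diameter both $O(\log\log n)$, built by a $\sqrt{m}$-splitting recursion --- is precisely the paper's key lemma (\Cref{thm:TreeSpanner}), and using it to shortcut the shortest-path-separator hierarchy is also the paper's main idea (the paper applies the emulator directly to the decomposition tree, which makes your two-level blocking into blocks of $O(\log\log n)$ levels unnecessary). The genuine gap is elsewhere: you treat the whole $K_r$-minor-free graph as if it carried a single laminar decomposition by $O_r(1)$ shortest-path separators and identify the depth of that hierarchy as ``the one ingredient responsible for the $O(\log n)$ treewidth.'' General minor-free graphs admit no such separator hierarchy; in the framework of \cite{CFKL20} that you inherit, shortest-path separators exist only inside the nearly-embeddable pieces of the Robertson--Seymour decomposition, and there is a second, independent source of $\log n$: the clique-sum gluing of the pieces along the RS tree, where the centroid-based recursion of \cite{CFKL20} adds an adhesion of size $O_r(1)$ to the bags at each of $O(\log n)$ recursion levels, costing an additive $O_r(\log n)$ in treewidth no matter how well each piece is embedded. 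Your construction never touches this step, so even granting everything you ``take as given,'' the resulting treewidth is $O_r(\epsilon^{-2}(\log\log n)^2)+O_r(\log n)=\Omega_r(\log n)$. The paper needs a separate argument here (\Cref{lm:nearlyEmb-to-Minor}): root the RS tree, add only the single parent adhesion to every bag of each piece's host decomposition, and prove by induction that distances from a vertex to the adhesion vertices lying on its shortest paths are preserved \emph{exactly}. Relatedly, the stochasticity of the final embedding comes from handling the apices of the nearly embeddable pieces (a deterministic embedding is impossible by the lower bound in \cite{CFKL20}), not from randomly shifted portal nets.

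Within the part you do address, the claim that ``because the emulator is exact, it contributes no distortion, so the expected additive distortion remains $+\epsilon D$'' does not hold as stated. Once the ancestor chain is replaced by an $O(\log\log n)$-hop relay, a path in the host graph must snap to a portal at every intermediate relay node, and each snap costs the portal spacing; with the $O_r(\epsilon^{-2})$ portals per cluster you start from (spacing on the order of $\epsilon D$) this yields distortion $\Theta(\epsilon D\log\log n)$. One must refine the spacing to $\epsilon D/\log\log n$, i.e.\ place $\Theta(\log\log n)$ times more portals per boundary path --- exactly what the paper does, and the reason the width is $(\log\log n)^2$ rather than $\log\log n$. Moreover, the distortion argument needs more from the emulator than exactness and low hop-diameter: one needs that the low-hop shortest path in the emulator visits nodes that appear, in order, along the original tree path, so that the separation property of the decomposition guarantees the graph shortest path really crosses each relay node's boundary (\Cref{lm:separation-planar}); the paper derives this monotonicity from exactness for a Steiner-free emulator, whereas the Steiner vertices you allow would require an additional argument. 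These issues are fixable quantitative changes, not bookkeeping, but the missing clique-sum step above is a substantive gap in the proposal.
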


 The main bottleneck for not having an almost-linear time algorithm for the embedding in minor-free graphs is that the best algorithm for computing Robertson-Seymour decomposition takes quadratic time~\cite{KKR12}.

Using our stochastic additive embedding, we design the first PTAS  for  the bounded-capacity vehicle routing problem in $K_r$-minor-free graphs; our PTAS indeed is efficient. This resolves \Cref{q:minor-emb} in the affirmative (more on the planar case later in \Cref{thm:VRP-planar}).

\begin{restatable}[PTAS for Bounded-Capacity VRP in Minor-free Graphs]{theorem}{CVRPMinor}
	\label{thm:VRP-minor}  There is a randomized polynomial time approximation scheme for the bounded-capacity VRP in $K_r$-minor-free graphs that runs in $O_{\epsilon,r}(1)\cdot n^{O(1)}$ time.
\end{restatable}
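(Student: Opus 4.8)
The plan is to reduce the bounded-capacity VRP in $K_r$-minor-free graphs to the same problem in bounded-treewidth graphs, losing only a $(1+O(\eps))$ factor, and then invoke a (known or easily adapted) approximation scheme for bounded-treewidth graphs. The bridge is precisely \Cref{thm:MinorToTreewidth}: we first apply a standard diameter-reduction / sparse-partition step, so that it suffices to solve the problem on pieces of bounded diameter $D$ with the additive-$\eps D$ guarantee being a multiplicative-$(1+\eps)$ guarantee on each piece. More concretely, using the well-known fact that in VRP one may assume the optimal cost $\opt$ is $\Omega(D)$ (because some client is at distance $\geq D/2$ from the depot up to rescaling, or by a standard splitting of the metric into distance scales à la~\cite{BKS17,BKS19,CFKL20}), the additive distortion $+\eps D$ of the embedding translates to a multiplicative $(1+\eps)$ loss in the objective.

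The key steps, in order, are: (1) sample $H$ from the distribution of \Cref{thm:MinorToTreewidth}, so $H$ has treewidth $\tw = O_r(\eps^{-2}(\log\log n)^2)$ and, for the dominating embedding $f$, $\mathbb{E}[d_H(f(u),f(v))] \le d_G(u,v) + \eps D$ for all $u,v$; (2) lift the VRP instance to $H$ by placing each client and the depot at its image under $f$, noting that since $f$ is dominating, any solution in $H$ maps back to a solution in $G$ of no greater cost (shortcutting through $G$-shortest paths), so $\opt_G \le \opt_H$ always, and in expectation $\mathbb{E}[\opt_H] \le (1+O(\eps))\opt_G$ because an optimal $G$-tour of total length $L$ has $H$-image of expected length $\le L + (\text{number of tours})\cdot \eps D \le L + O(\eps)\opt_G$ (each tour has length $\ge D$, bounding the tour count by $\opt_G/D$); (3) run the known approximation scheme for bounded-capacity VRP on bounded-treewidth graphs — this is where the $(\log n)^{\tw}n^{O(1)}$-type dynamic program of Cohen-Addad \etal~\cite{CFKL20} is invoked, and with $\tw = O_r(\eps^{-2}(\log\log n)^2)$ we get $(\log n)^{O_r(\eps^{-2}(\log\log n)^2)} = 2^{O_r(\eps^{-2}(\log\log n)^3)} = n^{o(1)}$, which is subpolynomial and hence absorbed into $n^{O(1)}$; (4) map the resulting solution back to $G$ via $f$ and output it; (5) a Markov-inequality argument over the randomness of $H$ converts the expected-cost guarantee into a Monte Carlo PTAS with the stated $O_{\eps,r}(1)\cdot n^{O(1)}$ running time, and repetition drives the success probability up.

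The main obstacle, and the part requiring the most care, is step (3): we need a clean statement of an approximation scheme for bounded-capacity VRP parameterized by treewidth whose running time is $(\log n)^{O(\tw)} n^{O(1)}$ (or better, $2^{O(\tw \cdot \polylog)} n^{O(1)}$), so that substituting our doubly-logarithmic treewidth yields a polynomial-time algorithm. If such a dynamic program must itself be re-derived here rather than cited verbatim from~\cite{CFKL20}, the technical content lies in designing a DP over a tree decomposition of width $\tw$ that tracks, at each bag, a rounded profile of partial tours crossing the bag (how many tours, their current loads up to a $(1+\eps)$-rounding, and their endpoints among the $O(\tw)$ bag vertices); the number of such profiles is $(\log n / \log(1+\eps))^{O(\tw)} \cdot |V|^{O(\tw)}$, and one must verify the composition across bags is correct and that the rounding only costs $(1+\eps)$ overall. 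A secondary subtlety is making the diameter-reduction in step (1) fully rigorous in the minor-free setting so that the additive-to-multiplicative conversion is airtight; this uses a standard hierarchical decomposition of the metric into distance scales, handled exactly as in~\cite{CFKL20}, and we expect to be able to cite it. Assuming the bounded-treewidth VRP scheme as a black box (as the introduction indicates~\cite{CFKL20} provides), the remaining argument is the short expectation-and-Markov chain sketched above.
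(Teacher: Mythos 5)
There is a genuine gap in your step (2), and it is exactly the point the paper's proof is built around. The inequality ``$\opt_H \le \opt_G + (\text{number of tours})\cdot \eps D \le (1+O(\eps))\opt_G$'' rests on the claim that each tour has length $\ge D$, so the number of tours is at most $\opt_G/D$. That claim is false: a tour starts and ends at the depot and only needs length about twice the distance from the depot to its farthest client, which can be arbitrarily smaller than $D$. Concretely, with $Q=O(1)$, many clients at distance $1$ from the depot and one client at distance $D\gg 1$, the optimum uses $\Theta(n)$ short tours of length $O(1)$ plus one long tour, so $\opt_G = \Theta(n + D)$ while your charged distortion is $\Theta(n)\cdot\eps D$, which is $\omega(\eps\cdot\opt_G)$ once $D=\omega(1)$. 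So the additive-to-multiplicative conversion, as written, collapses; and pruning the graph so that $\opt=\Omega(D)$ does not help, because the problem is the tour count, not the lower bound on $\opt$.

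What the paper does instead (and what you relegate to a ``secondary subtlety'' to be cited) is in fact the main content of the proof: it builds an $(r,\eps)$-\emph{rooted} stochastic embedding (\Cref{def:rootedStochastic}), i.e.\ one whose per-pair distortion is $\eps\cdot(d_G(r,u)+d_G(r,v))$ rather than $\eps D$. This is \Cref{lm:Root-Embedding-Minor}, proved by randomly slicing $G$ into exponentially growing bands around the depot, applying \Cref{thm:MinorToTreewidth} to each band's graph (whose diameter is comparable to the band radius), and gluing the host graphs at the depot; the analysis is a Ramsey-type argument showing each vertex lands far from its band boundary with probability $1-2\eps$. With that guarantee, a tour with at most $Q+1$ hops and farthest client at distance $d_{\max}$ picks up distortion $O(Q\eps d_{\max}) = O(Q\eps)\cdot(\text{its own length})$, which is why the framework of Becker--Klein--Schild (\Cref{lm:BKS}) invokes the embedding with parameter $\eps/(9Q)$ — a rescaling your sketch also omits, harmless only because $Q=O(1)$. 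The final algorithm then plugs the bounded-treewidth DP of \Cref{lm:DP-VRP} (your step (3) is fine here, and that DP can indeed be cited rather than re-derived) into \Cref{lm:BKS}. So the correct route is not ``reduce to bounded diameter and apply the $+\eps D$ embedding once,'' but ``construct the depot-rooted embedding from the $+\eps D$ embedding via random banding''; without that construction and its distortion analysis, your proof does not go through.
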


\begin{table}[]
	\begin{tabular}{|c|c|l|l|l|}
		\hline
		\multicolumn{1}{|l|}{\textbf{Family}} & \multicolumn{1}{l|}{\textbf{Type}}      & \textbf{Treewidth}                                & \textbf{Runtime}                      & \textbf{Ref}                         \\ \hline
		\multirow{3}{*}{Planar}               & \multirow{3}{*}{Deterministic}          & $O_r(\epsilon^{-1}\cdot\log n)$       			& $O_{\eps}(n^{O(1)})$                            & \cite{EKM14}                         \\ \cline{3-5} 
		&                                                                               & $O_r(\epsilon^{-c}), c\geq 19$                          	& $O_{\eps}(n^{O(1)})$                            & \cite{FKS19}                         \\ \cline{3-5} 
		&                                                                               & $O_r(\epsilon^{-1}\cdot(\log \log n)^2)$          & $\epsilon^{-2}\cdot \tilde{O}(n)$     & \Cref{thm:PlanarToTreewidth}         \\ \hline
		\multirow{4}{*}{$K_r$-minor free}     & \multirow{2}{*}{Stochastic}             & $O_r(\epsilon^{-2}\cdot\log n)$                 	& $O_{\eps,r}(n^{O(1)})$                            & \cite{CFKL20}                        \\ \cline{3-5} 
		&                                                                               & $O_r(\epsilon^{-2}\cdot(\log \log n)^2)$          & $O_{\eps,r}(n^{O(1)})$                            & \Cref{thm:MinorToTreewidth}          \\ \cline{2-5} 
		& \multirow{2}{*}{\begin{tabular}[c]{@{}c@{}}Ramsey type\\ / Clan\end{tabular}} & $(\eps\delta)^{-1}\cdot O_r(\log^2n)$             & $O_{\eps,r}(n^{O(1)})$                            & \cite{FL21}                          \\ \cline{3-5} 
		&                                                                               & $(\eps\delta)^{-1}\cdot \tilde{O}_{r}(\log n)$ 	& $O_{\eps,r}(n^{O(1)})$                            & \Cref{thm:MinorClan,thm:MinorRamsey} \\ \hline
	\end{tabular}
	\caption{\it\small Summary of current and previous embeddings into low treewidth graphs with additive distortion $+\eps D$.
		For planar graphs, we get an embedding whose treewidth has a minor depdency on $n$; however our running time and dependency on $\eps$ is much improved. For stochastic embeddings of $K_r$-minor-free graphs, we obtain an exponential improvement. 
		For Ramsey-type and clan embeddings, we obtain a quadratic improvement.
		\label{tab:embeddings}}
\end{table}

Our proof of \Cref{thm:VRP-minor} follows the embedding framework of Cohen-Addad \etal\cite{CFKL20}.  In a nutshell, they showed that if  planar graphs with \emph{one vortex} (see \Cref{sec:addNotation} for a formal definition) have an additive embedding with treewidth $k(n,\eps)$, then $K_r$-minor-free graphs have a stochastic additive embedding with treewidth roughly $O_r(k(n,O_r(\eps^2)) + \log(n))$. That is, the reduction incurs an additive factor of $O(\log n)$.  Thus, there are two issues one has to resolve to reduce the treewidth to $o(\log(n))$: (i) construct an embedding of  planar graphs with one vortex  that has treewidth $k(n,\eps) = o(\log(n))$ and (ii) remove the loss $O(\log(n))$ in the reduction of Cohen-Addad \etal\cite{CFKL20}. By a relatively simple idea (see \Cref{sec:MinorToTW}), we could remove the additive term $O(\log(n))$ in the reduction. We are left with constructing an embedding of planar graphs with one vortex, which is the main barrier we overcome in this work.

One can intuitively think of  planar graphs with one vortex as \emph{noisy} planar graphs, where the vortex is a kind of low-complexity\footnote{Low-complexity means that the vortex has bounded pathwidth.} noise that affects the planar embedding in a local area, i.e., a face. From this point of view, we need an embedding of planar graphs that is \emph{robust} to the noise. The embedding for planar graphs of Fox-Epstein \etal~\cite{FKS19} relies heavily on planarity to perform topological operations such as cutting paths open, and decomposing the graphs into so-called \emph{bars and cages}. As a result, adding a vortex to a planar graph makes their embedding inapplicable. Cohen-Addad \etal\cite{CFKL20} instead use \emph{balanced shortest path separators} in their embedding of planar graphs with one vortex. Using shortest path separators results in an embedding technique that is robust to the noise caused by the vortex. However,  their embedding has treewidth $O(\log(n)/\eps)$ due to that recursively decomposing the graphs using balanced separators gives a decomposition tree of depth  $O(\log n)$.  This  is a universal phenomenon in almost all techniques that rely on balanced separators: the depth $O(\log n)$ factors in many known algorithms in planar graphs~\cite{GKP95,AGKKW98,EKM14,FNM15,CCKMM16}. It seems that to get a treewidth $o(\log n)$, one needs to avoid using balanced separators in the construction. 

Surprisingly perhaps, we can still use balanced shortest path separators to get an embedding with treewidth $O((\log\log n)^2/\eps)$ for planar graphs  with one vortex (see \Cref{lm:nearlyPlanar-emb} for a formal statement). Our key idea is to ``shortcut'' the decomposition tree of depth $O(\log n)$ by adding edges between nodes in such a way that, for every two  nodes in the tree, there is a path in the shortcut tree with $O(\log\log n)$ edges, a.k.a. $O(\log\log n)$ \emph{hops}. To keep the treewidth of the embedding small, we require that  the resulting treewidth of the decomposition tree (after adding the shortcuts) remains small; the decomposition tree is a tree and hence has treewidth $1$. We show that we can add shortcuts in such a way that the resulting treewidth is $O(\log\log(n))$. The two factors of $O(\log\log(n))$  --- one from the hop length and one from the treewidth blowup due to shortcutting --- result in treewidth of $O((\log\log n)^2/\eps)$ of the embedding. We formulate these ideas in terms of constructing an emulator for trees with treewidth $O(\log\log n)$ and hop-diameter $O(\log \log n)$; see \Cref{subsec:technique} for a more formal discussion. The main conceptual message of our technique is that it is possible to get around the depth barrier  $O(\log n)$ of balanced separators by shortcutting the decomposition tree. We belive that this idea would find further use in designing planar graph algorithms.

Applying our technique to planar graphs, we obtain an additive embedding with treewidth $O((\log\log)^2/\eps)$ that can be constructed in nearly linear time.

\begin{restatable}[Embedding Planar Graphs to Bounded Treewidth Graphs]{theorem}{PlanarToTreewidth}
	\label{thm:PlanarToTreewidth}
	Given an $n$-vertex planar graph $G(V,E,w)$ of diameter $D$ and a parameter $\epsilon\in(0,1)$, there is a deterministic embedding ${f:V(G) \rightarrow H}$ into a graph $H$ of treewidth $O(\epsilon^{-1}(\log \log n)^2)$ and additive distortion $+\epsilon D$. \\
	Furthermore, $f$ can be deterministically constructed in $O(n\cdot\frac{\log^3n}{\epsilon^2})$ time.	
\end{restatable}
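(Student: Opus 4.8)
The plan is to reduce the theorem to the embedding of planar graphs with one vortex (\Cref{lm:nearlyPlanar-emb}) — actually, for the purely planar case we need only the vortex-free version — combined with the tree-emulator result that is the paper's key technical contribution. Concretely, I would proceed as follows. First, compute a balanced shortest-path separator decomposition of $G$: repeatedly find a set of $O(1)$ shortest paths whose removal splits the current piece into pieces of at most (say) $2/3$ the size, recursing on each piece. This yields a decomposition tree $\mathcal{T}$ of depth $O(\log n)$; each node carries a bag consisting of a bounded number of shortest paths from the root piece down to that node. The standard approach (as in Cohen-Addad et al.) would now net each shortest-path separator at scale $\epsilon D$, i.e. place a portal every $\epsilon D$ along each separator path and route all distances through portals; since distances of interest are at most $D$ and we only need additive error $\epsilon D$, and a root-to-node path in $\mathcal{T}$ meets $O(\log n)$ separators, the naive treewidth becomes $O(\epsilon^{-1}\log n)$ — the depth barrier.

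The new ingredient is to \emph{shortcut} $\mathcal{T}$. I would invoke the tree-emulator result (the one promised in \Cref{subsec:technique}: every tree metric admits an exact emulator of treewidth $O(\log\log n)$ and hop-diameter $O(\log\log n)$) applied to $\mathcal{T}$, so that between any two nodes of $\mathcal{T}$ there is a path in the shortcut graph using only $O(\log\log n)$ nodes of $\mathcal{T}$. Now build the host graph $H$: for each node $u$ of $\mathcal{T}$, take the netpoints (portals at granularity $\epsilon D/\log\log n$, so the total error telescopes correctly over an $O(\log\log n)$-hop path) on the separator paths associated to $u$; add these as a bag, and connect bags of nodes that are adjacent in the \emph{shortcut} graph by shortest-path distances between their netpoints. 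Because a root-to-leaf navigation now uses $O(\log\log n)$ hops, each contributing a bag of $O(\epsilon^{-1}\log\log n)$ netpoints, and because the emulator itself has treewidth $O(\log\log n)$, a tree decomposition of $H$ can be assembled with bags of size $O(\log\log n)\cdot O(\epsilon^{-1}\log\log n) = O(\epsilon^{-1}(\log\log n)^2)$. Distance preservation is argued as usual: any shortest $u$–$v$ path in $G$ crosses some separator, hence can be rerouted through a netpoint on it with additive loss $O(\epsilon D/\log\log n)$ per crossing, and the $O(\log\log n)$-hop structure of the shortcut tree bounds the number of reroutings, giving total additive distortion $+\epsilon D$ after rescaling constants; the dominating (no-shrink) property is immediate since all host edges are realized by actual $G$-distances between netpoints.

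For the running-time claim I would track each stage: the balanced shortest-path decomposition can be computed in $\tilde O(n)$ time per level using the linear-time planar SSSP / separator machinery of Henzinger–Klein–Rao–Subramanian–style algorithms, and there are $O(\log n)$ levels, giving $\tilde O(n)$ overall with the stated $\epsilon$-dependence absorbed into the number of netpoints; constructing the tree emulator on $\mathcal{T}$ (which has $O(n)$ nodes) is near-linear by the constructive version of the emulator lemma; and assembling $H$ and its tree decomposition touches each netpoint a bounded number of times. Collecting the polylog factors from the number of levels, the netting granularity $1/\epsilon$, and the emulator construction yields the bound $O(n\log^3 n/\epsilon^2)$.

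The main obstacle I expect is not the distortion bookkeeping but making the \emph{shortcut} of the decomposition tree interact correctly with the bag structure: one must ensure that when two $\mathcal{T}$-nodes are joined by an emulator edge, the netpoints on their respective separator paths genuinely suffice to carry a shortest path that, in $G$, may wind through \emph{all} the intermediate separators of the original (un-shortcut) ancestor chain. Handling this requires the separator paths at a node to already encode enough of the ancestor portals — i.e. the bags must be "cumulative" along $\mathcal{T}$ in a way compatible with the emulator's low treewidth — and verifying that this cumulative netpoint set still has size $O(\epsilon^{-1}(\log\log n)^2)$ rather than $O(\epsilon^{-1}\log n)$ is precisely where the $O(\log\log n)$ hop-diameter is used. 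Getting this bookkeeping tight, and doing it while keeping everything near-linear-time constructible, is the crux.
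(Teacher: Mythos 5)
Your construction of $H$ and the distortion/treewidth analysis follow the paper's proof essentially step for step: an $O(1)$-boundary recursive shortest-path decomposition of depth $O(\log n)$, portals at granularity $\eps D/\log\log n$, the exact tree emulator of \Cref{thm:TreeSpanner} applied to the decomposition tree, bi-cliques between portal sets along emulator edges, and a tree decomposition of $H$ obtained by substituting portal sets into the emulator's width-$O(\log\log n)$ decomposition. Incidentally, the ``crux'' you flag at the end is resolved more simply than you fear: no cumulative ancestor portals are needed. Because the emulator is \emph{exact}, any shortest emulator path between two nodes is a subsequence of the tree path between them, and the separation property of the decomposition (\Cref{lm:separation-planar}) guarantees that a shortest $u$--$v$ path in $G$ crosses the boundary of every node on that tree path, in order. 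Routing through the nearest portal at each of the $O(\log\log n)$ emulator-path nodes, with host edges weighted by true $G$-distances, then gives the distortion bound by the triangle inequality; the intermediate separators of the un-shortcut ancestor chain never need to be represented.

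The genuine gap is in the running-time claim. You assert that ``assembling $H$ \ldots touches each netpoint a bounded number of times'' and that collecting polylog factors yields $O(n\log^3 n/\eps^2)$, but you never say how the edge weights of $H$ are computed. The host graph has $\Theta\bigl(n\cdot\mathrm{poly}(\log\log n)/\eps^2\bigr)$ edges, and each weight is a shortest-path distance in $G$ between two portals that are in general far apart; computing these exactly costs $\Omega(n)$ per edge (there is no global all-pairs structure you can afford), so the naive implementation of your plan is $\Omega(n^2)$, not near-linear. The paper's proof closes exactly this hole: it builds Thorup's $(1+\eps)$-approximate distance oracle for planar graphs, with preprocessing time $O(n\log^3 n/\eps^2)$ and query time $O(1/\eps)$, assigns each host edge the oracle's estimate, and then argues that the multiplicative $(1+\eps)$ error inflates distances in $H$ by at most another additive $O(\eps)\cdot D$ (since all relevant distances are at most $D$), so the final distortion is still $+O(\eps)D$ after rescaling. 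This oracle is also the sole source of the $\log^3 n$ factor in the stated running time; without it (or an equivalent device for batch distance estimation), your time analysis does not go through, even though your treewidth and distortion arguments do.
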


Our embedding, while having a minor dependency on $n$, offers three advantages over the embedding of \cite{FKS19}. First, our embedding can be constructed in nearly linear running time, removing the running time bottleneck of algorithms that uses the embedding of  \cite{FKS19}. Second,  our embedding algorithm is much simpler than that of \cite{FKS19} as it only uses the recursive shortest path separator decomposition.  Third, the dependency on $\epsilon$ of the treewidth is linear, which we show to be optimal by the following theorem.

\begin{restatable}[Embedding Planar Graphs Lower Bound]{theorem}{PlanarEmbeddingLB}
	\label{thm:PlanarEmbeddingLB} For any $\epsilon \in (0,\frac12)$ and any  $n = \Omega(1/\eps^2)$, there exists an unweighted $n$-vertex planar graph $G(V,E)$ of diameter $D \leq (1/\eps + 2)$ such that  for any deterministic dominating embedding $f: V(G) \rightarrow H$ into a graph $H$ with additive distortion $+\epsilon D$, the treewidth of $H$ is $\Omega(1/\eps)$. 	
\end{restatable}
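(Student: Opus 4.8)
The plan is to exhibit a single family of planar graphs on which every additive $(\eps D)$-embedding into a low-treewidth host must in fact blow up the treewidth to $\Omega(1/\eps)$. The natural candidate is a ``fan''- or ``spider''-like construction built around a short path: take a path $P = (u_0, u_1, \dots, u_\ell)$ of length $\ell = \Theta(1/\eps)$, and attach a set of $\Theta(1/\eps)$ ``terminal'' vertices, each joined by a short (length $\Theta(1/\eps)$) pendant path to a \emph{distinct} vertex of $P$, so that the graph is a subdivided caterpillar/tree and hence trivially planar, unweighted, with diameter $D = O(1/\eps)$. The key geometric feature we want is: the pairwise distances among the terminals encode, up to additive error $\eps D = \Theta(1)$, the relative order of their attachment points on $P$. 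If the additive slack $\eps D$ is a small enough constant fraction of the spacing between consecutive attachment points, then the host graph $H$, restricted to the images of the terminals, must ``see'' a metric that forces a large structure --- concretely, a long path or a large grid-like minor --- which in turn forces treewidth $\Omega(1/\eps)$.

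First I would set up the construction precisely, choosing the spacing so that the additive distortion $+\eps D \le \eps(1/\eps + 2) = O(1)$ cannot collapse distinct attachment points: make consecutive attachment points $\Omega(1/\eps \cdot \eps) = \Omega(1)$ apart after scaling, or more cleanly, use $\ell = \lceil 1/\eps \rceil$ and place one terminal-path at \emph{every} vertex of $P$, with each pendant path of length $\lceil 1/\eps\rceil$, so $D \le 2\lceil 1/\eps\rceil + O(1)$ and the distortion budget is a fixed constant (say $< 1/4$ after a suitable rescaling, or we keep everything integral and argue with the additive-$\lfloor \eps D\rfloor$ rounding). Second, I would show that in any dominating embedding $f$ into $H$ with additive distortion $+\eps D$, the images $f(t_0), f(t_1), \dots$ of the terminals must lie along a structure in $H$ that realizes, within additive constant error, the ``path metric'' $d(t_i, t_j) \approx 2\lceil 1/\eps\rceil + |i-j|$ (the two pendant legs plus the sub-path of $P$ between attachment points). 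The point is that a graph in which some $k = \Theta(1/\eps)$ vertices have pairwise distances agreeing with an arithmetic/path metric up to error $< $ (half the minimum gap) must contain these vertices in a configuration that cannot be realized with small treewidth --- I would invoke (or reprove in one paragraph) a lemma along the lines of \Cref{lm:CGLB} referenced in the introduction: a metric that is a near-isometric copy of a path on $k$ points, together with the dominating/additive constraints forcing ``betweenness'' relations, yields a long geodesic or a $\Omega(k) \times \Omega(1)$ grid minor in $H$.

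The sharper and more robust route, which I would actually pursue, is to argue via \emph{betweenness} rather than via an explicit minor: the additive guarantee forces, for the right triples of terminals $t_i, t_j, t_k$ with $i < j < k$, an approximate equality $d_H(f(t_i), f(t_k)) \approx d_H(f(t_i), f(t_j)) + d_H(f(t_j), f(t_k))$, so on a shortest $f(t_i)$--$f(t_k)$ path in $H$ there is a vertex within constant distance of $f(t_j)$; choosing the parameters so these ``constant'' neighborhoods are pairwise disjoint, we get $\Omega(1/\eps)$ vertices that must appear, in order, along a single shortest path of $H$ --- then tensoring this with a second, transverse family of terminal-paths attached symmetrically on the other side of $P$ produces two interleaved long geodesics that cross $\Omega(1/\eps)$ times, which is exactly the certificate for treewidth $\Omega(1/\eps)$ (e.g., via the fact that a $k\times k$ grid minor needs treewidth $k$, or more directly via a balanced-separator counting argument: any separator of $H$ must hit every one of the $\Omega(1/\eps)$ ``rungs'', forcing separator size $\Omega(1/\eps)$).

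The main obstacle I anticipate is the second step: turning the purely metric near-isometry (the terminals sit on an approximate path/grid) into a genuine \emph{topological} statement about $H$ strong enough to lower-bound its treewidth, while being careful that the host $H$ is an \emph{arbitrary} graph (not planar, not bounded-degree) and that the embedding is only dominating with one-sided additive slack, not bi-Lipschitz. The cleanest way to handle this, and what I would write up, is a direct separator argument: take an optimal tree decomposition of $H$, find a bag whose removal splits the $\Omega(1/\eps)$ terminals into two roughly equal halves (one exists by the standard balanced-separator property of tree decompositions), and then use the betweenness/additive relations to show that the separating bag must contain, or be within constant distance of, a vertex ``between'' each consecutive pair of terminals on opposite sides --- yielding $\Omega(1/\eps)$ witnesses and hence bag size $\Omega(1/\eps)$. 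Getting the constants to line up --- i.e. ensuring $\eps D$ really is smaller than the gaps, and that the constant-radius balls around the witness vertices are disjoint so they cannot be covered by a small bag --- is the routine-but-delicate part, and the construction's pendant-path lengths should be chosen with exactly that accounting in mind.
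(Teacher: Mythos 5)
Your construction cannot work, and the reason is structural rather than a matter of tuning constants: the graph you propose (a path with pendant paths attached, even with a second ``transverse'' family attached on the other side) is a tree, hence has treewidth $1$. The identity map of $G$ into $H=G$ is a deterministic dominating embedding with additive distortion $0$, so no tree --- indeed no graph of small treewidth --- can witness the theorem; any valid $G$ must itself have treewidth $\Omega(1/\eps)$. One can also see exactly where your argument breaks. First, the betweenness step is quantitatively false for the fan: for terminals $t_i,t_j,t_k$ with legs of length $L=\Theta(1/\eps)$ one has $d_G(t_i,t_j)+d_G(t_j,t_k)-d_G(t_i,t_k)=2L=\Theta(D)$, which dwarfs the slack $\eps D=O(1)$, so no approximate collinearity of the terminal images is forced. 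Second, even granting it, ``$\Omega(1/\eps)$ vertices appearing in order along a geodesic'' (or two geodesics crossing many times) is not a treewidth certificate --- a path has treewidth $1$, and the metric constraints do not force the crossings to be realized by many \emph{disjoint} paths. Third, in your separator argument the witnesses ``between'' separated terminal pairs need not be distinct: in a tree all separated pairs are legitimately routed through a single cut vertex, and domination is perfectly consistent with one bag vertex serving every pair, because the true $G$-distances themselves pass through that vertex. Forcing distinct witnesses requires many vertex-disjoint $G$-paths between the two halves (well-linkedness), which your graph lacks entirely.

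The paper's lower bound is built on exactly that missing ingredient: the hard instance is (essentially) the $k\times k$ grid with $k=\Theta(1/\eps)$, padded with a few pendant vertices to reach $n$ vertices --- which is why the statement has $n=\Omega(1/\eps^2)$ and $D\le 1/\eps+2$. The argument is a Carroll--Goel-style separator count: a width-$t$ tree decomposition of $H$ yields a separator $B$ of size $t+1$ that is balanced with respect to the images of the grid vertices; since grid-adjacent vertices have images at $H$-distance at most $1+\eps D=O(1)$, the set of grid vertices mapped within distance $O(1)$ of $B$ is a balanced separator of the grid and hence has $\Omega(k)$ vertices; and by domination each vertex of $B$ can be charged by only $O(1)$ grid vertices, since any two of them would be at $G$-distance $O(1)$ and constant-radius grid balls have $O(1)$ vertices. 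This gives $t=\Omega(k)=\Omega(1/\eps)$. If you want to salvage your write-up, replace the caterpillar by the grid and replace the ``two crossing geodesics'' step by this counting argument (or by a well-linkedness/bramble argument); the betweenness machinery is neither needed nor available at the $+\eps D$ scale.
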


We note that Eisenstat, Klein, and Mathieu \cite{EKM14} implicitly constructed an embedding with additive distortion $+\epsilon D$ and treewidth $O(\epsilon^{-1}\cdot\log n)$ (see also~\cite{CFKL20}).  However, in the applications of designing PTASes, the running time is at least exponential in the treewidth, and hence, this embedding only implies ineffcient PTASes or QPTASes. The dependency on $n$ of the treewidth of our  embedding in \Cref{thm:PlanarToTreewidth} is exponentially smaller. 

Our new embedding result (\Cref{thm:PlanarToTreewidth}) leads to an almost linear time PTAS for metric Baker's problems in planar graphs, thereby giving an affirmative answer to \Cref{q:baker-planar}. See \Cref{tab:Application} for  a comparison with existing results.

\begin{restatable}[PTAS for Metric Baker Problems in Planar Graphs]{theorem}{MetricBakerPlanar}
	\label{thm:MetricBakerPlanar}
	Given an $n$-vertex planar graph $G(V,E,w)$, two parameters $\epsilon \in (0,1)$ and $\rho > 0$, and a measure $\mu: V\rightarrow \mathbb{R}^+$, one can find in $2^{\tilde{O}(1/\eps^{2+\kappa})}\cdot n^{1+o(1)}$ time for \emph{any fixed $\kappa > 0$}: (1) a $(1-\epsilon)\rho$-independent  set $I$ such that for every $\rho$-independent  set $\tilde{I}$, $\mu(I) \geq (1-\epsilon)\mu(\tilde{I})$ and (2) a  $(1+\epsilon)\rho$-dominating set $S$ such that for every $\rho$-dominating set $\tilde{S}$, $\mu(S) \leq (1+\epsilon)\mu(\tilde{S})$.
\end{restatable}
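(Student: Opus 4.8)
The plan is to combine the almost-linear-time additive embedding of \Cref{thm:PlanarToTreewidth} with the standard dynamic-programming approach for metric Baker's problems on bounded-treewidth graphs, following the template of Fox-Epstein \etal~\cite{FKS19} but instantiated with our faster and more $\eps$-efficient embedding. First I would choose the distortion parameter $\eps' = \Theta(\eps \rho / D)$-style scaling is \emph{not} what we want here since $\rho$ is not the diameter; instead, the right move is the usual one for metric Baker problems: partition $V$ into ``bands'' of consecutive BFS/shortest-path layers of geometric width so that within each band the effective diameter we must embed is $O(\rho/\eps)$ (an $O(1/\eps)$-hop-bounded piece), run the embedding with additive distortion $+\eps \cdot O(\rho/\eps) = O(\rho)$ rescaled to $+\frac{\eps}{C}\rho$ on each band, and recurse/shift over $O(1/\eps)$ choices of band offset to lose only a $(1+\eps)$ factor in the measure (the Baker shifting argument). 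For each band we obtain, in time $\eps^{-2}\tilde O(n_i)$, a graph $H_i$ of treewidth $t = O(\eps^{-1}(\log\log n)^2)$ together with the embedding $f_i$; since the bands partition $V$ the total embedding time is $\eps^{-2}\tilde O(n)$.

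The second step is to solve $\rho$-independent set and $\rho$-dominating set on each host graph $H_i$ of treewidth $t$ by dynamic programming over a tree decomposition. The key point, exactly as in \cite{FKS19}, is that because $f_i$ is \emph{dominating} with additive distortion $+\frac{\eps}{C}\rho$, a $\rho$-independent set in $G$ maps to a $\rho$-independent set in $H_i$ (distances only grow), and a solution found in $H_i$ that is ``$\rho$-independent in $H_i$'' pulls back to a $(1-\eps)\rho$-independent set in $G$ (distances shrink by at most the additive error, which is $\le \frac{\eps}{C}\rho$), and symmetrically for domination we get the $(1+\eps)\rho$ slack. The DP itself is the textbook one: for $\rho$-independent set we track, for each bag, the chosen vertices in the bag together with the minimum-over-partial-solutions ``distance-to-nearest-chosen-vertex profile'' truncated at $\rho$; for $\rho$-dominating set we additionally track which bag vertices are already dominated. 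The subtlety is that distances in $H_i$ can be larger than $\rho$ even between vertices in the same bag, so one must use the standard trick of recording, per bag vertex, the distance (capped at $\rho$, discretized to $O(\eps^{-1})$ scales or kept exact via the layered structure) to the closest selected vertex along partial solutions; this gives a DP table of size $2^{O(t)} \cdot (\rho\text{-profiles})^{O(t)} = 2^{\tilde O(t)} = 2^{\tilde O(\eps^{-1}(\log\log n)^2)}$ per bag, hence total running time $2^{\tilde O(\eps^{-1}(\log\log n)^2)}\cdot n$. Absorbing $(\log\log n)^2$: for any fixed $\kappa>0$ we have $(\log\log n)^2 = O_\kappa(\eps^{-\kappa}\cdot \eps^{\kappa}(\log\log n)^2)$ and $2^{\eps^{\kappa}(\log\log n)^2} = n^{o(1)}$, so the running time is bounded by $2^{\tilde O(1/\eps^{2+\kappa})}\cdot n^{1+o(1)}$, matching the claimed bound. (Alternatively one keeps the $2^{\tilde O(\eps^{-1}(\log\log n)^2)}n$ form and observes it is $\le 2^{\tilde O(\eps^{-1-\kappa})}n^{1+o(1)}$; either rescaling of $\kappa$ works.)

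Third I would handle the global assembly across bands: because a $\rho$-independent set cannot contain two vertices from layers more than $\rho$ apart that are also at graph-distance $< \rho$, and the bands have width $\Theta(\rho/\eps) \gg \rho$, conflicts only occur between a band and its immediate neighbour in the band decomposition; the Baker shifting over $O(1/\eps)$ offsets lets us delete one band class entirely, losing only an $\eps$-fraction of the measure, after which the remaining bands are far enough apart that a union of per-band solutions is globally $(1\pm\eps)\rho$-feasible. We take the best offset. One must also restrict $H_i$ to (copies of) the relevant band vertices and argue the host graph's treewidth bound is inherited — immediate since $H_i$ already has treewidth $t$. Summing the per-band running times and multiplying by the $O(1/\eps)$ offsets keeps the total at $2^{\tilde O(1/\eps^{2+\kappa})}n^{1+o(1)}$.

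The main obstacle I expect is the DP over $H_i$ when distances are large and the host is an \emph{arbitrary} treewidth-$t$ graph rather than a tree: faithfully propagating a $\rho$-truncated ``distance-to-selected'' profile through a tree decomposition requires care (one cannot just track which bag vertices are selected; one needs, for each bag vertex, a value in $\{0,1,\dots,\rho,\infty\}$, or its $\eps$-discretization, and must argue monotonicity/consistency across separators so that join nodes combine correctly). This is exactly the technical heart of \cite{FKS19}'s Baker-problem algorithm, so I would invoke their DP lemma essentially verbatim, only substituting our embedding; the genuinely new content of the theorem is the running time, which comes entirely from plugging \Cref{thm:PlanarToTreewidth} (almost-linear construction, treewidth $O(\eps^{-1}(\log\log n)^2)$) in place of the polynomial-time, $\eps^{-\Omega(1)}$-treewidth embedding of \cite{FKS19}, plus the observation that $2^{(\log\log n)^2}=n^{o(1)}$ lets us convert the $(\log\log n)$ factors into an $n^{o(1)}$ overhead and an arbitrarily small polynomial loss in the $\eps$-dependence.
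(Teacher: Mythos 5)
Your overall route is the same as the paper's: Baker-style banding/shifting to reduce to pieces of diameter $O(\rho/\eps)$, embedding each piece via \Cref{thm:PlanarToTreewidth}, solving the metric problems on the bounded-treewidth hosts by dynamic programming, re-assembling across bands, and converting the $(\log\log n)$ factors into $n^{o(1)}$ at the cost of an arbitrarily small polynomial loss in $\eps$ (the paper does this via Young's inequality). However, three places where the actual accounting lives are glossed or mis-tracked. First, the treewidth of each host graph: to get additive distortion $+\Theta(\eps)\rho$ on a band graph of diameter $\Theta(\rho/\eps)$ you must run \Cref{thm:PlanarToTreewidth} with distortion parameter $\Theta(\eps^2)$, so the treewidth is $O(\eps^{-2}(\log\log n)^2)$, not the $O(\eps^{-1}(\log\log n)^2)$ you carry; this is why the correct final exponent is $\eps^{-(2+\kappa)}$ and why your parenthetical ``alternatively $2^{\tilde O(\eps^{-1-\kappa})}$'' cannot be right. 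Second, the DP: the paper does not hand-roll it but invokes the algorithm of Katsikarelis, Lampis and Paschos (\Cref{lm:KLS-IS-DS}, in a terminal-restricted form) whose running time is $n\cdot(\frac{\log n}{\eps})^{O(\tw)}$; your claim of a $2^{\tilde O(\tw)}$-size table based on distance profiles discretized to $O(1/\eps)$ granularity is not justified, because rounding errors accumulate across the tree decomposition --- controlling this is exactly what forces the $\log n$ base, hence the $(\log\log n)^3$ term and the Young's-inequality step $\eps^{-2}\log\tfrac1\eps(\log\log n)^3\le O_\kappa\bigl(\eps^{-(2+\kappa)}(\log\tfrac1\eps)^{1+\kappa/2}+(\log\log n)^{3+6/\kappa}\bigr)$ in the paper's time analysis.

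Third, the assembly for the dominating set is not ``symmetric'' to the independent set and your paragraph only argues the IS side. You cannot delete (or ignore) boundary vertices for domination --- they still must be dominated --- and a dominator of a band vertex may legitimately lie just outside the band. The paper handles this by working with enlarged bands $U^+_{j,\sigma}$, taking only $U_{j,\sigma}$ as terminals to be dominated, assigning measure $\infty$ to copies outside $U^+_{j,\sigma}$, and then charging the union over bands against OPT with a $(1+\eps)$ overcount because each vertex lies in at most $2/\eps+2$ enlarged bands; the IS side instead shrinks to interiors $U^-_{j,\sigma}$ (as terminals the solution must live in) and undercounts by $(1-\eps)$. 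These fixes are standard and your final bound matches the theorem, but as written the proposal does not establish the dominating-set half nor the stated running time.
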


We could choose $\kappa = 0.01$ in \Cref{thm:MetricBakerPlanar} to get a PTAS with running time $2^{\tilde{O}(1/\eps^{2.01})}\cdot n^{1+o(1)}$ for the $\rho$-independent set/dominating set problems. The dependency on $1/\eps$ of the running time of our PTAS is much smaller than that of~\cite{FKS19}.

By using our new embedding in \Cref{thm:PlanarToTreewidth} and some additional ideas, we significantly improve the running time of the PTAS by Cohen-Addad \etal~\cite{CFKL20} to almost linear time as asked in \Cref{q:minor-emb}. See \Cref{tab:Application} for a summary.

\begin{restatable}[PTAS for Bounded-Capacity VRP in Planar Graphs]{theorem}{CVRPPlanar}
	\label{thm:VRP-planar} There is a randomized polynomial time approximation scheme for the bounded-capacity VRP in planar graphs that runs in time $O_{\epsilon}(1)\cdot n^{1+o(1)}$.  
\end{restatable}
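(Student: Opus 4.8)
I would follow the reduction of Cohen-Addad \etal~\cite{CFKL20}, which turns any deterministic additive embedding of planar graphs of diameter $D$ into treewidth-$k$ graphs with distortion $+\eps D$ into a PTAS for bounded-capacity VRP, and simply replace their black-box embedding of \cite{FKS19} by our \Cref{thm:PlanarToTreewidth}; the real work is to re-implement every other step so that it too runs in near-linear time. Concretely the algorithm has three parts: (a) a preprocessing/decomposition step that, at a multiplicative loss of $1+\eps$, reduces the input to a bounded collection of planar sub-instances $G_1,\dots,G_t$, each of diameter $D_i$ small enough relative to its own optimum that an additive error of $+\eps D_i$ per ``leg'' of a tour is affordable (this is exactly the content of the reduction of \cite{CFKL20}, building on \cite{BKS17,BKS19}; the only randomness in the whole algorithm is a random shift used here); (b) applying \Cref{thm:PlanarToTreewidth} to each $G_i$ to obtain a dominating embedding $f_i:V(G_i)\to H_i$ into a graph $H_i$ of treewidth $w=O(\eps^{-1}(\log\log n)^2)$ with additive distortion $+\eps D_i$, together with a tree decomposition of width $O(w)$, in $O(n\log^3 n/\eps^2)$ time; (c) solving bounded-capacity VRP on each $H_i$ by dynamic programming over that tree decomposition, and mapping the resulting tours back along $f_i$ to walks in $G_i$ and hence in $G$.

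For correctness, since each $f_i$ is dominating a VRP solution in $H_i$ maps to one in $G_i$ (hence in $G$) of no larger cost; conversely, writing an optimal tour as a concatenation of at most $Q+1$ shortest-path legs between consecutively visited clients (and the depot), each leg incurs additive error at most $\eps D_i$ under $f_i$, so the optimum of the $H_i$-instance exceeds that of $G_i$ by at most $\eps D_i$ times the number of tours — and step (a) is arranged precisely so that, after summing over $i$ and rescaling $\eps \leftarrow \eps/\Theta(Q)$, the total loss is $\eps\cdot\opt$. For the running time, the embeddings of step (b) cost $t\cdot O(n\log^3 n/\eps^2) = O_\eps(1)\cdot n^{1+o(1)}$ in total (the $G_i$ are near-disjoint, so $t$ and $\sum_i|V(H_i)|$ are $n^{1+o(1)}$); step (a) is carried out in near-linear time using near-linear-time planar single-source shortest paths and linear-time planar separators; and the bounded-treewidth VRP dynamic program over a width-$O(w)$ tree decomposition runs in time $f(w,Q)\cdot|V(H_i)|$ for $f(w,Q)=2^{\poly(w,Q)}$. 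Since $\poly(w)=\poly_\eps(\log\log n)=o(\log n)$ and $Q$ is a constant, $f(w,Q)=n^{o(1)}$ for fixed $\eps,Q$, so the DP over all $H_i$ costs $O_\eps(1)\cdot n^{1+o(1)}$, and mapping the at-most-$O(\opt)$ total output tour length back along the $f_i$ stays within the same budget. Finally, since the output is always feasible (cost $\ge \opt$) and has expected cost at most $(1+\eps)\opt$, $O(1)$ independent repetitions together with Markov's inequality yield a randomized PTAS.

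The main obstacle is step (c), and more broadly ensuring linearity end to end. The algorithm of \cite{CFKL20} runs in $n^{O(1)}$ time (the embedding of \cite{FKS19} alone already forces at least quadratic time), so two points need care: first, one must verify that the decomposition of step (a) admits a near-linear-time planar implementation rather than the naive polynomial one; second, and most importantly, one needs a formulation of the bounded-treewidth VRP dynamic program whose dependence on the number of vertices is \emph{linear} (with only an FPT factor $f(w,Q)$), rather than the $|V(H_i)|^{\Theta(w)}$ that a careless DP would produce — the latter would be super-polynomial here because $w$ grows (slowly) with $n$. It is exactly the exponential improvement of \Cref{thm:PlanarToTreewidth} over the $O(\eps^{-1}\log n)$ treewidth implicit in \cite{EKM14} that makes $f(w,Q)=n^{o(1)}$, so that the FPT dependence is harmless; with treewidth $\Theta(\log n)$ the same scheme would only give a quasi-polynomial running time.
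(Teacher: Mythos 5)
Your overall route is the same as the paper's: a random, depot-centered slicing of $G$ into bounded-diameter planar bands (the paper packages steps (a)+(b) of your plan as the $(r,\eps)$-rooted stochastic embedding of \Cref{lm:Root-Embedding-Planar}, built exactly by a random shift, contraction/removal per band, and one application of \Cref{thm:PlanarToTreewidth} per band), followed by the Becker--Klein--Schild reduction (\Cref{lm:BKS}) and a VRP dynamic program on the treewidth-$O_\eps((\log\log n)^2)$ host graphs. Your correctness sketch (domination one way; legs of an optimal tour charged after rescaling $\eps\leftarrow\eps/\Theta(Q)$) is the content of \Cref{lm:BKS}, with the caveat that the actual guarantee is the rooted form $\eps\,(d_G(r,u)+d_G(r,v))$ charged against $\opt\ge \frac{2}{Q}\sum_{v\in K}d_G(r,v)$, rather than ``$+\eps D_i$ affordable relative to the sub-instance optimum''; these amount to the same mechanism.

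The one genuine gap is the step you yourself flag as the main obstacle and then assert away: a bounded-treewidth VRP dynamic program running in $f(w,Q)\cdot|V(H_i)|$ time with $f(w,Q)=2^{\poly(w,Q)}$. No such FPT-in-$(w,Q)$ algorithm is known or cited; the DP that exists (Theorem 8 of \cite{CFKL20}, restated as \Cref{lm:DP-VRP-TW}) runs in $(Q\eps^{-1}\log n)^{O(Q\cdot\tw)}\cdot n$ plus the cost of computing distances for $O(n\cdot\tw^2)$ pairs --- note the base $\log n$, coming from discretizing tour lengths, so it is not FPT. Fortunately the weaker bound suffices: with $\tw=O_\eps((\log\log n)^2)$ one gets $(\log n)^{O_\eps((\log\log n)^2)}=2^{O_\eps((\log\log n)^3)}=n^{o(1)}$, which is exactly the computation in the paper's proof. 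The second hidden issue is the $\Omega(n\cdot\tw^2)$ distance computations inside that DP, which naively cost $\Omega(n^2)$; the paper removes this by plugging in the exact distance oracle of Chaudhuri and Zaroliagis \cite{CZ00} for treewidth-$\tw$ graphs (\Cref{lm:distOracle-tw}), yielding \Cref{lm:DP-VRP}. So your proposal is repaired precisely by replacing the asserted FPT DP with these two cited ingredients; as written, the claim $f(w,Q)=2^{\poly(w,Q)}$ is an overreach, and without some concrete DP interface the $n^{1+o(1)}$ bound does not follow.
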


\begin{table}
	\renewcommand{\arraystretch}{1.3}
	\begin{minipage}[c]{0.5\linewidth}
		\scalebox{0.89}{\begin{tabular}{c|l|l|}
				\multicolumn{3}{c}{Bounded-capacity vehicle routing problem}                                                                                                                     \\ \hline
				\multicolumn{1}{|c|}{}                                  & \multicolumn{1}{l|}{Running time}                                                  & \multicolumn{1}{l|}{Ref}                   \\ \hline
				\multicolumn{1}{|c|}{Euclidean plane}                   & $2^{O(\epsilon^{-3})} + O(n\log n)$ & \cite{AKTT97}            \\ \hline
				\multicolumn{1}{|c|}{\multirow{4}{*}{Planar graph}}     & $n^{\poly(\eps^{-1}\log n)}$                            & \cite{BKS17}             \\ \cline{2-3} 
				\multicolumn{1}{|c|}{}                                  & $n^{O_{\epsilon}(1)}$              & \cite{BKS19}             \\ \cline{2-3} 
				\multicolumn{1}{|c|}{}                                  & $O_{\epsilon}(1)\cdot n^{O(1)}$          & \cite{CFKL20}            \\ \cline{2-3} 
				\multicolumn{1}{|c|}{}                                  & $O_{\epsilon}(1)\cdot n^{1+o(1)}$        & \Cref{thm:VRP-planar}    \\ \hline
				\multicolumn{1}{|c|}{\multirow{2}{*}{$K_r$-minor free}} & $n^{O_{r,\eps}(\log\log n)}$                            & \cite{CFKL20}            \\ \cline{2-3} 
				\multicolumn{1}{|c|}{}                                  & $O_{\epsilon,r}(1)\cdot n^{O(1)}$        & \Cref{thm:VRP-minor} \\ \hline
		\end{tabular}}
	\end{minipage}%
	\begin{minipage}[c]{0.5\linewidth}
		\renewcommand{\arraystretch}{1.67}
		\scalebox{0.89}{
			\begin{tabular}{c|l|l|}
				\multicolumn{3}{c}{Metric $\rho$ dominating/independent set problems}                                                                                                                     \\ \hline
				\multicolumn{1}{|c|}{}                                  & \multicolumn{1}{l|}{Running time}                                                  & \multicolumn{1}{l|}{Ref}                   \\ \hline
				\multicolumn{1}{|c|}{\multirow{3}{*}{Planar graph}}     & $n^{O_\eps(1)}$ & \cite{EKM14}                 \\ \cline{2-3} 
				\multicolumn{1}{|c|}{}&\scalebox{0.89}{\renewcommand{\arraystretch}{1.2}\begin{tabular}[c]{@{}l@{}} $2^{O(\eps^{-c+1})}\cdot n^{O(1)}$ \\ for $c\geq 19$\end{tabular}    }   & \cite{FKS19}                 \\ \cline{2-3} 					
				\multicolumn{1}{|c|}{}                                  & \scalebox{0.89}{\renewcommand{\arraystretch}{1.35}\begin{tabular}[c]{@{}l@{}}$2^{\tilde{O}(\eps^{-(2+\kappa)})}\cdot n^{1+o(1)}$,\\ any fixed $\kappa > 0$\end{tabular}    }                                                                                                                   & \Cref{thm:MetricBakerPlanar} \\ \hline
				\multicolumn{1}{|c|}{\multirow{2}{*}{$K_r$-minor free}} & $n^{\tilde{O}(\eps^{-2}\log(n))}$                                                                                    & \cite{FL21}                  \\ \cline{2-3} 
				\multicolumn{1}{|c|}{}                                  & $n^{\tilde{O}(\eps^{-2}(\log\log(n))^3)}$                                                      & \Cref{thm:MetricBakerMinor}  \\ \hline
		\end{tabular}
	}
	\end{minipage}
	\caption{\it On the left illustrated the running time of all the approximation schemes for the bounded-capacity vehicle routing problem. 
		On the right illustrated the running time of the  bicriteria approximation schemes for the metric $\rho$ dominating/independent set problems in their full generality.
		\label{tab:Application}}
\end{table}

In clan embeddings and Ramsey type embeddings, we observe that  the construction of \blind \cite{FL21}, in combination with the construction of Cohen-Addad \etal\cite{CFKL20}, can be seen as providing a reduction from  planar graphs with one vortex to $K_r$-minor-free graphs. Roughly speaking, the reduction implies that if   planar graphs with one vortex have an embedding with treewidth $t(\eps,n)$ and additive distortion $+\eps D$, then $K_r$-minor-free graphs has a clan embedding/Ramsey type embedding with treewidth $t(O_r(\frac{\delta \eps}{\log(n)}),n) + O_r(\log(n))$ and additive distortion $+\eps D$.  \Blind \cite{FL21} used the embedding of planar graphs with one vortex by Cohen-Addad \etal\cite{CFKL20}, which has treewidth  $t(\eps,n) = O(\frac{\log n}{\eps})$, to get a clan embedding/Ramsey type embedding of treewidth $O_r(\frac{\log n}{(\delta\eps)/\log(n)}) + O_r(\log n) = O_r(\frac{\log^2 n}{\delta\eps})$.  

 There are two fundamental barriers if one wants to improve the treewidth bound of $O_r(\frac{\log^2 n}{\delta\eps})$: improving the embedding of planar graphs with one vortex and removing the (both multiplicative and additive) loss $O(\log n)$ in the reduction step. Our embedding for planar graphs with one vortex (\Cref{lm:nearlyPlanar-emb}) overcomes the former barrier. Specifically, by plugging in our embedding for planar graphs with one vortex, we obtain treewidth $O_r(\frac{\log n(\log\log n)^2}{\eps\delta})$ in both clan embedding and  Ramsey type embedding; the improvement is from quadratic in $\log(n)$ to nearly linear in $\log(n)$. While we are not able to fully answer \Cref{q:clan-Ramsey-minor} due to the second barrier, we are one step closer to its resolution.

\begin{restatable}[Clan Embeddings of Minor-free Graphs into Low Treewidth Graphs]{theorem}{ClanMinorToTreewidth}
	\label{thm:MinorClan}
	Given a $K_r$-minor-free $n$-vertex graph $G=(V,E,w)$ of diameter $D$  and parameters $\eps\in(0,\frac14)$, $\delta\in(0,1)$, there is a distribution $\mathcal{D}$ over clan embeddings $(f,\chi)$ with additive distortion $+\eps D$ into graphs of treewidth $O_{r}(\frac{\log n(\log\log n)^2}{\eps\delta})$  such that for every $v\in V$, $\mathbb{E}[|f(v)|]\le 1+\delta$.
\end{restatable}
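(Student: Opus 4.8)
The plan is to instantiate the reduction implicit in the combination of \cite{FL21} and \cite{CFKL20} with our new embedding of planar graphs with one vortex (\Cref{lm:nearlyPlanar-emb}). Recall the shape of this reduction. Starting from a $K_r$-minor-free graph $G$ of diameter $D$, the Robertson--Seymour structure theorem writes $G$ as a bounded clique-sum of almost-embeddable pieces; the preprocessing of \cite{CFKL20} reduces each piece, at an additive cost of at most $\eps D/2$ in distortion and with an interface of size $O_r(1)$ to the rest of the graph, to a \emph{planar graph with one vortex} of pathwidth $O_r(1)$. The clique-sums form a tree that may have depth $\Omega(n)$; this is precisely why a \emph{clan} (one-to-many) embedding is used rather than a one-to-one embedding --- following \cite{FL21}, the decomposition is processed with probabilistic vertex duplication, where the per-step duplication probability is calibrated so that $\mathbb{E}[|f(v)|]\le 1+\delta$ while every vertex's chief participates in only $O_r(\log n)$ ``clean'' separations. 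The net effect, as spelled out in \cite{FL21,CFKL20}, is a reduction that turns any (deterministic, worst-case) additive embedding of planar graphs with one vortex of treewidth $t(\eps,n)$ and distortion $+\eps D$ into a distribution over clan embeddings of $K_r$-minor-free graphs with additive distortion $+\eps D$, expected clan size $\le 1+\delta$, and treewidth
\[
t\!\left(\Theta_r\!\left(\tfrac{\eps\delta}{\log n}\right),\, n\right) \;+\; O_r(\log n);
\]
the $\Theta_r(\eps\delta/\log n)$ rescaling of the accuracy reflects that both the $\eps$-error budget and the $\delta$-duplication budget must be split across the $O_r(\log n)$ recursion levels (and the clique-sum interfaces), and the $O_r(\log n)$ additive term is the total size of the interface carried through the recursion (portals, apex vertices, separating-clique copies).

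It remains to substitute \Cref{lm:nearlyPlanar-emb}, for which $t(\eps,n) = O\!\left((\log\log n)^2/\eps\right)$. Plugging in $\eps' = \Theta_r(\eps\delta/\log n)$ gives treewidth
\[
O\!\left(\frac{(\log\log n)^2}{\eps'}\right) + O_r(\log n) \;=\; O_r\!\left(\frac{\log n\,(\log\log n)^2}{\eps\delta}\right) + O_r(\log n) \;=\; O_r\!\left(\frac{\log n\,(\log\log n)^2}{\eps\delta}\right),
\]
since the first term dominates the second. The distortion and clan-size guarantees are exactly those delivered by the reduction, which proves \Cref{thm:MinorClan}. (\Cref{thm:MinorRamsey} follows in the same way, using the Ramsey-type variant of the reduction from \cite{FL21}.)

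The genuine obstacle --- obtaining treewidth only doubly logarithmic in $n$ for planar graphs with one vortex, even though the underlying balanced shortest-path separator recursion has depth $\Theta(\log n)$ --- lives entirely inside \Cref{lm:nearlyPlanar-emb} (the tree-emulator / decomposition-tree shortcutting construction), and is not touched here. Within the present argument, the only point requiring care is checking that the reduction of \cite{FL21} still goes through verbatim when its planar-with-one-vortex subroutine is swapped for \Cref{lm:nearlyPlanar-emb}: one must verify that (i) the per-level accuracy $\eps' = \Theta_r(\eps\delta/\log n)$ indeed keeps the accumulated additive error below $\eps D$; (ii) the per-level duplication probability keeps the expected clan size below $1+\delta$; and (iii) the tree-decompositions output by \Cref{lm:nearlyPlanar-emb} glue, along the $O_r(\log n)$-size interfaces, into a valid tree-decomposition of the assembled host of the stated width. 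Each of these is identical to \cite{FL21}, the only difference being the black-box treewidth bound, so the verification is routine.
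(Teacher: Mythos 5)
Your proposal is correct and follows essentially the same route as the paper: the paper formalizes exactly the black-box reduction you describe (\Cref{lm:NearlyClanRamsey} adapted from \cite{FL21}, composed with \Cref{lm:CKFL-embed-genus-vortex} from \cite{CFKL20} into \Cref{lm:Red-NearlyPlanar-ClanRamsey}, yielding treewidth $t(O_r(1),O_r(\eps\delta/\log n),n)+O_r(\log n)$ for clan and Ramsey-type embeddings), and then plugs in the clique-preserving one-to-many embedding of $h$-vortex planar graphs from \Cref{lm:nearlyPlanar-emb} with $t(h,\eps,n)=O(h(\log\log n)^2/\eps)$, giving the same arithmetic you carry out. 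The only detail worth making explicit is that the reduction requires the vortex-planar embedding to be \emph{clique-preserving} (your verification point (iii)), which \Cref{lm:nearlyPlanar-emb} supplies.
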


\begin{restatable}[Ramsey Type Embeddings of Minor-free Graphs into Low Treewidth Graphs]{theorem}{RamseyMinorToTreewidth}
	\label{thm:MinorRamsey}
	Given an $n$-vertex $K_r$-minor-free graph $G=(V,E,w)$ with diameter $D$ and parameters ${\eps\in(0,\frac14)}$, $\delta\in(0,1)$, there is a distribution over dominating embeddings $g:G\rightarrow H$ into treewidth $O_{r}(\frac{\log n(\log\log n)^2}{\eps\delta})$ graphs, such that there is a subset $M\subseteq V$ of vertices for which the following claims hold:
	\begin{enumerate}[noitemsep]
		\item For every $u\in V$, $\Pr[u\in M]\ge 1-\delta$.
		\item For every $u\in M$ and $v\in V$, $d_H(g(u),g(v))\le d_G(u,v)+\eps D$.
	\end{enumerate}
\end{restatable}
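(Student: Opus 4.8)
The plan is to follow the reduction paradigm already used in \cite{FL21}, but feed it our improved embedding of planar graphs with one vortex (\Cref{lm:nearlyPlanar-emb}) in place of the $O(\log n/\eps)$-treewidth embedding of \cite{CFKL20}. Concretely, I would first invoke the Robertson--Seymour structure theorem to decompose the $K_r$-minor-free graph $G$ into a bounded-width clique-sum of pieces, each of which is a graph of bounded genus with a bounded number of apices and a bounded number of vortices, each vortex having bounded pathwidth. The standard reductions (genus to planar via handle-cutting, removing apices, reducing the number of vortices one at a time) each cost only an additive $O_r(1)$ in treewidth and an additive $O_r(\eps D)$-type loss in distortion after rescaling $\eps$ by an $O_r(1)$ factor; these steps are already carried out in \cite{CFKL20,FL21} and I would cite them. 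This leaves the core case: a planar graph with a single vortex of bounded pathwidth, for which \Cref{lm:nearlyPlanar-emb} gives a deterministic additive embedding into a graph of treewidth $O((\log\log n)^2/\eps)$.

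Second, I would assemble the Ramsey-type guarantee along the clique-sum decomposition tree. The key technical device from \cite{FL21} is that one processes the clique-sum tree, and at each join along a clique separator $S$ one must ``pay'' to keep the separator vertices coherent across pieces; randomizing which pieces absorb which separator vertices is what yields the set $M$ with $\Pr[u\in M]\ge 1-\delta$. The number of pieces on any root-to-leaf path in the clique-sum tree can be $\Omega(\log n)$ after balancing, so the per-level treewidth cost of carrying the separators multiplies by $O(\log n)$, and to make the failure probability per level at most $\delta/O(\log n)$ one must shrink the per-piece distortion budget by a $\Theta(\delta\eps/\log n)$ factor. Plugging $\eps' = \Theta_r(\delta\eps/\log n)$ into \Cref{lm:nearlyPlanar-emb} gives treewidth $O((\log\log n)^2/\eps') = O_r(\log n (\log\log n)^2/(\eps\delta))$ per piece, and the additive $O_r(\log n)$ term from carrying $O(\log n)$ separators of bounded size is dominated by this; summing over the at most $O(\log n)$ levels of the decomposition and using that the pieces at a fixed level are vertex-disjoint, the treewidth of the final host graph $H$ is $O_r(\log n(\log\log n)^2/(\eps\delta))$, matching the claim. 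The distortion telescopes to $+\eps D$ by a union over the $O(\log n)$ levels, each contributing $+\eps' D' = O_r(\delta\eps/\log n)\cdot D$.

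Third, the two Ramsey properties follow directly from this construction: for property~1, a fixed vertex $u$ is ``lost'' at a given level only with probability $O(\delta/\log n)$ (this is exactly the event that $u$ lies on a clique separator and its randomly-chosen home piece is not the one whose embedding we use to measure distances from $u$), so a union bound over the $O(\log n)$ levels gives $\Pr[u\notin M]\le\delta$; for property~2, if $u\in M$ then at every level $u$ is routed consistently with the worst-case additive guarantee of the underlying planar-with-vortex embedding, so $d_H(g(u),g(v))\le d_G(u,v)+\sum_{\text{levels}}\eps' D = d_G(u,v)+\eps D$. The fact that each $g$ in the support is a genuine one-to-one dominating embedding is inherited from the dominating property of \Cref{lm:nearlyPlanar-emb} together with the observation that clique-sum joins only add distances, never shrink them.

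The main obstacle, and the reason this does not fully resolve \Cref{q:clan-Ramsey-minor}, is the $O(\log n)$ loss in the reduction from minor-free to planar-with-one-vortex --- both the multiplicative loss (shrinking $\eps$ to $\eps\delta/\log n$ forces the $\log n$ factor back into the treewidth) and the additive $O_r(\log n)$ loss from carrying clique separators along an $\Omega(\log n)$-depth decomposition. Our shortcutting idea removes the analogous barrier inside the planar embedding, but it does not obviously apply to the clique-sum decomposition tree, because there the separators are not shortest paths and the ``emulator for trees'' trick from \Cref{subsec:technique} does not preserve the clique-sum structure needed for the inductive distortion argument. So the bulk of the proof is routine bookkeeping following \cite{FL21}; the only genuinely new input is the substitution of \Cref{lm:nearlyPlanar-emb}, and verifying that its \emph{worst-case} (not just expected) additive guarantee is what the Ramsey reduction needs.
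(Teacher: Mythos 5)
Your proposal is correct and takes essentially the same route as the paper: the paper black-boxes the \cite{FL21} Ramsey/clan reduction (\Cref{lm:NearlyClanRamsey}) together with the \cite{CFKL20} reduction from multi-vortex-genus graphs to planar graphs with one vortex (\Cref{lm:CKFL-embed-genus-vortex}), and then simply plugs in \Cref{lm:nearlyPlanar-emb} with the distortion parameter rescaled to $\Theta(\delta\eps/\log n)$, which is exactly your plan and yields the claimed treewidth $O_r\bigl(\frac{\log n(\log\log n)^2}{\eps\delta}\bigr)$. The only caveat is that apices are not removed by a cheap deterministic $O_r(1)$-cost step (that would contradict the deterministic lower bound) but are handled inside the cited \cite{FL21} machinery; since you defer to that machinery anyway, your argument coincides with the paper's.
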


Using our new clan and Ramsey embeddings in \Cref{thm:MinorClan} and \Cref{thm:MinorRamsey} on top of the algorithm from \cite{FL21},  we improve the running time of the QPTAS for the metric Baker's problems in $K_r$-minor-free graphs \cite{FL21} from $n^{\tilde{O}(\eps^{-2}\log(n))}$ to $n^{\tilde{O}(\eps^{-2}(\log\log(n))^3)}$. 

\begin{restatable}[QPTAS for Metric Baker Problems in Minor-free Graphs]{theorem}{MetricBakerMinor}
	\label{thm:MetricBakerMinor} 
	Given an $n$-vertex $K_r$-minor-free graph $G(V,E,w)$, two parameters $\epsilon \in (0,1)$ and $\rho > 0$, and a vertex weight function $\mu: V\rightarrow \mathbb{R}^+$, one can find in $n^{\tilde{O}_r(\eps^{-2}(\log\log(n))^3)}$ time:\\
	(1) a $(1-\epsilon)\rho$-independent  set $I$ such that for every $\rho$-independent  set $\tilde{I}$, $\mu(I) \geq (1-\epsilon)\mu(\tilde{I})$, and\\
	(2) a  $(1+\epsilon)\rho$-dominating set $S$ such that for every $\rho$-dominating set $\tilde{S}$, $\mu(S) \leq (1+\epsilon)\mu(\tilde{S})$.
\end{restatable}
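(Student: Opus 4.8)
The plan is to invoke the algorithmic framework of \cite{FL21} as a black box and to substitute our improved embeddings for the ones used there. Recall that, after a standard localization step (partitioning $G$ along BFS layers so that the relevant diameter parameter $D$ becomes $\Theta(\rho)$, hence the additive distortion $+\eps D$ becomes $+O(\eps)\rho$), \cite{FL21} reduces the bicriteria $\rho$-dominating set problem to a dynamic program run on a single sampled clan embedding of $G$ into a treewidth-$\tw$ graph, and the bicriteria $\rho$-independent set problem to a dynamic program run on a single sampled Ramsey-type embedding into a treewidth-$\tw$ graph; in both cases the running time is $(\log n)^{O(\tw)}\cdot n^{O(1)}$, equivalently $2^{O(\tw\log\log n)}\cdot n^{O(1)}$, and the two bicriteria slacks are controlled by the embedding's distortion parameter $\eps$ and by the parameter $\delta$ governing the expected clan size (respectively, the probability that a vertex lies in the good set $M$). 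It therefore suffices to feed this framework the embeddings of \Cref{thm:MinorClan} and \Cref{thm:MinorRamsey} in place of the treewidth-$O_r(\log^2 n)$ embeddings originally used in \cite{FL21}.

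Concretely, run the framework with distortion parameter $\eps/c$ for a suitable absolute constant $c$ and with $\delta=\eps$. By \Cref{thm:MinorClan} and \Cref{thm:MinorRamsey}, the resulting embeddings have treewidth
\[
\tw \;=\; O_r\!\left(\frac{\log n\,(\log\log n)^2}{(\eps/c)\cdot\eps}\right)\;=\;O_r\!\left(\eps^{-2}\log n\,(\log\log n)^2\right),
\]
expected clan size $1+\delta=1+\eps$ (respectively $\Pr[u\in M]\ge 1-\eps$), and worst-case additive distortion $+(\eps/c)D$, i.e.\ $+O(\eps)\rho$ after localization. Substituting this value of $\tw$ into the running time of \cite{FL21} gives
\[
(\log n)^{O(\tw)}\cdot n^{O(1)}\;=\;2^{O(\tw\log\log n)}\cdot n^{O(1)}\;=\;2^{O_r(\eps^{-2}\log n\,(\log\log n)^3)}\cdot n^{O(1)}\;=\;n^{\tilde{O}_r(\eps^{-2}(\log\log n)^3)},
\]
which is exactly the claimed bound; the additional $n^{O(1)}$ factor needed to construct the embedding is polynomial by \Cref{thm:MinorClan} and \Cref{thm:MinorRamsey} and is absorbed.

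For correctness I would re-run the analysis of \cite{FL21} with these parameters. In the $\rho$-dominating direction, the dynamic program finds, in the sampled clan-embedding graph $H$, a minimum-measure set that lies within $(1+O(\eps))\rho$ of at least one copy of every vertex; pulling this set back through the clan map yields a $(1+\eps)\rho$-dominating set $S$ of $G$, and since the optimum $\tilde S$ of $G$ lifts via chiefs to a feasible solution in $H$ whose pulled-back measure is no larger, the bound $\mathbb{E}[|f(v)|]\le 1+\eps$ gives $\mathbb{E}[\mu(S)]\le(1+\eps)\mu(\tilde S)$; taking the best of $O(1/\eps)$ independent runs boosts this to the high-probability guarantee in the statement. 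The $\rho$-independent direction is the mirror image: restricting attention to the good set $M$, whose pairwise distances are $+O(\eps)\rho$-faithful, one extracts from $H$ an independent set of large measure; since $\Pr[v\in M]\ge 1-\eps$ for every $v$, linearity of expectation gives $\mathbb{E}[\mu(\tilde I\cap M)]\ge(1-\eps)\mu(\tilde I)$, so the output is a $(1-\eps)\rho$-independent set with $\mathbb{E}[\mu(I)]\ge(1-\eps)\mu(\tilde I)$, again boosted by repetition.

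The main obstacle is bookkeeping rather than a new idea. One has to (i) verify that the \emph{worst-case} additive-distortion guarantee of \Cref{thm:MinorClan} and \Cref{thm:MinorRamsey} is exactly the interface that the dynamic program of \cite{FL21} consumes; (ii) track precisely how the two slacks $\eps$ and $\delta$ enter the bicriteria bounds, so that the choice $\delta=\eps$ together with a constant-factor rescaling of $\eps$ suffices to meet the exact targets $(1\pm\eps)\rho$ and $(1\pm\eps)\mu$ in the theorem; and (iii) confirm that the extra internal randomness in our new embeddings (relative to those of \cite{FL21}) is harmless, which it is, since the dynamic program operates on one already-sampled bounded-treewidth graph and is oblivious to how that sample was produced.
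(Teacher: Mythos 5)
Your proposal is correct and follows essentially the same route as the paper: the paper also treats the \cite{FL21} framework as a black box (its Lemma on the adapted Theorems 7 and 8 of \cite{FL21}), plugs in the embeddings of \Cref{thm:MinorClan} and \Cref{thm:MinorRamsey} with both the distortion and clan/Ramsey parameters set to $\Theta(\eps)$, and obtains the bound $2^{\tilde O_r(\eps^{-2}\log n(\log\log n)^3)}\cdot\poly(n)=n^{\tilde O_r(\eps^{-2}(\log\log n)^3)}$ exactly as you compute. Your extra discussion of the localization step, the pull-back of solutions, and boosting the expected guarantees is consistent with (and already encapsulated in) the cited framework, so no new argument is needed.
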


\subsection{Techniques}\label{subsec:technique}

A key technical component in all the embeddings 
is an emulator for trees with \emph{low treewidth} and \emph{low hop path} between every pair of vertices. 
For a path $P$, denote by $\hop(P)$ the number of edges in the path $P$. For a pair of vertices $u,v\in V$, and parameter $h\in\mathbb{N}$, denote by $$d_G^{(h)}(u,v)=\min\{w(P)\mid P\mbox{ is a }u-v\mbox{ path, and }\hop(P)\le h\}$$ the weight of a minimum $u-v$ path with at most $h$ edges. If no such path exist, $d_G^{(h)}(u,v)=\infty$.

Given an edge-weighted tree $T=(V_T,E_T,w_T)$, we say that an edge-weighted graph $K = (V_K,E_K,w_T)$ is an \emph{$h$-hop emulator} for $T$ if $V_K = V_T$ 
and $d^{(h)}_K(u,v) = d_T(u,v)$ for every pair of vertices $u,v \in V$. We show that every tree admits a low-hop emulator with small treewidth.

\begin{restatable}{theorem}{TreeSpanner}
	\label{thm:TreeSpanner}  Given an edge-weighted $n$-vertex tree $T$, there is an $O(\log\log n)$-hop emulator $K_T$ for $T$ that has treewidth $O(\log \log n)$. Furthermore, $K_T$ can be constructed in $O(n\cdot\log\log n)$ time.
\end{restatable}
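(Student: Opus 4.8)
I would build $K_T$ by a recursion of depth $O(\log\log n)$ in which each level adds only $O(1)$ to both the hop-diameter and the treewidth, because each level shrinks the instance size $m$ to $O(\sqrt m)$. Every edge added to $K_T$ is a \emph{shortcut} $\{a,b\}$ of weight exactly $d_T(a,b)$, so no path in $K_T$ can be lighter than $d_T$; hence it is enough to exhibit, for each pair $u,v$, a path of $O(\log\log n)$ hops whose vertices occur \emph{in order} along the unique $T$-path $P_{uv}$, so that the shortcut weights telescope to $d_T(u,v)$.

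\emph{The decomposition.} First I would prove a separator lemma: every $n$-vertex tree $T$ ($n\ge 2$) has a \emph{portal set} $\Pi\subseteq V(T)$ with (i) $|\Pi|=O(\sqrt n)$; (ii) every component (``piece'') of $T-\Pi$ has at most $\sqrt n$ vertices; (iii) each piece is adjacent to at most two portals; and (iv) $\Pi$ contains every vertex of degree $\ge 3$ in $\mathrm{St}_T(\Pi)$, the minimal subtree of $T$ spanning $\Pi$. Properties (i)--(ii) are the classical tree $\sqrt n$-separator (repeatedly detach, at its root, a deepest subtree of size $>\sqrt n$), computed in $O(n)$ time; adding to $\Pi$ the branching vertices of its Steiner tree at most doubles $|\Pi|$ (a tree with $O(\sqrt n)$ leaves has $O(\sqrt n)$ branching vertices), does not increase piece sizes, and makes each piece either a subtree hanging off one portal or a degree-$2$ segment (with hanging subtrees) between two consecutive portals, giving (iii)--(iv).

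\emph{The construction and its correctness.} Given $T$, compute $\Pi$ and the pieces $C_1,\dots,C_k$, and let $T^{\star}$ be the edge-weighted tree on vertex set $\Pi$ obtained from $\mathrm{St}_T(\Pi)$ by suppressing its non-portal (necessarily degree-$2$, by (iv)) vertices, the edge between consecutive portals $p,p'$ receiving weight $d_T(p,p')$; then $|V(T^{\star})|=O(\sqrt n)$. Recursively set $K_{T^{\star}}:=\mathrm{Emul}(T^{\star})$, keeping all edges of $T^{\star}$, and $K_{C_j}:=\mathrm{Emul}(C_j)$, and put
$$
K_T \;:=\; K_{T^{\star}}\;\cup\;\bigcup_{j=1}^{k}K_{C_j}\;\cup\;\bigl\{\{v,p\}\,:\, j\in\{1,\dots,k\},\ v\in C_j,\ p\in\partial C_j\bigr\},
$$
each portal edge $\{v,p\}$ weighted $d_T(v,p)$; in the base case $n=O(1)$, take the complete graph with metric $d_T$. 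To route $u,v$: if they lie in a common piece, recurse inside $K_{C_j}$; otherwise let $p$ be the first and $p'$ the last portal on $P_{uv}$, so $P_{uv}$ runs inside $C_{j}\cup\{p\}$, then through $p\leadsto p'$, then inside $C_{j'}\cup\{p'\}$. By (iv), every maximal non-portal stretch of $P_{p\leadsto p'}$ collapses to a single edge of $T^{\star}$ of the correct weight, so $P_{p\leadsto p'}$ is a $T^{\star}$-path of length $d_T(p,p')$; hence route $u\to p$ by one portal edge, $p\to p'$ by the recursive $O(\log\log|V(T^{\star})|)$-hop route in $K_{T^{\star}}$, and $p'\to v$ by one portal edge (an outer hop is omitted when $u$ or $v$ is itself a portal). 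All hops lie in order along $P_{uv}$, so the total weight is $d_T(u,v)$ and the hop count is at most $H(O(\sqrt n))+2$, whence $H(n)=O(\log\log n)$; together with domination this makes $K_T$ an exact emulator. For treewidth, by induction $K_{T^{\star}}$ has a tree decomposition $\mathcal{X}^{\star}$ of width $O(\log\log n)$ containing, for each edge of $T^{\star}$ --- in particular for $\partial C_j$ when $|\partial C_j|=2$ --- a bag with both endpoints; for each piece, take the inductive width-$O(\log\log n)$ decomposition of $K_{C_j}$, insert the $\le 2$ vertices of $\partial C_j$ into every bag (width $+2$; this is the sole increase), and attach it to a bag of $\mathcal{X}^{\star}$ holding $\partial C_j$. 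Since pieces are disjoint and a vertex of $C_j$ is adjacent in $K_T$ only to $C_j\cup\partial C_j$, this is a valid decomposition of $K_T$ of width $\max\{\tw(K_{T^{\star}}),\max_j\tw(K_{C_j})\}+O(1)$, so $g(n)\le g(O(\sqrt n))+O(1)=O(\log\log n)$. Each level spends $O(n)$ time and a vertex survives into only $O(\log\log n)$ instances (sizes $O(n^{1/2^i})$ after $i$ levels), so the total time is $O(n\log\log n)$.

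\emph{Main obstacle.} The delicate point is making one recursion level cost $O(1)$ in \emph{both} parameters simultaneously. If one reached a piece's boundary portal recursively rather than by a direct edge, the hop recurrence would branch into three recursive calls and give only $2^{\Theta(\log\log n)}=\mathrm{polylog}(n)$ hops; the direct portal edges avoid this, and the key check is that they do \emph{not} hurt treewidth --- a bounded number of apex vertices raises a piece's treewidth by only $O(1)$ and can be absorbed when gluing $K_{C_j}$'s decomposition onto $\mathcal{X}^{\star}$ at a bag holding the (at most two) boundary portals. Ensuring that such a bag exists is precisely why the decomposition must satisfy (iii)--(iv), and I expect the cleanest statement and proof of that lemma --- the branching-vertex closure, with the verification that it preserves $|\Pi|=O(\sqrt n)$ and piece sizes $\le\sqrt n$ --- to be the most delicate ingredient, along with checking the degenerate routing cases (endpoints that are themselves portals, pieces with a single boundary portal).
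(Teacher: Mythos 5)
Your proposal is correct and follows essentially the same route as the paper: a $\sqrt{n}$-size separator closed so that every piece of $T$ minus the separator has at most two boundary vertices (your branching-vertex/Steiner closure is the paper's least-common-ancestor closure in \Cref{lm:tree-division}), recursion on the contracted portal tree and on the pieces, direct edges from each piece vertex to its at most two boundary portals, and gluing the tree decompositions by inserting those two portals into every bag of a piece's decomposition and attaching it at a bag containing both. The resulting hop and width recurrences ($+2$ per level, instance size $O(\sqrt{n})$) and the $O(n\log\log n)$ time analysis match the paper's proof of \Cref{thm:TreeSpanner}.
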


Constructing a low hop emulator for trees is a basic problem that was studied by various authors~\cite{Chazelle87,AS87,BTS94,Thorup97,NS07,CG08,Solomon11,FL22} 
in different contexts. The goal of these constructions is to minimize the number of edges of the emulator for a given hop bound, with no consideration to the treewidth of the resulting graph.
To the best of our knowledge, the only previously constructed low-hop emulator with bounded treewidth is a folklore recursive construction \footnote{Consider the following construction: take the centroid $v$ of the tree (a vertex $v$ such that each connected component in $T\setminus\{v\}$ has at most $\frac23n$ vertices), add edges from $v$ to all the vertices, and recurse on each connected component of $T\setminus\{v\}$. The result is a $2$-hop emulator with treewidth $O(\log n)$.}
of a $2$-hop emulator with treewidth $O(\log n)$. Our emulator seeks to balance between the treewidth and the hop bound, and achieves an exponential reduction in the treewidth (at the expense of an increase in the hop bound). 
Due to its fundamental nature, we believe that \Cref{thm:TreeSpanner} will find more applications in other contexts.



\paragraph{Embedding  of planar graphs.~} To avoid notational clutter, we describe our embedding technique for planar graphs instead of addressing planar graphs with one vortex directly (which is the main motivation of our technique). We will use the low hop emulator in \Cref{thm:TreeSpanner} to obtain a simple and efficient construction as claimed in \Cref{thm:PlanarToTreewidth}. Our construction not only addresses \Cref{q:baker-planar}, but it is also robust enough to be easily extensible to planar graphs with one vortex, which, as discussed above, is the fundamental barrier towards an embedding of $K_r$-minor-free graphs with low treewidth.

Given a planar graph $G = (V,E,w)$ with diameter $D$, we construct a recursive decomposition $\Phi$ of $G$ using shortest path separators~\cite{Thorup04,Klein02}. Specifically, we pick a global root vertex $r\in V$. $\Phi$ has a tree structure of depth $O(\log n)$, and each node $\alpha \in \Phi$ is associated with a cluster of $G$, the ``border'' of which consists of $O(1)$ shortest paths rooted at $r$. This set of border shortest paths is denoted by $\mathcal{Q}_\alpha$.
For each node $\alpha$, all the vertices not in $\mathcal{Q}_\alpha$ are called \emph{internal vertices}, and denoted $I_\alpha$.
Each leaf note $\alpha\in\Phi$ contains only a constant number of internal vertices.

Following the standard techniques~\cite{EKM14,CFKL20}, for each path $Q \in \mathcal{Q}_\alpha$ associated with a node $\alpha \in \Phi$, one would place a set of equally-spaced vertices of size $O(\frac{\log(n)}{\eps})$, called \emph{portals}, such that the distance between any two nearby portals is $O(\eps D /\log(n))$. Since $|\mathcal{Q}_{\alpha}| = O(1)$, each node of $\Phi$ has $O(\log(n)/\eps)$ portals. We form a tree embedding of \emph{treewidth $O(\log(n)/\eps)$} from portals of $\Phi$ by considering every edge  $\{\alpha,\beta\}$ in $\Phi$ and adding all pairwise edges between portals of $\alpha$ and $\beta$. For the distortion argument, we consider any shortest path $P$ of $G$. As $\Phi$ has depth $O(\log n)$, $P$ intersects at most $O(\log n)$ shortest paths on the boundary of $O(\log n)$ nodes of $\Phi$. For any shortest path $Q\in \mathcal{Q}_\alpha$ that intersects $P$, we route $P$ through the portal closest to the intersection vertex of $P$ and $Q$, incurring an additive distortion of  $O(\eps D /\log(n))$.  This means that the total distortion due to routing through portals  is $O(\log n) \cdot O(\eps D /\log(n))  = O(\eps) D$. 

 Our key idea is to use \Cref{thm:TreeSpanner} to construct an $O(\log\log n)$-hop emulator $K_{\Phi}$ for $\Phi$ along with a tree decomposition $\mathcal{T}$ for $K_{\Phi}$ of width $\tw(\mathcal{T}) = O(\log\log n)$.  For each path $Q \in \mathcal{Q}_\alpha$ in each node $\alpha \in \Phi$,  we now only place $O(\frac{\log\log(n)}{\eps})$ portals such that the distance between any two nearby portals is $O(\eps D /\log\log(n))$. For each edge  $\{\alpha,\beta\}$ \emph{in $K_{\Phi}$}, as opposed to only consider edges in $\Phi$ in the standard technique, we add all pairwise edges between portals of $\alpha$ and $\beta$ to obtain the host graph $H$. By replacing each node $\alpha \in \Phi$ in each bag of $\mathcal{T}$ with the portals of $\alpha$,  we obtain a tree decomposition $\mathcal{X}_H$ of $H$.  Since $\mathcal{T}$ has width $O(\log\log n)$ and each node has $O(\log\log(n)/\eps)$ portals, the treewidth of $\mathcal{X}_H$ is $O(\log\log(n)) \cdot O(\log\log(n)/\eps) = O((\log\log n)^2/\eps)$.

 For the distortion argument, we crucially use the property that $K_{\Phi}$ has hop diameter $O(\log\log(n))$. Consider any shortest path $P$ of $G$, whose endpoints are internal vertices of two leaf nodes, say $\alpha_1$ and $\alpha_2$,  of $\Phi$. There is a (shortest) path consisting of $k = O(\log n)$ nodes from  $\alpha_1$ to $\alpha_2$ in $K_{\Phi}$. This means that it suffices to consider $k$ shortest paths $\{Q_1,\ldots, Q_{k}\}$ associated with these nodes intersecting $P$.  By the construction, there are  edges in $H$ between portals of $Q_i$ and $Q_{i+1}$ for every $1\leq i \leq k-1$. Thus, by routing an intersection vertex of $P\cap Q_i$ to the nearest portal of $Q_i$ for every $i\in [k]$, we obtain a path $P_H$ in $H$ between the endpoints of $P$. As the portal distance is  $O(\eps D /\log\log(n))$, each routing step incurs   $O(\eps D /\log\log(n))$ additive distortion. Since the number of paths is $k = O(\log\log n)$, the total distortion is $k \cdot O(\eps D /\log\log(n)) = O(\eps) D$, as desired.

 Conceptually,  one can think of our shortcutting technique as an effective way to reduce the recursion depth $O(\log n)$ by balanced separators to $O(\log\log(n))$. Since using balanced separators is a fundamental technique in designing planar graphs algorithms, we believe that our idea or its variants could be used to improve other algorithms where the depth  $O(\log n)$ of the recursion tree is the main barrier. 
 
\paragraph{Extension to $K_r$-minor-free graphs.~} Since planar graphs with one vortex have balanced shortest path separators~\cite{CFKL20}, the embedding algorithm for planar graphs described above is  readily applicable to these graphs; the treewidth of the embedding is $O((\log\log n)^2/\eps)$. If we  plug this embedding into the framework of Cohen-Addad \etal\cite{CFKL20}, we obtain a stochastic embedding of $K_r$-minor-free graphs into $O_r(\log\log n)^2\eps^{-2}) + O_r(\log n)$ treewidth graphs. The additive $ O_r(\log n)$ term is due to a reduction step involving \emph{clique-sums} (Lemma 19 in \cite{CFKL20}). 

Our idea to remove the $+O_r(\log n)$ additive term using Robertson-Seymour decomposition is as follows. (See \Cref{sec:addNotation} for the formal definitions of concepts in this paragraph.) By Robertson-Seymour decomposition, $G$ can be decomposed into a tree $\mathbb{T}$ of nearly embeddable graphs, called \emph{pieces}, that are glued together via \emph{adhesions} of constant size. Our new embedding of planar graphs with one vortex implies that nearly embeddable graphs have embeddings into treewidth   $O((\log\log n)^2/\eps)$. To glues these embeddings together, we first root the tree $\mathbb{T}$. Then for each piece $G_i$, we simply add the adhesion $J_0$ between $G_i$ and its parent bag $G_0$ to every bag in the tree decomposition of the host graph $H_i$ in the embedding of $G_i$; this increases the treewidth of $G_0$ by only $+O(h(r))$, and the overall treewidth of the final embedding is also increased only by $+O(h(r))$. 
	For the stretch, let $Q_{uv}$ be a shortest path between two vertices $u$ and $v$. We show by induction that for every vertex $u \in G_a$ and $v\in G_b$ for any two pieces $G_a,G_b$ such that $G_i$ is their lowest common ancestor, there is a vertex $x \in Q_{uv}\cap G_i$ and  $y \in Q_{uv}\cap G_i$, the distances between $u$ and $x$, and between $v$ and $y$ are preserved \emph{exactly}. Thus, the distortion for $d_G(u,v)$ is just $+\eps D$.  Since we apply this construction to \emph{every} piece of $\mathbb{T}$, $+\eps D$ is also the distortion of the final embedding. 

\paragraph{Bounded-capacity VRP in planar graphs.~} For bounded-capacity VRP in planar graphs, the main ingredient is a rooted-stochastic embedding (see \Cref{def:rootedStochastic}), where given a root $r$ (the depot in the VRP problem), there is a distribution over embeddings such that every pair of vertices $u,v$, have an additive distortion of $\eps\cdot (d_G(r,u) + d_G(r,v))$ in expectation. Efficient construction of such embedding is the main bottleneck in constructing near-linear time PTAS for the bounded-capacity VRP.
The stochastic embedding was introduced by Becker \etal \cite{BKS19}. The main idea of their embedding is to randomly slice graphs into vertex-disjoint bands, construct an additive embedding into a bounded treewidth graph for each band, and finally combine the resulting embeddings. 
There are two steps in the algorithm of Becker \etal \cite{BKS19} that incur running time $\Omega(n^2)$. First, the algorithm uses the additive embedding of  Fox-Epstein \etal \cite{FKS19},  which has a large (but polynomial) running time. We improve this step by using the embedding in \Cref{thm:PlanarToTreewidth}. Second, for each band $B$, they construct a new graph $G_B$ by taking the union of the shortest paths of all pairs of vertices in $B$. Since these shortest paths may contain vertices not in $B$, the size of $G_B$ could be $\Omega(n)$; as a result, the total size of all subgraphs constructed from the bands is $\Omega(n^2)$. We improve this step by showing that it suffices to embed the graph induced by the band $G[B]$ plus one more vertex, and hence the total number of vertices of all subgraphs is $O(n)$. 
The key idea in our proof is to observe that if a vertex falls far from the boundary of the band it belongs to, it has small additive distortion w.r.t. all other vertices. We then carefully choose the parameters so that each vertex is successful in this aspect with probability $1-\eps$.
Interestingly, we obtain here a strong Ramsey-type guarantee: each vertex $v$, with probability $1-\eps$, has a small additive distortion w.r.t. all other vertices (instead of a small expected distortion w.r.t. each specific vertex).

\subsection{Related Work}\label{sec:related}
Different types of embeddings were studied for planar and minor free graphs. $K_r$-minor-free graphs embed into $\ell_p$ space with multiplicative distortion $O_r(\log^{\min\{\frac12,\frac1p\}}n)$ \cite{Rao99,KLMN04,AGGNT19,AFGN18,Fil20face}, in particular, they embed into $\ell_\infty$ of dimension $O_r(\log^2n)$
with a constant multiplicative distortion. They also admit spanners with multiplicative distortion $1+\eps$ and $\tilde{O}_r(\eps^{-3})$ lightness \cite{BLW17}. On the other hand, there are other graph families that embed well into bounded treewidth graphs.
Talwar \cite{Tal04} showed that graphs with doubling dimension $d$ and aspect ratio $\Phi$ \footnote{The aspect ratio of a metric space $(X,d)$ is the ratio between the maximum and minimum pairwise distances $\frac{\max_{x,y}d(x,y)}{\min_{x,y}d(x,y)}$.\label{foot:aspectRatio}}  stochastically embed into graphs with treewidth $\eps^{-O(d\log d)}\cdot\log^d\Phi$ and expected distortion $1+\eps$.
Similar embeddings are known for graphs with highway dimension $h$ \cite{FFKP18} (into treewidth $(\log \Phi)^{-O(\log^2\frac h\eps)}$ graphs), and graphs with correlation dimension $k$ \cite{CG12} (into treewidth $\tilde{O}_{k,\eps}(\sqrt{n})$ graphs).

Ramsey type embeddings of general graphs into ultrametrics and trees were extensively studied \cite{BFM86,BBM06,BLMN03,MN07,NT12,BGS16,ACEFN20,FL21}.
Clan embeddings were also studied for trees \cite{FL21}. Clan embeddings are somewhat similar to the previously introduced multi-embeddings \cite{BM04,Bar21} in that each vertex is mapped to a subset of vertices. However, multi-embeddings lack the notion of chief, which is crucial in all our applications. See also \cite{HHZ21copies}.
Finally, clan and Ramsey type embeddings were also studied in the context of hop-constrained metric embeddings \cite{HHZ21hop,Fil21}.

%

	\section{Preliminaries} \label{sec:prelim}
	$\tilde{O}$ notation hides poly-logarithmic factors, that is $\tilde{O}(g)=O(g)\cdot\polylog(g)$, while 
	$O_r$ notation hides factors in $r$, e.g. $O_r(m)=O(m)\cdot f(r)$ for some function $f$ of $r$. All logarithms are at base $2$ (unless specified otherwise).
	
	A (quasi-)polynomial time approximation scheme, or PTAS (QPTAS),  for a problem is a family of algorithms that, given any \emph{fixed} $\eps  > 0$, there is an algorithm $A_{\eps}$ in the family that runs in \mbox{(quasi-)}polynomial time and finds a $(1+\eps)$-approximation for the problem. A PTAS is \emph{efficient} if the running time of $A_{\eps}$ is of the form $f(\eps)n^{O(1)}$ for some function $f(\eps)$ that only depends on $\eps$. 
	
	We consider connected undirected graphs $G=(V,E)$ with edge weights $w_G: E \to \R_{\ge 0}$.
	A graph is called unweighted if all its edges have unit weight.
	Additionally, we denote $G$'s vertex set and edge set by $V(G)$ and $E(G)$, respectively. Often, we will abuse notation and write $G$ instead of $V(G)$.
	Let $T(V_T,E_T)$ be a tree. For any two vertices $u,v \in V_T$, we denoted by $T[u,v]$ the unique path between $u$ and $v$ in $T$.
	$d_{G}$ denotes the shortest path metric in $G$, i.e., $d_G(u,v)$ is the shortest distance between $u$ to $v$ in $G$. 
	The \emph{metric complement} of $G(V,E,w)$ is an edge-weighted complete graph with vertex set $V$ and the weight of each edge is the distance between its endpoints in $G(V,E,w)$.
	Note that every metric space can be represented as the shortest path metric of a weighted complete graph. We will use the notions of metric spaces, and weighted graphs interchangeably.
	When the graph is clear from the context, we might use $w$ to refer to $w_G$, and $d$ to refer to $d_G$. 	$G[S]$ denotes the induced subgraph by $S$.

	\begin{definition}[Tree decomposition]\label{def:tree-decomp}
		 A tree decomposition of a graph $G(V,E)$ is a tree $\mathcal{T}(\mathcal{B}, \mathcal{E})$ where $\mathcal{B}$ is a collection of subset of vertices of $V$, called \emph{bags}, that satisfies the following conditions:
		\begin{enumerate}[noitemsep]
			\item $\cup_{B\in \mathcal{B}}B  = V$.
			\item For every edge $(u,v)\in E$, there is a bag $B\in \mathcal{B}$ such that $\{u,v\}\subseteq B$.
			\item For every vertex $u\in V$, the set of bags $\{B\in \mathcal{B}: u \in B\}$ containing $u$ induces a connected subtree of $\mathcal{T}(\mathcal{B}, \mathcal{E})$.
		\end{enumerate}
	\end{definition}
	The \emph{width} of a tree decomposition $\mathcal{T}$ is $\max_{ i \in V(\mathcal{T})}|X_i| -1$ and the treewidth of $G$, denoted by $\tw$, is the minimum width among all possible tree decompositions of $G$. A \emph{path decomposition} of a graph $G(V,E)$ is a tree decomposition where the underlying tree is a path. The pathwidth of $G$, denoted by $\pw$, is defined accordingly.
	
	 A metric embedding is a function $f:X\rightarrow Y$ between the points of two metric spaces $(X,d_X)$ and $(Y,d_Y)$.
	 A metric embedding $f$ is said to be \emph{dominating} if for every pair of points $x,y\in X$, it holds that $d_X(x,y)\le d_Y(f(x),f(y))$. 
	 We say the a metric embedding $f$ has additive distortion $+\Delta$ if $f$ is dominating, and for every $x,y\in X$, $d_Y(f(x),f(y))\le d_X(x,y)+\Delta$.
	 Stochastic embedding is a distribution $\mathcal{D}$ over dominating embeddings $f: V(G)\rightarrow V(H_f)$. Stochastic embedding $\mathcal{D}$ has expected additive distortion $+\Delta$, if for every pair of points $x,y\in X$, it holds that $\mathbb{E}_{f\sim\mathcal{D}}[d_{H_f}(f(x),f(y))]\le d_G(x,y)+\Delta$.

\paragraph{Bounded-Capacity Vehicle Routing Problem.~} A \emph{tour} $R$ in a graph $G(V,E,w)$ is a closed walk, and the cost of $R$ is $\cost(R) = \sum_{e \in E(R)}x_e w(e)$ where $x_e$ is the number of times that $R$ visits edge $e$. 

\begin{definition}[Vehicle Routing Problem (VRP)]\label{def:vehicle-routing} An instance of the  vehicle routing problem (VRP) is a tuple $(G(V,E,w),K,r,Q)$ where $K\subseteq V$ is a set of \emph{clients}, $r\in V$ is the \emph{depot}, and $Q\in \mathbb{Z}_+$ is the \emph{capacity} of the vehicle. A feasible solution of an instance of the VRP, denoted by $\mathcal{S}$ ,is a tuple $(\mathcal{K}, \mathcal{R}, q)$ where:
	\begin{itemize}
		\item $\mathcal{K} = \{K_1,K_2,\ldots, K_q\}$ is a family of $q$ subset of terminals such that $\cup_{j=1}^q K_j = K$ and $|K_j| \leq Q$ for every $j \in [q]$.
		
		\item $\mathcal{R} = \{R_1, \ldots, R_q\}$ is the set of $q$ tours such that each tour $R_j$ (a) starts and ends at the depot $r$ and (2) visits all terminals in $K_j$ for every $j \in [q]$.  
	\end{itemize}
	The cost of $\mathcal{S}$ is: $\cost(\mathcal{S}) = \sum_{j=1}^q \cost(R_j)$. The goal is to find a solution with minimum cost for the VRP in $G(V,E,w)$.
\end{definition}

\section{Spanners for Trees of Low Treewidth and Hop Diameter}\label{sec:tree-spanner}

In this section, we focus on proving~\Cref{thm:TreeSpanner} that we restate here for convenience. 

\TreeSpanner*

Our proof of  \Cref{thm:TreeSpanner} uses the following lemma, which states that in every tree, we can remove a small number of vertices, such that every remaining connected component has at most two boundary vertices. This lemma was used, sometimes implicitly, in the literature~\cite{AHLT05,FLSZ18,NPSW22}. We include a proof for completeness. 

\begin{lemma}\label{lm:tree-division} Given any parameter $\ell\in\mathbb{N}$ and an $n$-vertex tree $T$, then in $O(n)$ time we can find a subset $X$ of at most $\frac{2}{\ell+1}n-1=O(\frac{n}{\ell})$ vertices such that every connected component $C$ of $T\setminus X$ is of size $|C|\le \ell$, and $C$ has at most two outgoing edges towards $X$.
\end{lemma}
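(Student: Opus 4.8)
The plan is to prove this by a natural bottom-up (leaf-to-root) greedy process on a rooted version of $T$. First I would root $T$ at an arbitrary vertex $\rho$, and process vertices in a post-order (children before parents). I will maintain, for each vertex $v$ currently under consideration, the connected component of the ``not-yet-removed and not-yet-finalized'' part of $T$ that contains $v$ and lies in the subtree $T_v$ rooted at $v$; call its size $s(v)$. When we move from the children of $v$ to $v$ itself, we set $s(v) = 1 + \sum_{c} s(c)$ where the sum is over children $c$ of $v$ whose component was not cut off. The rule is: whenever $s(v)$ would exceed $\ell$, we add $v$ to $X$, which ``finalizes'' (cuts off from the active part) the components hanging below $v$ through its children, and we reset so that $v$ contributes nothing upward. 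The components cut off this way each have size at most $\ell$ by construction, and each such component $C$ has at most two outgoing edges toward $X$: the edge from its topmost vertex to the vertex in $X$ just above it (the parent side), and at most one more — but actually I should be careful here; a cleaner variant is to cut so that each component has only the single edge to its parent-in-$X$, except we are also allowed a second outgoing edge. The standard way (as in \cite{AHLT05,FLSZ18,NPSW22}) is to argue the componentes are ``caterpillar-like'' pieces with $\le 2$ attachment points; I would phrase the cutting rule to guarantee exactly that.

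Next I would establish the size bound $|X| \le \frac{2}{\ell+1}n - 1$. The key accounting idea: whenever a vertex $v$ is added to $X$, the components it finalizes have total size strictly more than $\ell$ (that is why we were forced to cut), so at least $\ell + 1$ vertices (counting $v$ and the finalized components, or counting in a way that assigns disjoint ``credit'' of $\ge \ell+1$ vertices to each element of $X$). Since these credited vertex-sets are disjoint across different elements of $X$ and together live in the $n$-vertex tree, we get $|X| \cdot (\ell + 1) \le n$, hence $|X| \le \frac{n}{\ell+1}$; the factor $2$ and the $-1$ come from handling the root component and the fact that a single cut at $v$ may be ``charged'' against components that were each individually small but collectively large, so each cut is guaranteed to remove $\ge \frac{\ell+1}{2}$ fresh vertices rather than $\ell+1$. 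I would make this precise by a potential/charging argument: assign to each $x \in X$ the vertex set consisting of $x$ together with the active components it severs; argue these sets are pairwise disjoint and each has size $\ge \frac{\ell+1}{2}$ (the worst case being that the cut is forced by two children-components of sizes just under $\ell/2$ plus $v$ itself just tipping over $\ell$), giving $|X| \le \frac{2n}{\ell+1}$; the $-1$ is a minor refinement from the leftover active component at the root needing no cut.

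The $O(n)$ running time is immediate: post-order traversal visits each vertex once, and at each vertex we do $O(1 + \deg(v))$ work to sum children's contributions and decide whether to cut, for $O(n + |E|) = O(n)$ total. The main obstacle I anticipate is the careful bookkeeping to simultaneously guarantee (a) every finalized component has size $\le \ell$, (b) every finalized component has at most two outgoing edges toward $X$, and (c) the charging gives the clean constant $\frac{2}{\ell+1}$. In particular, ensuring (b) requires the cutting rule to be stated so that when we cut at $v$, we do not leave a component straddling $v$'s parent together with several of $v$'s subtrees in a way that creates three or more boundary edges — so I would likely cut ``at $v$'' meaning $v$ joins $X$ and each child-component below $v$ becomes its own finalized piece with exactly one boundary edge (to $v \in X$), plus possibly the parent edge if that component also touches an already-removed ancestor, which is the source of the ``at most two.'' I would double-check these edge cases explicitly, as they are the only place the argument can go wrong.
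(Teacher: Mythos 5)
Your bottom-up greedy correctly reproduces the first half of the paper's argument (this is exactly the paper's Claim~\ref{clm:ChopToEllPieces}: cut at the first vertex whose active subtree exceeds $\ell$, giving at most $\frac{n}{\ell+1}$ cut vertices, components of size $\le\ell$, and an $O(n)$ post-order implementation). But the second requirement --- that every component of $T\setminus X$ has at most \emph{two} outgoing edges towards $X$ --- is not delivered by the rule you describe, and this is a genuine gap rather than an edge case of bookkeeping. The extra boundary edges do not come from ``the parent edge if the component touches an already-removed ancestor,'' as you suggest; they come from previously cut vertices \emph{below} the component. Concretely, take $\ell=3$ and a vertex $c$ with three children $a_1,a_2,a_3$, each $a_i$ having three leaf children: your rule cuts at each $a_i$ (placing $a_1,a_2,a_3\in X$) and then the surviving component containing $c$ is adjacent to all three of them, so it has three (or, if it is later finalized by a cut above, four) outgoing edges towards $X$. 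Tracking only the size $s(v)$ cannot prevent this, because the violation is caused by the \emph{number} of incomparable cut vertices hanging under a surviving component, not by its size; your charging discussion (the $\frac{\ell+1}{2}$ worry) is also unnecessary, since with your rule each cut removes $\ge \ell+1$ fresh vertices, but that only salvages the cardinality bound, not the boundary bound.

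The missing idea, which is how the paper proceeds, is a second step: let $A$ be the greedy cut set and take $X$ to be the \emph{least common ancestor closure} of $A$ (add the LCA of every pair of vertices of $A$). Then if a component $C$ of $T\setminus X$ had outgoing edges to two incomparable vertices $x,y\in X$, their LCA would lie inside $C$, a contradiction, so all boundary vertices of $C$ are pairwise comparable, and one checks that three pairwise-comparable boundary vertices are impossible --- hence at most two outgoing edges. The closure is computable in $O(n)$ by a post-order traversal and only doubles the size, $|X|\le 2|A|-1\le \frac{2}{\ell+1}n-1$ (proved by induction on the tree), which is exactly where the factor $2$ and the $-1$ in the statement come from. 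If you prefer a single-pass rule, you would have to cut not only when the active component reaches size $\ell+1$ but also as soon as it would acquire a second incomparable cut vertex on its lower boundary; as written, your proposal proves only the weaker claim without the two-boundary-edge guarantee.
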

\noindent\emph{Proof.}
We begin the proof of \Cref{lm:tree-division} with the following claim:
\begin{claim}\label{clm:ChopToEllPieces}  
	Given an $n$-vertex tree $T$, in $O(n)$ time we can find a set $A$ of at most $\frac{n}{\ell+1}$ vertices such that every connected component in $T\setminus A$ contains at most $\ell$ vertices.
\end{claim}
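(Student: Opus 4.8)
\noindent\textbf{Proof plan.}
The plan is to root $T$ at an arbitrary vertex and greedily chop off subtrees in a single post-order traversal. First I would root $T$ at an arbitrary vertex $\rho$ and process the vertices of $T$ in post-order, maintaining a set $A$ (initially empty) and, for each vertex $v$, a counter $\mathrm{cnt}(v)$. The intended meaning of $\mathrm{cnt}(v)$ is the size of the connected component of $v$ in $(T\setminus A)$ \emph{restricted to the subtree rooted at $v$}, where $A$ is the set chosen so far. Concretely, when $v$ is processed (so all its children have already been processed) set
$$\mathrm{cnt}(v) \;=\; 1 + \sum_{c \text{ a child of } v,\; c\notin A}\mathrm{cnt}(c),$$
and if $\mathrm{cnt}(v) \ge \ell+1$, add $v$ to $A$ (so that $v$ contributes $0$ to the counter of its parent).

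Next I would verify the two required properties. For the component-size bound: let $C$ be a connected component of $T\setminus A$ and let $v$ be its vertex closest to $\rho$. Then either $v=\rho$, or the parent of $v$ lies in $A$; in either case the component of $v$ in the whole graph $T\setminus A$ coincides with its component inside the subtree of $v$, which by the invariant has size exactly $\mathrm{cnt}(v)$. Since $v\notin A$, the algorithm did not trigger the cut at $v$, so $\mathrm{cnt}(v)\le \ell$, hence $|C|\le \ell$. For the bound $|A|\le \frac{n}{\ell+1}$: whenever a vertex $v$ is added to $A$, associate with it the ``block'' $B_v$ consisting of $v$ together with the alive descendants that were counted in $\mathrm{cnt}(v)$, so $|B_v| = \mathrm{cnt}(v) \ge \ell+1$. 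Once $v$ is placed in $A$ its counter stops propagating upward, so the blocks $\{B_v : v\in A\}$ are pairwise disjoint subsets of $V(T)$; therefore $(\ell+1)\,|A| \le \sum_{v\in A}|B_v| \le n$, giving $|A|\le \frac{n}{\ell+1}$.

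Finally, for the running time: the construction is a single post-order traversal, which visits each vertex once and spends $O(1)$ time per child when forming $\mathrm{cnt}(v)$, hence $O(n)$ time in total (each vertex is a child exactly once).

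I do not expect a genuine obstacle here; the only point requiring care is the bookkeeping that $\mathrm{cnt}(v)$ indeed equals the subtree-restricted component size under the \emph{current} $A$ (which follows because post-order processing fixes the $A$-membership of all descendants of $v$ before $v$ is examined), and the disjointness of the blocks $B_v$, which is precisely what makes the counting bound tight.
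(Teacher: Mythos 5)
Your proposal is correct and is essentially the paper's proof: the paper also roots the tree, sweeps bottom-up, and cuts at the first vertex whose remaining subtree reaches size $\ell+1$, charging each cut vertex to the $\ge \ell+1$ vertices it detaches (the paper phrases this charging as an induction on $T\setminus T_v$, you phrase it via explicitly disjoint blocks, which is the same count). No gap; nothing further needed.
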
	
\begin{wrapfigure}{r}{0.18\textwidth}
	\begin{center}
		\vspace{-18pt}
		\includegraphics[width=0.95\textwidth]{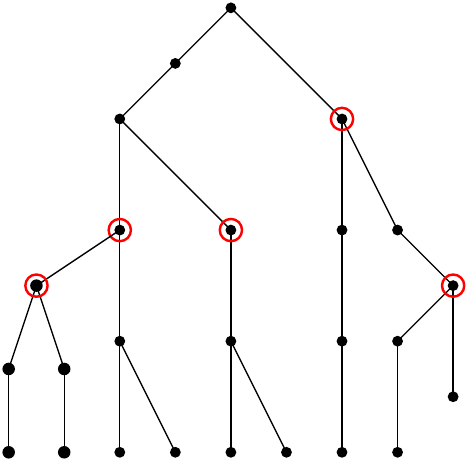}
		\vspace{-10pt}
	\end{center}
	\vspace{-10pt}
\end{wrapfigure}
\noindent\emph{Proof.}
	If $n\le\ell$ then we are done. Otherwise, fix an arbitrary root $r$, and visit the tree in a bottom-up fashion from the leaves towards the root until we find the first vertex $v$ with subtree $T_v$ of size at least $\ell+1$. We then add $v$ to $A$, and continue the process on $T\setminus T_v$. By induction, we will add at most  $\frac{|T\setminus T_v|}{\ell+1}\le \frac{n-(\ell+1)}{\ell+1}=\frac{n}{\ell+1}-1$ additional vertices to $A$. It follows that $|A|\le \frac{n}{\ell+1}$ as required.  $A$ can be constructed in linear time by a simple tree traversal.
	In the illustration on right $\ell=3$, and the vertices in $A$ are encircled in red.
	\qed

\vspace{10pt}We continue the proof of \Cref{lm:tree-division}. We first apply \Cref{clm:ChopToEllPieces} to obtain a set $A$ of size   $\frac{n}{\ell+1}$ in $O(n)$ time such that every connected component of $T\setminus A$ has at most $\ell$ vertices.  The set $X$ contains $A$ and the least common ancestor of every pair of vertices in $A$. $X$ is called the \emph{least common ancestor closure} of $A$ (see \cite{FLSZ18}). Given $A$, $X$ can be found in $O(n)$ time by traversing $T$ in post order. Clearly, every connected component $C$ of $T\setminus X$ has size at most $\ell$. 
Note that if $C$ has outgoing edges towards $x,y\in X$, then necessarily $x$ is an ancestor of $y$, or vice versa (as otherwise their least common ancestor will be in $C$, but this is a contradiction as $C\cap X=\emptyset$).
As it is impossible that $C$ will have three outgoing edges towards $x,y,z\in X$ where $x$ is  an ancestor of $y$, and $y$ is an ancestor of $z$, we conclude that $C$ has at most two outgoing edges towards $X$. 

Finally, we argue by induction on $n$ that $|X|\le2|A|-1$; the lemma will then follow as $2|A|-1 \leq \frac{2}{\ell+1}n-1$.
The base of the induction when $|A|=1$ is clear.
Consider a tree $T$ rooted at a vertex $r$ with children $v_1,\dots,v_s$ whose subtrees are denoted by  $T_1,\dots,T_s$, respectively. First, assume that $r\notin A$.
If all the vertices of $A$ belong to a single subtree $T_i$, then $|X|\le2|A|-1$ by the induction hypothesis on $T_i$.
Otherwise, suppose w.l.o.g. that $A$ intersects $T_1,\dots,T_{s'}$ where $s'\ge 2$.
Then $X$ will contain the root $r$, along with some vertices in each  subtree $T_i$ for $i\in[1,s']$.  Using the induction hypothesis we conclude that: 
$$|X|\le 1+\sum_{i=1}^{s'}(2|T_i\cap A|-1)=2|A|+1-s'\le2|A|-1~.$$
The case where $r\in X$ is similar.
\qed

\vspace{10pt}
We are now ready to prove~\Cref{thm:TreeSpanner}.
\begin{proof}[Proof of~\Cref{thm:TreeSpanner}]
	Fix $c=\frac{2}{\log\frac{3}{2}}$.
	We will prove by induction on $n$, that every $n$-vertex tree $T$ for $n\ge 2$ has a $(1+c\cdot\log\log n)$-hop emulator $K_T$, where $K_T$ has treewidth at most $1+c\cdot \log\log n$. 
	The base case is when $n=1$; we simply return the original tree $K_T=T$ (a singleton emulator with treewidth $0$).
	For the general case, using \Cref{lm:tree-division} with parameter $\ell=\sqrt{2n}-1$,\footnote{For the sake of simplicity, we will ignore integrally issues.} we obtain a set  $X$ of at most $\frac{2}{\ell+1}n-1= \ell$ vertices, such that $T\setminus X$ has a set of connected components $\mathcal{C}$, where every $C\in \mathcal{C}$ is of size $|C|\le \ell$, and has at most two outgoing edges towards $X$.
	Note that for $n\ge 1$, $|X|\leq \ell=\sqrt{2n}-1<n^{\frac23}$.
	For each such connected component $C$, let $T_C=T[C]$ be the induced subtree, and $K_C$ be the emulator constructed using the induction hypothesis, with the tree decomposition $\mathcal{T}_C$.
	Next, we create a new tree $T_X$ with $X$ as a vertex set as follows. For every connected component $C$ with two outgoing edges towards   two vertices $u,v\in X$, we add an edge $\{u,v\}$ to $T_X$ of weight $d_T(u,v)$. One can easily verify that $T_X$ is indeed a tree (in fact it is a minor of $T$), and furthermore, for every two vertices $x,y\in X$, $d_{T_X}(x,y)=d_{T}(x,y)$. We recursively construct  an emulator $K_X$ along with a tree decomposition $\mathcal{T}_X$ for $T_X$.

	We construct the emulator $K_T$ for $T$ as follows: $K_T$ contains the emulator $K_X$, the union of all the emulators $\cup_{C\in \mathcal{C}}K_C$, and in addition, for every $C\in \mathcal{C}$ with outgoing edges towards $\{u,v\}$ (we allow $u=v$), and for every vertex $z\in C$, we add edges $\{z,u\},\{z,v\}$ to $K_T$.
	
	First, we argue that $K_T$ has low hop. Consider a pair of vertices $u,v$ in $T$, and suppose that $u\in C^u$ and $v\in C^v$ where $C^u$ and $C^v$ are in $\mathcal{C}$. (The cases where either $u$ or $v$ is in $X$ follows by the same argument.) 
	If $C^v=C^u$, then by the induction hypothesis on $K_{C^u}$, $d_{K_T}^{(1+c\cdot \log\log n)}(u,v)\le d_{K_{C^{u}}}^{(1+c\cdot \log\log |C_{u}|)}(u,v)=d_{T_{C^{u}}}(u,v)=d_{T}(u,v)$.
	Else, the unique $u-v$ path $P$ in $T$ must go though $X$. Let $x_u,x_v\in P\cap X$ be the closest vertices to $u$ and $v$, respectively. Since $K_T$ contains the edges $\{u,x_u\}$, $\{v,x_v\}$, by  the induction hypothesis on $K_{X}$, we have
	\[
	d_{K_T}^{(3+c\cdot\log\log|X|)}(u,v)\le d_{T}(u,x_{u})+d_{K_{X}}^{(1+c\cdot\log\log|X|)}(x_{u},x_{v})+d_T(x_{v},v)=d_{T}(u,v)~.
	\]
	Observe that $
	3+c\cdot\log\log|X|\le3+c\cdot\log\log n^{\frac{2}{3}}=3+c\cdot\log\frac{2}{3}+c\cdot\log\log n=1+c\cdot\log\log n$. This means that $d_{K_T}^{(1+c\log\log(n))}(u,v) \leq d_{K_T}^{(3+c\cdot\log\log|X|)}(u,v) \leq d_{T}(u,v)$, as desired.   
	
	Next, we argue that $K_T$ has treewidth at most $1+c\cdot\log\log n$.
	We define a tree-decomposition $\mathcal{T}$ as follows: 
	For every cluster $C\in \mathcal{C}$ with outgoing edges towards $\{u,v\}$ (we allow $u=v$), let $\mathcal{T}_C$ be the tree decomposition obtained by the induction hypothesis on $T[C]$. Let $\mathcal{T}'_C$ be the tree decomposition obtained from $\mathcal{T}_C$ by adding the vertices $u,v$ to every bag. 	Consider the tree decomposition $\mathcal{T}_X$ of $K_X$. Since $K_X$ contains the edge $\{u,v\}$, there is a bag $B_{u,v}\in \mathcal{T}_X$ containing both $u$ and $v$. We   add an edge in $\mathcal{T}$ between $B_{u,v}$ and an arbitrary bag in $\mathcal{T}_C$.
	It follows directly from the construction that $\mathcal{T}$ is a valid tree decomposition of $K_T$. Furthermore, the width of the decomposition is bounded by 
	\[
	\max\left\{ 1+c\cdot\log\log|X|,3+c\cdot\log\log\ell\right\} \le1+c\cdot\log\log n~,
	\]
	as required. 
	
	Finally, we bound the running time. Note that by \Cref{lm:tree-division}, we can find $X$ in $O(n)$ time. Constructing the set of connected components $\mathcal{C}$ and the tree $T_X$ takes $O(n)$ time. To find the weights for edges of $T_X$, we use an exact distance oracle for trees with construction time $O(n)$ and query time $O(1)$. It is folklore that such a distance oracle   can be constructed by a simple reduction to the least common ancestor (LCA) data structure~\cite{BF00} (see also~\cite{HT84,SV88}). Constructing $K_T$ from $K_X$ and $\{K_C\}_{C\in\mathcal{C}}$ takes $O(n)$ time as well.   As the depth of the recursion is $O(\log\log n)$, and the total running time of each level is $O(n)$, the overall running time is $O(n\log\log n)$.
\end{proof}

\section{Embedding Planar Graphs and Applications (Proof of \Cref{thm:PlanarToTreewidth})} \label{sec:PlanarToTW}

In this section, we focus on proving~\Cref{thm:PlanarToTreewidth} that we restate here for convenience. 

\PlanarToTreewidth*

Throughout, we fix a planar drawing of $G$. W.l.o.g, we assume that $G$ is triangulated; otherwise, we can triangulate $G$ in linear time and set the weight of the new edges to be $+\infty$.   Let $r$ be an (arbitrary) vertex of $G$ and $T_r$ be a shortest path tree rooted at $r$. We say that a shortest path $Q$ in $G$ is an \emph{$r$-path} if $Q$ is a path in $T_r$ with $r$ as an endpoint.

We start by defining  \emph{$\eta$-rooted shortest path decomposition}.
\begin{definition}[$\eta$-rooted shortest path decomposition  ($\eta$-\RSPD)]\label{def:RSPD}
	An \emph{$\eta$-rooted shortest path decomposition} with root $r$, denoted by $\Phi$ of a graph is a binary tree with the following properties:
	\begin{itemize}[noitemsep]
		\item[\textbf{(P1.)}] $\Phi$ has height $O(\log n)$ and $O(n)$ nodes.
		\item[\textbf{(P2.)}] Each node $\alpha \in \Phi$ is associated with a subgraph $X_{\alpha}$ of $G$, called a \emph{piece}, such that:
		\begin{enumerate}			
			\item[(a)] $X_{\alpha}$ contains at most $\eta$ shortest paths rooted at $r$, called \emph{boundary paths}. Denote by $\mathcal{Q}_\alpha$ be the set of all boundary paths.
			We will abuse notation and denote by $\mathcal{Q}_\alpha$ the union of all the vertices in all the boundary paths.
			Every vertex in $X_\alpha$ which does not lie on a boundary path, is called an \emph{internal vertex}.
			\item[(b)]  If $\alpha$ is the root of $\Phi$, then $X_{\alpha} = G$; if $\alpha$ is a leaf of $\Phi$, then $X_{\alpha}$ has at most $\eta$ internal vertices. Otherwise, $\alpha$ is an internal node with exactly two children $\beta_1,\beta_2$. It holds that $X_{\alpha} = X_{\beta_1}\cup X_{\beta_2}$ and $X_{\beta_1}\cap X_{\beta_2}$ are the boundary paths shared by $X_{\beta_1}$ and $X_{\beta2}$.
			
			\item[(c)]For any vertex $u\in V(X_{\alpha})$ and $v \in V(G)\setminus V(X_{\alpha})$, (any) path between $u$ and $v$ must intersect some vertex that lies on a boundary path of $X_{\alpha}$. 
		\end{enumerate}
	\end{itemize}
\end{definition}
In planar graphs, an $\eta$-\RSPD is also known as a \emph{recursive decomposition} with shortest path separators, see, e.g.,~\cite{AGKKW98,Klein02,KKS11}.  Abraham \etal \cite{AFGN18} (see also \cite{Fil20face}) defined a related notion of shortest path decomposition (SPD). There are several diffences between \RSPD and \SPD:  \SPD the height of the tree is a general parameter $k$; the number of boundary paths in each node is equal to its depth (distance to the root); the tree is not necessarily binary;  and the leaves contain no internal node. \SPD  was used to construct multiplicative embeddings into $\ell_1$ \cite{AFGN18,Fil20face} and scattering partitions \cite{Fil20scattering}. 

The following observation follows directly from the definition.

\begin{observation}\label{obs:P3}Let $\Phi$ be an $\eta$-\RSPD  of $G$. Then $G = \cup_{\alpha \in \leavs(\Phi)}X_{\alpha}$ where $\leavs(\Phi)$ is the set of leaves of the tree $\Phi$.
\end{observation}

For any two nodes $\alpha,\beta \in \Phi$, let $\Phi[\alpha,\beta]$ be the subpath between $\alpha$ and $\beta$ in $\Phi$.
A crucial property of of an \RSPD in our construction is the \emph{separation property}: for any path $P_{uv}$ between two vertices $u\in \mathcal{Q}_{\alpha}$ and $v\in \mathcal{Q}_{\beta}$ of two given nodes $\alpha$ and $\beta$, for any node $\lambda$ on the path between $\alpha$ and $\beta$ in the tree $\Phi$,  $P_{uv}\cap \mathcal{Q}_{\lambda} \not= \emptyset$.

An explicit representation of $\Phi$ could take $\Omega(n^2)$ bits and hence is not computable in nearly linear time. However, $\Phi$ has a \emph{compact representation } that only takes $O(n\eta)$ words of space. Specifically, we store at each node $\alpha  \in \Phi$  at most $\eta$ vertices $B_{\alpha} = \{v_1,v_2,\ldots, v_{\eta'}\}$  such that $\{T_r[r,v_i]\}_{i=1}^{\eta'}$ is the set of $r$-paths $\mathcal{Q}_{\alpha}$. If $\alpha$ is a leaf node of $\Phi$, then $\alpha$  is associated with an extra set of $O(1)$ vertices, denoted by $I_{\alpha}$, that are internal vertices of $X_\alpha$.  Using shortest path separators, Thorup~\cite{Thorup04} showed that one can compute a  compact representation  of an $O(1)$-\RSPD of $G$ in time $O(n\log n)$. (See Section 2.5 in~\cite{Thorup04} for details; the \RSPD is called \emph{frame separator} decomposition of $G(V,E,w)$ in Thorup's paper.)

\begin{lemma}[Thorup~\cite{Thorup04}]\label{lm:comp-recursive-decomp} Given a planar graph $G(V,E,w)$, a (compact representation  of a) $O(1)$-\RSPD $\Phi$ of $G(V,E,w)$ can be computed in $O(n\log n)$ time.
\end{lemma}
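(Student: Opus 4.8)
This lemma is due to Thorup; the plan is to outline his construction. First I would fix the planar drawing of the triangulated graph $G$ and compute a shortest path tree $T_r$ rooted at $r$, which takes $O(n)$ time in planar graphs (or $O(n\log n)$ by Dijkstra, already within budget). The structural engine is the classical \emph{fundamental-cycle separator}: for any non-tree edge $e=(x,y)$, the cycle $C_e$ consisting of $e$ together with the tree paths $T_r[x,\,z]$ and $T_r[y,\,z]$, where $z=\mathrm{lca}(x,y)$, is the union of portions of at most two $r$-paths; since every bounded face of $G$ is a triangle, a standard counting argument shows that there is a choice of $e$ for which $C_e$ is a $\frac{2}{3}$-balanced vertex separator with respect to \emph{any} prescribed nonnegative vertex weighting, and such an $e$ can be located in time linear in the size of the current piece.

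With this, I would grow $\Phi$ top-down. The root is $G$ itself with no boundary path; inductively maintain the invariant that each piece $X_\alpha$ is bounded by at most a fixed constant $\eta$ of $r$-paths (property (P2)(a)). To process an internal node $\alpha$ whose piece still has more than $\eta$ internal vertices, apply the separator lemma to $X_\alpha$ with weights equal to the indicator of ``internal to $X_\alpha$'': this yields two subpieces $X_{\beta_1},X_{\beta_2}$ with $X_\alpha=X_{\beta_1}\cup X_{\beta_2}$ and $X_{\beta_1}\cap X_{\beta_2}$ equal to the separating $r$-paths (properties (P2)(b) and (P2)(c) are immediate, since the separator is a union of $r$-paths), each $\beta_i$ having at most $\frac{2}{3}$ of the internal vertices. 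The delicate point — where one must be careful rather than cavalier — is to keep the number of boundary $r$-paths a \emph{fixed constant} and not let it grow by $2$ at every level: here one uses that a child lies on a single side of most of the old separating cycles and hence inherits only $O(1)$ of the old boundary paths, together with Thorup's ``frame'' bookkeeping which regroups the boundary $r$-paths so that at most three remain active in any piece. Balancedness then gives height $O(\log n)$, and the standard charging argument (internal-vertex sets of sibling subtrees are disjoint) gives $O(n)$ nodes, i.e.\ property (P1).

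For the compact representation I would never store $X_\alpha$ explicitly: at node $\alpha$ store only the $\le\eta$ endpoints $B_\alpha=\{v_1,\dots,v_{\eta'}\}$, so that $\mathcal{Q}_\alpha=\{T_r[r,v_i]\}_i$ is recovered on demand from $T_r$, and at each leaf additionally store its $O(1)$ internal vertices $I_\alpha$; the total is $O(n\eta)=O(n)$ words. The real work is the running time. A naive implementation spends time proportional to $|X_\alpha|$ at $\alpha$, but boundary $r$-paths may be long and are shared by many pieces, so the naive cost over one level of $\Phi$ can exceed $O(n)$. Thorup's fix is to run the separator search on a \emph{contracted} copy of each piece, in which every maximal degree-$2$ chain of a boundary $r$-path that carries no internal vertex of $X_\alpha$ and no branching vertex is collapsed to a single weighted edge; the contracted piece then has size $O(|I_\alpha|+\eta)$, so all pieces at a fixed level of $\Phi$ have total size $O(n)$, and over the $O(\log n)$ levels the total time is $O(n\log n)$ — including $O(n)$ preprocessing for an $O(1)$-query LCA/level-ancestor structure on $T_r$, which is used both to perform the contractions and to answer the $d_T(\cdot,\cdot)$ queries needed to weight contracted edges.

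The main obstacle is thus neither the existence of a balanced shortest-path separator nor the $O(\log n)$ depth bound — both classical — but the simultaneous enforcement of a \emph{depth-independent} constant number of boundary $r$-paths and of the near-linear-time implementation via the contracted representation, so that the per-level cost telescopes to $O(n)$. These bookkeeping details are carried out in Section 2.5 of \cite{Thorup04}; the above is the conceptual skeleton I would follow, deferring to that reference for the full argument.
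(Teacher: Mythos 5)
Your proposal is correct and follows the same route the paper takes: the paper proves nothing here itself but cites Thorup's frame-separator decomposition (Section 2.5 of \cite{Thorup04}), and your sketch — shortest-path-tree fundamental-cycle separators on a triangulated graph, constant boundary-path bookkeeping, compact storage of path endpoints, and the contracted-piece implementation giving $O(n)$ work per level over $O(\log n)$ levels — is precisely the skeleton of that construction. Deferring the frame bookkeeping details to Thorup, as you do, matches the paper's treatment.
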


In~\Cref{subsec:embedd_AFversion} below, we present an embedding of $G(V,E,w)$ into a low-treewidth graph in nearly linear running time. In \Cref{subsec:rootedStoEmb}, we construct a rooted stochastic embedding for planar graphs, which is used for approximating the vehicle routing problem. In \Cref{subsesc:apps}, we present applications of the two embeddings in designing almost linear time PTASes. 

\subsection{The Embedding Construction }\label{subsec:embedd_AFversion} 

We refer readers to \Cref{subsec:technique} for an overview of the proof. We will construct a one-to-many embedding $f: V\rightarrow 2^{V(H)}$ with additive distortion $+O(\epsilon)\cdot D$; we can recover distortion $\epsilon D$ by scaling $\eps$. By  \Cref{def:one-to-many}, we are required to guarantee that for any two copies of a single vertex $v_1,v_2 \in f(v)$, $d_H(v_1,v_2) = O(\epsilon)\cdot D$. In the end, we can transform $f$ to a one-to-one embedding $f'$ by picking an arbitrary copy $v' \in f(v)$ and set $f'(v) = v'$ for each vertex $v\in V(G)$.

Recall that $T_r$ is a shortest path tree rooted at $r$ of $G(V,E,w)$.  Since $G(V,E,w)$ has diameter $D$, $T_r$ has radius $D$. Let $\delta = \frac{\eps D}{\log\log n}$. We first define a set of vertices $N_r$ called \emph{$\delta$-portals} of $T_r$ as follows: initially, $N_r$ only contains $r$; we then visit every vertex of $T_r$ in the depth-first order, and we add a vertex $v$ to $N_r$ if the nearest ancestor of $v$ in $N_r$ is at a distance larger than $\delta$ from $v$. For each $r$-path $Q$ in $T_r$, we denote by  $N(Q,\delta) = N_r\cap V(Q)$ the set of $\delta$-portals in $Q$.
Note that for every vertex $v\in Q$, there is a $\delta$-portal $u\in N(Q,\delta)$ at a distance at most $\delta$. Furthermore, the distance between a pair of consecutive $\delta$-portals of $Q$ is greater than $\delta$. As the length of $Q$ is bounded by $D$, we conclude that
\begin{equation*}
	|N(Q,\delta)| \leq 
	\left\lfloor\frac{D}{\delta}\right\rfloor\le \frac{\log\log n}{\epsilon}~.
\end{equation*}

The construction works  as follows: let $\Phi$ be an $O(1)$-\RSPD of $G(V,E,w)$ given by \Cref{lm:comp-recursive-decomp}.
For every node $\alpha\in \Phi$, let $Q_1,Q_2\dots$, be the $\eta = O(1)$ $r$-paths constituting $\mathcal{Q}_\alpha$. Denote by $P_\alpha=\cup_{Q\in\mathcal{Q}_\alpha}N(Q,\delta)$ the union of all $\delta$-portals from all the $r$-paths on the boundary of $\alpha$. Note that $|P_\alpha|=O(\eta\cdot\frac{\log\log n}{\epsilon})=O(\frac{\log\log n}{\epsilon})$. 
Using \Cref{thm:TreeSpanner}, we construct an $O(\log\log n)$-hop emulator $K_\Phi$ for $\Phi$ with treewidth $O(\log\log n)$.  We add the following sets of edges to the host graph $H$:
\begin{enumerate}
	\item For every node $\alpha\in \Phi$, and vertex $v\in P_\alpha$, create a copy $\widetilde{v_\alpha}$. Denote this set of copies by $\widetilde{P_\alpha}$.
	Add the edge set $\widetilde{P_\alpha}\times \widetilde{P_\alpha}$ to $H$ (i.e. a clique on the copies).
	\item For every edge $\{\alpha,\beta\}\in K_\Phi$, add the edge set $\widetilde{P_\alpha}\times \widetilde{P_\beta}$ to $H$ (i.e. a bi-clique between the respective copies).
	\item For every leaf node $\alpha\in \Phi$, let $I_\alpha$ be the set of internal vertices. We add to $H$ the vertices in $I_\alpha$  and two edge sets $I_\alpha\times I_\alpha$ and $I_\alpha\times \widetilde{P_\alpha}$.
	\item Denote by $I_{\Phi}=\cup_{\alpha\in\Phi:\alpha\text{ is leaf}}I_{\alpha}$ the set of all vertices that belong to the interior of leaves of $\Phi$. For $v\notin I_\Phi$, by \Cref{obs:P3}, it necessarily belongs to some $r$-path $Q_v$ on the boundary of some leaf node $\alpha_v$. We add $v$ and the edge set $v\times \widetilde{P_{\alpha_v}}$ to $H$.	
\end{enumerate}
In summary, the set of vertices of $H$ is $V_H=V_G\cup\bigcup_{\alpha\in \Phi}\widetilde{P_\alpha}$, while the set of edges is
\begin{align*}
	E_{H} & =\bigcup_{\alpha\in\Phi}\left\{ \left(\widetilde{v_{\alpha}},\widetilde{u_{\alpha}}\right)\mid v,u\in P_{\alpha}\right\} \cup\bigcup_{\left(\alpha,\beta\right)\in\Phi}\left\{ \left(\widetilde{v_{\alpha}},\widetilde{u_{\beta}}\right)\mid v\in P_{\alpha},u\in P_{\beta}\right\} \\
	& \quad\cup\bigcup_{\alpha\in\Phi:\alpha\text{ is leaf}}\left\{ \left(v,\widetilde{u_{\alpha}}\right)\mid v\in I_{\alpha},u\in P_{\alpha}\cup I_{\alpha}\right\} \cup\bigcup_{v\notin I_{\Phi}}\left\{ \left(v,\widetilde{u_{\alpha_{v}}}\right)\mid u\in P_{\alpha_{v}}\right\} 
\end{align*}

The image $f(v)$ of every vertex $v$ consists of $v$ itself (added to $H$ in either Step 3 or Step 4), and all the respective copies (added in Step 1, one per each node $\alpha$ such that $v\in P_\alpha$).
The weights of the edges in $H$ are defined in a natural way: for every edge $(u',v')$ in $H$, $w_H(u',v') = d_G(u,v)$ where $u = f^{-1}(u')$ and $v = f^{-1}(v')$. 

\begin{observation}\label{obs:Ialpha-Palpha} Foor any leaf $\alpha$ in $\Phi$, the image of $I_{\alpha}\cup P_{\alpha}$ induces a clique in $H$. 
\end{observation}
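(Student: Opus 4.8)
The statement to prove is \Cref{obs:Ialpha-Palpha}: for any leaf $\alpha$ in $\Phi$, the image of $I_{\alpha}\cup P_{\alpha}$ induces a clique in $H$. The plan is to verify directly, case by case, that every pair of vertices in the image of $I_{\alpha} \cup P_{\alpha}$ is joined by an edge in $E_H$, by inspecting which construction step (Steps 1--4) created that edge.

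First I would recall that the image of $I_\alpha \cup P_\alpha$ in $H$ consists of: the vertices $v \in I_\alpha$ themselves (added in Step 3), together with, for each $v \in P_\alpha$, the copy $\widetilde{v_\alpha}$ (added in Step 1) --- note we only need \emph{one} copy per vertex of $P_\alpha$, namely the one associated with $\alpha$, since we are considering the image relative to the leaf $\alpha$. Then I would split the pairs into three types. (i) Two copies $\widetilde{v_\alpha}, \widetilde{u_\alpha}$ with $u,v \in P_\alpha$: these are joined by an edge of the clique $\widetilde{P_\alpha}\times\widetilde{P_\alpha}$ added in Step 1. (ii) Two internal vertices $u,v \in I_\alpha$: these are joined by an edge of $I_\alpha \times I_\alpha$ added in Step 3. (iii) A mixed pair $v \in I_\alpha$ and a copy $\widetilde{u_\alpha}$ with $u \in P_\alpha$: since $u \in P_\alpha \subseteq P_\alpha \cup I_\alpha$, the edge $(v, \widetilde{u_\alpha})$ belongs to the set $\{(v,\widetilde{u_\alpha}) \mid v\in I_\alpha, u\in P_\alpha\cup I_\alpha\}$ added in Step 3. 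Hence every pair is adjacent, so the image of $I_\alpha\cup P_\alpha$ is a clique.

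I would also briefly note the one subtlety worth spelling out: the phrase ``the image of $I_\alpha \cup P_\alpha$'' should be read as the set $I_\alpha \cup \widetilde{P_\alpha}$ (the $\alpha$-copies), rather than the full preimage-union $\bigcup_{v\in P_\alpha} f(v)$ which would include copies $\widetilde{v_\beta}$ for $\beta \neq \alpha$; the claim and its downstream uses only need the local clique on $I_\alpha \cup \widetilde{P_\alpha}$, which is exactly what Steps 1 and 3 provide. There is no real obstacle here --- the statement is an immediate bookkeeping consequence of the construction, and the ``hard part'' is merely making sure one has enumerated all pair types and matched each to the correct step; in particular that Step 3 contributes both the $I_\alpha$-clique and the full bipartite set $I_\alpha \times \widetilde{P_\alpha}$ (via $u \in P_\alpha \subseteq P_\alpha \cup I_\alpha$), so no pair is left uncovered.
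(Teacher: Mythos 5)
Your proof is correct and matches the paper's (implicit) reasoning: the paper states this observation without proof, as it follows immediately from the edge sets added in Step 1 (the clique on $\widetilde{P_\alpha}$) and Step 3 (the sets $I_\alpha\times I_\alpha$ and $I_\alpha\times\widetilde{P_\alpha}$), exactly the three pair types you enumerate. Your reading of ``image'' as the local set $I_\alpha\cup\widetilde{P_\alpha}$ is also the intended one, consistent with how the paper later invokes the observation for $\widetilde{I_\alpha}\cup\widetilde{P_\alpha}$ in the proof of \Cref{lm:nearlyPlanar-emb}.
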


Next, we bound the treewidth of $H$.

\begin{lemma}\label{lm:tw-planar-emb-new} $H$ has treewidth $O(\epsilon^{-1}(\log\log n)^2)$.
\end{lemma}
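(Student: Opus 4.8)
The goal is to exhibit a tree decomposition of $H$ of width $O(\epsilon^{-1}(\log\log n)^2)$. The natural starting point is the tree decomposition $\mathcal{T}$ of the emulator $K_\Phi$ guaranteed by \Cref{thm:TreeSpanner}, which has width $O(\log\log n)$; its vertex set is the node set of $\Phi$, so each bag of $\mathcal{T}$ is a set of $O(\log\log n)$ nodes of $\Phi$. The plan is to ``blow up'' each node $\alpha$ appearing in a bag into its portal set $\widetilde{P_\alpha}$ (together with some internal vertices when $\alpha$ is a leaf), i.e.\ replace $\alpha$ by $\widetilde{P_\alpha}\cup I_\alpha$. Since $|\widetilde{P_\alpha}| = O(\log\log n/\epsilon)$ and $|I_\alpha|=O(1)$, and each bag of $\mathcal{T}$ contains $O(\log\log n)$ nodes, the resulting bags have size $O(\log\log n)\cdot O(\log\log n/\epsilon) = O((\log\log n)^2/\epsilon)$, giving the claimed width. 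We must additionally attach the original vertices $v\in V_G$ somewhere: each such $v$ lies either in $I_\alpha$ for a leaf $\alpha$ (handled by the blow-up already, since then $v\in I_\alpha$), or on a boundary $r$-path $Q_v$ of some leaf $\alpha_v$, in which case $v$ is adjacent in $H$ only to $\widetilde{P_{\alpha_v}}$; we then hang a new bag $\{v\}\cup\widetilde{P_{\alpha_v}}$ off any bag of the (modified) decomposition that already contains $\widetilde{P_{\alpha_v}}$.

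\textbf{Verification of the three tree-decomposition axioms.} After the blow-up, I would check: (i) \emph{Vertex cover:} every copy $\widetilde{v_\alpha}$ is covered because $\alpha$ appears in some bag of $\mathcal{T}$ (it is a vertex of $K_\Phi$); every $v\in V_G$ is covered by the construction just described. (ii) \emph{Edge cover:} the four edge types of $E_H$ must each be realized inside a bag. Clique edges within $\widetilde{P_\alpha}$ (and the leaf cliques $I_\alpha\times(I_\alpha\cup\widetilde{P_\alpha})$, cf.\ \Cref{obs:Ialpha-Palpha}) are covered because some bag contains all of $\widetilde{P_\alpha}\cup I_\alpha$ — namely any bag of $\mathcal{T}$ containing $\alpha$. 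Bi-clique edges $\widetilde{P_\alpha}\times\widetilde{P_\beta}$ for $\{\alpha,\beta\}\in K_\Phi$ are covered because $\{\alpha,\beta\}$ is an edge of $K_\Phi$, so by axiom (2) for $\mathcal{T}$ some bag contains both $\alpha$ and $\beta$, hence after blow-up contains $\widetilde{P_\alpha}\cup\widetilde{P_\beta}$. The edges $v\times\widetilde{P_{\alpha_v}}$ for $v\notin I_\Phi$ are covered by the freshly added bag $\{v\}\cup\widetilde{P_{\alpha_v}}$. (iii) \emph{Connectivity:} for a copy $\widetilde{v_\alpha}$, the bags containing it are exactly the bags of $\mathcal{T}$ that contain $\alpha$, which form a connected subtree by axiom (3) for $\mathcal{T}$ — the blow-up is a node-local substitution and preserves this. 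For an original vertex $v\in V_G$ I must be slightly careful: if $v\in I_\alpha$ for a (unique) leaf $\alpha$, $v$ appears only in bags containing $\alpha$, again a connected subtree; if $v$ lies on a boundary path, after the construction $v$ may lie on the boundary of \emph{several} leaves, but we added only one dedicated bag $\{v\}\cup\widetilde{P_{\alpha_v}}$ and attach it once, so the bags containing $v$ form a (one-bag, hence trivially connected, plus possibly the bags where $v\in I_\alpha$ — but those two cases are mutually exclusive) connected set. I would state this disjunction explicitly to be safe.

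\textbf{Width bound and running time.} Finally I would tally the width: the maximum bag size is $\max_{B\in\mathcal{T}}\bigl(\sum_{\alpha\in B}|\widetilde{P_\alpha}\cup I_\alpha|\bigr)$ over blown-up bags, plus the dedicated bags of size $|\widetilde{P_{\alpha_v}}|+1$. Using $|B|=O(\log\log n)$ (width of $\mathcal{T}$ plus one), $|\widetilde{P_\alpha}|=O(\eta\cdot\log\log n/\epsilon)=O(\log\log n/\epsilon)$ since $\eta=O(1)$ by \Cref{lm:comp-recursive-decomp}, and $|I_\alpha|=O(1)$, we get bag size $O((\log\log n)^2/\epsilon)$, so the width is $O((\log\log n)^2/\epsilon) - 1 = O(\epsilon^{-1}(\log\log n)^2)$, as claimed.

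\textbf{Main obstacle.} The only genuinely delicate point is the connectivity axiom for the original vertices $v\in V_G$ that lie on boundary $r$-paths: such a $v$ may sit on the boundary of many leaves and many internal nodes of $\Phi$, so one must be deliberate that $v$ is inserted into $H$ (Step 4) and into the decomposition exactly once, via a single dedicated bag attached to one bag containing $\widetilde{P_{\alpha_v}}$, rather than being scattered across the tree. Everything else is a routine substitution argument on top of \Cref{thm:TreeSpanner}.
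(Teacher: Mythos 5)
Your proposal is correct and takes essentially the same route as the paper: blow up each bag of the width-$O(\log\log n)$ tree decomposition of $K_\Phi$ by replacing every node $\alpha$ with its portal set $\widetilde{P_\alpha}$, attach one dedicated bag per boundary vertex $v\notin I_\Phi$, and tally $O(\log\log n)\cdot O(\epsilon^{-1}\log\log n)$ per bag. The only cosmetic difference is that the paper keeps the leaf interiors out of the blown-up bags and instead hangs a separate bag $I_\alpha\cup\widetilde{P_\alpha}$ off a bag containing $\alpha$, whereas you merge $I_\alpha$ into the blown-up bags; since $|I_\alpha|=O(1)$ this changes nothing asymptotically.
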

\begin{proof}
	Let $\mathcal{T}_\Phi(\mathcal{X}_\Phi,\mathcal{E}_\Phi)$ be a tree decomposition of $K_\Phi$ of  width $O(\log\log n)$.
	We create a tree-decomposition $\mathcal{T}$ for $H$ as follows: for every node $\alpha\in\Phi$, and every bag in $\mathcal{T}_\Phi$ containing $\alpha$, we replace $\alpha$ with $\widetilde{P_\alpha}$.
	Next, for leaf node $\alpha$, let $B$ be some arbitrary bag containing $\alpha$. We create a new bag $B_\alpha$, containing the vertices $I_\alpha\cup \widetilde{P_\alpha}$, with a single edge in $\mathcal{T}$ between $B$ and $B_{\alpha}$.
	Finally, for every vertex $v\notin I_\Phi$, let $B_{\alpha_v}$ be some bag containing $\alpha_v$, the leaf node containing $v$. We create a new bag $B_{v}$ containing the vertex $v$ and the set  $\widetilde{P_\alpha}$. We then add a single edge in $\mathcal{T}$ between $B$ and $B_{\alpha_v}$.
	
	As each bag contains at most $O(\log\log n)$ nodes from $\Phi$, and for every $\alpha\in\Phi$, $|\widetilde{P_\alpha}|=O(\eps^{-1}\cdot\log\log n)$, it follows that the width of $\mathcal{T}$ is bounded by $O(\frac{(\log\log n)^2}{\eps})$.
	Finally, it is straightforward to verify that $\mathcal{T}$ is a valid tree decomposition for $H$:  every edge of $H$ is contained in some bag, and  as every copy of every vertex $v$ is associated with a unique node $\alpha\in \Phi$, the connectivity condition  is satisfied.
\end{proof}

We now bound the distortion of the embedding. 
Specifically, we will show that for every pair of vertices $u,v$, and every two copies $u'\in f(u)$, $v'\in f(v)$, $d_G(u,v)\le d_H(u',v')\le d_G(u,v)+O(\eps)\cdot D$.
The lower bound follows directly from the way we assign weights to the edges of $H$. 
Indeed, every $u'$-$v'$ path in $H$ corresponds to a $u$-$v$ walk in $G$ of the same weight.  
Henceforth, we focus on proving the upper bound. 
First, we need the following lemma, which is the generalization of the separation property.

\begin{lemma}\label{lm:separation-planar} Let $\alpha$ and $\beta$ be two nodes in $\Phi$.  Let $P_{uv}$ be any path between two vertices $u$ and $v$ in $G$ such that $u \in \mathcal{Q}_{\alpha}$ and $v\in \mathcal{Q}_{\beta}$. Let $(\alpha=\lambda_1, \lambda_2, \ldots, \lambda_k = \beta)$ be a set of nodes in $\Phi[\alpha,\beta]$ such that $\lambda_{i+1} \in \Phi[\lambda_i,\beta]$ for any $1\le i\le k-1$. Then, there exists a sequence of vertices $(u=x_1, x_2, \ldots, x_k = v)$ such that $x_i \in P_{uv}\cap \mathcal{Q}_{\lambda_i}$ and $x_{i+1}\in P_{uv}[x_i,v]$ for any $1 \leq i \leq k-1$.
\end{lemma}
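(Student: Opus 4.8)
\textbf{Proof plan for \Cref{lm:separation-planar}.}

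The plan is to induct on $k$, peeling off one $\lambda_i$ at a time and using the single-step ``separation property'' of an $\eta$-\RSPD quoted just before the lemma: for any path between a vertex of $\mathcal{Q}_{\alpha}$ and a vertex of $\mathcal{Q}_{\beta}$, every node $\lambda$ lying on the tree-path $\Phi[\alpha,\beta]$ has $P_{uv}\cap\mathcal{Q}_{\lambda}\neq\emptyset$. The base case $k=1$ (so $\alpha=\beta$) is trivial, taking $x_1=u=v$ — or more cleanly the base case $k=2$, where we just need one intermediate vertex, which is exactly the single-step property applied to $\lambda_2$ (note $\lambda_2$ lies on $\Phi[\lambda_1,\beta]=\Phi[\alpha,\beta]$, so the property applies) — and then we set $x_2=v$ after observing $v\in\mathcal{Q}_\beta=\mathcal{Q}_{\lambda_k}$, with $x_2\in P_{uv}[x_1,v]$ vacuously since $x_1=u$.

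For the inductive step, suppose $k\geq 2$ and the statement holds for all shorter sequences. Apply the single-step separation property to the node $\lambda_2$: since $\lambda_2\in\Phi[\lambda_1,\beta]=\Phi[\alpha,\beta]$, it separates $u$ from $v$, so $P_{uv}$ contains some vertex $x_2\in\mathcal{Q}_{\lambda_2}$. Now I would split $P_{uv}$ at $x_2$ into $P_{uv}[u,x_2]$ and $P_{uv}[x_2,v]$. The subpath $P_{uv}[x_2,v]$ runs between $x_2\in\mathcal{Q}_{\lambda_2}$ and $v\in\mathcal{Q}_{\lambda_k}$, and the shortened chain $(\lambda_2,\lambda_3,\dots,\lambda_k)$ still satisfies $\lambda_{i+1}\in\Phi[\lambda_i,\beta]$ and lies on $\Phi[\lambda_2,\beta]$; so by the induction hypothesis there are $x_2',x_3,\dots,x_k$ with $x_i\in P_{uv}[x_2,v]\cap\mathcal{Q}_{\lambda_i}$ and the consecutive-containment property along $P_{uv}[x_2,v]$. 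Prepending $x_1=u$ gives the desired sequence for the full chain, once we check the ordering conditions: $x_2\in P_{uv}[x_1,v]=P_{uv}$ trivially, and $x_{i+1}\in P_{uv}[x_i,v]$ for $i\geq 2$ follows from the induction hypothesis together with the fact that $P_{uv}[x_2,v]$ is a contiguous sub-path of $P_{uv}$ sharing its orientation, so sub-paths of $P_{uv}[x_2,v]$ of the form $[x_i,v]$ coincide with $P_{uv}[x_i,v]$.

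The one point that needs a little care — and which I expect to be the main (minor) obstacle — is the bookkeeping of which tree-paths contain which $\lambda_i$, i.e.\ verifying that the ``$\lambda_{i+1}\in\Phi[\lambda_i,\beta]$'' hypothesis is preserved under the peeling and that each $\lambda_i$ genuinely lies on the relevant $\Phi[\cdot,\cdot]$ so that the single-step property is applicable. This is a purely structural fact about paths in the tree $\Phi$: if $\lambda_2\in\Phi[\alpha,\beta]$ then $\Phi[\lambda_2,\beta]\subseteq\Phi[\alpha,\beta]$, and the monotonicity condition on the $\lambda_i$'s forces all of $\lambda_2,\dots,\lambda_k$ onto $\Phi[\lambda_2,\beta]$. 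I would also want to invoke \propref{eq:additive}-style reasoning only implicitly; the real content is entirely the iterated application of the separation property, and the lemma is essentially a formalization of ``a path crossing a nested sequence of separators must cross them in order.''
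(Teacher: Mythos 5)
Your proof is correct and follows essentially the same route as the paper's: induct on $k$, peel off $\lambda_2$ by exhibiting a crossing vertex $x_2\in P_{uv}\cap\mathcal{Q}_{\lambda_2}$, apply the induction hypothesis to the suffix $P_{uv}[x_2,v]$ with the chain $(\lambda_2,\dots,\lambda_k)$, and do the same tree-path bookkeeping to see the shortened chain still satisfies the hypotheses. The only difference is that the paper does not invoke the (unproven) prose ``separation property'' as a black box but derives the crossing inline --- observing that $\lambda_2$ is an ancestor of $\lambda_1$ or of $\lambda_k$, so by properties P2(b)--(c) of Definition~\ref{def:RSPD} one endpoint lies in $X_{\lambda_2}$ while the other is either on $\mathcal{Q}_{\lambda_2}$ or outside $X_{\lambda_2}$ --- so for a fully self-contained write-up you should include that short ancestor argument in place of citing the one-node property.
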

\begin{proof}
	The proof is by induction. If $k \leq 2$, then the lemma trivially follows. Henceforth, we assume that $k \geq 3$. Observe that $\lambda_2$ is either an ancestor of $\lambda_1$ and/or $\lambda_k$. Assume first that $\lambda_2$ is an ancestor of $\lambda_1$. Then by property P2(b) in~\Cref{def:RSPD}, $u\in X_{\lambda_2}$ and $v$ either belongs to $\mathcal{Q}_{\lambda_2}$ or $v \in V(G)\setminus V(X_{\lambda_2})$. In both cases, $P_{uv}\cap \mathcal{Q}_{\lambda_2} \not= \emptyset$. Let $x_2$ be the last vertex in the path $P_{uv}$ going from $u$ to $v$ such that $x_2 \in P_{uv}\cap \mathcal{Q}_{\lambda_2}$. By applying the induction hypothesis to $P_{x_2v}  = P_{uv}[x_2,v]$, there exists a sequence of vertices $(x_2, \ldots, x_k = v)$ such that $x_i \in P_{x_2v}\cap \mathcal{Q}_{\lambda_i}$ and $x_{i+1}\in P_{x_2v}[x_i,v]$ for any $2 \leq i \leq k-1$. Thus, $(u=x_1, x_2, \ldots, x_k = v)$ is the sequence of vertices claimed by the lemma.
	The case where $\lambda_2$ is an ancestor of $\lambda_k$ follows by the same argument.
\end{proof}

Next, we show that the distortion of portal vertices on the boundaries of nodes in $\Phi$ is in check.

\begin{lemma}\label{lm:boundary-distortion-planar}  Let $\alpha$ and $\beta$ be two nodes in $\Phi$.  Let $\widetilde{u_\alpha} \in \widetilde{P_\alpha}$ and $\widetilde{v_\beta} \in \widetilde{P_\beta}$ be two vertices in $H$, and $u = f^{-1}(\widetilde{u_\alpha}), v = f^{-1}(\widetilde{v_\beta})$. Then, it holds that:
	\begin{equation*}
		d_H(\widetilde{u_\alpha},\widetilde{v_\beta}) \leq d_G(u,v) + O(\epsilon)\cdot D.
	\end{equation*}
\end{lemma}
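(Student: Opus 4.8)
The plan is to route the path in $H$ between $\widetilde{u_\alpha}$ and $\widetilde{v_\beta}$ by following a short-hop path in the emulator $K_\Phi$ from $\alpha$ to $\beta$, and at each node of that hop-path paying a local detour of $O(\delta) = O(\eps D/\log\log n)$ to jump onto the nearest $\delta$-portal. First I would fix a shortest $u$–$v$ path $P_{uv}$ in $G$ (of length $d_G(u,v)$). Since $K_\Phi$ is an $O(\log\log n)$-hop emulator for $\Phi$ with $d_{K_\Phi}(\alpha,\beta) = d_\Phi(\alpha,\beta)$, there is a path $\alpha = \lambda_1, \lambda_2, \ldots, \lambda_k = \beta$ in $K_\Phi$ with $k = O(\log\log n)$ hops; moreover, by inspecting the construction of the emulator in \Cref{thm:TreeSpanner} — which only adds edges between vertices that already lie on a common tree path — this hop-path can be taken so that the nodes $\lambda_i$ form a monotone subsequence along the tree path $\Phi[\alpha,\beta]$, i.e. $\lambda_{i+1} \in \Phi[\lambda_i,\beta]$. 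Now apply \Cref{lm:separation-planar} to $P_{uv}$ with this monotone sequence of nodes: we obtain vertices $u = x_1, x_2, \ldots, x_k = v$ with $x_i \in P_{uv} \cap \mathcal{Q}_{\lambda_i}$, appearing in this order along $P_{uv}$.

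Next I would replace each $x_i$ by its nearest $\delta$-portal. For each $i$, since $x_i$ lies on some $r$-path $Q \in \mathcal{Q}_{\lambda_i}$, there is a portal $p_i \in N(Q,\delta) \subseteq P_{\lambda_i}$ with $d_G(x_i, p_i) \le \delta$; take $p_1 = $ the portal copy representing $u$ in $\widetilde{P_\alpha}$ (and similarly $p_k$ for $v$) so that the endpoints match $\widetilde{u_\alpha}$ and $\widetilde{v_\beta}$ exactly. The path in $H$ is then: start at $\widetilde{u_\alpha} = \widetilde{(p_1)_{\lambda_1}}$, and for each $i$ take the edge $\widetilde{(p_i)_{\lambda_i}} \to \widetilde{(p_{i+1})_{\lambda_{i+1}}}$, which exists in $H$ because $\{\lambda_i, \lambda_{i+1}\} \in K_\Phi$ (Step 2 of the construction adds the bi-clique $\widetilde{P_{\lambda_i}} \times \widetilde{P_{\lambda_{i+1}}}$), ending at $\widetilde{(p_k)_{\lambda_k}} = \widetilde{v_\beta}$. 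The weight of this $H$-path is $\sum_{i=1}^{k-1} d_G(p_i, p_{i+1})$.

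To bound this sum, I would use the triangle inequality in $G$: $d_G(p_i,p_{i+1}) \le d_G(p_i,x_i) + d_G(x_i,x_{i+1}) + d_G(x_{i+1},p_{i+1}) \le \delta + d_G(x_i,x_{i+1}) + \delta$. Summing over $i = 1,\ldots,k-1$ gives $\sum_i d_G(p_i,p_{i+1}) \le 2(k-1)\delta + \sum_i d_G(x_i,x_{i+1})$. Since the $x_i$ occur in order along the shortest path $P_{uv}$, the telescoping sum $\sum_i d_G(x_i,x_{i+1}) = d_{P_{uv}}(u,v) = d_G(u,v)$ (here I use that the $x_i$ lie consecutively on $P_{uv}$, so the subpaths between them partition $P_{uv}$). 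Hence $d_H(\widetilde{u_\alpha}, \widetilde{v_\beta}) \le d_G(u,v) + 2(k-1)\delta = d_G(u,v) + O(\log\log n) \cdot O(\eps D/\log\log n) = d_G(u,v) + O(\eps)\cdot D$, as claimed. For the boundary cases where $u$ or $v$ themselves are endpoints of the $r$-paths, or the trivial case $k \le 2$, the same estimate holds directly.

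The step I expect to be the main obstacle is justifying the \emph{monotonicity} of the hop-path in $K_\Phi$ needed to invoke \Cref{lm:separation-planar}: the hop-emulator of \Cref{thm:TreeSpanner} is built by recursive tree-division, and one must verify that the $O(\log\log n)$-hop path it provides between $\alpha$ and $\beta$ can be realized (or re-routed) as a subsequence of the tree path $\Phi[\alpha,\beta]$ that is monotone toward $\beta$ — equivalently, that every emulator edge on the path connects two nodes one of which is an ancestor of the other in $\Phi$, which is true by the construction (emulator edges are only added between vertices lying on a common root-to-leaf path, and the recursion respects the tree structure). Once this structural fact is in hand, the rest is the routine triangle-inequality telescoping above.
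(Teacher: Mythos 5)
Your proposal is correct and follows essentially the same route as the paper: route along an $O(\log\log n)$-hop $\alpha$--$\beta$ path in $K_\Phi$, invoke \Cref{lm:separation-planar} to get intersection vertices $x_i\in P_{uv}\cap\mathcal{Q}_{\lambda_i}$ in order along the shortest path, snap each $x_i$ to its nearest $\delta$-portal (with $x_1=u$, $x_k=v$ already portals), use the Step-2 bi-clique edges, and telescope with a $2\delta$ loss per hop. The only point where your justification differs from the paper's is the monotonicity of the hop-path: your stated reason --- that the emulator only adds edges between vertices on a common root-to-leaf path --- is not literally true (the construction adds edges $\{z,v\}$ for every $z$ in a component $C$ and the lower boundary vertex $v$, and such $z$ need not be an ancestor of $v$), but the conclusion you need does hold; the paper obtains it more simply by taking a shortest path $P$ with $O(\log\log n)$ hops in $K_\Phi$ and observing that, since $w_{K_\Phi}(P)=w_\Phi(\Phi[\alpha,\beta])$ and edge weights equal tree distances, the nodes of $P$ must form a (monotone) subsequence of $\Phi[\alpha,\beta]$, which is exactly the hypothesis of \Cref{lm:separation-planar}.
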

\begin{proof}
	Let $Q_{uv}$ be a shortest path from $u$ to $v$ in $G$. Recall that $K_{\Phi}$ is an emulator of $\Phi$ with hop diameter $O(\log \log n)$. 
	Let $P = (\alpha=\lambda_1,  \lambda_2 ,\ldots, \lambda_k = \beta)$ be a shortest path from $\alpha$ to $\beta$ in $K_{\Phi}$ such that $k = O(\log \log n)$. Since $K_{\Phi}$ preserves distances between nodes of $\Phi$, i.e, $w_{K_{\Phi}}(P) = w_{\Phi}(\Phi[\alpha,\beta])$, it must be that the nodes on $P$ constitute a subsequence of nodes on 	$\Phi[\alpha,\beta]$. By \Cref{lm:separation-planar}, there exists a sequence of vertices $(u = x_1,x_2, \ldots, x_k = v)$ such that $x_i \in \mathcal{Q}_{\lambda_i}\cap Q_{uv}$ for all $1\leq i \leq k$ and:
	\begin{equation}\label{eq:dist-uv-boundary}
		d_G(u,v) = \sum_{i=1}^{k-1} d_G(x_i,x_{i+1})
	\end{equation}
	For every $i \in [k]$, let $y_i \in P_{\lambda_i}$ be the $\delta$-portal closest to $x_i$; note that $y_1 = u$ and $y_k = v$ because $u$ and $v$ are $\delta$-portals. By the definition of $\delta$-portals,  $d_G(x_i,y_{i}) \leq \delta$. Thus, by the triangle inequality, it holds that:
	\begin{equation}\label{eq:ypair-vs-xpair}
		d_G(y_i,y_{i+1}) \leq d_G(x_i,x_{i+1}) + 2\delta
	\end{equation}
	Recall that $\widetilde{(y_i)_{\lambda_i}}$ be the copy of $y_i$ created for $\lambda_i$. Observe that,  since $(\lambda_i,\lambda_{i+1})$ is an edge in $K_{\Phi}$, by construction, there is an edge between $\widetilde{(y_i)_{\lambda_i}}$ and $\widetilde{(y_{i+1})_{\lambda_{i+1}}}$ of weight $w_H(\widetilde{(y_i)_{\lambda_i}}, \widetilde{(y_{i+1})_{\lambda_{i+1}}}) = d_G(y_i,y_{i+1})$. Thus, we have:
	\begin{align*}
		d_{H}(\widetilde{u_{\alpha}},\widetilde{v_{\beta}}) & =d_{H}(\widetilde{(y_{1})_{\lambda_{1}}},\widetilde{(y_{k})_{\lambda_{k}}})\leq\sum_{i=1}^{k-1}d_{H}(\widetilde{(y_{i})_{\lambda_{i}}},\widetilde{(y_{i+1})_{\lambda_{i+1}}})=\sum_{i=1}^{k-1}d_{G}(y_{i},y_{i+1})\\
		& \le\sum_{i=1}^{k-1}\left(d_{G}(x_{i},x_{i+1})+2\delta\right) \qquad \mbox{(by \Cref{eq:ypair-vs-xpair})}\\
		&=d_{G}(u,v)+2(k-1)\delta \qquad \mbox{(by \Cref{eq:dist-uv-boundary})}\\
		& =d_{G}(u,v)+O(\log\log n)\frac{\epsilon D}{\log\log n}=d_{G}(u,v)+O(\epsilon)D~,
	\end{align*}
	as claimed.
\end{proof}

\begin{lemma}\label{lm:distortion-planar-emb} For any $u,v \in V(G)$ and any $u' \in f(u), v' \in f(v)$, it holds that:
	\begin{equation*}
		d_H(u',v') \leq d_G(u,v) + O(\epsilon) D.
	\end{equation*}
\end{lemma}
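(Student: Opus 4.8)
The plan is to reduce the general distortion bound to the boundary-portal case already established in \Cref{lm:boundary-distortion-planar}, by handling the two ``ends'' of a shortest path separately from its ``middle''. Concretely, fix $u,v \in V(G)$ and copies $u' \in f(u)$, $v' \in f(v)$; let $Q_{uv}$ be a shortest path in $G$. The only new cases relative to \Cref{lm:boundary-distortion-planar} are those where $u$ (and/or $v$) is an \emph{internal} vertex of a leaf node. So first I would reduce to the leaf-internal situation: if $u \notin I_\Phi$, then $u$ lies on a boundary $r$-path $Q_u$ of some leaf node $\alpha_u$, and by Step~4 of the construction $H$ contains the edges $u \times \widetilde{P_{\alpha_u}}$; moreover $u'$ is either $u$ itself or a copy $\widetilde{u_\gamma}$ with $u \in P_\gamma$, and in the latter case $d_H(u',u) \le$ (single edge of weight $0$) after identifying $u$ with one of its portal-copies — more carefully, I would argue that any two copies of the same vertex are at $H$-distance $O(\eps)D$ from each other, which is exactly the one-to-many requirement we also need to verify; this follows because every copy $\widetilde{u_\gamma}$ is joined (Step~1 / Step~4) to $\widetilde{P_{\alpha_u}}$ or to $u$, and then one invokes \Cref{lm:boundary-distortion-planar} or a direct hop bound. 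So without loss of generality we may take $u' = u$, $v' = v$, and it suffices to bound $d_H(u,v)$.

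Next I would split on whether $u$ and $v$ lie in the interior of leaf nodes. Let $\alpha_u$ be the leaf whose piece $X_{\alpha_u}$ contains $u$ (if $u \in I_\Phi$ take the leaf for which $u$ is internal; if $u$ lies only on boundary paths take any leaf having such a boundary path on it), and similarly $\alpha_v$. By \Cref{obs:Ialpha-Palpha} the image of $I_{\alpha_u} \cup P_{\alpha_u}$ is a clique in $H$, so $u$ is adjacent in $H$ to every copy in $\widetilde{P_{\alpha_u}}$ (via Step~3 if $u$ is internal, via Step~4 otherwise), with edge weights equal to the true $G$-distances. Now walk along $Q_{uv}$: by property P2(c)/the separation property, since $u \in X_{\alpha_u}$ and $v \notin X_{\alpha_u}$ (the case $v \in X_{\alpha_u}$ is handled by \Cref{obs:Ialpha-Palpha} directly — then $u,v$ are in a common clique and $d_H(u,v) = d_G(u,v)$), the path $Q_{uv}$ meets $\mathcal{Q}_{\alpha_u}$; let $p$ be such an intersection vertex and let $y_p \in P_{\alpha_u}$ be the $\delta$-portal of $Q_{\alpha_u}$ closest to $p$, so $d_G(p,y_p) \le \delta$. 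Symmetrically get $q \in Q_{uv} \cap \mathcal{Q}_{\alpha_v}$ and its closest portal $y_q \in P_{\alpha_v}$. Then
\begin{align*}
d_H(u,v) &\le w_H(u,\widetilde{(y_p)_{\alpha_u}}) + d_H(\widetilde{(y_p)_{\alpha_u}},\widetilde{(y_q)_{\alpha_v}}) + w_H(\widetilde{(y_q)_{\alpha_v}},v)\\
&= d_G(u,y_p) + d_H(\widetilde{(y_p)_{\alpha_u}},\widetilde{(y_q)_{\alpha_v}}) + d_G(y_q,v)\\
&\le d_G(u,p) + \delta + \big(d_G(y_p,y_q) + O(\eps)D\big) + d_G(q,v) + \delta,
\end{align*}
using \Cref{lm:boundary-distortion-planar} on $\widetilde{(y_p)_{\alpha_u}}, \widetilde{(y_q)_{\alpha_v}}$ in the middle, and the triangle inequality together with $d_G(u,y_p) \le d_G(u,p)+\delta$, $d_G(y_q,v)\le d_G(q,v)+\delta$. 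Finally bound $d_G(y_p,y_q) \le d_G(p,y_p) + d_G(p,q) + d_G(q,y_q) \le d_G(p,q) + 2\delta$, and since $p,q$ both lie on $Q_{uv}$ in order from $u$ to $v$, $d_G(u,p) + d_G(p,q) + d_G(q,v) = d_G(u,v)$. Collecting terms gives $d_H(u,v) \le d_G(u,v) + 4\delta + O(\eps)D = d_G(u,v) + O(\eps)D$ since $\delta = \eps D/\log\log n \le \eps D$.

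The main obstacle is bookkeeping the \emph{endpoint} cases cleanly: making sure that when $u$ or $v$ is a vertex appearing on several boundary paths (hence owning many copies $\widetilde{u_\gamma}$ across different nodes $\gamma$) the choice of which copy $u'$ we are handed does not matter, i.e.\ establishing the ``all copies of a vertex are within $O(\eps)D$ in $H$'' claim, which is also precisely what \Cref{def:one-to-many} demands for the one-to-many-to-one conversion at the end of \Cref{subsec:embedd_AFversion}. I would dispatch this by the same three-step argument (jump from $\widetilde{u_\gamma}$ to $\widetilde{P_{\alpha_u}}$ if $\gamma$ is on the $K_\Phi$-path, or directly to $u$ via Step~3/4), invoking \Cref{lm:boundary-distortion-planar} once. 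Everything else is the triangle inequality and the fact that edge weights in $H$ equal true $G$-distances, which gives both the claimed upper bound and (trivially) the matching lower bound $d_H(u',v') \ge d_G(u,v)$ already noted before the lemma. This completes \Cref{lm:distortion-planar-emb}.
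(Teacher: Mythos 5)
Your proposal is correct and follows essentially the same route as the paper's proof: reduce to the portal case via \Cref{lm:boundary-distortion-planar}, route each non-portal endpoint through the $\delta$-portal nearest to the point where the shortest path $Q_{uv}$ crosses the boundary $\mathcal{Q}_{\alpha}$ of its leaf piece (using the edges from Steps 2--4), and collect the $4\delta + O(\eps)D$ error by the triangle inequality. The extra bookkeeping you do for which copy $u'$, $v'$ is handed (all copies of a vertex being within $O(\eps)D$) is consistent with, and essentially subsumed by, the paper's terse treatment of the $\delta$-portal cases.
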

\begin{proof}
	The proof is by case analysis. 
	If both $u,v$ are $\delta$-portals, then the lemma holds by \Cref{lm:boundary-distortion-planar}.
	Next, suppose that  both $u,v$ are not $\delta$-portals. 
	By \Cref{obs:P3}, there are two leaf nodes $\alpha,\beta\in \Phi$, such that $u\in X_\alpha$, and $v\in X_\beta$. In particular, there are unique copies of both $u$ and $v$, one for each vertex.
	If $\alpha=\beta$, then $H$ contains the edge $(u,v)$ and we are done.
	Otherwise, let $P_{uv}$ be a shortest $u$-$v$ path in $G$. Then it follows from \Cref{def:RSPD}  that there are two vertices $x_u$ and $x_v$ in $P_{uv}$ such that $x_u\in\mathcal{Q}_\alpha$ and $x_v\in\mathcal{Q}_\beta$ (it might be that $u=x_u$ or $v=x_v$). 
	Furthermore, there are $\delta$-portals $y_u\in P_\alpha$, and $y_v\in P_\beta$, at distances at most $\delta$ from $x_u$ and $x_v$ respectively.	
	By construction in Step 2 and Step 4 (depending on whether $u,v$ are leaves), $H$ contains the edges $(u,\widetilde{(y_u)_\alpha})$, $(v,\widetilde{(y_v)_\beta})$. 
	By \Cref{lm:boundary-distortion-planar}, we have:
	\begin{align*}
		d_{H}(u,v) & \le d_{H}(u,\widetilde{(y_{u})_{\alpha}})+d_{H}(\widetilde{(y_{u})_{\alpha}},\widetilde{(y_{v})_{\beta}})+d_{H}(\widetilde{(y_{v})_{\beta}},v)\\
		& \le d_{G}(u,y_{u})+d_{G}(y_{u},y_{v})+O(\epsilon)\cdot D+d_{G}(y_{v},v)\\
		& \le d_{G}(u,x_{u})+2\cdot d_{G}(x_{u},y_{u})+d_{G}(x_{u},x_{v})+2\cdot d_{G}(x_{v},y_{v})+d_{G}(x_{v},v)+O(\epsilon)\cdot D\\
		& \le d_{G}(u,v)+4\delta+O(\epsilon)\cdot D\le d_{G}(u,v)+O(\epsilon)\cdot D~.
	\end{align*}
Finally, the cases where either $u$ or $v$ is a $\delta$-portal are simpler and can be proved by the same argument.
\end{proof}

We now compute the runtime of our algorithm. First, we compute a (compact representation of) $O(1)$-\RSPD  $\Phi$ using  \Cref{lm:comp-recursive-decomp} in $O(n\log n)$ time. Next, we use  \Cref{thm:TreeSpanner} to compute an emulator $K_\Phi$ in $O(n\cdot\log\log n)$ time, with $O(|\Phi|\cdot\log\log |\Phi|)=O(n\cdot\log\log n)$ edges.
Next, we compute the set of $\delta$-portals. We will use the same shortest path tree $T_r$ rooted at $r$ from \Cref{lm:comp-recursive-decomp} (that can be found in $O(n)$ time~\cite{HKRS97}). For each vertex $v\in T_r$, we compute and store a set of $\delta$-portals $N(T_r[r,v],\delta)$ of the $r$-rooted shortest path $T_r[r,v]$ by depth-first tree traversal in $O(n)$ time. Note that we can afford to store all the portal of the path $T_r[r,v]$ at $v$ explicitly since $|N(T_r[r,v],\delta)| \leq \frac{\log\log n}{\eps}$. 
Given a vertex $v$ with a parent $x_v$ in $T_r$, and the closest $\delta$-portal ancestor $y_v$, if $d_{T_r}(v,y_v)=d_{T_r}(r,v)-d_{T_r}(r,y_v)\le\delta$ then we set $N(T_r[r,v],\delta)=N(T_r[r,x_v],\delta)$, and otherwise set $N(T_r[r,v],\delta)=N(T_r[r,x_v],\delta)\cup\{v\}$.

Next, we construct the one-to-many embedding $f$ into $H$ with tree-decomposition $\mathcal{T}$ following to Steps (1)-(4). The construction is straightforward and takes  \[
|\Phi|\cdot O(1)^{2}\cdot O\left(\frac{\log\log n}{\epsilon}\right)^{2}+|E(K_{\Phi})|\cdot O\left(\frac{\log\log n}{\epsilon}\right)^{2}+O(n)\cdot O\left(\frac{\log\log n}{\epsilon}\right)=O\left(n\cdot\frac{(\log\log n)^{3}}{\epsilon^{2}}\right)~,
\]
time. The last and most time-consuming step is to assign weights to the edges in $H$.
Recall that the weight of an edge in $(u',v')\in H$ where $u = f^{-1}(u')$ and $v = f^{-1}(v')$, is defined to be $w_H(u',v')=d_G(u,v)$. 
Computing the shortest distance between $u$ and $v$ in $G$ takes $\Omega(n)$ time, and using shortest path computation to find the weight of every edge in $H$ incurs $\Omega(n^2)$ time. 
Our solution is to use approximate distances instead of the exact ones to assign to edges of $H$.
Specifically, we will use the \emph{approximate distance oracle} of Thorup\footnote{A similar distance oracle was constructed independently by  Klein~\cite{Klein02}. However, Klein did not specify the construction time explicitly (a crucial property in our context).} \cite{Thorup04} to query the approximate distance between $u$ and $v$ in $O(\frac{1}{\epsilon})$ time. 

\begin{lemma}[Thorup~\cite{Thorup04}, Theorem 3.19]\label{lm:Thorup-oracle} Given an $n$-vertex planar graph $G(V,E,w)$ and a parameter $\epsilon < 1$, one can construct an $O(\frac{n\log n}{\epsilon})$-space data structure  $\mathcal{O}_{G,\eps}$ in $O(\frac{n\log^3(n)}{\epsilon^2})$ time such that given any two vertices $u,v$ in $G$, the data structure, in $O(\frac{1}{\eps})$ time, returns $\mathcal{O}_{G,\eps}(u,v)$ with 
	\begin{equation*}
		d_G(u,v) \leq \mathcal{O}_{G,\eps}(u,v) \leq (1+\eps)d_G(u,v)
	\end{equation*}
\end{lemma}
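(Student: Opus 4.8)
The plan is to reconstruct Thorup's oracle, which rests on a recursive shortest-path (``frame'') separator decomposition together with carefully chosen \emph{portal} sets on the separator paths. First I would invoke (essentially) \Cref{lm:comp-recursive-decomp}: compute, in $O(n\log n)$ time, an $O(1)$-\RSPD $\Phi$ of $G$ of depth $O(\log n)$ in which every piece is bounded by $O(1)$ shortest paths, all subpaths of a single shortest-path tree $T_r$. Along each such boundary shortest path $Q$ I would precompute a prefix-sum of edge weights, so that for any two vertices $q,q'\in Q$ the distance $d_G(q,q')$ equals the arc-length $d_Q(q,q')$ (a subpath of a shortest path is shortest) and is retrievable in $O(1)$ time.

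The heart of the construction is the portals. For each vertex $v$ and each boundary shortest path $Q$ of a piece of $\Phi$ containing $v$, I would select a set $C(v,Q)\subseteq V(Q)$ of \emph{portals} together with the exact distances $\{d_G(v,q):q\in C(v,Q)\}$, enjoying the \emph{covering property}: for every $x\in V(Q)$ there is $q\in C(v,Q)$ with $d_G(v,q)+d_Q(q,x)\le(1+\eps)\,d_G(v,x)$. The selection proceeds scale by scale: with $\delta=d_G(v,Q)$, on the portion of $Q$ whose vertices $x$ satisfy $d_G(v,x)\in[2^i\delta,2^{i+1}\delta)$ one places portals greedily so that consecutive portals are within arc-length $\eps\cdot 2^i\delta$ of one another. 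Using that $x\mapsto d_G(v,x)$ is $1$-Lipschitz with respect to $d_Q$ and is ``essentially unimodal'' on the shortest path $Q$ (two near-minima of $d_G(v,\cdot)$ on $Q$ lie within arc-length at most the sum of their values), each scale contributes $O(1/\eps)$ portals, and a further argument on the number of relevant scales keeps $|C(v,Q)|=O(1/\eps)$ overall. Since a vertex participates in $O(\log n)$ pieces with $O(1)$ boundary paths each, the portal data occupies $O(n\eps^{-1}\log n)$ space, the prefix-sums and $\Phi$ itself accounting for the remaining $O(n\log n)$.

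For a query $(u,v)$ I would locate, via the decomposition tree $\Phi$ (with an LCA-type structure, in $O(1)$ time), the minimal piece $P$ containing both $u$ and $v$; then either one of $u,v$ lies on a boundary path of $P$, or both are internal and some boundary shortest path $Q$ separating the two children of $P$ is crossed by every $u$--$v$ path. In all cases there are $O(1)$ candidate paths $Q$ for which both $C(u,Q)$ and $C(v,Q)$ are stored, and for each I would output $\min\{\,d_G(v,q)+d_Q(q,q')+d_G(q',u):q\in C(v,Q),\ q'\in C(u,Q)\,\}$. Rather than examine all $O(\eps^{-2})$ pairs, write $d_Q(q,q')=\mathrm{pos}(q')-\mathrm{pos}(q)$ when $q$ precedes $q'$, so the minimum over such pairs equals $\min_q\bigl(d_G(v,q)-\mathrm{pos}(q)\bigr)+\min_{q'}\bigl(d_G(q',u)+\mathrm{pos}(q')\bigr)$ over a prefix/suffix split, computed by one linear sweep of the position-sorted portal lists (and symmetrically with the roles of $q,q'$ exchanged), in $O(1/\eps)$ time. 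Correctness: the value returned is the length of an actual $u$--$v$ walk, hence $\ge d_G(u,v)$; and if $x$ is the point where a shortest $u$--$v$ path crosses $Q$, picking $q\in C(v,Q)$ and $q'\in C(u,Q)$ from the covering property gives $d_G(v,q)+d_Q(q,q')+d_G(q',u)\le d_G(v,q)+d_Q(q,x)+d_Q(x,q')+d_G(q',u)\le(1+\eps)d_G(v,x)+(1+\eps)d_G(x,u)=(1+\eps)d_G(u,v)$, using $d_G(v,x),d_G(u,x)\le d_G(u,v)$. Rescaling $\eps$ by a constant finishes it; queries whose endpoints share a leaf piece of $\Phi$ (each of constant size) are answered exactly from a small stored local table.

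For the running time, the decomposition costs $O(n\log n)$, and the portal selection at each of the $O(\log n)$ levels is driven by bounded multi-source shortest-path computations from the constantly many separator paths of that level, restricted to the pieces of that level, from which the portal distances and the greedy per-scale choices are read off; a careful implementation keeps each level to $O(n\log^2 n/\eps^2)$ and the total to $O(n\log^3 n/\eps^2)$. The step I expect to be the real obstacle is exactly the portal analysis: showing that the per-scale greedy choice, aggregated over all scales, yields only $O(1/\eps)$ portals per (vertex, separator path) — this is where the shortest-path structure of the separators is used essentially, and it is what pins down the $O(n\eps^{-1}\log n)$ space — together with organizing the construction so that all of the portal distances are obtained within the claimed near-linear time rather than by brute-force shortest-path calls.
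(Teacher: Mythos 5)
This lemma is not proved in the paper at all: it is imported verbatim from Thorup (Theorem 3.19 of \cite{Thorup04}) and used as a black box, so the only meaningful comparison is with Thorup's original construction. Your sketch does follow that construction in outline: the recursive shortest-path (frame) separator decomposition of \Cref{lm:comp-recursive-decomp}, per-vertex connection/portal sets on the $O(1)$ boundary shortest paths of each of the $O(\log n)$ pieces containing the vertex, and an $O(1/\eps)$-time query that scans position-sorted portals on the $O(1)$ candidate separator paths at the deepest common piece. The query-correctness argument (every returned value is the length of a genuine walk, and the crossing-vertex argument gives the $(1+\eps)$ upper bound) and the prefix/suffix sweep that avoids the $O(\eps^{-2})$ pairwise minimum are both correct and are essentially Thorup's.

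The genuine gap is in the two quantitative claims that actually carry the lemma. First, the portal-count bound: placing portals greedily at spacing $\eps\, 2^{i}\delta$ within each distance scale $[2^{i}\delta,2^{i+1}\delta)$ gives $O(1/\eps)$ portals \emph{per scale}, and the number of scales is logarithmic in the distance range, so as written you get $O(\eps^{-1}\log(\cdot))$ portals per (vertex, path) pair and hence roughly $O(n\eps^{-1}\log^{2} n)$ space, not the claimed $O(n\eps^{-1}\log n)$. The unnamed ``further argument'' that collapses this to $O(1/\eps)$ portals per pair is precisely Thorup's nontrivial $\eps$-covering-set lemma, which exploits the shortest-path structure of $Q$ in a sharper way than Lipschitzness plus scale bucketing; without it the space (and query time per path) degrade by a $\log$ factor. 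Second, the preprocessing bound $O(n\log^{3} n/\eps^{2})$ is asserted via ``a careful implementation'' without specifying how all the portal distances $d_G(v,q)$ are computed within that budget; this is a substantial part of Thorup's work. You flag both points yourself, which is candid, but as it stands the proposal is an accurate roadmap of Thorup's proof rather than a proof of the stated space, preprocessing, and query bounds; since the paper deliberately cites the result rather than reproving it, that roadmap is adequate context but does not substitute for the citation.
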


To assign weights in our graph $H$, we simply construct a $(1+\eps)$-distance oracle $\mathcal{O}_{G,\eps}(u,v)$ for $G$ using \Cref{lm:Thorup-oracle}, and then for every 
edge $(u',v')\in H$ where $u = f^{-1}(u')$ and $v = f^{-1}(v')$, we set $w_H(u',v')=\mathcal{O}_{G,\eps}(u,v)$.
Thus the total time for assigning the weights is $O(\frac{n\log^{3}(n)}{\epsilon^{2}})+|E(H)|\cdot O(\frac{1}{\epsilon})=O(\frac{n\log^{3}(n)}{\epsilon^{2}})$, which is also the overall running time of our algorithm.

Denote the graph constructed using the approximated distances by $\hat{H}$. It remains to argue that $\hat{H}$ has additive distortion $+O(\eps)\cdot D$. 
Note that for every $(u',v')\in H$, $w_H(u',v')\le w_{\hat{H}}(u',v')\le (1+\eps)\cdot w_H(u',v')$. This implies that for every $u',v'\in V(H)$, $d_{H}(u',v')\le d_{\hat{H}}(u',v')\le(1+\eps)\cdot d_{H}(u',v')$. As $H$ has additive distortion $+O(\eps)\cdot D$, it follows that, for every  $u',v'\in V(H)$ where $u = f^{-1}(u')$ and $v = f^{-1}(v')$,
\begin{align*}
	d_{G}(u,v)\le d_{H}(u',v') &{~\le~~ d_{\hat{H}}(u',v')~~}\le(1+\eps)\cdot d_{H}(u',v')\\
	& \phantom{~\le~~ d_{\hat{H}}(u',v')~~}\le(1+\eps)\cdot(d_{G}(u,v)+O(\eps)\cdot D)=d_{G}(u,v)+O(\eps)\cdot D~,
\end{align*}
where the last equality holds as $\eps\cdot d_{G}(u,v)\le \eps\cdot D$. \Cref{thm:PlanarToTreewidth}  now follows.

\subsection{Rooted Stochastic Embedding}\label{subsec:rootedStoEmb}

The main tool we use to design a PTAS for the bounded-capacity VRP is a \emph{rooted stochastic embedding}. 

\begin{definition}[$(r,\rho)$-rooted stochastic embedding]\label{def:rootedStochastic}
	Given a vertex $r \in V(G)$ and a parameter $\rho > 0$, we say that a dominating stochastic embedding $f: V(G)\rightarrow V(H)$ of a graph $G$ into a distribution over graphs $H$ is an \emph{$(r,\rho)$-rooted stochastic embedding} if
	\begin{equation*}\label{eq:root-distortion}
		\forall u,v\in V,\qquad\mathbb{E}[d_H(f(u),f(v))]   \leq d_G(u,v)  + \rho\cdot (d_G(r,u) + d_G(r,v))~.
	\end{equation*}
\end{definition}
We call $\rho$ the distortion parameter. Using \Cref{thm:PlanarToTreewidth}, we obtain the following lemma.

\begin{lemma}\label{lm:Root-Embedding-Planar}Let $G(V,E,w)$ be an $n$-vertex planar graph and $r$ be a distinguished vertex in $V$. For any given parameter $\eps \in (0,\frac14)$, we can construct in $O_{\eps}(n\log^3(n))$  time an $(r,\eps)$-rooted stochastic embedding of $G$ into graphs $H$ of treewidth $O_{\eps}(\log \log n)^2$.
\end{lemma}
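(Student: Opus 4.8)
The plan is to reduce the construction of an $(r,\eps)$-rooted stochastic embedding to the (deterministic) additive embedding of \Cref{thm:PlanarToTreewidth} applied to carefully chosen subgraphs, following the ``random banding'' strategy of Becker \etal~\cite{BKS19} but with the improvements sketched in \Cref{subsec:technique}. First I would compute a shortest-path tree $T_r$ rooted at $r$ and partition the vertices of $G$ into \emph{distance bands} according to $d_G(r,v)$: fix a random offset and, for a scale parameter related to $\eps$, let band $B_j$ consist of all vertices whose root-distance lies in an interval $[\,(j-1+\text{offset})\Lambda,\ (j+\text{offset})\Lambda\,)$ for an appropriate $\Lambda$ (geometrically spaced, so that band $B_j$ has diameter proportional to its minimum root-distance). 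For each band $B_j$, form the graph $G_j := G[B_j]$ together with one extra vertex acting as a local ``depot'' (a vertex on the low-distance boundary of the band, or $r$ contracted appropriately), as described in the technique overview; this is the step that keeps the total size of all the $G_j$ bounded by $O(n)$ rather than $\Omega(n^2)$. Apply \Cref{thm:PlanarToTreewidth} to each $G_j$ with accuracy parameter $\Theta(\eps)$ and diameter $D_j = O(\eps \cdot \min_{v\in B_j} d_G(r,v))$ chosen so that the additive error $\eps D_j$ within band $j$ is at most $O(\eps)\cdot d_G(r,v)$ for the vertices $v$ of that band. Finally glue the per-band host graphs along their shared boundary vertices (and through the common root) into a single host graph $H$; since each $G_j$ embeds into treewidth $O((\log\log n)^2/\eps)$ and the gluing adds only $O(1)$ boundary/root vertices per adhesion, $H$ has treewidth $O_\eps(\log\log n)^2$.

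The distortion analysis splits into two cases for a pair $u,v$. If $u$ and $v$ land in the same band $B_j$, then the embedding of $G_j$ gives $d_H(f(u),f(v)) \le d_G(u,v) + \eps D_j \le d_G(u,v) + O(\eps)(d_G(r,u)+d_G(r,v))$ deterministically. The interesting case is when $u,v$ lie in different bands: here I route the path $u \leadsto v$ through the root $r$ (or along band boundaries), decompose a shortest $u$--$v$ path into its intersections with the band boundaries, and bound the accumulated additive error. The key observation (highlighted in the overview) is that a vertex which lands far from the boundary of its band incurs small additive distortion with respect to \emph{all} other vertices; because of the random offset, each vertex $v$ fails this event with probability only $O(\eps)$, and when it fails we can charge a crude additive error of $O(d_G(r,v))$ (incurred by rerouting through $r$). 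Taking expectation over the random offset then yields
\[
\mathbb{E}[d_H(f(u),f(v))] \le d_G(u,v) + O(\eps)\bigl(d_G(r,u)+d_G(r,v)\bigr),
\]
and rescaling $\eps$ by a constant gives exactly \Cref{def:rootedStochastic}. Domination holds by construction since every edge of $H$ has weight equal to (an upper bound on) the corresponding $G$-distance and no distances shrink.

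For the running time: computing $T_r$ and the bands takes $O(n)$ time; the extra-vertex trick ensures $\sum_j |V(G_j)| = O(n)$, so the total cost of invoking \Cref{thm:PlanarToTreewidth} on all bands is $\sum_j O\!\left(|V(G_j)|\cdot \tfrac{\log^3 |V(G_j)|}{\eps^2}\right) = O\!\left(n \cdot \tfrac{\log^3 n}{\eps^2}\right) = O_\eps(n\log^3 n)$; gluing the host graphs and assembling the tree decomposition is linear in the output size. The main obstacle I anticipate is the different-bands case of the distortion bound: one must argue carefully that routing a cross-band shortest path through the sequence of band boundaries (each contributing a controlled additive error that telescopes against the $d_G(r,\cdot)$ terms) is compatible with the per-band diameters $D_j$ chosen, and that the ``bad'' vertices — those close to a band boundary — are handled by the $O(\eps)$-probability rerouting through $r$ without blowing up the expectation. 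This is precisely where the geometric (rather than uniform) band widths and the Ramsey-type ``success with probability $1-\eps$'' phenomenon are needed, and getting the constants to line up is the delicate part.
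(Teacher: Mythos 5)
Your overall skeleton is the paper's: randomly slice $G$ into root-centered distance bands, embed each band-plus-one-extra-vertex (the paper contracts everything closer than the band into $r$, which is exactly your ``one extra depot vertex'' and gives $\sum_j|V(G_j)|=O(n)$), glue the hosts by identifying $r$ and adding star edges $f(r)f(v)$ of weight $d_G(r,v)$, and prove a Ramsey-type guarantee: with probability $1-O(\eps)$ a vertex has small distortion to \emph{all} others, with the through-$r$ detour as the fallback. The running-time accounting also matches.

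The genuine gap is in the parameter regime, and it is exactly the part you flag as ``delicate'' without resolving. You take bands whose width is $O(\eps)$ times (or a constant times) the minimum root-distance and invoke \Cref{thm:PlanarToTreewidth} with accuracy $\Theta(\eps)$; with these choices the claim ``far from the band boundary $\Rightarrow$ small distortion w.r.t.\ all $v$'' is false. When the shortest $u$--$v$ path escapes $u$'s band $[L_i,U_i)$, the only inter-band connectivity in the host is via $r$ (the bands are vertex-disjoint, so there are no ``shared boundary vertices'' to glue or telescope along), and the detour through $r$ costs roughly $2d_G(r,u)$ extra. For this to be $O(\eps)(d_G(r,u)+d_G(r,v))$ one needs, in the upward-escape case, $d_G(r,u)\le \eps U_i$, and in the downward-escape case $L_i\le \eps\, d_G(r,u)$; a ``successful'' vertex must satisfy both, which already forces $U_i/L_i\ge \eps^{-2}$, and for the failure probability under the random offset to be $O(\eps)$ one needs bands spanning a multiplicative range $(1/\eps)^{\Theta(1/\eps)}$. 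This is what the paper does: $U_i=(1/\eps)^{(i+x)/\eps}$, $L_i=(1/\eps)^{(i-1+x)/\eps}$, and—crucially—the per-band embedding is called with accuracy $\delta=\eps^{1/\eps}$ (not $\Theta(\eps)$), so that the in-band additive error $\delta\cdot\mathrm{diam}(G_i)\approx L_i\le \eps\, d_G(r,u)$ for successful $u$ even though the band diameter vastly exceeds $d_G(r,u)$. This is also why the treewidth is stated as $O_\eps((\log\log n)^2)$ with an $\eps$-dependence that is exponential in $1/\eps$, rather than your $O((\log\log n)^2/\eps)$. With wide bands the analysis becomes a clean three-case argument (path inside the band; path reaching a higher band; path dipping to a lower band) plus the $2\eps$-probability fallback—no telescoping along band boundaries is needed, and as described your narrow-band version cannot be repaired without adding inter-band structure that you do not construct.
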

\begin{proof}	
	We assume w.l.o.g. that the minimum distance in $G$ is $1$. We construct an embedding $f$ into a graph $H$ such that for every $u,v\in G$, $\mathbb{E}[d_H(f(u),f(v))]   \leq d_G(u,v)  + O(\eps)\cdot (d_G(r,u) + d_G(r,v))$; we can scale $\eps$ to get back distortion  $\eps\cdot (d_G(r,u) + d_G(r,v))$.  
	We begin by randomly ``slicing'' $G$ into a collection of \emph{bands} $\mathcal{B}= \{B_0,B_1,\ldots\}$ as follows. 
	Choose a random $x \in [0,1]$. 
	For every $i\ge 0$, let $U_{i}=(\frac{1}{\eps})^{\frac{i+x}{\eps}}$, and $L_{i}=(\frac{1}{\eps})^{\frac{i-1+x}{\eps}}$.
	We then define $B_i = \{u:  L_i \le d_G(r,u) < U_i\}$ (note that $L_0<1$, and hence every vertex $v\ne r$ belong some $B_i$). 
	Let $G_{0} = G[B_0]$, and for $i\ge 1$, let $G_{i}$ be the graph obtained from $G$ by removing every vertex of distance at least $U_i$ from $r$ and contracting every vertex of distance strictly less than $L_i$ from $r$ into $r$.
	For every vertex $v$ which is the neighbor of $r$ in $G_i$, the weight of the edge is defined to be $d_G(r,v)$, while all the other weights are the same as in $G$.
	Note that for every $v\in B_i$, $d_{G_i}(v,r)=d_G(v,r)$. In particular, $G_i$ has diameter at most $2\cdot U_i$.
	Observe that graphs $\{G_{i}\}_i$ can be constructed in total $O(n)$ time as we can construct them iteratively; each edge is contracted at most one time. All distances from $r$ can be computed in $O(n)$ time by applying a linear time single source shortest path algorithm~\cite{HKRS97}. 
	
	For every $i\ge 0$, we use \Cref{thm:PlanarToTreewidth} with stretch parameter $\delta=(\frac{1}{\eps})^{-\frac{1}{\eps}}=\eps^{\frac1\eps}$ 
	to embed $G_i$ into a graph $H_i$ with treewidth $O(\delta^{-1}(\log \log n)^2)=O_\eps(\log \log n)^2$ and additive distortion $\delta\cdot U_i$. Finally, we construct a single graph $H$ by identifying the (image of the) vertex $r$ in all the graphs $H_i$; the embedding $f$ is defined accordingly.
	In addition, for every vertex $v$, we  add an edge from $f(r)$ to $f(v)$ of weight $d_G(r,v)$. Note that this can increase the overall treewidth by at most $1$. It follows that $H$ has treewidth $O_\eps(\log \log n)^2$.
	The total running time is bounded by $\sum_{i\ge0}\ensuremath{O(|B_{i}|\cdot\frac{\log^{3}n}{\delta^{2}})=O_{\eps}(n\cdot\log^{3}n)}$.
	
	Finally, we bound the expected distortion.
	We will actually show a stronger distortion guarantee: a ``Ramsey-type'' property (see e.g. \cite{BLMN03,MN07,ACEFN20,FL21,Fil21}). Specifically, we
	will show that with probability $1-2\eps$, a vertex $v$ has small distortion
	w.r.t. all other vertices. We say that the embedding is \emph{successful}
	for a vertex $v$ if there is some index $i$ such that $d(r,v)\in[\epsilon^{-1}\cdot L_{i},\epsilon\cdot U_{i}]$.
	To analyze the probability that $v$ is successful, let $a\in[0,1)$ be
	some number such that $d(r,v)=(\frac{1}{\eps})^{\frac{i_{v}+a}{\eps}}$
	for some integer $i_{v}\in\mathbb{N}$. Then $v$ is successful if $(\frac{1}{\eps})^{\frac{i_{v}+a}{\eps}}\in[(\frac{1}{\eps})^{\frac{i-1+x}{\eps}+1},(\frac{1}{\eps})^{\frac{i+x}{\eps}-1}]$
	or equivalently, $i_{v}+a\in[i-1+x+\eps,i+x-\eps]$ for some index $i$. Hence	 
	\[
	\Pr\left[v\text{ is unsuccessful}\right]=\Pr\left[\not\exists i\text{ s.t. }i+x\in[i_{v}+a+\epsilon,i_{v}+a+1-\epsilon]\right]=2\epsilon~.
	\]
	Next, suppose that a vertex $u$ is successful, and let $v$ be any
	other vertex. Let $i$ be the integer such that $u\in B_{i}$, and let $P_{u,v}$ be a shortest $u$-$v$ path. We consider three cases:
	\begin{itemize}
		\item If $P$ is fully contained in $B_{i}$, then $d_{G_{i}}(u,v)=d_{G}(u,v)$,
		implying that
		\begin{align}
			d_{H}(f(u),f(v)) & \le d_{H_{i}}(f(u),f(v))\le d_{G_{i}}(u,v)+\delta\cdot2U_{i}\nonumber\\
			& \le d_{G}(u,v)+L_{i}\le d_{G}(u,v)+2\epsilon\cdot d_{G}(r,u)~.\label{eq:rootedPContained}
		\end{align}
		\item Else, if $P$ contains a vertex $z\in B_{i'}$ for $i'>i$, then \begin{align*}
			d_{G}(r,u) & \le\epsilon\cdot U_{i}\le\epsilon\cdot d_{G}(r,z)\le\epsilon\cdot\left(d_{G}(r,u)+d_{G}(u,z)\right)\\
			& \le\epsilon\cdot\left(d_{G}(r,u)+d_{G}(u,v)\right)\le\epsilon\cdot\left(2\cdot d_{G}(r,u)+d_{G}(r,v)\right)~,
		\end{align*}
		which implies that $d_{G}(r,u)\le\frac{\epsilon}{1-2\epsilon}\cdot d_{G}(r,v)$. Thus, 
		\[
		d_{H}(v,u)\le d_{G}(v,r)+d_{G}(r,u)\le d_{G}(v,u)+2\cdot d_{G}(r,u)\le d_{G}(v,u)+\frac{2\epsilon}{1-2\epsilon}\cdot d_{G}(r,v)~.
		\]
		\item Else, $P$ is contained in $\cup_{i'\le i}B_{i'}$, and it contains
		at least one vertex $z\in B_{i'}$ for $i'<i$. Then $d_{G}(r,z)\le L_{i}\le\epsilon\cdot d_{G}(r,u)$.
		In particular
		\[
		d_{H}(v,u)\le d_{G}(v,r)+d_{G}(u,r)\le d_{G}(v,z)+d_{G}(z,u)+2\cdot d_{G}(r,z)\le d_{G}(v,u)+2\cdot\epsilon\cdot d_{G}(r,u)~.
		\]
	\end{itemize}
	We conclude that if $u$ is successful, then $d_{H}(v,u)\le d_{G}(v,u)+O(\epsilon)\cdot\left(d_{G}(r,v)+d_{G}(r,u)\right)$.
	Otherwise, $d_{H}(v,u)\le d_{G}(v,r)+d_{G}(r,u)$.
	As the probability of being unsuccessful is bounded by $2\epsilon$,
	we conclude that:
	\begin{align*}
		\mathbb{E}[d_{H}(f(u),f(v))] & \leq\Pr\left[v\text{ is successful}\right]\cdot\left(d_{G}(v,u)+O(\epsilon)\cdot\left(d_{G}(r,v)+d_{G}(r,u)\right)\right) \\
		                             & \qquad+\Pr\left[v\text{ is unsuccessful}\right]\cdot\left(d_{G}(v,r)+d_{G}(r,u)\right)                                    \\
		                             & \le d_{G}(v,u)+O(\epsilon)\cdot\left(d_{G}(r,v)+d_{G}(r,u)\right)~.
	\end{align*}

\end{proof}

\subsection{Applications (Proofs of \Cref{thm:MetricBakerPlanar,thm:VRP-planar})}\label{subsesc:apps}

\subsubsection{Metric Baker Problems}

In this section, we prove \Cref{thm:MetricBakerPlanar} that we restate below.

\MetricBakerPlanar*

Katsikarelis, Lampis, and Paschos~\cite{KLP19,KLP20} showed how to solve metric Baker problems in bounded treewidth graphs. Here we consider a slightly more general version where we are given a set of terminal $K\subseteq V(G)$ and we want to find a $\rho$-independent/dominating set for $K$. That is, for the $\rho$-independent set problem, the solution $I$ must be a subset of $K$, and for the $\rho$-dominating set problem, we only require that the vertices in $K$ be dominated (that is every vertex in $K$ will be a at distance at most $\rho$ from some vertex in the dominating set).

\begin{lemma}[Theorem 21~\cite{KLP20} and Theorem 31~\cite{KLP19}]\label{lm:KLS-IS-DS} Given an $n$-vertex graph $G(V,E,w)$ of treewidth $\tw$,  a terminal set $K\subseteq V(G)$, two parameters $\epsilon \in (0,1)$ and $\rho > 0$, and measure $\mu: V\rightarrow \mathbb{R}^+$, one can find in $n\cdot (\frac{\log n}{\eps})^{O(\tw)}$ time:
	\begin{OneLiners}
		\item A $(1-\epsilon)\rho$-independent  set $I\subseteq K$ such that for every $\rho$-independent  set $\tilde{I}\subseteq K$, $\mu(I) \geq \mu(\tilde{I})$.
		\item A$(1+\epsilon)\rho$-dominating set $S$ for $K$ such that for every $\rho$-dominating set $\tilde{S}$ for $K$, $\mu(S) \leq \mu(\tilde{S})$.
	\end{OneLiners}
\end{lemma}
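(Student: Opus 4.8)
The plan is to obtain both statements essentially from the dynamic-programming algorithms of Katsikarelis, Lampis and Paschos: for $K=V(G)$ the first bullet is \cite[Theorem~31]{KLP19} and the second is \cite[Theorem~21]{KLP20}, so the only genuine task is to (i) recall the shape of their algorithm and why it meets the stated bounds, and (ii) check that allowing an arbitrary terminal set $K$ changes neither the running time nor the approximation guarantee.

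First I would describe the underlying dynamic program over a fixed nice tree decomposition of $G$ of width $\tw$, processed bottom-up. A state at a bag $B$ records, for each $v\in B$, whether $v$ is placed in the solution together with a \emph{rounded} estimate of the distance from $v$ to the nearest chosen vertex, where all distances are capped at (slightly above) $\rho$ and, after normalizing the metric, only finitely many geometric scales are retained. The two quantitative facts driving the lemma are: (a) the number of distinct rounded distance values is $O(\log n/\eps)$, so each bag carries $(\log n/\eps)^{O(\tw)}$ states and the whole DP runs in $n\cdot(\log n/\eps)^{O(\tw)}$ time; and (b) the rounding is \emph{one-sided}, always toward feasibility, so the accumulated error on any certifying path is at most $\eps\rho$, which is exactly why independence is guaranteed only at threshold $(1-\eps)\rho$ and domination only at $(1+\eps)\rho$, while the measure of the returned set is still compared against the exact optimum (i.e.\ $\mu(I)\ge\mu(\tilde I)$, resp.\ $\mu(S)\le\mu(\tilde S)$).

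Second, I would argue that the terminal set costs nothing. For $\rho$-independent set we want $I\subseteq K$: in the DP we simply discard every state in which a vertex of $V\setminus K$ is selected, and we accumulate measure over the selected vertices only. For $\rho$-dominating set for $K$ we want $S\subseteq V$ that covers $K$: we relax the feasibility check so that, at the moment a vertex $v$ is forgotten (leaves every bag), it is required to lie within the relaxed radius of some selected vertex only when $v\in K$. Both modifications affect only which local configurations count as feasible and how $\mu$ is summed; the state space, the transitions, and hence the $n\cdot(\log n/\eps)^{O(\tw)}$ bound are untouched. Alternatively one can package these as black-box reductions to the $K=V$ case — zeroing $\mu$ outside $K$ for the maximization problem and a pendant-vertex gadget for the covering problem — but the direct modification is cleaner and avoids corner cases.

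The step I expect to require the most care is reconciling the error analysis of \cite{KLP19,KLP20} with the terminal restriction: confirming that confining selected vertices to $K$ (resp.\ demanding coverage only at $K$) does not invalidate their accounting of the cumulative rounding error, and that the $O(\log n/\eps)$ bound on the number of distance classes, and hence the exponent $O(\tw)$ in the running time, is unaffected. Everything else is a routine adaptation of standard tree-decomposition dynamic programming.
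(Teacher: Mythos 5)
Your proposal is correct and follows essentially the same route as the paper, which simply invokes the Katsikarelis--Lampis--Paschos dynamic programs (noting, via the precise per-bag state count, the $n\cdot(\log n/\eps)^{O(\tw)}$ running time) and remarks that handling a non-uniform measure and the terminal set $K$ are routine modifications of the DP, exactly the adaptations you spell out. If anything, your write-up is more detailed than the paper's treatment, which leaves the terminal-set and measure extensions as brief remarks.
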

We note that the running time of \Cref{lm:KLS-IS-DS} in the work of  Katsikarelis \etal \cite{KLP19, KLP20} is stated as $O^*((\frac{\tw}{\eps})^{O(\tw)})$ where $O^*$ notation hides the polynomial dependency on $n$; the main focus of their paper is to minimize the exponential dependency on the treewidth. The precise running time of their algorithms is  $O((\frac{\log n}{\eps})^{O(\tw)} n)$ (M.~Lampis, personal communication, 2021); see page 18 in~\cite{KLP20} and page 114 in~\cite{KLP19}. While the algorithm is for $\rho$-independent/dominating set with uniform measure, the same algorithm works for $\rho$-independent/dominating set with non-uniform measure by simply storing the measure of the  $\rho$-independent set corresponding to each configuration of the dynamic programming table. 

We now focus on proving \Cref{thm:MetricBakerPlanar}. By scaling every edge of $G(V,E,w)$, we assume that $\rho = 1$. For a cleaner presentation, we assume that $\frac{1}{\eps}$ is an integer.  Let $T_r$ be a shortest path tree rooted at a vertex $r$; $T_r$ can be found in $O(n)$ time~\cite{HKRS97}.

Similar to Fox-Epstein \etal \cite{FKS19},  we follow the classical layering technique of Baker~\cite{Baker94} to reduce to the case where the planar graph has diameter $O(1/\eps)$. We then embed the planar graph to a graph with treewidth $O(\frac{(\log\log n)^2}{\epsilon^2})$ and additive distortion $\eps$. The additive distortion $\eps$ translates into $(1\pm \epsilon)$ distance separation in the bicriteria PTAS. Next, we apply \Cref{lm:KLS-IS-DS} to find a bicriteria PTAS for metric Baker problems in treewidth-$O(\frac{(\log\log n)^2}{\epsilon^2})$   graphs. Finally, we lift the solution to the solution of the input planar graphs.

In the analysis below, we use Young's inequality~\cite{Young12}: 

\begin{theorem}[Young's Inequality]\label{thm:young} Given four real numbers $a\geq 0, b\geq 0, p> 1,q >1$ such that $\frac{1}{p} + \frac{1}{q} = 1$, the following inequality holds:
	\begin{equation*}
		ab\leq \frac{a^{p}}{p} + \frac{b^{q}}{q}.
	\end{equation*}
The equality holds if and only if $a^{p} = b^q$.
\end{theorem}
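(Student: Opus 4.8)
The plan is to derive Young's inequality from the strict concavity of the logarithm (equivalently, the weighted arithmetic–geometric mean inequality), and to read off the equality case from the equality case of concavity. First I would dispose of the degenerate cases $a=0$ or $b=0$: then $ab=0\le \frac{a^p}{p}+\frac{b^q}{q}$, and since $p,q>1$ this right-hand side is $0$ only when $a=b=0$, which is also exactly the situation in which $a^p=b^q$ holds among these cases. So for the rest of the argument we may assume $a,b>0$.

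Next, set $x=a^p>0$ and $y=b^q>0$. Since $\tfrac1p,\tfrac1q>0$ and $\tfrac1p+\tfrac1q=1$, the quantity $\tfrac{x}{p}+\tfrac{y}{q}$ is a convex combination of $x$ and $y$. Because $t\mapsto\ln t$ is strictly concave on $(0,\infty)$, Jensen's inequality gives
\[
\ln\!\Big(\frac{x}{p}+\frac{y}{q}\Big)\ \ge\ \frac1p\ln x+\frac1q\ln y\ =\ \frac1p\ln a^p+\frac1q\ln b^q\ =\ \ln a+\ln b\ =\ \ln(ab).
\]
Applying the (increasing) exponential to both ends yields $ab\le \frac{a^p}{p}+\frac{b^q}{q}$, which is the desired inequality. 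For the equality characterization, strict concavity means the Jensen step is an equality if and only if $x=y$, i.e.\ $a^p=b^q$; and when $a^p=b^q=:c$ one checks directly that $\frac{a^p}{p}+\frac{b^q}{q}=c\big(\tfrac1p+\tfrac1q\big)=c$ while $ab=c^{1/p}c^{1/q}=c$, so equality indeed holds. Combining with the degenerate cases, $a^p=b^q$ is both necessary and sufficient.

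There is essentially no obstacle here beyond being careful about which form of the concavity/Jensen statement to cite and keeping the equality bookkeeping tidy. If one prefers to avoid invoking Jensen's inequality, an equally short alternative is pure single-variable calculus: fix $b>0$, set $g(a)=\frac{a^p}{p}+\frac{b^q}{q}-ab$ for $a\ge 0$, and note $g'(a)=a^{p-1}-b$ and $g''(a)=(p-1)a^{p-2}>0$, so $g$ is strictly convex with unique minimizer $a_\ast=b^{1/(p-1)}=b^{q-1}$ (using $\tfrac{1}{p-1}=q-1$, a consequence of $\tfrac1p+\tfrac1q=1$). A direct computation gives $a_\ast^{\,p}=b^q$ and hence $g(a_\ast)=b^q\big(\tfrac1p+\tfrac1q\big)-b^q=0$, so $g\ge 0$ everywhere with equality exactly at $a_\ast$, i.e.\ exactly when $a^p=b^q$. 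Either route yields the theorem; I would present the logarithm/AM–GM version as the main proof and mention the calculus version as a remark.
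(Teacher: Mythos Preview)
Your proof is correct; both the concavity-of-logarithm argument and the single-variable calculus alternative are standard and complete, and your handling of the equality case (including the degenerate $a=0$ or $b=0$ situations) is careful. Note, however, that the paper does not supply its own proof of this theorem: it is stated as a classical result with a citation and used as a black box in the running-time analysis, so there is no in-paper argument to compare against.
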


\paragraph{$\rho$-Independent Set Problem.~}  Let $\sigma$ be a number in $\{0,\ldots, 2/\eps-1\}$. Let  $\mathcal{I}_{-1,\sigma} = [0,\sigma)$, 
$\mathcal{I}^{-}_{-1,\sigma} = [0,\sigma-1)$, and for each $j\geq 0$, we define:
\[
\mathcal{I}_{j,\sigma}=[\frac{2}{\eps}\cdot j+\sigma~,~\frac{2}{\eps}\cdot(j+1)+\sigma]
~~\qquad\mathcal{I}_{j,\sigma}^{-}=[\frac{2}{\eps}\cdot j+\sigma+1~,~\frac{2}{\eps}\cdot(j+1)+\sigma-1]
\]  
Let $U_{j,\sigma} = \{v\in V: d_G(r,v) \in \mathcal{I}_{j,\sigma}\}$ and  $U^-_{j,\sigma} = \{v\in V: d_G(r,v) \in \mathcal{I}^{-}_{j,\sigma}\}$ for every $j\geq -1$. Let $V_{\sigma} = \cup_{j=-1}^{\infty}U_{j,\sigma}$ and $V^{-}_{\sigma} = \cup_{j=-1}^{\infty}U^-_{j,\sigma}$. We observe that $V_{\sigma} = V$ for every $\sigma$ and that:

\begin{observation}\label{obs:Baker-layering-IS} Each vertex $v \in V$ is contained in at least $\frac{2}{\eps}-2$ sets in the collection $\{V^-_{\sigma}\}^{\frac{2}{\eps}-1}_{\sigma = 0}$.
\end{observation}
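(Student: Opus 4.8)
\emph{Proof plan.} Fix a vertex $v$ and write $d:=d_G(r,v)\ge 0$. Since the collection $\{V^-_\sigma\}_{\sigma=0}^{\frac2\eps-1}$ has $\tfrac2\eps$ members, it suffices to show that $v\notin V^-_\sigma$ for at most two indices $\sigma\in\{0,\dots,\tfrac2\eps-1\}$. I would first unwind the definitions: $v\in V^-_\sigma$ iff $d$ lies in one of the intervals $\mathcal I^-_{j,\sigma}$ with $j\ge -1$, and, because $V_\sigma=V$, there is always some $j$ with $d\in\mathcal I_{j,\sigma}$. For $j\ge 0$ the interval $\mathcal I^-_{j,\sigma}$ is obtained from $\mathcal I_{j,\sigma}$ by deleting a ``buffer'' of length $1$ at each endpoint, so $v\notin V^-_\sigma$ exactly when $d$ lies in one of these two buffers; for the $j=-1$ piece the analogous condition is $d\in[\sigma-1,\sigma)$.

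The key idea is to read this periodically modulo $\tfrac2\eps$. Since $\tfrac1\eps\in\mathbb N$, the numbers $0,1,\dots,\tfrac2\eps-1$ are $\tfrac2\eps$ distinct points on the circle $\mathcal C:=\mathbb R/\tfrac2\eps\mathbb Z$, and a short inspection of the formula for $\mathcal I^-_{j,\sigma}$ shows that $\bigcup_{j\ge 0}\mathcal I^-_{j,\sigma}$ is exactly the set of reals $x\ge 0$ whose distance to $\sigma$ on $\mathcal C$ is at least $1$ — with the \emph{sole} exception of the point $x=\sigma-1$, which the half-open interval $\mathcal I^-_{-1,\sigma}=[0,\sigma-1)$ omits. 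Consequently
\[
\{\sigma:\ v\notin V^-_\sigma\}\ \subseteq\ \{\sigma\in\{0,\dots,\tfrac2\eps-1\}:\ \mathrm{dist}_{\mathcal C}(d,\sigma)<1\}\ \cup\ \{d+1\},
\]
where the second set is present only when $d+1$ is an integer in the range, i.e.\ only when $d\in\mathbb Z$.

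To finish I would count: for $\eps<1$ (the only nontrivial range, given $\tfrac1\eps\in\mathbb N$) the circle $\mathcal C$ has circumference $\tfrac2\eps\ge 4$, so the open arc of length $2$ centred at $d$ contains at most two of the $\tfrac2\eps$ integer points — exactly one ($\sigma\equiv d$) when $d\in\mathbb Z$, and two ($\sigma\equiv\lfloor d\rfloor$ and $\sigma\equiv\lceil d\rceil$) when $d\notin\mathbb Z$. Hence if $d\notin\mathbb Z$ the first set already has size $\le 2$ and the second is empty; if $d\in\mathbb Z$ the first has size $1$ and the second adds at most one further (distinct) index. Either way at most two indices $\sigma$ are bad for $v$, so $v$ lies in at least $\tfrac2\eps-2$ of the sets $V^-_0,\dots,V^-_{\frac2\eps-1}$, as claimed.

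The one delicate point — and the step I expect to need the most care — is this endpoint bookkeeping: the asymmetry between the closed intervals $\mathcal I_{j,\sigma}$ ($j\ge 0$) and the half-open $\mathcal I_{-1,\sigma}=[0,\sigma)$ is precisely what can create the extra bad index $\sigma=d+1$ when $d$ is an integer, and keeping track of it is what pins the bound at the stated $\tfrac2\eps-2$ (rather than $\tfrac2\eps-3$). The remaining ingredients — the periodicity observation and the arc-counting — are routine.
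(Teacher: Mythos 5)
Your proof is correct. The paper states this as an Observation with no accompanying proof (it is treated as immediate from the definitions), so there is no authorial argument to diverge from; your careful bookkeeping — rewriting $V^-_\sigma$ via residues modulo $\tfrac2\eps$ and noting that the only defect beyond the open length-$2$ arc around $d_G(r,v)$ is the single point $\sigma-1$ excluded by the half-open interval $\mathcal{I}^-_{-1,\sigma}=[0,\sigma-1)$ — is a valid verification, and it correctly identifies why the bound is $\tfrac2\eps-2$ exactly (the case $d_G(r,v)\in\mathbb{Z}$ can produce two bad indices, $\sigma\equiv d$ and $\sigma=d+1$). One cosmetic slip: the union you describe as $\bigcup_{j\ge 0}\mathcal{I}^-_{j,\sigma}$ should be $\bigcup_{j\ge -1}\mathcal{I}^-_{j,\sigma}$ (your own sentence already invokes the $j=-1$ piece, and the displayed inclusion that follows is stated for the full union, so the argument is unaffected).
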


Fix an index $j \geq -1$. Let $G_{j,\sigma}$ be defined as follows. For $j = -1$ set $G_{-1,\sigma} = G[U_{-1,\sigma}]$. In general, $G_{j,\sigma}$ is the graph obtained from $G$ by removing every vertex of distance at least  $\frac{2(j+1)}{\eps} + \sigma$ from $r$ and contracting every vertex of distance strictly less than  $\frac{2j}{\epsilon} + \sigma$ to $r$; we set weight 1 to every edge incident to $r$. Observe that $G_{j,\sigma}$ is a planar graph with	a diameter at most $\frac{4}{\eps}+2$. Furthermore,  $V(G_{j,\sigma}) = U_{j,\sigma} \cup \{r\}$ when $j \geq 0$ and $V(G_{-1,\sigma}) = U_{-1,\sigma}$. Note that $G_{j,\sigma}$ can be constructed for all $j$ in total $O(n)$ time as we can construct them iteratively; each edge is contracted only once.

Next, we find a $(1-\eps)$-independent set $I_{j,\sigma}$ in $G_{j,\sigma}$.
Fix $\delta = \frac{\eps^2}{12}$, using \Cref{thm:PlanarToTreewidth}
we construct an embedding $f_{j,\sigma}$ of $G_{j,\sigma}$ into a graph $H_{j,\sigma}$ with treewidth $O(\frac{(\log\log |G_{j,\sigma}|)^2}{\delta}) = O(\frac{(\log\log n)^2}{\eps^2})$ and additive distortion $ \delta \cdot \dm(G_{j,\sigma}) \leq \frac{\epsilon^2}{12} (
\frac{4}{\eps} + 2) \leq \frac{\eps}{2}$ when $\eps \leq \frac{1}{2}$. The construction time is ${O(|G_{j,\sigma}|\cdot\frac{\log^{3}|G_{j,\sigma}|}{\delta^{2}})=O(|G_{j,\sigma}|\cdot\frac{\log^{3}n}{\epsilon^{4}})}$. For each vertex $u_{H_{j,\sigma}} \in H_{j,\sigma}$, we assign measure $\mu_{H_{j,\sigma}}(u_{H_{j,\sigma}}) = \mu(u)$ if $u_{H_{j,\sigma}} = f_{j,\sigma}(u)$ for some $u \in U^-_{j,\sigma}$, and $\mu_{H_{j,\sigma}}(u_{H_{j,\sigma}}) = 0$  otherwise. 
Let $K_{j,\sigma}=f(U^-_{j,\sigma})$ be a terminal set. 
Using \Cref{lm:KLS-IS-DS}, we find a $(1-\eps/2)$-independent set $I^{H}_{j,\sigma}$ for the terminal set $K_{j,\sigma}$ in $O\left((\frac{\log n}{\eps})^{O(\epsilon^{-2}(\log\log n)^{2})}|G_{j,\sigma}|\right)$ time.
Let $I_{j,\sigma} = f^{-1}(I^{H}_{j,\sigma})$ and $I_{\sigma}=\cup_{j\ge-1}I_{j,\sigma}$.

\begin{claim}\label{clm:IS-dist}$I_{\sigma}$ is a $(1-\eps)$-independent set of $G$. 
\end{claim}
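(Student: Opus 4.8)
The plan is to verify the two defining properties of a $(1-\eps)$-independent set for the constructed set $I_\sigma$: (i) $I_\sigma \subseteq V(G)$ (trivially true, since each $I_{j,\sigma} = f^{-1}(I^H_{j,\sigma})$ and $I^H_{j,\sigma}$ is supported on the terminal set $K_{j,\sigma} = f(U^-_{j,\sigma})$, hence $I_{j,\sigma}\subseteq U^-_{j,\sigma}\subseteq V$), and (ii) any two distinct vertices $u,v \in I_\sigma$ satisfy $d_G(u,v) \ge (1-\eps)\rho = 1-\eps$. The main work is in (ii). First I would split into two cases depending on whether $u$ and $v$ come from the same ``band'' index $j$ or from different indices.

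\emph{Same band.} Suppose $u,v \in I_{j,\sigma}$ for the same $j$, so $f_{j,\sigma}(u), f_{j,\sigma}(v) \in I^H_{j,\sigma}$. Since $I^H_{j,\sigma}$ is a $(1-\eps/2)$-independent set in $H_{j,\sigma}$, we have $d_{H_{j,\sigma}}(f_{j,\sigma}(u), f_{j,\sigma}(v)) \ge 1 - \eps/2$. Because $f_{j,\sigma}$ is a dominating embedding with additive distortion $+\delta\cdot\dm(G_{j,\sigma}) \le \eps/2$ (as computed in the text), we get $d_{G_{j,\sigma}}(u,v) \ge d_{H_{j,\sigma}}(f_{j,\sigma}(u),f_{j,\sigma}(v)) - \eps/2 \ge 1-\eps$. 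Finally I need $d_G(u,v) \ge d_{G_{j,\sigma}}(u,v)$: this is because $G_{j,\sigma}$ is obtained from $G$ by deleting far vertices and contracting near vertices to $r$ with appropriately weighted edges, operations that can only decrease distances between surviving vertices; since $u,v \in U^-_{j,\sigma}$ survive, $d_{G_{j,\sigma}}(u,v)\le d_G(u,v)$. (I should double-check the contraction step: an edge from $r$ to $v$ in $G_{j,\sigma}$ has weight $d_G(r,v)$, and any $u$-$v$ path in $G$ passing through the contracted region projects to a no-longer walk in $G_{j,\sigma}$, so the inequality holds.) This yields $d_G(u,v)\ge 1-\eps$.

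\emph{Different bands.} Suppose $u \in I_{j,\sigma}$ and $v \in I_{j',\sigma}$ with $j < j'$. The key point is that $U^-_{j,\sigma}$ and $U^-_{j',\sigma}$ are ``distance-separated'' in $G$ when $|j-j'|\ge 1$: by definition $u$ has $d_G(r,u) \le \frac{2}{\eps}(j+1)+\sigma-1$ while $v$ has $d_G(r,v)\ge \frac{2}{\eps}j'+\sigma+1 \ge \frac{2}{\eps}(j+1)+\sigma+1$ (using $j'\ge j+1$), so by the triangle inequality $d_G(u,v)\ge d_G(r,v)-d_G(r,u)\ge 2 \ge 1-\eps$. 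Note this covers even the extreme case $j=-1$, where $\mathcal{I}^-_{-1,\sigma}=[0,\sigma-1)$, since the same endpoint bookkeeping applies. Thus in all cases $d_G(u,v)\ge 1-\eps$, so $I_\sigma$ is a $(1-\eps)$-independent set of $G$.

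The step I expect to require the most care is confirming the distance-contraction inequality $d_{G_{j,\sigma}}(u,v)\le d_G(u,v)$ for vertices in $U^-_{j,\sigma}$, i.e.\ that the contraction of the near-ball into $r$ (with edge weights $d_G(r,\cdot)$) together with deletion of the far part genuinely does not create any shorter route between two surviving vertices than already existed in $G$ — one must check that any shortcut through $r$ in $G_{j,\sigma}$ corresponds to a walk of no smaller length in $G$. Everything else is a straightforward application of the embedding guarantee of \Cref{thm:PlanarToTreewidth} and the approximation guarantee of \Cref{lm:KLS-IS-DS}, combined with the choice $\delta = \eps^2/12$ which makes the additive distortion at most $\eps/2$.
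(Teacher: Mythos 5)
Your different-band case is fine (your annulus-separation bound $d_G(u,v)\ge 2$ is just a more explicit version of the paper's observation that a vertex of $U^-_{j,\sigma}$ is at distance at least $1$ from everything outside $G_{j,\sigma}$), and the chain ``independence of $I^H_{j,\sigma}$ in $H_{j,\sigma}$ plus additive distortion $\le\eps/2$ gives $d_{G_{j,\sigma}}(u,v)\ge 1-\eps$'' matches the paper. But the step you yourself flag as delicate is where your argument breaks: the inequality $d_{G_{j,\sigma}}(u,v)\le d_G(u,v)$ is \emph{false in general}, and your justification is wrong on both counts. Deleting the far vertices (those at distance at least $\frac{2}{\eps}(j+1)+\sigma$ from $r$) can only \emph{increase} distances between surviving vertices: if the $G$-shortest $u$--$v$ path dips into the deleted annulus, that path simply disappears from $G_{j,\sigma}$ and the surviving detours can be much longer (take $u,v$ on the outer boundary of $U^-_{j,\sigma}$ joined in $G$ by a short path through a deleted vertex). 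Moreover, in this construction every edge incident to $r$ gets weight $1$, not $d_G(r,\cdot)$ — you are quoting the convention from the rooted stochastic embedding of \Cref{lm:Root-Embedding-Planar} — so the ``projection'' of a $G$-path through the contracted ball can also be longer than the original path. As written, the same-band case therefore rests on an unproven (and untrue) monotonicity claim.

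The missing idea, which is exactly how the paper argues, is a preliminary case split: if $d_G(u,v)\ge 1$ there is nothing to prove, so assume $d_G(u,v)<1$ and only then compare $G$ with $G_{j,\sigma}$. Since $u,v\in U^-_{j,\sigma}$ and $d_G(u,v)<1$, every vertex $z$ on a $G$-shortest $u$--$v$ path has $d_G(u,z)<1$, hence $d_G(r,z)$ lies strictly inside $\mathcal{I}_{j,\sigma}$; thus $z\in U_{j,\sigma}$ is neither deleted nor contracted, the whole path survives in $G_{j,\sigma}$ with its original weights, and $d_{G_{j,\sigma}}(u,v)\le d_G(u,v)$ (in fact equality, since any path through $r$ costs at least $2$). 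This case split also absorbs the scenarios where the shortest path would enter the deleted or contracted region, because there $d_G(u,v)\ge 1$ automatically. With that repair, your embedding/independence chain gives $d_G(u,v)\ge d_{G_{j,\sigma}}(u,v)\ge 1-\eps$ and the claim follows; without it, the proposal has a genuine gap.
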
 
\begin{proof}
	Let $u$ and $v$ be any two vertices in $I_{\sigma}$ such that $d_G(u,v) < 1$. Our goal is to show that $d_G(u,v)\geq 1-\eps$. Suppose that there exists $j$ such that $u \in G_{j,\sigma}$ and $v\not\in G_{j,\sigma}$. By the construction of $I_{j,\sigma}$, $u \in U^-_{j,\sigma}$ and hence, the distance in $G$ from $u$ to any vertex not in $G_{j,\sigma}$ is at least $1$. In particular, $d_G(u,v)\geq 1$, a contradiction. It follows that there exists $j$ such that both $u$ and $v$ are in $G_{j,\sigma}$. By the construction of $I_{j,\sigma}$, both $u$ and $v$ are in $U^-_{j,\sigma}$. By the definition of $U^-_{j,\sigma}$, any vertex with distance at most $1$ from $u$ or $v$ is contained in $U_{j,\sigma}$. Since $d_G(u,v) < 1$,  any vertex in the shortest path from $u$ to $v$ in $G$ is contained in $U_{j,\sigma}$. It follows that $d_{G}(u,v) = d_{G_{j,\sigma}}(u,v)$.  By \Cref{thm:PlanarToTreewidth}, and the fact that $I^{H}_{j,\sigma}$ is a $1-\frac\eps2$ independent set, $d_{G_{j,\sigma}}(u,v)+\frac{\eps}{2}\geq d_{H_{j,\sigma}}(f_{j,\sigma}(u),f_{j,\sigma}(v))\geq1-\epsilon/2$. It follows that $d_{G_{j,\sigma}}(u,v) \geq 1-\eps$, which implies $d_{G}(u,v) \geq 1-\eps$ as claimed.
\end{proof}

Next, we return the set $I=I_{\sigma}$ for the $\sigma$ which maximizes $\mu(I_{\sigma})$ over all  $\sigma\in[0,\frac2\eps-1]$. By \Cref{clm:IS-dist}, $I$ is a $(1-\eps)$-independent set of $G$. We now bound the cost of $I_{\sigma}$.
Let $\tilde{I}$ be any $1$-independent set in $G$, and $\sigma^* = \argmax_{\sigma}\mu(\tilde{I}\cap V^-_{\sigma})$. By \Cref{obs:Baker-layering-IS}, we have:
\begin{equation*}
	\mu(\tilde{I}\cap V^-_{\sigma^*}) \geq \frac{\sum_{\sigma =0}^{\frac{2}{\eps}-1} \mu(\tilde{I}\cap V^-_{\sigma})}{2/\eps} = \frac{2/\eps-2}{2/\eps} \mu(\tilde{I}) = (1-\eps)\mu(\tilde{I})~.
\end{equation*}
For every $\sigma$ and $j$,  $f(\tilde{I}\cap U^-_{j,\sigma})$ is a $1$-independent set in $H_{j,\sigma}$ w.r.t. the terminal set $K_{j,\sigma}$, and $\mu_{H_{j,\sigma}}(f(\tilde{I}\cap U^-_{j,\sigma})) = \mu(\tilde{I}\cap U^-_{j,\sigma})$.  We conclude that 
\begin{align*}
	\mu(I) & =\max_{\sigma}\mu(I_{\sigma})\ge\mu(I_{\sigma^{*}})=\sum_{j=-1}^{\infty}\mu(I_{j,\sigma^{*}})=\sum_{j=-1}^{\infty}\mu(f^{-1}(I_{j,\sigma^{*}}^{H}))=\sum_{j=-1}^{\infty}\mu_{H_{j,\sigma^{*}}}(I_{j,\sigma^{*}}^{H})\\
	& \ge\sum_{j=-1}^{\infty}\mu_{H_{j,\sigma^{*}}}(f_{j,\sigma^{*}}(\tilde{I}\cap V_{\sigma^{*}}^{-}))=\mu(\tilde{I}\cap V_{\sigma}^{-})\ge(1-\epsilon)\cdot\mu(\tilde{I})~.
	\end{align*}
The total running time to construct all the embeddings $f_{\sigma,j}$ into graphs $H_{\sigma,j}$, and to run the approximation algorithm from \Cref{lm:KLS-IS-DS}, denoted by  $T_{IS}(n)$,  is 
\begin{align}\label{eq:time-IS-semifinal}
	T_{IS}(n) = & \sum_{\sigma}\sum_{j}O(|G_{j,\sigma}|\cdot\frac{\log^{3}|G_{j,\sigma}|}{\delta^{2}})+O\left((\frac{\log n}{\eps})^{O(\epsilon^{-2}(\log\log n)^{2})}|G_{j,\sigma}|\right)\\
	& \qquad=n\cdot\sum_{\sigma}\left(O(\frac{\log^{3}n}{\eps^{4}})+2^{O(\eps^{-2}\cdot\log\frac{1}{\eps}\cdot(\log\log n)^{3})}\right)\\
	& \qquad=  n\cdot\left(O(\frac{\log^{3}n}{\eps^{5}}) +  (1/\eps)2^{O(\eps^{-2}\cdot\log\frac{1}{\eps}\cdot(\log\log n)^{3})} \right)\\
\end{align}

For a given fixed $\kappa > 0$, by applying Young's inequality (\Cref{thm:young}) with $a = \eps^{-2}\cdot\log\frac{1}{\eps}$, $b = (\log\log n)^3$, $p = 1+\kappa/2$ and $q = 1 + \frac{2}{\kappa}$, we have that:

\begin{align}\label{eq:IS-young-app}
	 \eps^{-2}\cdot\log\frac{1}{\eps}\cdot(\log\log n)^{3} \leq\eps^{-(2+\kappa)}(\log\frac{1}{\eps})^{1+\kappa/2}\frac{2}{2+\kappa} + (\log\log n)^{3 + 6/\kappa}\frac{\kappa}{2+\kappa}   \\
\end{align}

Since $\kappa$ is fixed, combining \Cref{eq:time-IS-semifinal} and \Cref{eq:IS-young-app}, we have that:

\begin{align*}
	T_{IS}(n) = & n\cdot\left(O(\frac{\log^{3}n}{\eps^{5}}) + 2^{\tilde{O}(\eps^{-(2+\kappa)})} \cdot 2^{O((\log\log n)^{3 +6/\kappa})}\right) =  2^{\tilde{O}(\eps^{-(2+\kappa)})}n^{1+o(1)}~,
\end{align*}
as desired.

\paragraph{$\rho$-Dominating Set Problem.~} The algorithm for the $\rho$-dominating set problem is similar to the algorithm for $\rho$-independent set problem. 
For each $\sigma\in \{0,\ldots, \frac{2}{\eps}-1\}$, we define $\mathcal{I}_{-1,\sigma} = [0,\sigma]$ and $\mathcal{I}^+_{-1,\sigma} = [0,\sigma+1]$, and for each $j\geq 0$:
\[
\mathcal{I}_{j,\sigma}=[\frac{2}{\eps}\cdot j+\sigma~,~\frac{2}{\eps}\cdot(j+1)+\sigma]~~\qquad\mathcal{I}_{j,\sigma}^{+}=[\frac{2}{\eps}\cdot j+\sigma-1~,~\frac{2}{\eps}\cdot(j+1)+\sigma+1]
\]  
Let $U_{j,\sigma} = \{v\in V: d_G(r,v) \in \mathcal{I}_{j,\sigma}\}$ and for $j\geq -1$, $U^+_{j,\sigma} = \{v\in V: d_G(r,v) \in \mathcal{I}^{+}_{j,\sigma}\}$. 
For every $\sigma$, set $G_{-1,\sigma} = G[U^+_{-1,\sigma}]$, and for $j\ge 0$,  $G_{j,\sigma}$ is the graph obtained from $G$ by removing every vertex of distance at least  $\frac{2}{\eps}\cdot(j+1)+\sigma+1$ from $r$, and contracting every vertex of distance strictly less than  $\frac{2}{\eps}\cdot j+\sigma-1$ to $r$; we set every edge incident to $r$ the weight $1$. Observe that $G_{j,\sigma}$ is a planar graph with diameter at most $\frac{4}{\eps}+6$. Furthermore,  $V(G_{j,\sigma}) = U^+_{j,\sigma} \cup \{r\}$ when $j \geq 0$ and  $V(G_{-1,\sigma}) = U^+_{-1,\sigma}$.
As in the $\rho$-independent set problem, $G_{j,\sigma}$ can be constructed for all $j$ in total $O(n)$ time.

Next, for every $j$ and $\sigma$, we find a $(1+\eps)$-dominating set $S_{j,\sigma}$ in $G_{j,\sigma}$.
Fix $\delta = \frac{\eps^2}{16}$. We construct an embedding $f_{j,\sigma}$ of $G_{j,\sigma}$ into a graph $H_{j,\sigma}$ with treewidth $O(\frac{(\log\log |G_{j,\sigma}|)^2}{\delta}) = O(\frac{(\log\log n)^2}{\eps^2})$ and additive distortion $\delta\cdot\dm(G_{j,\sigma})\leq\frac{\epsilon^{2}}{16}(\frac{4}{\eps}+6)\leq\frac{\eps}{2}$
when $\eps \leq \frac{1}{2}$; by \Cref{thm:PlanarToTreewidth}, $f_{j,\sigma}$ can be constructed in $O(|G_{j,\sigma}|\cdot\frac{\log^{3}|G_{j,\sigma}|}{\delta^{2}})=O(|G_{j,\sigma}|\cdot\frac{\log^{3}n}{\epsilon^{4}})$
time. For each vertex $u_{H_{j,\sigma}} \in H_{j,\sigma}$, we assign measure $\mu_{H_{j,\sigma}}(u_{H_{j,\sigma}}) = \mu(u)$ if $u_{H_{j,\sigma}} = f_{j,\sigma}(u)$ for some $u \in U^+_{j,\sigma}$, and $\mu_{H_{j,\sigma}}(u_{H_{j,\sigma}}) = \infty$  otherwise. 
Let $K_{j,\sigma}=f(U_{j,\sigma})$ be a terminal set. 
Using \Cref{lm:KLS-IS-DS}, we find a  set $S^H_{j,\sigma}$ which $(1+\eps/2)$-dominates the terminal set $K_{j,\sigma}$ in $O(\frac{\log n}{\eps})^{O(\epsilon^{-2}(\log\log n)^{2})}|G_{j,\sigma}|)$ time, and such that for every $(1+\frac\eps2)$-dominating set $\hat{S}$, $\mu_{H_{j,\sigma}}(S^H_{j,\sigma})\le \mu_{H_{j,\sigma}}(\hat{S})$.
%
Let $S_{j,\sigma} = f^{-1}(S^{H}_{j,\sigma})$.
As $f(U_{j,\sigma})$ is a dominating set of finite measure, and all the vertices out of $f(U^+_{j,\sigma})$ have measure $\infty$, it holds that $S^{H}_{j,\sigma}\subseteq f(U^+_{j,\sigma})$.
Observe that for every vertex $u\in U_{j,\sigma}$, $f(u)\in K_{j,\sigma}$, and hence there is a vertex $f(v)\in S^{H}_{j,\sigma}$ such that $d_{H_{j,\sigma}}(f(u),f(v))\le 1+\frac\eps2$. It follows that $v\in S_{j,\sigma}$, and $d_G(u,v)\le d_{H_{j,\sigma}}(f(u),f(v))\le  1+\eps$.
Set $S_{\sigma}=\cup_{j\ge-1}S_{j,\sigma}$, as $V = \cup_{j=-1}^{\infty}U_{j,\sigma}$, $S_{\sigma}$ is a $1+\eps$ dominating set. We return the set $S=S_{\sigma}$ for the $\sigma$ which minimizes $\mu(S_{\sigma})$ for $\sigma\in[0,\frac2\eps-1]$.

Let $\tilde{S}$ be any $1$-dominating in $G$, and $\sigma^*$ the index that minimizes $\sum_{j=-1}^{\infty}\mu(\tilde{S}\cap U_{j,\sigma^{*}}^{+})$.
Observe that each vertex $v \in V$ is contained in at most $\frac{2}{\eps}+2$ sets in the collection $\{U^+_{j,\sigma}\}_{\sigma\in\{0,\dots,\frac2\eps-1\},j\ge-1}$. Thus
\[
\sum_{j=-1}^{\infty}\mu(\tilde{S}\cap U_{j,\sigma^{*}}^{+})\le\frac{\epsilon}{2}\cdot\sum_{\sigma=0}^{\frac{2}{\epsilon}-1}\sum_{j=-1}^{\infty}\mu(\tilde{S}\cap U_{j,\sigma}^{+})=\frac{\epsilon}{2}\cdot\sum_{v\in\tilde{S}}\mu(v)\sum_{\sigma=0}^{\frac{2}{\epsilon}-1}\sum_{j=-1}^{\infty}\cdot\boldsymbol{1}_{v\in U_{j,\sigma}^{+}}=\frac{\epsilon}{2}\cdot\sum_{v\in\tilde{S}}\mu(v)\cdot(\frac{2}{\epsilon}+2)=(1+\epsilon)\cdot\mu(\tilde{S})~.
\]
By \Cref{lm:KLS-IS-DS}, $\mu(S^H_{j,\sigma})\le\mu(\tilde{S}\cap U^+_{j,\sigma})$. We conclude, 
\[
\mu(S)=\min_{\sigma}\mu(S_{\sigma})\le\mu(S_{\sigma^{*}})=\sum_{j=-1}^{\infty}\mu(S_{j,\sigma^{*}})\le\sum_{j=-1}^{\infty}\mu(\tilde{S}\cap U_{j,\sigma^{*}}^{+})\le(1+\epsilon)\cdot\mu(\tilde{S})~.
\]
The time analysis is exactly the same as the time analysis of the $\rho$-independent set problem.

\subsubsection{Bounded-Capacity Vehicle Routing Problem} \label{subsec:VHR-planar}

Using \Cref{lm:Root-Embedding-Planar}, we are able to reduce the problem to graphs with treewidth $O_{\eps}(\log\log n)^2$.  Cohen-Addad \etal \cite{CFKL20} designed an algorithm to solve the bounded-capacity VRP in $O(Q\cdot\eps^{-1}\log n)^{O(Q\cdot\frac{\tw}{\epsilon})}\cdot n^{O(1)}$ 
time for graphs with treewidth $\tw$ (Theorem 8 in~\cite{CFKL20}, recall that $Q$ is the capacity of the vehicle).  Most of the time is spent on computing distances between $O(n\cdot\tw^2)$ pairs of vertices in $G$. 
An exact distance oracle is a data structure that, after some prepossessing stage, can answer distance queries (exactly) between every pair of vertices.
By using a distance oracle instead of computing all distances directly, we have:
\begin{lemma}[Theorem 8~\cite{CFKL20}, implicit]\label{lm:DP-VRP-TW} Let $(G(V,E,w), K, r, Q)$ be an instance of the VRP problem,  where $G(V,E,w)$ has treewidth $\tw$ and $n$ vertices. Suppose that we can construct an exact distance oracle for $G$ with preprocessing time $P(n,\tw)$ and query time $T(n,\tw)$, then we can find $(1+\eps)$-approximate solution in time $(Q\epsilon^{-1}\log n)^{O(Q\cdot\tw)}\cdot n + O(n\cdot\tw^2)\cdot T(n,\tw) + P(n,\tw)$.  	
\end{lemma}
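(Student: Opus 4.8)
The plan is to follow the dynamic program of Theorem 8 in~\cite{CFKL20} essentially verbatim, and to make explicit the single place where it interacts with the metric $d_G$. First I would recall the structure of that algorithm. After scaling so that the minimum edge weight is $1$, one fixes a \emph{nice} tree decomposition $\mathcal{T}$ of $G$ of width $\tw$ with $O(n)$ bags, roots it arbitrarily, and processes the bags bottom-up. The structural core of~\cite{CFKL20} guarantees a $(1+\eps)$-approximate solution in which each tour, between two consecutive times it crosses the separator associated with a bag, traverses a shortest path of $G$; consequently a partial solution is completely described, relative to a bag $B$, by the $O(\tw)$ crossing points in $B$, the way these crossings are paired into open fragments, the number of clients ($\le Q$) served by each fragment, and the length of each fragment rounded to one of $O(\eps^{-1}\log n)$ geometrically spaced length classes, together with the accumulated cost of the tours already completed. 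The number of such states at each bag is at most $(Q\eps^{-1}\log n)^{O(Q\tw)}$ as bounded in~\cite{CFKL20}, and the introduce/forget/join transitions at each of the $O(n)$ bags cost a polynomial in this quantity; this accounts for the $(Q\eps^{-1}\log n)^{O(Q\tw)}\cdot n$ term, and all of this bookkeeping is inherited unchanged from~\cite{CFKL20}.

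The key observation I would then extract is that the transitions only ever evaluate $d_G$ at pairs of vertices lying in a common bag: when an introduce step extends a fragment from its current endpoint to a newly introduced vertex, or when a join step concatenates two fragments sharing an endpoint, the length update uses either a given edge weight $w(u,v)$ or a shortest-path length $d_G(u,v)$ with $u,v$ in the same bag, and in addition the algorithm uses the $O(\tw)$ distances from each bag vertex to the depot $r$. Since $\mathcal{T}$ has $O(n)$ bags and each bag contains $O(\tw^2)$ vertex pairs, there are only $O(n\tw^2)$ distinct distances the algorithm can ever need. In~\cite{CFKL20} these are produced by direct shortest-path computations, which is the bottleneck; instead I would precompute the exact distance oracle in time $P(n,\tw)$ and answer each needed distance with one query, at total cost $O(n\tw^2)\cdot T(n,\tw)$. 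Summing the three contributions gives the claimed running time $(Q\eps^{-1}\log n)^{O(Q\tw)}\cdot n + O(n\tw^2)\cdot T(n,\tw) + P(n,\tw)$.

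The main obstacle is not computational but bookkeeping: one must audit the algorithm of~\cite{CFKL20} and verify that \emph{every} metric query it issues is between two vertices that coappear in some bag (or between a bag vertex and $r$), so that no transition ever needs a distance between vertices that never share a bag. This follows from the near-optimality structure theorem there — each inter-crossing tour segment may be assumed to be a shortest path, hence its cost is charged at the unique bag where its two endpoints are simultaneously present — together with the fact that discretizing fragment lengths into $O(\eps^{-1}\log n)$ classes loses only a $(1+\eps)$ factor, which is also proved in~\cite{CFKL20}. Once this is confirmed, the lemma is immediate: it amounts to the above accounting plus substituting oracle queries for explicit shortest-path computations, with no change to the dynamic programming table, its transitions, or the approximation analysis.
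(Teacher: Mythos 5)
Your proposal is correct and follows essentially the same route as the paper: the paper's own justification is just the observation that the dynamic program of Theorem~8 in \cite{CFKL20} spends most of its time computing distances between $O(n\cdot\tw^2)$ pairs of vertices, which can be replaced by exact distance-oracle queries, and your write-up is this same substitution argument spelled out in more detail (including the audit that all queried pairs co-appear in a bag or involve the depot, which the paper leaves implicit). No gap; your accounting matches the claimed bound $(Q\epsilon^{-1}\log n)^{O(Q\cdot\tw)}\cdot n + O(n\cdot\tw^2)\cdot T(n,\tw) + P(n,\tw)$.
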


 Chaudhuri and Zaroliagis~\cite{CZ00} constructed a distance oracle with low preprocessing time and query time for directed graphs with small treewidth. Here, we only need an oracle for undirected graphs.

\begin{lemma}[Theorem 4.2 (ii) in~\cite{CZ00}]\label{lm:distOracle-tw} Let $G(V,E,w)$ be a directed graph with treewidth $\tw$ and $n$ vertices. We can construct a distance oracle with  prepprocessing time $P(n,\tw) = O(n\cdot\tw^3)$ and query time $T(n,\tw) = O(\tw^3\cdot \alpha(n))$ where $\alpha(n)$ is the inverse Ackermann function.
\end{lemma}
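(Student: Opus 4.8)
The statement is the classical bounded-treewidth distance oracle of Chaudhuri and Zaroliagis~\cite{CZ00}, so the plan is to reconstruct their argument. I would first reduce to a convenient decomposition: starting from a tree decomposition of the underlying undirected graph of $G$ of width $\tw$ (in all our applications such a decomposition comes for free with the low-treewidth host graph), apply the standard linear-time splitting/balancing transformation to obtain a rooted \emph{binary} tree decomposition $\mathcal{T}$ with $O(n)$ bags, depth $O(\log n)$, and every bag of size $O(\tw)$. All shortest paths below are computed in the min-plus semiring $(\R_{\ge 0}\cup\{\infty\},\min,+)$, which is legitimate since $w\ge 0$ and hence there are no negative cycles; the algebraic fact I will lean on is that min-plus multiplication of $O(\tw)\times O(\tw)$ matrices costs $O(\tw^3)$ and is associative.

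The preprocessing then consists of two sweeps over $\mathcal{T}$. In a bottom-up sweep I would compute, for each node $t$ with bag $X_t$, the matrix $D_t$ of exact distances inside the subgraph $G_t$ induced by the subtree rooted at $t$, between all pairs of $X_t$: at a leaf this is read off $G[X_t]$, and at an internal node it is assembled from the children's matrices and the edges of $G[X_t]$ by a constant number of min-plus products and transitive closures of $O(\tw)\times O(\tw)$ matrices, i.e.\ $O(\tw^3)$ work. In a top-down sweep I would then compute the matrices $\widehat{D}_t$ of global distances in $G$ between the vertices of each $X_t$, propagating from the root (where $\widehat{D}_{\mathrm{root}}=D_{\mathrm{root}}$) down to each child by recombining $\widehat{D}_s$ with $D_t$, again $O(\tw^3)$ per node, using that $X_s\cap X_t$ separates the subtree of $t$ from the rest of $G$. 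Since $\mathcal{T}$ has $O(n)$ nodes, the total cost is $O(n\cdot\tw^3)=P(n,\tw)$.

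For a query $(u,v)$ I would pick bags $t_u\ni u$ and $t_v\ni v$ and consider the path $\pi$ between them in $\mathcal{T}$; by the separator property of tree decompositions every $u$--$v$ path of $G$ meets each bag along $\pi$, so $d_G(u,v)$ is exactly the min-plus product along $\pi$ of the matrices $\widehat{D}_{s_i}$ suitably restricted to the adhesions $X_{s_i}\cap X_{s_{i+1}}$ (one inequality is the triangle inequality, the other comes from reading a shortest $u$--$v$ path off its first crossing of each separator). Evaluating this product naively costs $O(\tw^3\log n)$. To bring it down to $O(\tw^3\cdot\alpha(n))$, the plan is to additionally build, in $O(n\cdot\tw^3)$ time, a static tree-path semigroup-product structure (Tarjan's path-compression on balanced trees, or Chazelle's) over the semigroup of $O(\tw)\times O(\tw)$ matrices under min-plus product; such a structure returns the product of the labels along any node-to-node path using $O(\alpha(n))$ semigroup operations, each an $O(\tw^3)$-time matrix product, giving query time $O(\tw^3\cdot\alpha(n))=T(n,\tw)$.

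The routine parts are the two DP sweeps and the correctness of the query formula, which are bookkeeping with separators and the triangle inequality. The one delicate point, and the main obstacle, is the last step: shaving the $O(\log n)$ factor from the naive query down to $O(\alpha(n))$ forces the matrix composition along $\pi$ to be cast as a genuine associative semigroup operation with the adhesion index sets threaded consistently along the path, so that an off-the-shelf tree-path product data structure can be invoked as a black box.
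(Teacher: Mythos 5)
There is nothing in the paper to compare against here: \Cref{lm:distOracle-tw} is imported verbatim as a black box from Chaudhuri and Zaroliagis (Theorem 4.2(ii) of \cite{CZ00}), and the authors give no proof of it. Judged on its own, your sketch is a faithful reconstruction of the argument that is actually in \cite{CZ00}: a two-sweep dynamic program over a tree decomposition that stores, per bag, $O(\tw)\times O(\tw)$ min-plus distance matrices in $O(n\cdot \tw^3)$ total time, plus the observation that a query reduces to a min-plus product of these matrices along the tree path between a bag containing $u$ and a bag containing $v$ (correctness via the separator property, picking the crossing vertices of a shortest path in order along the path, exactly as you indicate), with the $O(\log n)$-factor removed by an Alon--Schieber/Chazelle-style on-line product-query structure over paths in trees \cite{AS87,Chazelle87}, which with linear preprocessing answers a path query with $O(\alpha(n))$ semigroup operations, each an $O(\tw^3)$ matrix product. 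The point you single out as delicate---turning the bag-to-bag compositions into a single associative semigroup despite the varying index sets---is indeed where the real work sits, and it is resolved in \cite{CZ00} essentially as you suggest (a consistent labelling so that all matrices live over a common index set of size $O(\tw)$, padded with $\infty$, under min-plus multiplication); as written you defer it to the off-the-shelf structure, which is acceptable here since the paper defers the entire lemma. Two minor remarks: the balanced binary depth-$O(\log n)$ decomposition is only needed for your intermediate $O(\tw^3\log n)$ query and is irrelevant to the final $\alpha(n)$ bound, and nonnegativity of $w$ is not required by \cite{CZ00} (absence of negative cycles suffices), though it holds in all uses in this paper.
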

By plugging \Cref{lm:distOracle-tw} into \Cref{lm:DP-VRP-TW}, we get:

\begin{lemma}\label{lm:DP-VRP} 	Given  an instance of the VRP problem $(G(V,E,w), K, r, Q)$ where $G(V,E,w)$ has treewidth $\tw$ and $n$ vertices, one can find a $(1+\eps)$-approximate solution in time $(Q\epsilon^{-1}\log n)^{O(Q\cdot\tw)}\cdot n$.  
\end{lemma}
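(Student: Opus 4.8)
The plan is a direct substitution: instantiate the distance‑oracle parameters of \Cref{lm:distOracle-tw} inside the running‑time bound of \Cref{lm:DP-VRP-TW}. First I would note that an undirected edge‑weighted graph is a special case of a directed one (replace each edge $\{u,v\}$ by the two arcs $(u,v),(v,u)$ of the same weight), and this transformation does not change the treewidth, which is a property of the underlying undirected graph. Hence \Cref{lm:distOracle-tw} applies to $G$ and yields an exact distance oracle with preprocessing time $P(n,\tw)=O(n\cdot\tw^{3})$ and query time $T(n,\tw)=O(\tw^{3}\cdot\alpha(n))$.

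Next I would plug these two quantities into the bound of \Cref{lm:DP-VRP-TW}. The total running time becomes
\[
(Q\eps^{-1}\log n)^{O(Q\cdot\tw)}\cdot n \;+\; O(n\cdot\tw^{2})\cdot O(\tw^{3}\alpha(n)) \;+\; O(n\cdot\tw^{3})
\;=\;(Q\eps^{-1}\log n)^{O(Q\cdot\tw)}\cdot n \;+\; O\!\left(n\cdot\tw^{5}\alpha(n)\right),
\]
where I used $\alpha(n)\ge 1$ to absorb the $O(n\tw^{3})$ term into the $O(n\tw^{5}\alpha(n))$ term.

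Finally I would argue that the error term $O(n\cdot\tw^{5}\alpha(n))$ is dominated by the main term. Since $\alpha(n)=O(\log n)$ (indeed $\alpha(n)\le\log^{*}n$), this error term is $O(n\cdot\tw^{5}\log n)$; as $Q,\eps^{-1},\tw\ge 1$, the main term is at least $n\cdot(\log n)^{\Omega(Q\cdot\tw)}$, which bounds $n\cdot\tw^{5}\log n$ from above (for $\tw$ a constant this is immediate, and for larger $\tw$ one has $\tw^{5}\log n\le(\log n)^{O(\tw)}$). Therefore the whole running time is $(Q\eps^{-1}\log n)^{O(Q\cdot\tw)}\cdot n$, as claimed. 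I do not anticipate a genuine obstacle here: the only points needing care are the (trivial) directed‑to‑undirected reduction needed to invoke \Cref{lm:distOracle-tw}, and the bookkeeping that folds the polynomial‑in‑$\tw$, polylogarithmic‑in‑$n$ oracle overhead into the dominant exponential term.
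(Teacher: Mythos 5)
Your proposal is correct and matches the paper's argument: the paper proves \Cref{lm:DP-VRP} simply by plugging the distance oracle of \Cref{lm:distOracle-tw} into \Cref{lm:DP-VRP-TW}, exactly as you do. Your extra bookkeeping (the undirected-to-directed remark and the observation that the $O(n\cdot\tw^{5}\alpha(n))$ overhead is absorbed by the main term) is correct and only makes explicit what the paper leaves implicit.
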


The final missing piece we need for our algorithm for the VRP is a result of Becker \etal \cite{BKS19}, who showed how to solve the VRP from  an $(r,\eps)$-rooted stochastic embedding and an efficient dynamic programming for VRP in small treewidth graphs.

\begin{lemma}[Becker, Klein, and Schild~\cite{BKS19}, implicit]\label{lm:BKS} Let $\eps\in (0,1)$ be a parameter. Let $(G(V,E,w), K, r, Q)$ be an instance of the VRP problem with $n$ vertices and $m$ edges such that $G(V,E,w)$ admits an $(r,\epsilon)$-rooted stochastic embedding into graphs with treewidth $\tau(n,\eps)$ that can be constructed in time $T(n,m,\eps)$. If we can find a $(1+\eps)$-approximate solution of the VRP in graphs with treewidth $\tw$ in time $D(n,Q,\tw,\eps)$, then we can find a $(1+\eps)$-approximate solution for the VRP in $G(V,E,w)$ in expected time $D(n,Q,\tau(n,\eps/(9Q)),\epsilon/3) + T(n,m,\eps/(9Q)) +  O(m)$.  	
\end{lemma}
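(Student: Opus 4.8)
The plan is to instantiate the reduction of Becker, Klein and Schild~\cite{BKS19}: sample the rooted stochastic embedding, solve the VRP inside the low-treewidth host graph, and project the tours back to $G$. First I would set $\rho := \eps/(9Q)$ and, using the hypothesis, construct in time $T(n,m,\rho)$ a sample $f:V(G)\to V(H)$ of an $(r,\rho)$-rooted stochastic embedding, where $H$ has treewidth $\tau(n,\rho)$, and build the instance $(H,f(K),f(r),Q)$ in time $O(m)$. Next I would run the assumed treewidth algorithm on that instance with accuracy parameter $\eps/3$, in time $D(n,Q,\tau(n,\rho),\eps/3)$, obtaining a feasible solution $\mathcal{S}_H$ with $\cost_H(\mathcal{S}_H)\le(1+\eps/3)\,\opt_H$, where $\opt_H$ denotes the optimum of the $H$-instance. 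Finally, since $f$ is one-to-one, I would turn each tour of $\mathcal{S}_H$ into a closed walk in $G$ that visits the corresponding clients $f^{-1}(\cdot)$ in the same cyclic order, joining consecutive ones (and the depot $r$) along shortest paths of $G$; domination, i.e. $d_H(f(u),f(v))\ge d_G(u,v)$, guarantees that such a walk is no more expensive than the tour it came from, so this yields a feasible VRP solution $\mathcal{S}_G$ for $G$ with $\cost_G(\mathcal{S}_G)\le(1+\eps/3)\,\opt_H$. Adding the $O(m)$ bookkeeping, the running time matches the statement.

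The substance is to bound $\mathbb{E}[\opt_H]$ against $\opt_G$, which I expect to be the main obstacle. I would fix an optimal solution of the $G$-instance with tours $R_1,\dots,R_q$ and, for each $j$, write the clients visited by $R_j$ in first-visit cyclic order as $v^j_1,\dots,v^j_{t_j}$ with $t_j\le Q$; decomposing $R_j$ at these vertices and using the triangle inequality gives $\cost_G(R_j)\ge d_G(r,v^j_1)+\sum_{i=1}^{t_j-1}d_G(v^j_i,v^j_{i+1})+d_G(v^j_{t_j},r)$, and since $R_j$ is a closed walk through $r$ passing through $v^j_i$ we also get $d_G(r,v^j_i)\le\cost_G(R_j)/2$. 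Map $R_j$ to the tour $\hat R_j$ in $H$ visiting $f(r),f(v^j_1),\dots,f(v^j_{t_j})$ in this order along $H$-shortest paths; then $\{\hat R_j\}_j$ is feasible for $(H,f(K),f(r),Q)$. Applying $\mathbb{E}[d_H(f(a),f(b))]\le d_G(a,b)+\rho\bigl(d_G(r,a)+d_G(r,b)\bigr)$ to each of the $t_j+1$ consecutive pairs of $\hat R_j$, and noting that each client $v^j_i$ is an endpoint of exactly two of these pairs while the depot only contributes $d_G(r,r)=0$, one gets $\mathbb{E}[\cost_H(\hat R_j)]\le\cost_G(R_j)+2\rho\sum_{i=1}^{t_j}d_G(r,v^j_i)\le(1+\rho t_j)\cost_G(R_j)\le(1+\rho Q)\cost_G(R_j)=(1+\eps/9)\cost_G(R_j)$. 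Summing over $j$, $\mathbb{E}[\opt_H]\le\sum_j\mathbb{E}[\cost_H(\hat R_j)]\le(1+\eps/9)\,\opt_G$.

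Putting the pieces together, $\mathbb{E}[\cost_G(\mathcal{S}_G)]\le(1+\eps/3)\,\mathbb{E}[\opt_H]\le(1+\eps/3)(1+\eps/9)\,\opt_G\le(1+\eps)\,\opt_G$ for $\eps\in(0,1)$, so $\mathcal{S}_G$ is a $(1+\eps)$-approximation in expectation; a per-run guarantee follows by running the procedure with $\eps/2$ in place of $\eps$, repeating a constant number of times, and returning the cheapest $\mathcal{S}_G$, since Markov's inequality applied to the excess cost $\cost_G(\mathcal{S}_G)-\opt_G$ makes each repetition succeed with constant probability --- this is what the word ``expected'' in the time bound refers to, and it does not change the asymptotics. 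The one genuinely delicate point is the appearance of the factor $Q$ in front of the distortion, which comes from $2\sum_i d_G(r,v^j_i)\le t_j\cdot\cost_G(R_j)\le Q\cdot\cost_G(R_j)$ and forces the embedding to be invoked with distortion parameter $\eps/(9Q)$ rather than $\eps$; everything else is routine.
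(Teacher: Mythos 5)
The paper never proves this lemma --- it is stated as ``implicit'' in Becker, Klein, and Schild~\cite{BKS19} and used as a black box --- so there is no in-paper argument to compare against. Your reconstruction is the intended reduction and is correct: sampling the embedding with distortion parameter $\rho=\eps/(9Q)$, charging the $2\rho\sum_i d_G(r,v^j_i)\le \rho\, t_j\,\cost_G(R_j)\le(\eps/9)\cost_G(R_j)$ distortion per optimal tour, solving in $H$ with accuracy $\eps/3$, and projecting back via domination gives $(1+\eps/3)(1+\eps/9)\le 1+\eps$, exactly matching the constants in the statement. The only soft spot is your gloss on the word ``expected'': the natural reading (and the way the paper uses the lemma, yielding a \emph{randomized} PTAS) is that the solution's cost is at most $(1+\eps)\mathrm{opt}$ in expectation; your Markov-plus-repetition remark only boosts this to a constant-probability guarantee (one cannot certify success since $\mathrm{opt}$ is unknown), but this does not affect how the lemma is applied downstream.
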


We now show that there exists a PTAS for the bounded-capacity VRP in almost linear time. We restate \Cref{thm:VRP-planar} below for convenience.

\CVRPPlanar*
\begin{proof} 
	\sloppy By \Cref{lm:Root-Embedding-Planar}, we can construct  an $(r,\epsilon)$-rooted stochastic embedding into graphs with treewidth $\tau(n,\eps) =O_\eps(\log\log n)^2$ in $O_\eps(n\log^3n)$ time. By \Cref{lm:DP-VRP}, $D(n,Q,\tw,\eps) =(Q\epsilon^{-1}\log n)^{O(Q\cdot\tw)}n$. By \Cref{lm:BKS}, we can find a $(1+\eps)$-approximate solution of the VRP in time:
	\begin{align}
		(Q\epsilon^{-1}\log n)^{O_{\frac{\epsilon}{9Q}}(Q\cdot\log\log n)^{2}}\cdot n & =2^{O_{\epsilon}(\log\log n)^{3}}\cdot n~~\qquad\mbox{(since }Q=O(1))\nonumber\\
		& \leq2^{f(\eps)\cdot(\log\log n)^{3}}\cdot n\qquad\mbox{(for some function }f(\eps))\nonumber\\
		& \leq2^{\frac{f(\eps)^{2}+(\log\log n)^{6}}{2}}n=O_{\eps}(1)\cdot2^{\frac{(\log\log n)^{6}}{2}}\cdot n=O_{\eps}(n^{1+o(1)})~.\label{PlanarVRPruntime}
	\end{align}
\end{proof}

\section{Additional Notation and Preliminaries} \label{sec:addNotation}
\subsection{Metric Embeddings} 

We will study a more permitting generalization of metric embedding introduced by Cohen-Addad \etal \cite{CFKL20}, which is called \emph{one-to-many} embedding.\hspace{-10pt}

\begin{definition}[One-to-many embedding]\label{def:one-to-many}
	A \emph{one-to-many embedding} is a function $f:X\rightarrow2^Y$  from the points of a metric space $(X,d_X)$ into non-empty subsets of points of a metric space $(Y,d_Y)$, where the subsets $\{f(x)\}_{x\in X}$ are disjoint.	
	$f^{-1}(x')$ denotes the unique point $x\in X$ such that $x'\in f(x)$. If no such point exists, $f^{-1}(x')=\emptyset$.
	A point $x'\in f(x)$ is called a \emph{copy} of $x$, while $f(x)$ is called the \emph{clan} of $x$.
	For a subset $A\subseteq X$ of vertices, denote $f(A)=\cup_{x\in A}f(x)$.
	
	We say that $f$ is \emph{dominating} if for every pair of points $x,y\in X$, it holds that $d_X(x,y)\le \min_{x'\in f(x),y'\in f(y)}d_Y(x',y')$.
	We say that $f$ has an additive distortion $+\eps D$ if $f$ is dominating and $\forall x,y\in X$, $\max_{x'\in f(x),y'\in f(y)}d_Y(x',y')\le d_X(x,y)+\eps D$.
	
	A stochastic one-to-many embedding is a distribution $\mathcal{D}$ over dominating one-to-many embeddings. We say that a stochastic one-to-many embedding $f$ has an expected additive distortion $+\eps D$, if $\forall x,y\in X$, $\mathbb{E}[\max_{x'\in f(x),y'\in f(y)}d_Y(x',y')]\le d_X(x,y)+\eps D$.
\end{definition}

While the embedding in \Cref{thm:MinorToTreewidth} is one-to-one, it is technically more convenient to construct a one-to-many embedding. This is because the embedding is constructed in multiple steps, where the ``topological complexity'' grows from one step to the next. To transition from one step to the next, we sometimes require the embeddings be clique-preserving (see \Cref{def:cliquePreserving}), a property for which one-to-many embeddings are required.
Note that we  can always make the final embedding $f$ of $G$ one-to-one by retaining exactly one copy in each set $f(v)$ for each $v \in V(G)$.  

Next, we define clan embedding, which will be used in the approximation of metric Baker's problems. This notion was introduced by \blind \cite{FL21}.

\begin{definition}[Clan Embedding]\label{def:clan}
	A clan embedding from metric space $(X,d_X)$ into a metric space $(Y,d_Y)$ is a pair $(f,\chi)$ where $f:X\rightarrow2^{Y}$ is a dominating one-to-many embedding, and $\chi:X\rightarrow Y$ is a classic embedding. For every $x\in X$, we have that $\chi(x)\in f(x)$; here $f(x)$ called the clan of $x$, while $\chi(x)$ is referred to as the chief of the clan of $x$ (or simply the chief of $x$).
	
	\sloppy 
	We say that a clan embedding $f$ has an  additive distortion $+\Delta$ if for every $x,y\in X$, \mbox{$\min_{y'\in f(y)}d_{Y}(y',\chi(x))\le d_{X}(x,y)+\Delta$}. 
	
	A \emph{$(\Delta,\delta)$-clan embedding} is a distribution over clan embeddings with additive distortion $+\Delta$ such that the expected clan size has $\mathbb{E}[|f(x)|] \leq (1+\delta)$ for every $x\in X$. 
\end{definition}

We will construct embeddings for minor-free graphs using a divide-and-concur approach. In order to combine different embeddings into a single one, it will be important that these embeddings are \emph{clique-preserving}. 
\begin{definition}[Clique-copy]
	Consider a one-to-many embedding $f:G\rightarrow 2^H$, and a clique $Q$ in $G$. A subset $Q'\subseteq f(Q)$ is called clique copy of $Q$ if $Q'$ is a clique in $H$, and for every vertex $v\in Q$, $Q'\cap f(v)$ is a singleton.
\end{definition}
\begin{definition}[Clique-preserving embedding]\label{def:cliquePreserving}
	A one-to-many embedding $f:G\rightarrow2^H$ is called clique-preserving embedding if for every clique $Q$ in $G$, $f(Q)$ contains a clique copy of $Q$.
	A clan embedding $(f,\chi)$ is clique-preserving if $f$ is clique-preserving.
\end{definition}

\begin{definition}[Ramsey Type Embedding]\label{def:ramsey}
For a given parameter $\delta \in (0,1)$, a \emph{$(\Delta,\delta)$-Ramsey type embedding} from metric space $(X,d_X)$ into a metric space $(Y,d_Y)$ with additive distortion $+\Delta$ is a distribution over dominating one-to-one embeddings $f:X\rightarrow Y$ such that there is a subset $M\subseteq X$ of vertices for which the following claims hold:
	\begin{enumerate}
		\item For every $u\in X$, $\Pr[u\in M]\ge 1-\delta$.
		\item For every $u\in M$ and $v\in X$, $d_Y(f(u),f(v))\le d_X(u,v)+\Delta$.
	\end{enumerate}
\end{definition}
	
We note that the distortion in \Cref{def:ramsey} is worst-case. 	
	
\subsection{Graph Minor Theory and Known Embeddings}	
We introduce the notation used in the graph minor theory Robertson and Seymour.  A \emph{vortex} is a  graph $W$ equipped with a path decomposition $\{X_1,X_2,\ldots, X_t\}$ (see the discussion after \Cref{def:tree-decomp}) and a sequence of $t$ designated vertices $x_1,\ldots, x_t$, called the \emph{perimeter} of $W$, such that each $x_i \in X_i$ for all $1\leq i \leq t$. The \emph{width} of the vortex is the width of its path decomposition.
	We say that a vortex $W$ is \emph{glued} to a face $F$ of a surface embedded graph $G$ if $W\cap F$ is the perimeter of $W$ whose vertices appear consecutively along the boundary of $F$. 
	
	If $G = G_{\Sigma}\cup W$ consists of planar graph  $G_{\Sigma}$ with a single vortex $W$ of width $h$, we call $G$ a \emph{$h$-vortex planar graph}.
	
	\paragraph{$h$-Multi-vortex-genus graph.~}   a $h$-multi-vortex-genus graph  	is a graph $G = G_{\Sigma}\cup  W_1\cup\dots \cup W_{h}$, where $G_{\Sigma}$ is (cellularly) embedded on a surface $\Sigma$ of genus $h$, and each $W_i$ is a vortex of width at most $h$ glued to a face of $G_{\Sigma}$.

	In constructing a low-treewidth embedding of $K_r$-minor-free graphs,  Cohen-Addad \etal\cite{CFKL20} provided a \emph{deterministic} reduction from planar graphs with one vortex to surface-embedded graphs with many (but constant) vortices. The following lemma is a reinterpretation of Lemma 7 and Lemma 8 in the full version of their paper.
	
	\begin{lemma}[Multiple Vortices and Genus, Lemmas 7 and 8 \cite{CFKL20}, adapted]\label{lm:CKFL-embed-genus-vortex} Suppose that every $n$-vertex $h$-vortex planar graph with diameter $D$ can be deterministically embedded into a graph with treewidth $t(h,\eps,n)$ and additive distortion $+\eps D$ via a one-to-many clique-preserving embedding, then given an $n$-vertex  $h$-multi-vortex-genus graph $G$ of diameter $D$, 
		$G$ can be deterministically embedded into a graph with treewidth $O_h(t(O_h(1),O_h(\eps),O_h(n)))$ and additive distortion $+\eps D$ via a clique-preserving embedding. 
	\end{lemma}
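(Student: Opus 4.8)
The plan is to adapt, essentially as a repackaging, the two surgical reductions of Cohen-Addad \etal~\cite{CFKL20} (their Lemmas 7 and 8): I would first \emph{eliminate the genus}, turning an $h$-multi-vortex-genus graph into a planar graph carrying $O_h(1)$ vortices of width $O_h(1)$, and then \emph{merge} those $O_h(1)$ vortices into a single vortex of width $O_h(1)$, at which point the hypothesized embedding of $h$-vortex planar graphs applies. Both reductions fit one template, which I describe abstractly. Suppose a graph $G$ is obtained from a ``simpler'' graph $\hat G$ by cutting $G$ along a collection $\mathcal{C}=\{C_1,\dots,C_k\}$ of $k=O_h(1)$ closed curves or paths, each a \emph{shortest} non-separating cycle (resp.\ shortest path) at the moment it is processed. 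By tightness of shortest cycles/paths on surfaces (the ``$3$-path condition'', exactly as used in~\cite{CFKL20}), any shortest path of $G$ meets $\bigcup_i C_i$ at most $O_h(1)$ times. On each $C_i$ I place a set $N_i$ of \emph{$\delta$-portals}, a $\delta$-net along $C_i$ with $\delta=\eps D/(c_1 h^{c_2})$ for suitable constants $c_1,c_2$; since each $C_i$, as a subgraph of $G$, has at most $n$ vertices and total length $\le\diam(G)$, we get $|N_i|=O_h(1/\eps)$ and the cut graph has $O_h(n)$ vertices. I cut $G$ along $\mathcal{C}$, splitting each vertex $v$ lying on a cut into copies $v^1,v^2$, and then, inside the faces created by the cuts (where the two copies of each $C_i$ now appear, in compatible cyclic orders), re-insert for every $i$ and every $p\in N_i$ a zero-weight edge $p^1p^2$. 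This keeps $\hat G$ of the required type (planar, or planar-with-$O_h(1)$-vortices, all vortex widths still $O_h(1)$ — the cut curves are routed to avoid vortex perimeters, and any unavoidable bounded interaction is absorbed into the vortex width). The zero-weight portal edges make $\hat G$ dominate $G$ (an $\hat G$-walk projects to a $G$-walk of equal length), and the crossing bound gives $d_{\hat G}(u,v)\le d_G(u,v)+O_h(1)\cdot 2\delta\le d_G(u,v)+\eps D/\Theta_h(1)$; in particular $\diam(\hat G)=O_h(D)$.

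Applying the template, Step~1 cuts $G_\Sigma$ along $O_h(1)$ shortest non-separating cycles until the surface becomes a sphere (a canonical planarizing system of $2g$ cycles suffices for orientable genus $g\le h$, $O(h)$ in general), leaving all $h$ vortices intact. Step~2 cuts the resulting planar graph along $O_h(1)$ shortest paths joining the $O_h(1)$ vortex faces to one common face, re-embeds so that all vortex perimeters lie on that single face, and replaces the $O_h(1)$ path-decompositions by their concatenation along a spine, a single path decomposition whose width, even at the spine junctions, is a sum of $O_h(1)$ individual widths, hence still $O_h(1)$. The outcome is a planar graph $\hat G$ with a \emph{single} vortex of width $h':=O_h(1)$, with $O_h(n)$ vertices and diameter $O_h(D)$. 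I then invoke the hypothesis on $\hat G$ with distortion parameter $\eps':=\eps/(c_3 h^{c_4})$, obtaining a deterministic one-to-many clique-preserving embedding of $\hat G$ into a graph $H$ of treewidth $t(h',\eps',O_h(n))=t(O_h(1),O_h(\eps),O_h(n))$ and additive distortion $+\eps'\cdot\diam(\hat G)=\eps D/\Theta_h(1)$. Composing with the one-to-many map $v\mapsto\{v^1,v^2\}$ (and $v\mapsto\{v\}$ for uncut $v$) yields a one-to-many clique-preserving embedding of $G$ into $H$; choosing $c_1,\dots,c_4$ so the $O_h(1)$ many ``$\eps D/\Theta_h(1)$'' terms (two portal-rerouting terms, one hypothesis term) sum to at most $\eps D$, the additive distortion is $+\eps D$ and the treewidth is $O_h(t(O_h(1),O_h(\eps),O_h(n)))$. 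Domination for $G$ follows since $d_H\ge d_{\hat G}\ge d_G$, and clique-preservation is inherited: a clique $Q$ of $G$ has $O_h(1)$ vertices (it lives in a bounded-genus-plus-bounded-vortex graph), the cut curves can be taken to traverse at most one edge of $Q$, so a consistent choice of copy for each $v\in Q$ forms a clique of $\hat G$, whose clique-copy in $H$ is then a clique-copy of $Q$.

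The main obstacle — and the part that is genuinely the content of~\cite{CFKL20}'s Lemmas~7 and~8 rather than mere accounting — is verifying that the surgery is legitimate: (i) that cutting $G$ along the chosen shortest cycles/paths really leaves a planar graph with only $O_h(1)$ vortices of width $O_h(1)$, which requires controlling how the cut curves meet the vortex perimeters and concatenating vortex path-decompositions without inflating the width; (ii) the crossing bound ``every shortest path of $G$ meets $\bigcup_i C_i$ only $O_h(1)$ times'', which rests on the tightness/$3$-path properties of shortest cycles on surfaces; and (iii) that the zero-weight portal edges can be re-inserted inside the new faces without destroying planarity, which uses that the two copies of each cut curve appear in compatible cyclic orders along the respective face boundaries. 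Granting these topological facts, the bookkeeping for treewidth, additive distortion, vertex count, determinism, and clique-preservation is routine, and is the only new content relative to~\cite{CFKL20}.
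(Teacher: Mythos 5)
Your proposal is consistent with the paper's treatment of \Cref{lm:CKFL-embed-genus-vortex}: the paper gives no proof of this lemma at all, stating it as an adapted reinterpretation of Lemmas 7 and 8 of \cite{CFKL20} and using it as a black box, and your argument follows exactly that route, reconstructing the \cite{CFKL20} surgery (planarize by cutting along shortest non-separating cycles, merge the $O_h(1)$ vortices onto a single face by cutting along shortest paths, portalize the cuts with zero-weight twin edges, and track distortion, treewidth, clique-preservation and the one-to-many composition) while explicitly deferring the genuinely topological steps to \cite{CFKL20} itself. So you take essentially the same approach as the paper, and the additional bookkeeping you supply (portal spacing, $O_h(1)$ crossing bounds via tightness, concatenation of the vortex path decompositions) is in line with the intended argument.
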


	\paragraph{Nearly $h$-embeddability.~}  A graph $G$ is nearly $h$-embeddable if  there is a set of at most $h$ vertices $A$, called \emph{apices}, such that $G\setminus A$ is an $h$-multi-vortex-genus graph $G_{\Sigma}\cup \{W_1, W_2,\ldots, W_{h}\}$.
	
	Cohen-Addad \etal\cite{CFKL20}  showed how to handle the apices by using randomness, effectively extending their \Cref{lm:CKFL-embed-genus-vortex} to obtain a  reduction from planar graphs with one vortex to nearly $h$-embeddable graphs. The resulting embedding is stochastic.  The stochasticity is necessary due to a lower bound for deterministic embedding of Cohen-Addad \etal \cite{CFKL20} who showed that any (deterministic) embedding into graphs with treewidth $o(\sqrt{n})$ must incur an additive distortion at least $\frac{D}{20}$ (Theorem 3 in \cite{CFKL20}). The following lemma is a reinterpretation of their Lemma 9.
	
	 \begin{lemma}[Lemma 9 \cite{CFKL20}, adapted]\label{lm:nearPlanar-to-nearEmb} If every $h$-vortex planar graphs of $n$ vertices and diameter $D$ can be (deterministically or stochastically) embedded into a graph with treewidth $t(h,\eps,n)$ and distortion $+\eps D$ via a one-to-many clique-preserving embedding, then nearly $h$-embeddable graphs with $n$ vertices and diameter $D$  can be \emph{stochastically} embedded into a distribution over graphs with treewidth $O_h\left(t\left(O_h(1),O_h(\eps^2),n\right)\right) + O_h(\eps^{-2})$ and expected additive distortion $+\eps D$ via a clique-preserving embedding. 
	 \end{lemma}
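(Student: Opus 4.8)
The plan is to peel off the at most $h$ apices and then reduce to \Cref{lm:CKFL-embed-genus-vortex}. Write $G=(G_\Sigma\cup W_1\cup\dots\cup W_h)\cup A$ with apex set $A$, $|A|=h'\le h$, so $G':=G\setminus A$ is $h$-multi-vortex-genus; assume $\diam(G)=D$ and (deleting useless edges) that every edge has weight at most $D$. For $a_i\in A$ put $r_i(v):=d_G(a_i,v)\in[0,D]$. The main difficulty is that $G'$ may have diameter far larger than $D$ — an apex can shortcut two vertices that are otherwise very far apart in $G'$ — so feeding $G'$ to \Cref{lm:CKFL-embed-genus-vortex} verbatim yields a useless additive error $\eps\cdot\diam(G')$. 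The whole argument is about restoring the apex shortcuts cheaply while, simultaneously, presenting the apex-free part as a bounded-diameter instance of \Cref{lm:CKFL-embed-genus-vortex}; the expectation enters only through a random ``band decomposition'' around each apex.

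Concretely, I would fix a working resolution $\delta:=\Theta_h(\eps^2 D)$ and, independently for each apex, a uniform offset $\theta_i\in[0,\delta)$, inducing bands $B^{(i)}_k:=\{v:r_i(v)\in[\theta_i+(k-1)\delta,\theta_i+k\delta)\}$, of which there are $O(D/\delta)=O_h(\eps^{-2})$. Each apex $a_i$ is then replaced by a small \emph{gadget}: $O_h(\eps^{-2})$ auxiliary vertices (one per band, plus a centre standing for $a_i$) whose internal weights encode ``distance to $a_i$ rounded to the band'', together with one edge from every $v$ to the auxiliary vertex of its own band; weights are chosen (and all edge weights are finally shifted up by $O(\delta)$) so that every walk through a gadget over-estimates the corresponding $G$-distance and, when $a_i$ lies on a shortest $u$--$v$ path, the gadget furnishes a $u$--$v$ walk of length $d_G(u,v)+O(\delta)=d_G(u,v)+O_h(\eps^2 D)$. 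Since each gadget touches $G'$ only through its $O_h(\eps^{-2})$ auxiliary vertices and every auxiliary vertex is reachable from every $v$ within $\le D+O(\delta)$, I would (i) use the random bands to cut $G'$ into overlapping $h$-multi-vortex-genus pieces each of diameter $O_h(D)$ with $O(1)$-size, clique-preserved interfaces — this is the band surgery of Cohen-Addad et al.~\cite{CFKL20}, and the only place the $h$-multi-vortex-genus structure is really exploited; (ii) apply \Cref{lm:CKFL-embed-genus-vortex}, hence the hypothesis, to each piece with parameter $\Theta_h(\eps^2)$ and glue the outputs into a graph $H'$ of treewidth $O_h\!\big(t(O_h(1),O_h(\eps^2),n)\big)$; and (iii) overlay the gadgets on $H'$ and drop all $O_h(\eps^{-2})$ auxiliary vertices into every bag of its tree decomposition, obtaining the final host $H$ of treewidth $O_h\!\big(t(O_h(1),O_h(\eps^2),n)\big)+O_h(\eps^{-2})$. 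The square in the parameter (and the $O_h(\eps^{-2})$ additive term) is exactly the cost of ensuring, by a union bound over the $h$ apices, that a typical vertex lies a $\Theta(\eps)$-fraction deep inside each of its bands, which forces $\delta=\Theta_h(\eps^2 D)$.

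It then remains to verify the three conclusions. The embedding $f$ is one-to-many (inheriting this from the piece embeddings making up $H'$; no auxiliary vertex lies in any $f(v)$) and clique-preserving, because $H'$ is clique-preserving and no auxiliary vertex belongs to a clique of $G$ (cf.\ \Cref{def:cliquePreserving}). Domination is immediate since every edge of $H$ over-estimates the corresponding $G$-distance. For the expected additive distortion $+\eps D$ — after rescaling $\eps$ by a function of $h$, absorbed into $O_h$ — I would run a Ramsey-type accounting over the offsets: call $u$ \emph{happy} if, for every apex, $u$ sits a $\Theta(\eps)$-fraction deep inside its band, an event of probability $\ge 1-\eps$. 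If $u$ is happy then for every $v$ a shortest $u$--$v$ path either avoids $A$, and is reproduced inside a single piece of $H'$ up to $+O_h(\eps^2 D)$, or first meets some apex $a_i$, whose gadget then routes $u$ to $v$ with cost $d_G(u,v)+O_h(\eps^2 D)$; either way $d_H(f(u),f(v))\le d_G(u,v)+O_h(\eps^2 D)$. If $u$ is unhappy one still has the crude bound $d_H(f(u),f(v))\le r_i(u)+r_i(v)+O(\delta)=O_h(D)$ through any gadget, so $\mathbb{E}[d_H(f(u),f(v))]\le d_G(u,v)+O_h(\eps^2 D)+\eps\cdot O_h(D)=d_G(u,v)+O_h(\eps D)$, as needed. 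I expect the genuinely delicate step to be (i): the band surgery that must at once bound every piece's diameter by $O_h(D)$, keep each piece $h$-multi-vortex-genus, and keep the gluing interfaces of constant size — for which I would follow Cohen-Addad et al.'s Lemma~9 essentially verbatim, the only new ingredient being the substitution of the present hypothesis for their concrete planar-with-one-vortex embedding.
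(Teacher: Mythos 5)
Your proposal should first be measured against what the paper actually does here: the paper gives no proof of \Cref{lm:nearPlanar-to-nearEmb} at all — it is imported, ``adapted'', from Lemma~9 of \cite{CFKL20}, with the only adaptation being that the concrete one-vortex embedding of \cite{CFKL20} is replaced by a black-box hypothesis. Your architecture (peel off the apices, random bands of width $\Theta_h(\eps^2 D)$ around each apex, an apex gadget of $O_h(\eps^{-2})$ auxiliary vertices placed in every bag, piece embeddings at scale $\Theta_h(\eps^2)$ via \Cref{lm:CKFL-embed-genus-vortex}, and a happy/unhappy expectation account giving $+\eps D$) is the same architecture, and at the decisive moment you yourself write that you would follow \cite{CFKL20}'s Lemma~9 ``essentially verbatim''. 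So as a proof, your text is ultimately the same appeal to \cite{CFKL20} that the paper makes by citation — which would be fine, except that the details you do supply for that step are not correct as stated.

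The gap is in your step~(i), which is the entire content of the lemma. Agreement on all bands gives no control on distances in $G\setminus A$: two vertices whose only short connection in $G$ passes through an apex can lie in the same band of every apex and yet be arbitrarily far apart (or disconnected) in $G\setminus A$, so bands of width $\Theta_h(\eps^2 D)$ do not cut $G\setminus A$ into pieces of diameter $O_h(D)$. Restoring bounded diameter requires contracting or representing the near-apex region inside each piece (as in \Cref{lm:Root-Embedding-Planar}, where the ball is contracted into the root), and that is exactly the operation that endangers the $h$-multi-vortex-genus structure your hypothesis needs, since the contracted vertex behaves like an apex; nothing in your sketch addresses this. The claim that the overlapping pieces meet in ``$O(1)$-size, clique-preserved interfaces'' also does not match band decompositions — that picture belongs to the clique-sum step (\Cref{lm:nearlyEmb-to-Minor}), whereas here the reconnection must go through the apices/auxiliary vertices that sit in every bag. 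Finally, the happy-case analysis fails quantitatively: being a $\Theta(\eps)$-fraction deep in a band of width $\eps^2 D$ gives a margin of only $\Theta(\eps^3 D)$, while an apex-avoiding shortest path can have length up to $D$ and hence crosses up to $\Theta(\eps^{-2})$ band boundaries, so happiness does not confine such a path to a single piece, and the bound $d_H(f(u),f(v))\le d_G(u,v)+O_h(\eps^2 D)$ in that case is unjustified. These are precisely the points where the $\eps\to\eps^2$ degradation and the stochasticity of the lemma are earned, and they are left both unproved and mis-specified in the proposal.
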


	\paragraph{$h$-Clique-sum.~} A graph $G$ is an $h$-clique-sum of two graphs $G_1$ and $G_2$, denoted by $G  = G_1\oplus_h G_2$, if there are two cliques of size exactly $h$ each such that $G$ can be obtained by  identifying vertices of the two cliques and removing some clique edges of the resulting identification. The shared clique between $G_1$ and $G_2$ is called the \emph{adhesion} of the two graphs. The adhesion is also called the joint set of the two graphs in~\cite{CFKL20}. 
	
	Note that clique-sum is not a well-defined operation since the clique-sum of two graphs is not unique due to the clique edge deletion step.
	
	The celebrated theorem of Robertson and Seymour (\Cref{thm:RS}, \cite{RS03}) intuitively said that every $K_r$-minor-free graph can be decomposed into nearly embeddable graphs glued together following a tree-like structure by clique-sum operations. 
	
	\begin{theorem}[Theorem~1.3~\cite{RS03}] \label{thm:RS} There is a constant $h = O_r(1)$ such that any $K_r$-minor-free graph $G$ can be decomposed into a tree $\mathbb{T}$ where each node of $\mathbb{T}$ corresponds to a nearly $h$-embeddable graph such that $G = \cup_{X_iX_j \in E(\mathbb{T})} X_i \oplus_h X_j$.
	\end{theorem}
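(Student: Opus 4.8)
The plan is to invoke the Graph Minor Structure Theorem of Robertson and Seymour: \Cref{thm:RS} is essentially the main result of Graph Minors XVI~\cite{RS03}, so rather than reproving it I would only sketch the high-level architecture of the argument and indicate how the ``rooted tree of nearly $h$-embeddable pieces glued along $h$-clique-sums'' formulation is extracted from the standard statement.

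First I would fix a large constant $w = w(r)$ and consider the \emph{tangles} of order $w$ in $G$ --- consistent choices, for every separation of $G$ of order less than $w$, of which side is ``large''. If $G$ admits no tangle of order $w$, then $G$ has tree-width $O(w) = O_r(1)$, in which case $G$ is (trivially) nearly $h$-embeddable: a graph of bounded tree-width is realized as a single vortex of bounded width glued to a face of a planar graph, so $\mathbb{T}$ is a single node. Otherwise, the heart of the argument --- the Flat Wall Theorem together with the step showing that excluding $K_r$ forces surface structure --- shows that every tangle of sufficiently large order ``lives on a surface'': there is a bounded apex set $A$ such that $G - A$ contains a large subgraph drawn on a surface $\Sigma$ of bounded genus, with all exceptional parts confined to a bounded number of vortices of bounded width glued to faces of the drawing. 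Peeling off $A$ and organizing these local pictures over all tangles of $G$ via the tangle tree-decomposition machinery yields a tree-decomposition of $G$ of adhesion $O_r(1)$ whose every torso is nearly $h$-embeddable with $h = O_r(1)$.

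To match the clique-sum formulation in the statement I would, inside every torso, turn each adhesion into a clique of size exactly $h$ (padding with dummy vertices when the adhesion is smaller) and add all edges of these cliques; the torsos then become the pieces $X_i$, and gluing along each tree edge $X_iX_j$ is precisely an $h$-clique-sum, so that after deleting the padding edges we recover $G = \cup_{X_iX_j\in E(\mathbb{T})} X_i \oplus_h X_j$. Rooting $\mathbb{T}$ arbitrarily and orienting each clique-sum from parent to child produces the rooted tree of pieces used in our later constructions (see \Cref{sec:addNotation}).

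The main obstacle is that \Cref{thm:RS} has no short proof: the Flat Wall Theorem, the ``excluded-minor-forces-surface-structure'' step, and the construction of the tangle tree-decomposition each require substantial work --- collectively hundreds of pages of the Graph Minors series --- so a self-contained account is out of scope. What can be done cleanly here is to cite~\cite{RS03} for the structural decomposition itself and to verify only the cosmetic repackaging into a rooted tree of $h$-clique-sums, which is routine.
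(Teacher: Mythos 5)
The paper offers no proof of \Cref{thm:RS}: it is stated verbatim as Theorem~1.3 of Robertson and Seymour~\cite{RS03}, which is exactly what you propose, so your approach---citing the Graph Minor Structure Theorem and verifying only the routine repackaging of the bounded-adhesion torsos into $h$-clique-sums along a rooted tree---matches the paper's. (One small caution in your architectural sketch: bounded treewidth does not imply bounded pathwidth, so the no-large-tangle case is not realized as a single vortex and a one-node $\mathbb{T}$, but rather as a tree of bounded-size pieces glued by clique-sums; this does not affect your plan, since the structural content is deferred to~\cite{RS03} in any case.)
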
 
	
	The graphs corresponding to the nodes in the clique-sum decomposition above are referred to as \emph{pieces}. A piece may not be a subgraph of $G$ since some edge in a clique involving the clique-sum operation of the piece with another piece may not be present in $G$. However, we can modify the graph to guarantee that every piece is a subgraph of $G$ by adding an edge $(u,v)$ in the piece to $G$ if the edge does not appear in $G$. The weight of $(u,v)$  is set to be the shortest distance between $u$ and $v$ in $G$. This operation does not change the Robertson-Seymour decomposition of $G$ nor its metric shortest path.

\section{Stochastic Embedding of Minor-free Graphs and Applications} \label{sec:MinorToTW}

\subsection{Stochastic Embedding of Minor-free Graphs (Proof of \Cref{thm:MinorToTreewidth})}\label{subsec:MinorToTW}

In this section, we construct a stochastic additive embedding for minor-free graphs as claimed in \Cref{thm:MinorToTreewidth}. We will construct a one-to-many embedding for $G$. Afterwards, the embedding can be converted into a one-to-one embedding by retaining a single copy of each vertex. We refer readers to \Cref{subsec:technique} for a high-level overview of our construction.  We start with an embedding for $h$-vortex planar graphs. In a nutshell, we improve the treewidth in the construction of Cohen-Addad \etal \cite{CFKL20} in two places: (1) the embedding of $h$-vortex planar graphs and (2) gluing the embeddings of nearly embeddable graphs, a.k.a. pieces, via clique-sum operations. Our results are described in the following lemmas. 

\begin{lemma}\label{lm:nearlyPlanar-emb}	Given an $n$-vertex $h$-vortex planar graph $G_{\Sigma}\cup W$ of diameter $D$ and a parameter $\epsilon < 1$, there is a polynomial time algorithm that constructs a one-to-many (deterministic) clique-preserving embedding $f: V(G) \rightarrow H$ into a graph $H$ of treewidth at most $O(\frac{h(\log \log n)^2}{\eps})$ and additive distortion $+\epsilon D$.
\end{lemma}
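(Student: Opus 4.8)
The plan is to run the planar construction of \Cref{subsec:embedd_AFversion} almost verbatim, the only new ingredient being the vortex. The starting point is the observation of Cohen-Addad \etal \cite{CFKL20} that an $h$-vortex planar graph $G=G_\Sigma\cup W$ admits \emph{balanced shortest path separators}: one can build a decomposition $\Phi$ that is, up to the vortex, an $O(1)$-\RSPD in the sense of \Cref{def:RSPD} --- a binary tree of height $O(\log n)$ and $O(n)$ nodes, each node $\alpha$ a piece $X_\alpha$ whose boundary consists of $O(1)$ shortest $r$-paths $\mathcal{Q}_\alpha$, leaves having $O(1)$ internal vertices, and pieces satisfying the separation property --- with one addition: each piece also carries a \emph{vortex window}, i.e. a set $W_\alpha\subseteq V(W)$ of $O(h)$ vertices (a contiguous chunk, or a constant number of chunks, of the path decomposition of $W$) such that cutting along $\mathcal{Q}_\alpha\cup W_\alpha$ separates $X_\alpha$ from the rest. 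Since the pathwidth of $W$ is $h$, the remaining vortex can always be split at a single bag of size $h$, so windows stay of size $O(h)$ and each piece inherits only $O(1)$ of them through the recursion. Such a decomposition (with a compact representation, computable in polynomial time) is provided by \cite{CFKL20}; I would treat it as a black box.

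With $\Phi$ in hand, the construction mirrors \Cref{subsec:embedd_AFversion}. Put $\delta=\tfrac{\eps D}{\log\log n}$, place $\delta$-portals on every boundary shortest path as before, and for each node $\alpha$ set $P_\alpha=\big(\bigcup_{Q\in\mathcal{Q}_\alpha}N(Q,\delta)\big)\cup W_\alpha$; since $|\mathcal{Q}_\alpha|=O(1)$ and $|W_\alpha|=O(h)$ we get $|P_\alpha|=O\!\big(\tfrac{\log\log n}{\eps}+h\big)=O\!\big(\tfrac{h\log\log n}{\eps}\big)$. Build an $O(\log\log n)$-hop emulator $K_\Phi$ of the tree $\Phi$ with a width-$O(\log\log n)$ tree decomposition $\mathcal{T}_\Phi$ using \Cref{thm:TreeSpanner}. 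The host graph $H$ is obtained exactly as in Steps (1)--(4): a clique on the copies $\widetilde{P_\alpha}$ for every $\alpha$, a biclique $\widetilde{P_\alpha}\times\widetilde{P_\beta}$ for every edge $\{\alpha,\beta\}\in K_\Phi$, and for every leaf $\alpha$ a clique on $I_\alpha\cup\widetilde{P_\alpha}$, with edge weights equal to distances in $G$. Replacing each $\alpha$ in every bag of $\mathcal{T}_\Phi$ by $\widetilde{P_\alpha}$ and attaching a leaf bag $I_\alpha\cup\widetilde{P_\alpha}$ for each leaf $\alpha$ gives a tree decomposition of $H$ of width $O(\log\log n)\cdot O(\tfrac{h\log\log n}{\eps})=O(\tfrac{h(\log\log n)^2}{\eps})$, exactly as in \Cref{lm:tw-planar-emb-new}.

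For the distortion I would reuse \Cref{lm:separation-planar}, \Cref{lm:boundary-distortion-planar} and \Cref{lm:distortion-planar-emb} essentially unchanged. The separation property holds for $\Phi$ by \cite{CFKL20} (this is precisely what makes shortest-path separators robust to the vortex), so for any shortest path $Q_{uv}$ between $u\in X_\alpha$, $v\in X_\beta$ and any node $\lambda$ on $\Phi[\alpha,\beta]$ we have $Q_{uv}\cap(\mathcal{Q}_\lambda\cup W_\lambda)\ne\emptyset$; when the crossing vertex lies on a boundary shortest path we route to the nearest $\delta$-portal, losing $2\delta$, and when it lies in the vortex window $W_\lambda$ it is already a vertex of $P_\lambda$ so the routing is exact. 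Since $K_\Phi$ has hop diameter $O(\log\log n)$, only $O(\log\log n)$ crossings occur along the routed path, for a total additive error $O(\log\log n)\cdot 2\delta=O(\eps)D$; rescaling $\eps$ gives $+\eps D$, and the dominating direction is immediate from the weighting of $H$. Finally, $f$ must be \emph{clique-preserving} (\Cref{def:cliquePreserving}): the only cliques that matter downstream are adhesions of size $\le h$, and, as in \cite{CFKL20}, the decomposition can be arranged so that each such clique lies inside a single face or inside $W$ and is therefore kept intact inside some piece $X_\alpha$ (hence inside $I_\alpha\cup\mathcal{Q}_\alpha\cup W_\alpha$, whose image is a clique in $H$ by construction); selecting the corresponding copies yields a clique copy.

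The main obstacle is not the distortion arithmetic --- that is copied from the planar case --- but setting up the vortex-aware decomposition $\Phi$ so that three things hold simultaneously: (i) each piece's boundary is $O(1)$ shortest paths plus an $O(h)$-sized vortex window, and windows do not proliferate across the $O(\log n)$ recursion levels; (ii) the separation property survives the presence of $W$, so that the low-hop routing through $K_\Phi$ remains valid; and (iii) adhesion cliques are never split by a separator, which is what makes the embedding clique-preserving and hence usable in the clique-sum gluing of \Cref{thm:RS}. All three were established by Cohen-Addad \etal; the contribution here is to observe that, once they hold, swapping their depth-$O(\log n)$ recursion for the $O(\log\log n)$-hop emulator of \Cref{thm:TreeSpanner} and coarsening the portal spacing to $\eps D/\log\log n$ drops the treewidth from $O(h\log n/\eps)$ to $O(h(\log\log n)^2/\eps)$ while preserving every property needed later. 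The same construction, with the $(1+\eps)$-approximate distance oracle substituted for exact distances, would also work in near-linear time, but for this lemma polynomial time suffices.
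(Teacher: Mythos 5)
Your overall route is the same as the paper's: treat the vortex-aware shortest-path decomposition of Cohen-Addad et al.\ as a black box (the paper packages it as an $O(h)$-\RSPD, Lemma~\ref{lem:RSPD}, rather than ``$O(1)$ paths plus a vortex window,'' but either way each node carries $O(h\log\log n/\eps)$ portals), then rerun the planar construction of \Cref{subsec:embedd_AFversion} on top of the $O(\log\log n)$-hop emulator of \Cref{thm:TreeSpanner}; the treewidth and distortion arithmetic you give matches \Cref{lm:tw-planar-emb-new}, \Cref{lm:separation-planar}, \Cref{lm:boundary-distortion-planar} and \Cref{lm:distortion-planar-emb}.

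The genuine gap is in your clique-preservation argument. You claim that a clique $Q$ sitting inside a piece $X_\alpha$ is automatically preserved because ``the image of $I_\alpha\cup\mathcal{Q}_\alpha\cup W_\alpha$ is a clique in $H$ by construction.'' That is false for the construction you (and the paper) use: by \Cref{obs:Ialpha-Palpha} only the image of $I_\alpha\cup P_\alpha$ induces a clique, where $P_\alpha$ is the sparse set of $\delta$-portals, not the full boundary $\mathcal{Q}_\alpha$. A vertex $v$ of $Q$ that lies on a boundary path but is not a $\delta$-portal gets only its Step-4 copy, whose edges go exclusively to $\widetilde{P_{\alpha_v}}$ for \emph{some} leaf $\alpha_v$ having $v$ on its boundary (possibly $\alpha_v\neq\alpha$); such copies are adjacent neither to each other nor to the copies of the internal vertices of $Q$, so $f(Q)$ need not contain a clique copy. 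This is precisely why the paper does extra work: it first proves Lemma~\ref{lm:clique-leaf} (every clique lies in some leaf piece, by induction using the separation property), and then \emph{modifies} the host graph, adding for each of the $O_h(n)$ maximal cliques $Q$ a fresh copy $\widetilde{Q}$ joined by a biclique to $\widetilde{I_\alpha}\cup\widetilde{P_\alpha}$, together with a new bag $\widetilde{Q}\cup\widetilde{I_\alpha}\cup\widetilde{P_\alpha}$ in the tree decomposition; since $|Q|=O(h)$ the width bound survives. Crucially, this modification forces a re-verification of the additive distortion for the new copies (Claim~\ref{clm:nearlyPlanar-close} plus a triangle-inequality argument), which costs another $O(\eps)D$ and is absorbed by rescaling. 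Your proposal omits both the modification and the re-verification, and the restriction to ``adhesion cliques of size $\le h$'' does not rescue it, since even such a clique can straddle portal and non-portal boundary vertices; moreover \Cref{def:cliquePreserving} (as invoked by \Cref{lm:CKFL-embed-genus-vortex} and \Cref{lm:nearPlanar-to-nearEmb}) demands the property for every clique, which the paper secures by handling all maximal cliques.
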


The proof of \Cref{lm:nearlyPlanar-emb} is presented in \Cref{subsec:nearlyPlanar}.

\begin{restatable}{lemma}{nearlyEmbToMinor}\label{lm:nearlyEmb-to-Minor} If  nearly $h$-embeddable graphs of $n$ vertices and diameter $D$ can be stochastically embedded into a graph with treewidth $t(h,\eps,n)$  by a clique-preserving embedding $f$ of expected additive distortion $+\eps D$, then $K_r$-minor-free graphs with $n$ vertices and diameter $D$  can be stochastically embedded into a graph with treewidth $t(h(r),\eps,n) + h(r)$ and expected additive distortion $+\eps D$ for some function $h$ depending on $r$ only.
\end{restatable}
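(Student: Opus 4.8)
The plan is to decompose $G$ by the Robertson--Seymour decomposition (\Cref{thm:RS}) into a tree $\mathbb{T}$ of nearly $h$-embeddable \emph{pieces} $\{G_c\}$, with $h=h(r)=O_r(1)$, glued along adhesions of size at most $h$; using the modification described after \Cref{thm:RS} we may assume each $G_c$ is a subgraph of $G$ with $d_{G_c}\equiv d_G$ on $V(G_c)$, so in particular $\diam(G_c)\le D$ and $|V(G_c)|\le n$, and the hypothesis applies to every piece with parameters $(h,\eps,D)$. I would root $\mathbb{T}$, write $J_c=V(G_c)\cap V(G_{p(c)})$ for the parent adhesion of a non-root piece $G_c$ (a clique in $G$ of size $\le h$; $J_{\mathrm{root}}=\emptyset$), and embed each $G_c$ by the hypothesis, obtaining a clique-preserving one-to-many stochastic embedding $f_c:G_c\to H_c$ of expected additive distortion $+\eps D$ together with a tree decomposition $\mathcal{T}_c$ of $H_c$ of width $t(h,\eps,n)$. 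After a minor adjustment one may assume each $f_c$ sends the vertices of every adhesion of $G_c$ (its parent adhesion and all its child adhesions) to a single fixed clique copy; this is exactly where the clique-preserving form of the hypothesis is used.

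The host graph is then built bottom-up. For a piece $G_c$ I would form $\widehat{H}_c$ (a host for the whole subtree of $G_c$) from $H_c$ by, for every child $G_{c'}$, recursively forming $\widehat{H}_{c'}$ and \emph{identifying} the fixed clique copy of $J_{c'}$ inside $\widehat{H}_{c'}$ with the fixed clique copy of $J_{c'}$ inside $H_c$ (well defined since $J_{c'}$ is a child adhesion of $G_c$); and, if $G_c$ is not the root, adding for every $a\in J_c$ and every $z\in V(G_c)$ a direct edge from the fixed copy of $a$ to an appropriate copy of $z$ (for $z$ in a child adhesion, the copy identified with that child's host; otherwise arbitrary), of weight $d_{G_c}(a,z)$. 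The output is $H=\widehat{H}_{\mathrm{root}}$, made one-to-one at the end by keeping one copy per vertex. I would note that every edge of $H$ between copies of $p$ and $q$ has weight $\ge d_G(p,q)$: for edges coming from some $H_c$ this is the domination of $f_c$, and for direct edges it is $d_{G_c}(a,z)=d_G(a,z)$, using that a shortest $G$-path between two vertices of one piece never leaves that piece because the adhesions are cliques. For the treewidth I would show by induction that $\widehat{H}_c$ admits a tree decomposition of width $\le t(h,\eps,n)+h$: take $\mathcal{T}_c$, add the $\le h$ fixed copies of $J_c$ to every bag (which covers all direct edges added at $G_c$), and attach the decompositions of the $\widehat{H}_{c'}$ along single edges; the connectivity condition survives because each set of shared vertices (a fixed clique copy) lies in every bag of the corresponding $\mathcal{T}_{c'}$-part and in some bag of $\mathcal{T}_c$. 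Crucially, a piece's parent adhesion is added only to \emph{its own} bags, not to those of descendants, so the $+h$ overhead never compounds; hence $\tw(H)\le t(h,\eps,n)+h=t(h(r),\eps,n)+h(r)$. The whole construction runs in polynomial time.

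For the distortion, $H$ dominates $G$ by the edge-weight lower bound and the triangle inequality along the sequence of preimages of a path's vertices. The core is an \emph{exact-routing claim}, proved by induction on subtree size: for every piece $G_c$, every vertex $p$ in the subtree of $G_c$, and every $a\in J_c$, the distance in $\widehat{H}_c$ between the relevant copy of $p$ and the fixed copy of $a$ is \emph{exactly} $d_G(p,a)$. The lower bound is domination; for the upper bound, if $p\in V(G_c)$ use the direct edge of weight $d_{G_c}(a,p)=d_G(a,p)$, and if $p$ lies deeper, say in the subtree of a child $G_{c'}$, take a shortest $p$--$a$ path, let $b$ be its crossing point with $J_{c'}$, route $p$ to the fixed copy of $b$ at distance $d_G(p,b)$ by the inductive hypothesis, pass through the identified copy of $b$, and finish with the direct edge from it to the fixed copy of $a$ of weight $d_{G_c}(a,b)=d_G(a,b)$; since $b$ lies on the shortest path this telescopes to $d_G(p,a)$. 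Given this, fix $u,v$, let $G_i$ be the lowest common ancestor in $\mathbb{T}$ of a piece containing $u$ and a piece containing $v$, let $Q_{uv}$ be a shortest $u$--$v$ path, and let $x$ (resp.\ $y$) be the first (resp.\ last) vertex of $Q_{uv}$ in $V(G_i)$. Because the child adhesions of $G_i$ are cliques, any excursion of $Q_{uv}$ into a child subtree can be short-cut by a clique edge without increasing the length, so we may assume $Q_{uv}[x,y]\subseteq V(G_i)$, whence $d_{G_i}(x,y)=d_G(x,y)$, and $x$ (resp.\ $y$) lies in a child adhesion of $G_i$ or equals $u$ (resp.\ $v$). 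The exact-routing claim routes $u$ to the fixed copy of $x$ at distance $d_G(u,x)$ and $v$ to the fixed copy of $y$ at distance $d_G(v,y)$, and $f_i$ gives $\mathbb{E}[d_{H_i}(\text{fixed copy of }x,\text{fixed copy of }y)]\le d_{G_i}(x,y)+\eps D=d_G(x,y)+\eps D$. Since $x,y$ lie on $Q_{uv}$ and are independent of the randomness, concatenating the three subpaths yields $\mathbb{E}[d_H(f(u),f(v))]\le d_G(u,x)+d_G(x,y)+\eps D+d_G(y,v)=d_G(u,v)+\eps D$.

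\textbf{Main obstacle.} The hard part is making the exact-routing claim go through: it is precisely what keeps the $+\eps D$ error from being incurred once per level of $\mathbb{T}$ (which may be $\Theta(\log n)$ deep, and is the origin of the spurious $+O_r(\log n)$ term in \cite{CFKL20}). It hinges on (i) adding the direct edges from each parent adhesion to its entire piece, (ii) exploiting that adhesions are size-$\le h$ cliques in $G$, so shortest paths cross them without detours and $d_{G_c}$ agrees with $d_G$ inside a piece, and (iii) the ``fixed clique copy'' bookkeeping that keeps the identifications across pieces consistent — the one place the clique-preserving form of the hypothesis is genuinely needed (one must make sure that, for a vertex lying in several adhesions of one piece, all the copies involved coincide). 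The treewidth argument is otherwise routine, the point being only that the parent adhesion is added to no descendant's bags.
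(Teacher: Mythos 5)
Your proposal is correct and follows essentially the same route as the paper's proof: the same rooted Robertson–Seymour clique-sum tree, per-piece clique-preserving embeddings glued by identifying the adhesions' clique copies, direct edges from each parent adhesion to (copies of) the whole piece with the adhesion added only to that piece's bags (so the $+h(r)$ never compounds), and an inductively proved exact-distance routing claim combined with an LCA argument so that the $+\eps D$ error is paid only once, in the LCA piece. The only differences are cosmetic bookkeeping (the paper adds edges from the adhesion copy to \emph{all} copies in $V(H_i)$ and phrases the exact-routing claim via a max over copies, which sidesteps your "which copy" caveats).
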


The proof of \Cref{lm:nearlyEmb-to-Minor} is presented in \Cref{subsec:general-minor}. We are now ready to prove \Cref{thm:MinorToTreewidth}, which we restate below for convenience.

\MinorToTreewidth*
\begin{proof} By plugging \Cref{lm:nearlyPlanar-emb} to \Cref{lm:nearPlanar-to-nearEmb} (here $\hat{t}(h,\eps,n) = O(\frac{h(\log \log n)^2}{\eps})$), we obtain a stochastic, clique-preserving  embedding of nearly $h$-embeddable graphs with additive distortion $+\eps D$ and treewidth:
	
	\begin{equation}\label{eq:nearlyem-tw}
		O_h\left(\hat{t}\left(O_h(1),O_h(\eps^2),n\right)\right) + O_h(\eps^{-2}) = O_h(\frac{(\log \log n)^2}{\eps^2})~.
	\end{equation}
By applying \Cref{lm:nearlyEmb-to-Minor} with  $t(h,\eps,n) = O_h(\frac{(\log \log n)^2}{\eps^2})$ (due to \Cref{eq:nearlyem-tw}), we get a stochastic embedding of $K_r$-minor-free graphs with additive distortion $+\eps D$ and treewidth:
	\begin{equation*}
		t(h(r),\eps,n) + h(r) =  O_r(\frac{(\log \log n)^2}{\eps^2}) + O_r(1) = O_r(\frac{(\log \log n)^2}{\eps^2})~,
	\end{equation*}
as desired.
\end{proof}

\subsubsection{$h$-vortex planar graphs}\label{subsec:nearlyPlanar}

In this section, we focus on constructing an additive embedding for $h$-vortex planar graphs as described in \Cref{lm:nearlyPlanar-emb}. We follow the same idea in the construction of an additive embedding for planar graphs in \Cref{sec:PlanarToTW}. Here we rely on (an implicit) construction of an $O(h)$-\RSPD for $h$-vortex planar graphs by Cohen-Addad \etal \cite{CFKL20}  (see Section 5.1. in the \href{https://arxiv.org/pdf/2009.05039.pdf}{full version}).

\begin{lemma}[\cite{CFKL20}]\label{lem:RSPD}Given an edge-weighted  $h$-vortex planar graph $G = G_{\Sigma}\cup W$  with $n$ vertices, an $O(h)$-\RSPD $\Phi$ of $G$ can be constructed in polynomial time.
\end{lemma}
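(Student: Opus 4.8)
The plan is to adapt the classical recursive shortest-path separator decomposition of planar graphs (the fundamental-cycle separator of Lipton--Tarjan, in the shortest-path form of Thorup~\cite{Thorup04} and Klein~\cite{Klein02}) so that it also absorbs the single vortex $W$, following the construction in Section~5.1 of the full version of~\cite{CFKL20}. Fix an arbitrary vertex $r$ of the planar part $G_\Sigma$ and compute a single shortest path tree $T_r$ of the \emph{whole} graph $G=G_\Sigma\cup W$; all boundary paths produced below will be subpaths of $T_r$ with endpoint $r$, so they are automatically shortest paths rooted at $r$ and remain so at every level of the recursion. Recall the standard fact: for a planar graph equipped with a shortest path tree there are $O(1)$ root paths whose union is a balanced separator, obtained from a balanced fundamental cycle (two root paths plus one non-tree edge) via the Miller cycle-separator theorem; removing these $O(1)$ root paths splits $G_\Sigma$ into pieces each of (weighted) size at most $\tfrac{2}{3}n$.

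The new ingredient is the vortex. The vortex $W$ is glued to a single face $F$ of $G_\Sigma$, its perimeter $x_1,\dots,x_t$ appears consecutively along $\partial F$, and $W$ carries a path decomposition $X_1,\dots,X_t$ of width $\le h$ with $x_i\in X_i$. A balanced cycle separator of $G_\Sigma$ meets the face $F$ in at most two arcs, so it partitions the cyclic sequence $x_1,\dots,x_t$ into a bounded number of contiguous intervals; at each breakpoint between two intervals we add the corresponding bag $X_i$ (of size $\le h+1$) to the separator. Because the $X_i$'s form a path decomposition of $W$, cutting $W$ at a bag $X_i$ genuinely splits $W$ into sub-vortices that communicate with each other only through $X_i$; hence each resulting piece receives a contiguous sub-interval of the perimeter together with the portion of $W$ bagged over it, and it attaches to the rest of $G$ only through the $O(1)$ fresh root paths and the $O(h)$ added bag vertices. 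Representing each added vertex $v$ by the degenerate root path $T_r[r,v]$ (equivalently, augmenting the separator by the $O(h)$ root paths ending at those vertices, as in~\cite{CFKL20}) keeps the number of boundary paths per piece at $O(h)$, and each piece is again an $h$-vortex planar graph with $O(h)$ designated boundary root paths, so the same step applies recursively.

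Since the separator is balanced, after $O(\log n)$ levels every leaf piece has only $O(h)$ internal vertices; this yields the binary tree $\Phi$ of height $O(\log n)$ with $O(n)$ nodes required by \Cref{def:RSPD}. Property P2(a) holds because each piece carries $O(h)$ boundary root paths; P2(b) holds by construction ($X_\alpha=X_{\beta_1}\cup X_{\beta_2}$ with the shared boundary being exactly the separating paths and bags); and P2(c) is the separation property of cycle/bag separators: a path from an internal vertex of $X_\alpha$ to a vertex outside $X_\alpha$ must cross one of the $O(h)$ boundary root paths of $X_\alpha$. For the running time, each separator is computed by a shortest-path/planar-separator computation polynomial (indeed near-linear) in the current piece size; the pieces at a fixed level have total size $O(n)$ and there are $O(\log n)$ levels, so the whole construction runs in polynomial time (Thorup's refinement gives $O(n\log n)$, but polynomial suffices here). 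The main obstacle is making the ``cut through the vortex'' precise --- verifying that bagging at a path-decomposition vertex really decomposes $W$ into independently attaching pieces and that the mixed planar-plus-vortex separators still satisfy P2(c) when $G_\Sigma$ and $W$ are glued --- together with the bookkeeping needed to express the $O(h)$ extra bag vertices within the ``$O(h)$ shortest paths rooted at $r$'' formalism of \Cref{def:RSPD}; this is exactly the content of~\cite{CFKL20}'s construction, which we invoke.
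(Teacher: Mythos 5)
The paper gives no proof of this lemma at all: it is imported as a black box from the (implicit) construction in Section~5.1 of the full version of \cite{CFKL20}, and your proposal ultimately does the same, deferring to that construction for the delicate step of cutting through the vortex via its path-decomposition bags. Your sketch of the recursive shortest-path-separator decomposition absorbing the vortex is a faithful outline of exactly that cited construction, so the approach matches the paper's.
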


\begin{proof}[Proof of \Cref{lm:nearlyPlanar-emb}]
	The deterministic embedding in \Cref{thm:PlanarToTreewidth} was constructed using \RSPD and did not exploit any planar properties (other than the distance oracle \Cref{lm:Thorup-oracle} for obtaining an efficient implementation). Hence, we can use the exact same construction for $h$-vortex planar graph based on their $O(h)$-\RSPD  guaranteed in \Cref{lem:RSPD}. The only difference is, since each node $\alpha\in\Phi$ can contain up to $O(h)$ shortest paths in its boundary $\mathcal{Q}_\alpha$, the width of the bags in the resulting tree decomposition might increase by up to an $O(h)$ factor. Thus other than the clique-preserving property, all the other properties guaranteed by the lemma hold.

In order to obtain the clique preserving property, we slightly modify the embedding by adding additional edges. We rely on the following structure of the \RSPD, which is implicit in Cohen-Addad \etal \cite{CFKL20}. We include the proof here for completeness.

\begin{lemma}[\cite{CFKL20}]\label{lm:clique-leaf} Let $Q$ be any clique in an $h$-vortex planar graph $G_{\Sigma}\cup W$, and $\Phi$ be an \RSPD of $G_{\Sigma}\cup W$. Then, there is a leaf $\alpha\in \Phi$ such that $Q\subseteq X_{\alpha}$.
\end{lemma}
\begin{proof} We prove by induction that at any level of $\Phi$, there exists a node $\alpha \in \Phi$ such that $Q\subseteq X_{\alpha}$. The base case is when $\alpha$ is the root of $\Phi$. The claim trivially holds since $X_{\alpha} = G_{\Sigma}\cup W$. 
	
Let $\alpha$ is a node at level $\ell$ for some $\ell \geq 0$ such that $Q\subseteq X_{\alpha}$. Let $\beta_1,\beta_2$ be two children of $\alpha$. Let $Y =  V(X_{\beta_1})\cap  V(X_{\beta_2})$. Observe that there is no edge in $G$ between $V(X_{\beta_1})\setminus Y$ and $ V(X_{\beta_2}) \setminus Y$ by property (2c) of  \RSPD (\Cref{def:RSPD}). Thus,  either $Q\cap (V(X_{\beta_1})\setminus Y) = \emptyset$ or $Q\cap (V(X_{\beta_2})\setminus Y) = \emptyset$ since it is a clique. In the former case, $Q\subseteq X_{\beta_2}$, and in the latter case, $Q\subseteq X_{\beta_1}$. 
\end{proof}
Consider the $O(h)$-\RSPD of $G_{\Sigma}\cup W$ given by \Cref{lem:RSPD}. 
$\Phi$ has  $O(n)$ nodes. 
Recall that we portalized all the shortest paths, and then applied the (4-step) construction in \Cref{subsec:embedd_AFversion} to the $O(h)$-\RSPD $\Phi$ to construct an embedding $f$, the host graph $H$, and a tree decomposition $\mathcal{T}$ of $H$.
We remark that there are only $O_h(n)$ maximal cliques in $G_{\Sigma}\cup W$ \cite{ELS10}. 
\footnote{The degeneracy of a graph is the minimal $d$ such that there is an order $v_1v_2\dots v_n$ over the vertices where for every $i$, $v_i$ has at most $d$ neighbors among $v_{i+1}\dots v_n$. 
Eppstein, L{\"o}ffler 
and Strash \cite{ELS10} showed that every graph with degeneracy $d$ has at most $(n-d)3^{\frac d3}$ maximal cliques. As every $n$-vertex $K_r$-minor-free graph has average degree at most $O(r\cdot\sqrt{\log r})$ \cite{Kostochka82,Thomason84}, it has degeneracy at most  $O(r\cdot\sqrt{\log r})$, and hence $O_r(n)$ maximal cliques.
} 

Let $Q$ be a maximal clique in $G_{\Sigma}\cup W$, according to \Cref{lm:clique-leaf}, there is a leaf $\alpha \in \Phi$ such that $Q\subseteq X_{\alpha}$. Recall that $P_{\alpha}$ is the set of portals of paths associated with $\alpha$, and $I_{\alpha}$ is the set of internal vertices of $\alpha$.  Let $\widetilde{I_{\alpha}}$ and $\widetilde{P_{\alpha}}$ be the images of $I_{\alpha}$ and $P_{\alpha}$ in $H$, respectively.  By \Cref{obs:Ialpha-Palpha}, $\widetilde{I_{\alpha}}\cup \widetilde{ P_{\alpha}}$ induces a clique in $H$.   Let $B_{\alpha}$ be the bag of $\mathcal{T}$ that contains $\widetilde{I_{\alpha}}\cup \widetilde{ P_{\alpha}}$. We add a new copy $\widetilde{Q}$ of $Q$ to $H$, add the edge set $\widetilde{Q}\times(\widetilde{I_{\alpha}}\cup \widetilde{ P_{\alpha}})$ to $E(H)$, and update $f$ accordingly. 
We then make a new bag $B_{Q} = \tilde{Q}\cup \widetilde{I_{\alpha}}\cup \widetilde{ P_{\alpha}}$ and connect $B_Q$ to $B_{\alpha}$ by adding an edge $(B_Q,B_\alpha)$ to $\mathcal{E}(\mathcal{T})$. Note that $|Q| = O(h)$ since $G_{\Sigma} \cup W$ is an $h$-vortex planar graph. Thus, the asymptotic bound on the treewidth of $\mathcal{T}$ remains unchanged.
As usual, for every newly added edge $(u',v')$ in $H$, set its weight to be $d_G(u,v)$ where $u = f^{-1}(u')$ and $v = f^{-1}(v')$.
Denote by $\widetilde{H}$ the version of $H$ after this last modification step.
The clique preservation property holds by the construction. It is left to show that $f$ has an additive distortion $O(\eps)D$; we can get additive distortion $+\eps D$ by scaling $\eps$. 

\begin{claim}\label{clm:nearlyPlanar-close} For any $\tilde{u} \in \widetilde{Q}$, there is a vertex $\tilde{x}\in \widetilde{P_{\alpha}}\cup \widetilde{I_{\alpha}}$ such that $d_H(\tilde{u},\tilde{x}) \leq \eps D$. 
\end{claim}
\begin{proof}
	Let $u = f^{-1}(\tilde{u})$. If $u \in I_{\alpha}$, then there is another copy $\tilde{u}'$ of $u$ in $\widetilde{I_\alpha}$. By construction, there is an edge $(\tilde{u},\tilde{u}')$ of weight $0$. Thus, $\tilde{x}$ is $\tilde{u}'$ in this case. Otherwise, $u$ must belong to some boundary path of $X_{\alpha}$. Thus, there exists a portal $x \in P_{\alpha}$ such that $d_G(x,u)\le\delta< \eps D$.  By construction, the copy, say $\tilde{x}$,  of $x$ in $\widetilde{P_{\alpha}}$ has an edge to $\tilde{u}$ in $H$. The claim then follows from the fact that $w_H(\tilde{x}, \tilde{u})= d_G(x,u) \leq \eps D$.  
\end{proof}
 
We are now ready to bound the distortion of $f$. Let $\tilde{u}$ and $\tilde{v}$ be any two vertices in $\widetilde{H}$. Let $\tilde{x}$ and $\tilde{y}$ be closest vertices to $\tilde{u}$ and $\tilde{v}$, respectively, that belong to $H$ (it is possible that $\tilde{x}=\tilde{u}$ or $\tilde{y}=\tilde{v}$). 
By \Cref{clm:nearlyPlanar-close}, $d_{\widetilde{H}}(\tilde{x}, \tilde{u}) \leq \eps D$ and $d_{\widetilde{H}}(\tilde{y}, \tilde{v}) \leq \eps D$. Let $x = f^{-1}(\tilde{x})$ and $y = f^{-1}(\tilde{y})$. Observe that $d_H(x,y) \leq d_G(x,y) + c\eps D$ for some constant $c$. By the triangle inequality, it holds that:
\begin{equation*}
	d_G(x,y) \leq d_G(x,u) + d_G(u,v)  + d_G(v,y) \leq d_G(u,v) + 2\eps D~.
\end{equation*} 
Also by the triangle inequality, we have:
\begin{equation*}
	\begin{split}
			d_{\widetilde{H}}(\tilde{u},\tilde{v}) &\leq d_{\widetilde{H}}(\tilde{u},\tilde{x}) + d_H(\tilde{x},\tilde{y})  + d_{\widetilde{H}}(\tilde{y},\tilde{v}) ~\leq~ d_H(\tilde{x},\tilde{y})  + 2\eps D~\\
			&\leq d_G(x,y) + (c+2)\eps D ~\leq~ d_G(u,v) + (c+4)\eps D,
	\end{split}
\end{equation*} 
as desired. 
\end{proof}

\subsubsection{From nearly embeddable graphs to general $K_r$-minor-free graphs}\label{subsec:general-minor}

In this section, we prove a reduction from nearly embeddable graphs to general $K_r$-minor-free graphs as described in \Cref{lm:nearlyEmb-to-Minor}. Our reduction removes a loss of an additive $O(\log n)$ factor in the treewidth in the reduction of Cohen-Addad \etal~\cite{CFKL20}. To give context to our improvement, we briefly review their construction below. 

By the Robertson-Seymour theorem (\Cref{thm:RS}), a $K_r$-minor-free graphs $G$ can be decomposed into a tree $\mathbb{T}$ of nearly $h(r)$-embeddable graphs, called \emph{pieces}, that are glued together by taking clique-sums along the edges of $\mathbb{T}$. We then embed each piece into a graph with treewidth $t(h(r),\eps,n)$ by a clique-preserving embedding. The clique-preserving property allows us to take clique-sum of the host graphs of the pieces to obtain an embedding of $G$. The remaining problem is that the distortion of the embedding could be arbitrarily large due to long paths in $\mathbb{T}$.  Cohen-Addad \etal \cite{CFKL20} resolved this issue  by ``shortcutting'' long paths via centroids\footnote{A centroid of a tree $T$ is a vertex $v$ such that every subtree of $T\setminus v$ has at most $|V(T)|/2$ vertices}. Specifically, they apply the following steps: (a) find a centroid piece $G_c$ of $\mathbb{T}$, (b) recursively construct the embedding for each subgraph $K$ obtained by removing the centroid piece from $\mathbb{T}$, (c) glue the embedding of $G_c$ and $K$ via clique-sum, and (d) add (the image of) the adhesion $J$ between $K$ and $G_c$ to every bag in the tree decomposition of the host graph of $K$. Step (d) helps ``shortcut'' the shortest path, say $Q_{uv}$, between any two vertices $u$ and $v$ between two different subgraphs $K_i$ and $K_j$ of $\mathbb{T}\setminus G_c$  in the following way: there is a pair of vertices $x\in Q_{uv}\cap J_i$ and $y\in Q_{uv}\cap J_j$ such that the distances between $u$ and $x$, and between $v$ and $y$ are preserved \emph{exactly}. Thus, the final embedding only incurs an additive distortion $+\eps D$ for the distance between $x$ and $y$ (in the embedding of $G_c$). However, step (d) increases the treewidth of the embedding additively by $O(h(r)\log n)$, since each adhesion has size $O(h(r))$ and the recursion has depth $O(\log n)$---by using the centroids for shortcutting---where each recursive step adds one adhesion to the embedding a piece in $\mathbb{T}$.

Our idea to remove the $\log n$ factor is as follows. We first root the tree $\mathbb{T}$. Then for each piece $G_i$, we simply add the adhesion $J_0$ between $G_i$ and its parent bag $G_0$ to every bag in the tree decomposition of the host graph $H_i$ in the embedding of $G_i$; this increases the treewidth of $G_0$ by only $+O(h(r))$, and the overall treewidth of the final embedding is also increased only by $+O(h(r))$. For the stretch, we show by induction that for every vertex $u \in G_a$ and $v\in G_b$ for any two pieces $G_a,G_b$ such that $G_i$ is their lowest common ancestor, there is a vertex $x \in Q_{uv}\cap G_i$ and  $y \in Q_{uv}\cap G_i$, the distances between $u$ and $x$, and between $v$ and $y$ are preserved \emph{exactly}. Thus, the distortion for $d_G(u,v)$ is just $+\eps D$.  Since we apply this construction to \emph{every} piece of $\mathbb{T}$, $+\eps D$ is also the distortion of the final embedding. 

We are now ready to prove \Cref{lm:nearlyEmb-to-Minor}, which we restate below.
\nearlyEmbToMinor*
  \begin{proof}
  	We apply \Cref{thm:RS} to obtain a decomposition of $G$ into a tree $\mathbb{T}$ of pieces $\{G_{1}, \ldots, G_{\ell}\}$. Since every adhesion involved in a clique-sum operation is a clique, $G_i$ has a diameter at most $D$ for every $i \in [\ell]$. Root $\mathbb{T}$ at a piece $G_{rt}$ for some $rt \in [\ell]$. For each $G_i$, let $f_i$ be the clique-preserving embedding $G_i$ into $H_i$ with treewidth $t(h(r),\eps,|V(G_i)|) \leq t(h(r),\eps,n)$ and expected additive distortion $+\eps D$. Let $\mathcal{T}_i$ be the tree decomposition of $H_i$ with width at most  $t(h(r),\eps,n)$. 
  	
  	For each pair of pieces $G_i$ and $G_j$ where $G_j$ is the parent piece of $G_i$ in $\mathbb{T}$. Let  $J_{ij}$ be the adhesion involving the clique sum of $G_i$ and $G_j$. Let $Q^i_{ij}$ and $Q^j_{ij}$ be the clique copies of $J_{ij}$ in $H_i$ and $H_j$. We apply the following two steps:
  		\begin{itemize}
		\item \textbf{(Step 1).~} We add to $H_i$ the edge set $Q^i_{ij}\times V(H_i)$ and set the weight of each edge to be the distance between the pre-images of its endpoints in $G$; recall that $f_i(V(G_i)) = V(H_i)$. We then add $Q^i_{ij}$ to every bag of the tree decomposition $\mathcal{T}_i$ so that it remains a valid tree decomposition of $H_i$ after adding edges. 
		
		\item \textbf{(Step 2).~} We identify $Q^i_{ij}$ and $Q^j_{ij}$ into a single clique copy $Q'_{ij}$. This effectively connects $H_i$ and $H_j$. We then replace the vertices of $Q^i_{ij}$ and $Q^j_{ij}$ in $\mathcal{T}_i$ and $\mathcal{T}_j$, respectively, by the vertices of $Q'_{ij}$. We then take a pair of arbitrary two bags $B_i$ and $B_j$ of $\mathcal{T}_i$ and $\mathcal{T}_j$, respectively, that contain $Q'_{ij}$, and add an edge $(B_i,X_j)$ to connect $\mathcal{T}_i$ and $\mathcal{T}_j$ into a single tree decomposition. $X_i$ and $X_j$ exist since $f_i$ and $f_j$ are clique-preserving.
  		\end{itemize}  
 	 Let $H$ and $\mathcal{T}$ be the host graph and the tree decomposition obtained by applying the above two steps to every pair of pieces $G_i$ and $G_j$ where $G_j$ is the parent piece of $G_i$ in $\mathbb{T}$. For each vertex $\tilde{v} \in V(H)$, we add $\tilde{v}$ to $f(v)$, which is initially empty, if $\tilde{v}$ is the image of $v$ in the embedding of some piece in the decomposition of $G$. This completes our construction of the embedding of $G$.
 	 
 	 Observe that $\mathcal{T}$ is a valid tree decomposition of $H$ since every time we add an edge $(\tilde{u},\tilde{v})$ to $E(H)$ that is not present in any $\{H_i\}_{i=1}^{\ell}$ (in Step 1), we guarantee that $\tilde{u}$ and $\tilde{v}$ are in the same bag.  Additionally, the construction in Step 1 only increases the treewidth of $H_i$ by at most $h(r)$, which is the upper bound on the size of $J_{ij}$ (this happens only once per piece). The width of the resulting tree in Step 2 is bounded by the maximum width of $\mathcal{T}_i$ and $\mathcal{T}_j$. Thus, the width of $\mathcal{T}$ is $t(h(r),\eps,n) + h(r)$.
 	 
 	 Observe that $f$ is a dominating embedding since $\{f_i\}_i$ are dominating. Thus, it is left to show that the expected distortion of the embedding is $+\eps D$. To that end, we show the following claim.
 	 
 	 \begin{claim}\label{clm:minor-dist-ind} Let $G_a$ and $G_b$ be two pieces of $\mathbb{T}$ such that $G_a$ is a proper ancestor of $G_b$. Let $J^{\downarrow}_{ab}$ be the adhesion in the clique-sum between $G_a$ and its child on the path from $G_a$ to $G_b$. For any two vertices $u\in J^{\downarrow}_{ab}$ and $v\in G_b$, there exists a copy $\tilde{u}\in f_a(u)$ such that for every $\tilde{v} \in f_b(v)$:
 	 		\begin{equation*}
 	 			d_H(\tilde{u}, \tilde{v}) = d_G(u,v)~.
 	 		\end{equation*}
 	 \end{claim}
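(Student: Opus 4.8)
The plan is to prove \Cref{clm:minor-dist-ind} by induction on the number $m\ge 1$ of edges of the path from $G_a$ to $G_b$ in $\mathbb{T}$. Since the final embedding $f$ is dominating (it is glued from dominating piece embeddings), we always have $d_H(\tilde u,\tilde v)\ge d_G(u,v)$, so it suffices to exhibit one copy $\tilde u\in f_a(u)$ for which $d_H(\tilde u,\tilde v)\le d_G(u,v)$ holds for every $\tilde v\in f_b(v)$.

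For the base case $m=1$, the piece $G_b$ is a child of $G_a$ and $J^{\downarrow}_{ab}$ is precisely the adhesion of the clique-sum of $G_b$ and $G_a$, so $u$ lies in it. I would let $\tilde u$ be the copy of $u$ in the identified clique copy $Q'$ of this adhesion; by Step~2 this vertex is shared with $H_a$, so $\tilde u\in f_a(u)$. By Step~1, applied to the child $G_b$ with parent $G_a$, $H$ contains an edge from $\tilde u$ to every vertex of $V(H_b)$ whose weight equals the distance in $G$ between the two preimages; taking the other endpoint to be any $\tilde v\in f_b(v)\subseteq V(H_b)$ gives an edge of weight $d_G(u,v)$, hence $d_H(\tilde u,\tilde v)\le d_G(u,v)$.

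For the inductive step $m\ge 2$, let $G_c$ be the child of $G_a$ on the $\mathbb{T}$-path to $G_b$ (so $J^{\downarrow}_{ab}$ is the adhesion between $G_a$ and $G_c$ and $u\in V(G_c)$), and let $G_{c'}$ be the child of $G_c$ on the $\mathbb{T}$-path to $G_b$, so $J^{\downarrow}_{cb}$ is the adhesion between $G_c$ and $G_{c'}$. The crux is to find, on a fixed shortest path $Q_{uv}$ in $G$ between $u$ and $v$, an intermediate vertex $w\in V(J^{\downarrow}_{cb})$: the adhesion $J^{\downarrow}_{cb}$ separates the vertex sets of the pieces in the subtree of $\mathbb{T}$ rooted at $G_{c'}$ (which contains $v$, since $G_b$ lies in that subtree) from the rest of $V(G)$ (which contains $u$ unless $u$ itself lies on this adhesion), so such a $w$ exists (take $w=u$ if $u\in V(J^{\downarrow}_{cb})$), and, being on a shortest path, $d_G(u,w)+d_G(w,v)=d_G(u,v)$. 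Now the induction hypothesis applied to the pair $(G_c,G_b)$ --- whose $\mathbb{T}$-path has $m-1$ edges --- with boundary vertex $w\in V(J^{\downarrow}_{cb})$ and target $v\in V(G_b)$ yields a copy $\tilde w\in f_c(w)\subseteq V(H_c)$ with $d_H(\tilde w,\tilde v)=d_G(w,v)$ for every $\tilde v\in f_b(v)$. Finally, let $\tilde u$ be the copy of $u$ in the identified clique copy $Q'$ of the adhesion between $G_a$ and $G_c$ (so $\tilde u\in f_a(u)$ by Step~2); by Step~1 applied to the child $G_c$ with parent $G_a$ there is an edge $(\tilde u,\tilde w)$ of weight $d_G(u,w)$, and the triangle inequality gives $d_H(\tilde u,\tilde v)\le d_H(\tilde u,\tilde w)+d_H(\tilde w,\tilde v)\le d_G(u,w)+d_G(w,v)=d_G(u,v)$.

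I expect the only delicate point to be the separation argument producing $w$: besides the boundary cases $u\in V(J^{\downarrow}_{cb})$ and $v\in V(J^{\downarrow}_{cb})$ (handled by $w=u$ or $w=v$), one must invoke the connectivity axiom of the tree decomposition underlying the clique-sum to be sure that a vertex of $G_c$ cannot lie in a strict descendant of $G_{c'}$ without already belonging to $J^{\downarrow}_{cb}$, so that $u$ is genuinely on the far side of the separator. Everything else is bookkeeping: the only facts used about the piece embeddings are that they are dominating and clique-preserving (the latter only to guarantee that the clique copies $Q'$ of the adhesions exist), together with the two edge-weight guarantees built into Steps~1 and~2 and a single two-term triangle inequality. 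In particular, no additive-distortion property of the piece embeddings is invoked along the recursion, which is exactly what keeps the distortion of $f$ at $+\eps D$ rather than growing with the depth of $\mathbb{T}$.
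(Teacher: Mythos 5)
Your proof is correct and takes essentially the same approach as the paper's: both argue by induction along the $\mathbb{T}$-path, pick a vertex of a shortest $u$-$v$ path lying in an adhesion (using that the adhesion separates the corresponding subtree of $\mathbb{T}$), and combine the exact edge weights from Steps~1 and~2 with the triangle inequality and domination. The only cosmetic difference is that you peel off the edge at $G_a$'s child and invoke the hypothesis once for $(G_c,G_b)$ plus a direct Step-1 edge, whereas the paper peels off $G_b$'s parent and invokes the claim for both $(G_c,G_b)$ and $(G_a,G_c)$.
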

   \begin{proof}
   		We prove by induction. The base case is when $G_a$ is a parent of $G_b$. Then $J^{\downarrow}_{ab} = J_{ab}$. Let $Q^a_{ab}$ and $Q^{b}_{ab}$ be two clique copies of $J^{ab}$ in $H_a$ and $H_b$, respectively, that are identified into a single clique copy $Q'_{ab}$ in the construction of Step 2. Let $\tilde{u}$ be the copy of $u$ in $Q'_{ab}$. By the construction in Step 1, there is an edge from $\tilde{u}$ to every vertex in $f(V(G_b))$. Thus for every $\tilde{v} \in f_b(v)$, $d_H(\tilde{u}, \tilde{v}) = d_G(u,v)$.
   		
   		For the inductive case, let $G_c$ be the parent piece of $G_b$.  Let $S_{uv}$ be a shortest path between $u$ and $v$ in $G$. Since $G_c\in \mathbb{T}[G_a,G_b]$, there must be a vertex $x \in J_{cb}\cap S_{uv}$. By the induction hypothesis, there exists a copy $\tilde{x} \in f_c(x)$ such that for every $\tilde{v}\in f_b(v)$, $d_H(\tilde{x}, \tilde{v}) = d_G(x,v)$. Also by induction, there exists a copy $\tilde{u} \in f_a(u)$ such that $d_H(\tilde{u},\tilde{x}) = d_G(u,x)$. By the triangle inequality, for every $\tilde{v}\in f_b(v)$:
   		\begin{equation*}
   			d_H(\tilde{u},\tilde{v}) \leq d_H(\tilde{u},\tilde{x}) + d_H(\tilde{x}, \tilde{v}) = d_G(u,x) + d_G(x,v) = d_G(u,v)~.
   		\end{equation*}
	As $f$ is dominating, we conclude that  $d_H(\tilde{u},\tilde{v}) = d_G(u,v)$.
   \end{proof}

  	We now show that for every $u,v$,
  	\begin{equation}\label{eq:minor-dist}  		
  		\mathbb{E}\left[\max_{\tilde{u}\in f(u),\tilde{v}\in f(v)}d_{H}(\tilde{u},\tilde{v})\right]\le d_{G}(u,v)+\eps D~.
  	\end{equation}
  	Let $G_a$ and $G_b$ be the two pieces closest to the root of $\mathbb{T}$, containing $u$ and $v$ respectively.
  	  If $G_a = G_b$, then $d_{G_a}(u,v) = d_G(u,v)$, and by the assumption that $f_a$ has expected additive distortion $+\eps D$, we have that $\mathbb{E}[\max_{\tilde{u}\in f_a(u),\tilde{v}\in f_a(v)}d_H(\tilde{u},\tilde{v})]\le d_H(u,v)+\eps D$.
  	  By \Cref{clm:minor-dist-ind}, for any copy $\tilde{u}\in f(u)$ (resp. $\tilde{v}\in f(v)$) coming from a decedent piece of $G_a$ (resp. $G_b$), there is a copy $\hat{u}\in f_a(u)$ (resp. $\hat{v}\in f_a(v)$) such that $d_H(\hat{u},\tilde{u})=d_G(u,u)=0$ (resp. $d_H(\hat{v},\tilde{v})=0$). It follows that 
  	  \begin{align}
  	  	\mathbb{E}\left[\max_{\tilde{u}\in f(u),\tilde{v}\in f(v)}d_{H}(\tilde{u},\tilde{v})\right] \nonumber& \le\mathbb{E}\left[\max_{\tilde{u}\in f(u),\tilde{v}\in f(v)}d_{H}(\tilde{u},\hat{u})+d_{H}(\hat{u},\hat{v})+d_{H}(\hat{v},\tilde{v})\right]\\
  	  	& \le\mathbb{E}\left[\max_{\hat{u}\in f_{a}(u),\hat{v}\in f_{a}(v)}d_{H}(\hat{u},\hat{v})\right]~~\le~~ d_{G}(u,v)+\eps D~.\label{eq:lowerCopies}
  	  \end{align}
    
  	  Otherwise ($G_a\ne G_b$), let $G_c$ be the least common ancestor of $G_a$ and $G_b$ in $\mathbb{T}$. Let $S_{uv}$ be the shortest path between $u$ and $v$ in $G$. Then, there must exist two vertices $x\in J^{\downarrow}_{ca}\cap S_{uv}$ and $y \in J^{\downarrow}_{cb}\cap S_{uv}$. By \Cref{clm:minor-dist-ind}, there exist $\hat{x} \in f_c(x)$ and  $\hat{y}\in f_c(y)$ such that for every two copies $\tilde{u}\in f_a(u)$ and $\tilde{v}\in f_b(v)$,
  	  \begin{equation*}
  	  	d_H(\hat{x}, \tilde{u}) = d_G(x,u) \qquad \& \qquad d_H(\hat{y},\tilde{v}) = d_G(y,v).
  	  \end{equation*}
      By the triangle inequality, linearity of expectation, the fact that $f_c$ has expected distortion $+\eps D$, and the fact that $x$ and $y$ lie on a shortest $u$-$v$ path, it holds that:
  	  \begin{align*}
  	  	\mathbb{E}\left[\max_{\tilde{u}\in f_{a}(u),\tilde{v}\in f_{b}(v)}d_{H}(\tilde{u},\tilde{v})\right] & \le\mathbb{E}\left[\max_{\tilde{u}\in f_{a}(u),\tilde{v}\in f_{b}(v)}d_{H}(\tilde{u},\hat{x})+d_{H}(\hat{x},\hat{y})+d_{H}(\hat{y},\tilde{v})\right]\\
  	  	& \le d_{G}(u,x)+\mathbb{E}\left[\max_{\tilde{x}\in f_{c}(x),\tilde{y}\in f_{c}(y)}d_{H}(\tilde{x},\tilde{y})\right]+d_{G}(y,v)\\
  	  	& =d_{G}(u,x)+d_{G}(x,y)+d_{G}(y,v)+\eps D~~=~~d_{G}(u,v)+\eps D~.
  	  \end{align*}
	For any copy $\tilde{u}\in f(u)$ (resp. $\tilde{v}\in f(v)$) coming from a decedent piece of $G_a$ (resp. $G_b$), by \Cref{eq:lowerCopies}, we conclude that \Cref{eq:minor-dist}  holds, as desired.
  \end{proof}

\subsection{An Efficient PTAS for the Bounded-capacity VRP (Proof of \Cref{thm:VRP-minor})}

In this section, we show that the embedding in \Cref{thm:MinorToTreewidth} can be used to obtain an efficient PTAS for the bounded-capacity VRP in $K_r$-minor-free graphs as claimed in \Cref{thm:VRP-minor}. Our result significantly improves the previous result by Cohen-Addad \etal~\cite{CFKL20} who showed an approximation scheme with \emph{quasi-polynomial time} for this problem. 

The algorithm for minor-free graphs closely resembles the algorithm for planar graphs in \Cref{subsec:VHR-planar}. The following lemma is similar to \Cref{lm:Root-Embedding-Planar}.

\begin{lemma}\label{lm:Root-Embedding-Minor}
	Let $G(V,E,w)$ be an $n$-vertex  $K_r$-minor-free graph and $r$ be a distinguished vertex in $V$. For any given parameter $\eps \in (0,\frac14)$, we can construct in $O_{\eps,r}(1)\cdot n^{O(1)}$   time an $(r,\eps)$-rooted stochastic embedding of $G$ into graphs $H$ of treewidth $O_{r,\eps}((\log \log n)^2)$.
\end{lemma}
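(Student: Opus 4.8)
The plan is to follow the proof of \Cref{lm:Root-Embedding-Planar} almost verbatim, replacing the deterministic planar embedding of \Cref{thm:PlanarToTreewidth} by the stochastic minor-free embedding of \Cref{thm:MinorToTreewidth}, and then checking the two adaptations this forces: that the per-band auxiliary graphs are still $K_r$-minor-free, and that the distortion argument survives the extra layer of randomness introduced by \Cref{thm:MinorToTreewidth}.

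Concretely, I would scale the metric so that the minimum distance in $G$ is $1$, pick a uniform $x\in[0,1]$, set $U_i=(1/\eps)^{(i+x)/\eps}$ and $L_i=(1/\eps)^{(i-1+x)/\eps}=U_{i-1}$, and slice $G$ into bands $B_i=\{u: L_i\le d_G(r,u)<U_i\}$. For $i\ge 1$, let $G_i$ be obtained from $G$ by deleting every vertex at distance $\ge U_i$ from $r$ and contracting the ball $\{u: d_G(r,u)<L_i\}$ into $r$ (with the edges incident to $r$ reweighted to $d_G(r,\cdot)$), and let $G_0=G[B_0]$. Exactly as in the planar proof, $V(G_i)=B_i\cup\{r\}$, $d_{G_i}(v,r)=d_G(v,r)$ for $v\in B_i$, and $\dm(G_i)\le 2U_i$. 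Since vertex deletion and contraction of a connected subgraph (the ball around $r$) are minor operations, each $G_i$ is $K_r$-minor-free, so \Cref{thm:MinorToTreewidth} applies: with stretch parameter $\delta=\eps^{1/\eps}$ we stochastically embed $G_i$ into a graph $H_i$ of treewidth $O_r(\delta^{-2}(\log\log n)^2)=O_{r,\eps}((\log\log n)^2)$ with expected additive distortion $\delta\cdot\dm(G_i)\le 2\delta U_i$. Then I would form $H$ by identifying the image of $r$ across all $H_i$ and adding, for every $v\in V$, an edge from $f(r)$ to $f(v)$ of weight $d_G(r,v)$ (this raises the treewidth by at most $1$); the embedding $f$ is induced in the obvious way and is dominating because each $f_i$ is.

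For the distortion I would reuse the Ramsey-type analysis of \Cref{lm:Root-Embedding-Planar}: call $u$ \emph{successful} if $d_G(r,u)\in[\eps^{-1}L_i,\eps U_i]$ for the index $i$ with $u\in B_i$; then $\Pr[u\text{ unsuccessful}]=2\eps$, and this event depends only on $x$. Conditioned on $u$ being successful, the three-case argument (shortest $u$–$v$ path contained in $B_i$; path leaving $B_i$ toward larger distance; path leaving $B_i$ toward smaller distance) shows, using only the expected additive distortion of $H_i$ and the direct $r$-edges, that $\mathbb{E}[\,d_H(f(u),f(v))\mid u\text{ successful}\,]\le d_G(u,v)+O(\eps)\cdot(d_G(r,u)+d_G(r,v))$; conditioned on $u$ unsuccessful, the direct edges through $f(r)$ give $d_H(f(u),f(v))\le d_G(r,u)+d_G(r,v)$ deterministically. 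Combining by the law of total expectation, and using $\Pr[\text{unsuccessful}]=2\eps$, yields $\mathbb{E}[d_H(f(u),f(v))]\le d_G(u,v)+O(\eps)\cdot(d_G(r,u)+d_G(r,v))$, and rescaling $\eps$ gives the claimed $(r,\eps)$-rooted bound. For the running time, since $V(G_i)=B_i\cup\{r\}$ the graphs $\{G_i\}_i$ together have $O(n)$ vertices and can be built iteratively in $O(n)$ time (each edge contracted once; all distances from $r$ computed once via a linear-time shortest-path algorithm), and \Cref{thm:MinorToTreewidth} embeds each $G_i$ in $\poly(|V(G_i)|)$ time, for a total of $O_{\eps,r}(1)\cdot n^{O(1)}$.

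The only genuinely new point — hence the "main obstacle", though a mild one — is that \Cref{thm:MinorToTreewidth} provides only a \emph{stochastic} embedding with an \emph{expected} additive-distortion guarantee, whereas \Cref{thm:PlanarToTreewidth} was deterministic and worst-case. So every inequality of the form ``$d_{H_i}(f(u),f(v))\le d_{G_i}(u,v)+2\delta U_i$'' that was pointwise in the planar proof must now be read in expectation and combined with the (independent) band randomness via conditioning on the successful/unsuccessful event. The second, purely quantitative, difference is that the construction is no longer nearly linear: \Cref{thm:MinorToTreewidth} relies on computing a Robertson–Seymour decomposition, which currently costs quadratic time, so we obtain only running time $O_{\eps,r}(1)\cdot n^{O(1)}$, matching the statement.
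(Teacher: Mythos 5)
Your proposal is correct and matches the paper's own proof, which likewise reuses the band construction of \Cref{lm:Root-Embedding-Planar} verbatim, substitutes \Cref{thm:MinorToTreewidth} for \Cref{thm:PlanarToTreewidth}, and redoes only the in-band case in expectation (conditioning on the successful/unsuccessful event) while the out-of-band cases still hold pointwise via the edges incident to $r$. Your additional remarks — that each $G_i$ remains $K_r$-minor-free under deletion/contraction and that the running time degrades to polynomial because of the Robertson–Seymour decomposition — are exactly the points the paper treats implicitly.
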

\begin{proof}
Our construction follows exactly the same steps as the construction of \Cref{lm:Root-Embedding-Planar}; the only difference is that we use \Cref{thm:MinorToTreewidth} instead of \Cref{thm:PlanarToTreewidth} (and the standard shortest path algorithm instead of \cite{HKRS97}). The running time is clearly polynomial.

For the distortion analysis, consider a pair of vertices $u,v\in V$, and let $i$ be the index such that $u\in B_i$, and let $f_i$ be the embedding of $G_i$ to the low treewidth graph $H_i$.
Recall that the vertex $u$ is successful with probability $1-2\eps$. 
If $u$ is successful and the shortest path $P$ between $u$ and $v$ is not fully contained in $B_i$, then by the case analysis in \Cref{lm:Root-Embedding-Planar}, $d_{H}(v,u)\le d_{G}(v,u)+2\cdot\epsilon\cdot d_{G}(r,u)$  (as that argument used only edges incident to $r$, which are present in our embedding as well).
Otherwise, $P\subseteq B_i$, and hence $d_{G_i}(u,v)=d_{G}(u,v)$. Following the calculation in \Cref{eq:rootedPContained}, we have
\[
\mathbb{E}\left[d_{H}(f(u),f(v))\mid d_{G_{i}}(u,v)=d_{G}(u,v)\right]\le d_{G}(u,v)+\delta\cdot2U_{i}\le d_{G}(u,v)+2\epsilon\cdot d_{G}(r,u)~.
\]
Thus, conditioning on $u$ being successful, we have
\[
\mathbb{E}[d_{H}(f(u),f(v))\mid u\text{ is successful}]\le d_{G}(v,u)+O(\epsilon)\cdot\left(d_{G}(r,v)+d_{G}(r,u)\right)~,
\]
and it follows that:
\begin{align*}
	\mathbb{E}[d_{H}(f(u),f(v))] & \leq\Pr\left[v\text{ is successful}\right]\cdot\mathbb{E}[d_{H}(f(u),f(v))\mid u\text{ is successful}]\\
	& \qquad+\Pr\left[v\text{ is unsuccessful}\right]\cdot\left(d_{G}(v,r)+d_{G}(r,u)\right)\\
	& \le d_{G}(v,u)+O(\epsilon)\cdot\left(d_{G}(r,v)+d_{G}(r,u)\right)~.
\end{align*}
\end{proof}

We are now ready to prove \Cref{thm:VRP-minor}, which we restate below for convenience. 
\CVRPMinor*
\begin{proof}
	Following exactly the same steps in the proof of \Cref{thm:VRP-planar} in \Cref{subsec:VHR-planar}, we plug  \Cref{lm:Root-Embedding-Minor} and \Cref{lm:DP-VRP} into \Cref{lm:BKS} to obtain a $(1+\eps)$-approximate solution.
	Following \Cref{PlanarVRPruntime}, the running time hence is:
	\[
	(Q\epsilon^{-1}\log n)^{O_{\frac{\epsilon}{9Q}}(Q\cdot\log\log n)^{2}}\cdot n+O_{\epsilon,r}(1)\cdot n^{O(1)}=O_{\epsilon,r}(1)\cdot n^{O(1)}~.
	\]
\end{proof}

\section{Ramsey Type and Clan Embeddings of Minor-free Graphs and their Applications} \label{sec:RamseyClan}

\subsection{Ramsey Type and Clan Embeddings (Proofs of \Cref{thm:MinorRamsey,thm:MinorClan})}

In this section, we prove \Cref{thm:MinorClan} and \Cref{thm:MinorRamsey}. A key idea in the construction of the authors~\cite{FL21} is that one can construct a clan/Ramsey type embedding from a deterministic one-to-one embedding of $h$-multi-vortex-genus graphs with an $O(\log n)$ loss in the treewidth. \Cref{lm:NearlyClanRamsey} below is a reinterpretation of Lemma 15 and Lemma 16 for Ramsey type embedding and Lemma 18 and Lemma 19 for clan embedding in~\cite{FL21}.

\begin{lemma}[Lemmas 15,16, 18 and 19 \cite{FL21}, adapted]\label{lm:NearlyClanRamsey} Suppose that every $n$-vertex $h$-multi-vortex-genus graph of diameter $D$ can be deterministically embedded into a graph with treewidth $t(h,\eps,n)$ and additive distortion $+\eps D$ via a clique-preserving embedding.
	Then for every parameter $\delta\in(0,1)$, and every $n$-vertex $K_r$-free graph $G=(V,E,w)$ of diameter $D$, there exist an $(\eps D, \delta)$-Ramsey type embedding and an  $(\eps D, \delta)$-clan embedding, both with treewidth $t(O_r(1),O(\frac{\delta \eps}{\log(n)}),n)+O_r(\log n)$.
\end{lemma}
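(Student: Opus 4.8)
The plan is to re-run the reduction of \Blind~\cite{FL21} that produces clan and Ramsey-type embeddings of $K_r$-minor-free graphs from a deterministic clique-preserving embedding of $h$-multi-vortex-genus graphs, feeding in the embedding hypothesized here (which is exactly the building block \cite{FL21} needs, now supplied through \Cref{lm:nearlyPlanar-emb} and \Cref{lm:CKFL-embed-genus-vortex} rather than through the weaker $O(\eps^{-1}\log n)$-treewidth embedding of \cite{CFKL20}). On top of the hypothesis there are two reductions: from nearly $h$-embeddable graphs to $h$-multi-vortex-genus graphs (deleting apices), and from $K_r$-minor-free graphs to nearly $h$-embeddable pieces (unwinding the Robertson--Seymour clique-sum tree $\mathbb{T}$ of \Cref{thm:RS}). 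Neither reduction is re-proved here; one only checks that the parameters compose to $t(O_r(1),\Theta(\tfrac{\delta\eps}{\log n}),n)+O_r(\log n)$.

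First I would recall the apex step. By \Cref{thm:RS}, $G$ is a clique-sum along $\mathbb{T}$ of nearly $h(r)$-embeddable pieces, each of the form (an $h$-multi-vortex-genus graph) $\cup$ (a set $A$ of at most $h(r)$ apices). Because a \emph{deterministic} embedding of a graph with an apex into treewidth $o(\sqrt n)$ is impossible (Theorem 3 of \cite{CFKL20}), the apices of a piece must be handled via the copy/partition structure of clan (resp. Ramsey) embeddings, exactly as in \cite{FL21}: on top of the hypothesized embedding of the apex-free part one builds a clan (resp. Ramsey) structure that ``simulates'' the randomization of \cite{CFKL20}, creating a bounded number of copies of each vertex with canonically chosen chiefs, so that for every pair $x,y$ some copy of $x$ is at distance $\le d_G(x,y)+\eps' D$ from $\chi(y)$ (resp. every $x$ outside a $\delta$-fraction set $M$ is that close to all $y$). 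This multiplies the treewidth of the piece by $O_r(1)$ and preserves the clique-preserving property; the distortion parameter $\eps'$ actually handed to the hypothesis is forced to be $\Theta(\tfrac{\delta\eps}{\log n})$ rather than $\eps$, the $\tfrac1{\log n}$ arising because the clan-size / $M$-failure budget must be split over the $O(\log n)$ pieces a vertex will participate in (next paragraph), and the $\delta$ because only an $O(\delta)$ fraction of that budget can be spent — the exact balancing is the content of Lemmas 15,16,18,19 of \cite{FL21}.

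Next I would glue the pieces along $\mathbb{T}$. Exactly as in \Cref{lm:nearlyEmb-to-Minor} and in \cite{CFKL20,FL21}, clique-preservation lifts the clique-sum $G_i\oplus_h G_j$ to an identification of clique-copies in the host graphs, and one keeps distances to these adhesion cliques \emph{exact} (the device of Step~1 of \Cref{lm:nearlyEmb-to-Minor}). Since the clan/Ramsey reduction of \cite{FL21} predates the ``rooting'' improvement of \Cref{subsec:general-minor}, it still processes $\mathbb{T}$ by the centroid recursion of \cite{CFKL20}, projecting the $O_r(1)$-vertex adhesion cliques met along the way into every bag of the recursively built decomposition; the recursion has depth $O(\log n)$, which is precisely where the additive $+O_r(\log n)$ treewidth term comes from (replacing this with the rooting argument would remove the term, and is the ``second barrier'' left open). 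By the exact-distance property the gluing adds no distortion, and, arguing as in \Cref{clm:minor-dist-ind}, a shortest $u$--$v$ path crosses the adhesion cliques at predictable vertices, so the only genuine error for $u,v$ is the $+\eps' D$ incurred inside their least-common-ancestor piece in the centroid hierarchy. Finally one bounds $\mathbb{E}[|f(x)|]$ (resp. $\Pr[x\notin M]$): a vertex meets $O(\log n)$ pieces and acquires copies (resp. can ``fail'') only in those, so spending $O(\delta/\log n)$ of the budget per piece gives total $1+O(\delta)$ (resp. $O(\delta)$), and rescaling $\delta,\eps$ gives the stated bounds.

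The main obstacle is that all of the bookkeeping above must be done simultaneously and for a \emph{worst-case} distortion guarantee: unlike \Cref{thm:MinorToTreewidth}, where only $\mathbb{E}[d_H(f(u),f(v))]$ is controlled, here \Cref{eq:clanDef} (resp. the Ramsey condition) must hold for \emph{every} pair in \emph{every} outcome, which forces the copy sets, the chiefs, the per-apex scales and the centroid projections to be defined coherently across all $O(\log n)$ levels at once; it is that coherence requirement that dictates the precise coupling $\eps'=\Theta(\tfrac{\delta\eps}{\log n})$. Since \cite{FL21} already carries this out, the only new input is the substitution of the stronger building block, after which the treewidth bound follows by the composition computed above.
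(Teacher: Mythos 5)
Your proposal matches the paper's treatment: the paper gives no proof of this lemma at all --- it is stated as an adaptation of Lemmas 15, 16, 18 and 19 of \cite{FL21} --- and you likewise defer the substantive clan/Ramsey construction (apex handling, centroid recursion over the Robertson--Seymour clique-sum tree) to \cite{FL21}, checking only how the parameters compose. Your account of the sources of the $+O_r(\log n)$ additive term (the depth-$O(\log n)$ centroid recursion with adhesions added to every bag) and of the $\Theta(\frac{\delta\eps}{\log n})$ degradation (splitting the clan-size/failure budget over the $O(\log n)$ recursion levels) is consistent with the paper's own discussion of that reduction.
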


To obtain their treewidth bound $O_r(\frac{\log^2}{\delta \eps})$, \blind \cite{FL21} used the embedding of  $h$-multi-vortex-genus graphs  with treewidth $O_r(\frac{\log(n)}{\eps})$ by Cohen-Addad \etal~\cite{CFKL20}.

We show below that \Cref{lm:NearlyClanRamsey} and \Cref{lm:CKFL-embed-genus-vortex} provide a reduction from \emph{$h$-vortex planar graphs} to $K_r$-minor-free graphs.

\begin{lemma}\label{lm:Red-NearlyPlanar-ClanRamsey} Suppose that every $h$-vortex planar graph of $n$ vertices and diameter $D$ can be deterministically embedded into a graph with treewidth $t(h,\eps,n)$ and additive distortion $+\eps D$ via a clique-preserving embedding.
	Then for every parameter $\delta\in(0,1)$ and every $K_r$-free graph $G=(V,E,w)$ of $n$ vertices and diameter $D$, there exist an $(\eps D, \delta)$-Ramsey type embedding and an  $(\eps D, \delta)$-clan embedding, both with treewidth $O_r(t(O_1(r), O_r(\frac{\delta \eps}{\log n}), n) + \log(n))$.
\end{lemma}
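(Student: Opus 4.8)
The plan is to obtain \Cref{lm:Red-NearlyPlanar-ClanRamsey} by simply \emph{chaining} the two reductions already available: \Cref{lm:CKFL-embed-genus-vortex}, which upgrades a deterministic clique-preserving embedding of $h$-vortex planar graphs into one of $h$-multi-vortex-genus graphs, and \Cref{lm:NearlyClanRamsey}, which upgrades a deterministic clique-preserving embedding of $h$-multi-vortex-genus graphs into Ramsey type and clan embeddings of $K_r$-minor-free graphs. Both intermediate objects are deterministic, clique-preserving, and have additive distortion $+\eps D$, so the hypothesis required at each link of the chain matches exactly the conclusion provided by the previous link; consequently the only real work is to track how the treewidth parameter propagates through the composition.

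Concretely, I would first invoke \Cref{lm:CKFL-embed-genus-vortex} with the hypothesis of the present lemma. It yields, for every $n$-vertex $h$-multi-vortex-genus graph of diameter $D$, a deterministic clique-preserving embedding with additive distortion $+\eps D$ into a graph of treewidth
\[
t'(h,\eps,n) \;:=\; O_h\big(t(O_h(1),\,O_h(\eps),\,O_h(n))\big).
\]
I would then feed this function $t'$ into \Cref{lm:NearlyClanRamsey} in the role of its treewidth bound. That lemma produces, for every $\delta\in(0,1)$ and every $n$-vertex $K_r$-free graph of diameter $D$, both an $(\eps D,\delta)$-Ramsey type embedding and an $(\eps D,\delta)$-clan embedding of treewidth
\[
t'\!\left(O_r(1),\,O\!\Big(\tfrac{\delta\eps}{\log n}\Big),\,n\right) + O_r(\log n).
\]

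It remains to simplify. The first argument fed to $t$ in both places is the Robertson--Seymour constant $h = O_r(1)$, so every $O_h(\cdot)$ that appears is an $O_{O_r(1)}(\cdot) = O_r(\cdot)$; in particular the $O_h(\eps)$ inside $t'$, composed with the $O(\delta\eps/\log n)$ coming from \Cref{lm:NearlyClanRamsey}, is still $O_r(\delta\eps/\log n)$, and the $O_h(n)$ in the third slot is harmless since it is only a constant-factor (depending on $r$) inflation of the vertex count, absorbed by the outer $O_r(\cdot)$. Substituting $t'$ and collecting the two additive $O_r(\log n)$ terms, the treewidth is
\[
O_r\!\left(t\!\left(O_r(1),\,O_r\!\Big(\tfrac{\delta\eps}{\log n}\Big),\,n\right) + \log n\right),
\]
which is exactly the claimed bound (the first slot being written $O_1(r)$ in the statement). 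The one point that warrants explicit mention rather than routine composition is that the clique-preserving property must survive the chain so that \Cref{lm:NearlyClanRamsey} is applicable at all; this, however, is guaranteed verbatim by the conclusion of \Cref{lm:CKFL-embed-genus-vortex} and by the hypothesis of the present lemma, so no additional argument is needed, and there is really no substantive obstacle beyond the parameter bookkeeping above.
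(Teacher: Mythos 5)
Your proposal is correct and matches the paper's own proof: it composes \Cref{lm:CKFL-embed-genus-vortex} with \Cref{lm:NearlyClanRamsey} and then simplifies the treewidth expression using $h = O_r(1)$, exactly as done in the paper. The only cosmetic difference is that you explicitly carry the $O_h(n)$ in the third slot and argue it is absorbed, whereas the paper silently drops it; this does not affect correctness for the bound as stated.
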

\begin{proof} By \Cref{lm:CKFL-embed-genus-vortex}, any  $h$-multi-vortex-genus graph $G$ of $n$ vertices and diameter $D$ can be deterministically embedded into a graph with treewidth $\hat{t}(h,\eps,n) = O_h(t(O_h(1),O_h(\eps),n))$ via a clique-preserving embedding.  This embedding and \Cref{lm:NearlyClanRamsey} give a $(\eps D, \delta)$-Ramsey type embedding and a  $(\eps D, \delta)$-clan embedding with treewidth:
	\begin{equation*}
		\begin{split}
		\hat{t}(O_r(1),O(\frac{\delta \eps}{\log(n)}),n)+O_r(\log n) &= O_r(t(O_r(1), O_r(\frac{\delta \eps}{\log n}), n) + O_r(\log(n)))\\
			&=	O_r(t(O_1(r), O_r(\frac{\delta \eps}{\log n}), n) + \log(n))~,	
		\end{split}
	\end{equation*}
as claimed.
\end{proof}

We are now ready to prove \Cref{thm:MinorClan} and \Cref{thm:MinorRamsey}.

\begin{proof}[Proof of \Cref{thm:MinorClan} and \Cref{thm:MinorRamsey}] By \Cref{lm:nearlyPlanar-emb}, any $h$-vortex planar graph of $n$ vertices and diameter $D$ has a clique-preserving embedding with additive distortion $+\eps D$  and treewidth: 
	\begin{equation*}
		t(h,\eps,n) = O(\frac{h(\log \log n)^2}{\eps}). 
	\end{equation*}
It follows from \Cref{lm:Red-NearlyPlanar-ClanRamsey} that $K_r$-free graphs of $n$ vertices and diameter $D$ has an $(\eps D, \delta)$-Ramsey type embedding and an $(\eps D, \delta)$-clan embedding, both with treewidth:
	\begin{equation*}
O_r(t(O_1(r), O_r(\frac{\delta \eps}{\log n}), n) + \log(n)) = O_r\left(O_r\left(\frac{(\log \log n)^2}{(\delta\eps)/\log(n)}\right) + \log(n)\right) =  O_{r}\left(\frac{\log n(\log\log n)^2}{\eps\delta}\right)~,
\end{equation*}
as desired.
\end{proof}
\subsection{Applications to Metric Baker's Problems (Proof of \Cref{thm:MetricBakerMinor})}

Given a clan embedding, one could obtain a bicriteria approximation scheme for the $\rho$-dominating set problem by dynamic programming on the host graph. Similarly, given a Ramsey type embedding, one could obtain a bicriteria approximation scheme for the $\rho$-independent set problem, also by dynamic programming. \Cref{lm:KLS-IS-DS} provides efficient dynamic programs for both problems in graphs with low treewidth. The following lemmas specify the running time of the resulting algorithms.

\begin{lemma}[Theorems 7 and 8 \cite{FL21}, adapted]\label{lm:dominating} Suppose that for every parameters $\eps,\delta\in(0,1)$, and every $K_r$-free graph $G=(V,E,w)$ of diameter $D$ and $n$ vertices,
	there are  $(\eps D,\delta)$-clan and $(\eps D,\delta)$-Ramsey type embeddings, both with treewidth $\tw(\eps,\delta)$. Then given an $n$-vertex $K_r$-minor-free graph $G(V,E,w)$, two parameters $\epsilon \in (0,1)$ and $\rho > 0$, and a vertex weight function $\mu: V\rightarrow \mathbb{R}^+$, one can find in  $2^{O(\tw(\Theta(\eps),\Theta(\eps))\cdot\log\frac{\tw(\Theta(\eps),\Theta(\eps))}{\eps})}\cdot\poly(n)$ time:\\
	(1) a $(1-\epsilon)\rho$-independent  set $I$ such that for every $\rho$-independent  set $\tilde{I}$, $\mu(I) \geq (1-\epsilon)\mu(\tilde{I})$, and\\
	(2) a  $(1+\epsilon)\rho$-dominating set $S$ such that for every $\rho$-dominating set $\tilde{S}$, $\mu(S) \leq (1+\epsilon)\mu(\tilde{S})$.
\end{lemma}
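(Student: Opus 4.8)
The plan is to combine the clan/Ramsey embeddings assumed in the hypothesis with the bounded-treewidth dynamic programs of Katsikarelis \etal (\Cref{lm:KLS-IS-DS}), following the strategy of \blind~\cite{FL21} but parametrized by the abstract treewidth function $\tw(\eps,\delta)$. First I would reduce to the bounded-diameter case exactly as in the planar argument: scale edge weights so $\rho=1$, use the classical Baker layering on a shortest-path tree $T_r$ to partition $V$ into $\Theta(1/\eps)$ overlapping annuli, and note that one of the $\Theta(1/\eps)$ shifts $\sigma$ loses only a $(1\pm\eps)$ factor in the measure of any fixed optimal solution (the computation is identical to Observations~\ref{obs:Baker-layering-IS} and the $\rho$-dominating analogue in the proof of \Cref{thm:MetricBakerPlanar}). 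After this reduction each contracted subgraph $G_{j,\sigma}$ is a $K_r$-minor-free graph of diameter $O(1/\eps)$.

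Next I would apply the two embeddings with the right parameters. For the $\rho$-independent set problem, invoke the $(\eps' D,\delta')$-Ramsey type embedding with $\eps' = \Theta(\eps)\cdot\eps = \Theta(\eps^2)$ and $\delta' = \Theta(\eps)$ so that the additive distortion $\eps' D$ on $G_{j,\sigma}$ is at most $\Theta(\eps)$ (since $D = O(1/\eps)$), and the probability $1-\delta'$ that a vertex lies in the Ramsey set $M$ is $1-\Theta(\eps)$. Choosing the "good" shift $\sigma$ that maximizes the measure of $\tilde I\cap M$ over the (independent) random choices of the embedding absorbs the extra $\delta'$ loss into another $(1-\Theta(\eps))$ factor, and the additive $+\Theta(\eps)$ distortion translates into a $(1-\eps)\rho$ vs.\ $\rho$ separation exactly as in \Cref{clm:IS-dist}. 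Then run the $(1-\eps)$-independent-set routine of \Cref{lm:KLS-IS-DS} on each host graph (which has treewidth $\tw(\Theta(\eps),\Theta(\eps))$) with the terminal set being the image of the "inner" layer and measure zero elsewhere, and lift the solution back through the one-to-one Ramsey embedding. The $\rho$-dominating set case is symmetric, using the $(\eps' D,\delta')$-clan embedding instead: here the chief $\chi(v)$ of each vertex is used as the canonical copy, the dominating set is pulled back via $f^{-1}$ on the chosen copies, and the additive distortion on the clan side gives the $(1+\eps)\rho$ guarantee; the expected-clan-size bound $1+\delta'$ only affects the measure bookkeeping by a $(1+\Theta(\eps))$ factor, again hidden in the Baker shift.

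For the running time, \Cref{lm:KLS-IS-DS} on a host graph of treewidth $\tw := \tw(\Theta(\eps),\Theta(\eps))$ costs $n\cdot(\log n/\eps)^{O(\tw)} = 2^{O(\tw\cdot\log(\tw/\eps))}\cdot\poly(n)$ after writing $(\log n)^{O(\tw)} = 2^{O(\tw\log\log n)}$ and absorbing the $2^{O(\tw\log\log n)}$ term into $2^{O(\tw\log(\tw/\eps))}\cdot n^{o(1)}$ when $\tw = \polylog(n)$ (or simply keeping it as a $\poly(n)$ factor, which suffices for the stated bound); summing over the $O(1/\eps^2)$ pairs $(j,\sigma)$ only multiplies by a polynomial factor. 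Since the clan and Ramsey embeddings are constructible in $\poly(n)$ time by hypothesis (they come from \Cref{lm:Red-NearlyPlanar-ClanRamsey}), the total is $2^{O(\tw(\Theta(\eps),\Theta(\eps))\cdot\log(\tw(\Theta(\eps),\Theta(\eps))/\eps))}\cdot\poly(n)$ as claimed. The main obstacle, and the part deserving the most care, is the interaction between the \emph{distribution} over embeddings and the Baker shift: one must argue that there is a single shift $\sigma$ that is simultaneously good in expectation for the measure loss, for the $\delta'$-failure of the Ramsey/clan guarantee, and for the additive distortion — this is handled by a union/averaging argument over the $\Theta(1/\eps)$ shifts combined with linearity of expectation, exactly in the spirit of \cite{FL21}, but it is the step where the constants must be chosen consistently so that all three $(1\pm\Theta(\eps))$ factors compose to $(1\pm\eps)$.
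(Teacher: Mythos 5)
The paper never actually proves \Cref{lm:dominating}: it is imported wholesale from Theorems 7 and 8 of \cite{FL21}, and the surrounding text only sketches the intended route (Ramsey type embedding for the $\rho$-independent set problem, clan embedding for the $\rho$-dominating set problem, with the treewidth dynamic program of \Cref{lm:KLS-IS-DS} run on the host graphs). Your plan follows exactly this route, including the Baker layering needed to reduce to pieces of diameter $O(\rho/\eps)$, so in terms of approach you are aligned with what the paper (and \cite{FL21}) intends.

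The genuine problem is a parameter inconsistency that sits precisely at the step you yourself flag as delicate. You invoke the embeddings with distortion parameter $\eps'=\Theta(\eps^2)$, and indeed this is forced by your setup: the bands have diameter $\Theta(\rho/\eps)$ and the additive error must be $O(\eps\rho)$ (this mirrors the planar argument for \Cref{thm:MetricBakerPlanar}, which takes $\delta=\eps^2/12$ for exactly this reason). Consequently the host graphs on which you run \Cref{lm:KLS-IS-DS} have treewidth $\tw(\Theta(\eps^2),\Theta(\eps))$, yet in your running-time paragraph you silently set $\tw:=\tw(\Theta(\eps),\Theta(\eps))$, which is what the lemma claims. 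You must either explain why distortion parameter $\Theta(\eps)$ suffices --- and your own banding analysis rules out the obvious options, since wider bands lose more than an $\eps$-fraction of the measure while narrower bands do not shrink the additive error --- or accept that your argument proves the lemma only with $\tw(\Theta(\eps^2),\Theta(\eps))$, a weaker statement whose downstream effect would be to change the exponent in \Cref{thm:MetricBakerMinor} from $\eps^{-2}$ to $\eps^{-3}$. Two smaller points: the parenthetical claim that the factor $2^{O(\tw\log\log n)}$ can ``simply be kept as a $\poly(n)$ factor'' is false once $\tw$ exceeds $\log n/\log\log n$; the correct (easy) fix is the case split $\tw\le\sqrt{\log n}$, where the factor is $n^{o(1)}$, versus $\tw>\sqrt{\log n}$, where $\log\log n=O(\log\tw)$. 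Also, the Ramsey set $M$ and the clan-size bound are guaranteed only in probability/expectation over the random embedding, so to output sets meeting the stated guarantees you need a Markov-plus-repetition (best-of-samples) step per band, not only the averaging over the $\Theta(1/\eps)$ shifts that you describe.
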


We are now ready to prove \Cref{thm:MetricBakerMinor}.

\begin{proof}[Proof of \Cref{thm:MetricBakerMinor}] By \Cref{thm:MinorClan} and \Cref{thm:MinorRamsey}, $K_r$-minor-free graphs  of diameter $D$ and $n$ vertices have $(\eps D,\delta)$-clan and $(\eps D,\delta)$-Ramsey type embeddings, both with treewidth $\tw(\eps,\delta) = O_r(\frac{\log n (\log\log n)^2)}{\eps \delta})$. Thus, by \Cref{lm:dominating}, we obtain a bicriteria approximation scheme for $\rho$-independent set and $\rho$-dominating set problems in time:
	\begin{equation*}
		\begin{split}
		2^{O(\tw(\Theta(\eps),\Theta(\eps))\cdot\log\frac{\tw(\Theta(\eps),\Theta(\eps))}{\eps})}\cdot\poly(n) &= 2^{O(O_r(\frac{\log n (\log\log n)^2}{\eps^2}\log(\frac{\log n (\log\log n)^2}{\eps^3})}\\
				&= n^{\tilde{O}_r(\eps^{-2}(\log\log(n))^3)}~,			
		\end{split}
	\end{equation*}
as claimed.
\end{proof}

\section{Treewidth Lower Bounds}\label{sec:TwLb}

\subsection{Additive Distortion (Proof of \Cref{thm:PlanarEmbeddingLB})}

\begin{wrapfigure}{r}{0.1\textwidth}
	\begin{center}
		\vspace{-20pt}
		\includegraphics[width=0.95\textwidth]{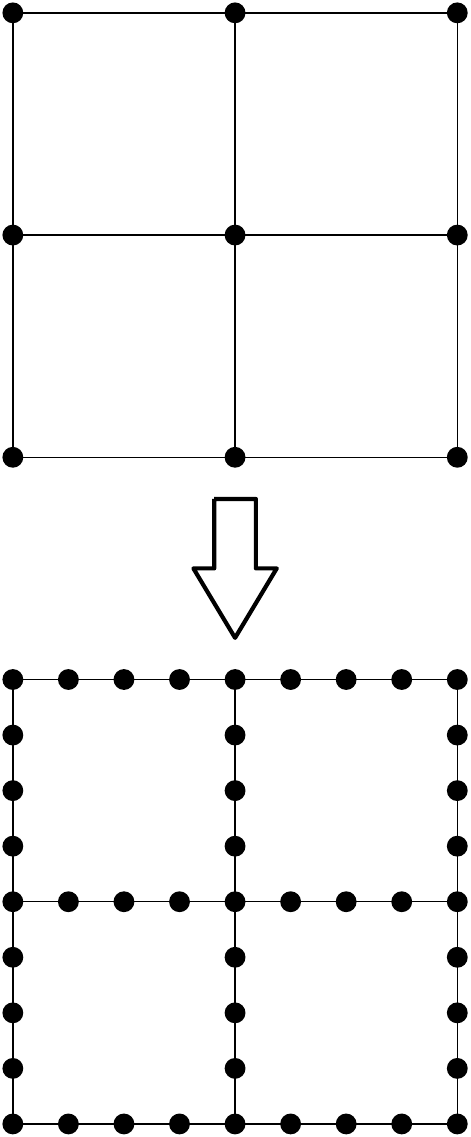}
		\vspace{-40pt}
	\end{center}
	\vspace{-10pt}
\end{wrapfigure}
In this section, we prove a lower bound on the treewidth of the embedding planar graphs with additive distortion. For a given graph $J = (V_J,E_J)$, we denoted by $J_k$ a $k$-subdivision of $J$. That is, $J_k$ is obtained from $J$ by replacing each edge with a path of length $k$. See the illustration on the right for a $4$-subdivision. Our lower bound uses the following lemma by Carrol and Goel~\cite{CG04}.

\begin{lemma}[Lemma 1~\cite{CG04}]\label{lm:CGLB} Let $H$ be a (possibly weighted) graph that excludes $J$ as a minor.
	Every dominating embedding from $J_k$ to $H$ has a multiplicative distortion at least $\frac{k-3}{6}$. 
\end{lemma}

\PlanarEmbeddingLB*
\begin{proof}	W.l.o.g, we assume that $\eps$ is sufficiently smaller than $1$ since otherwise, the theorem trivially holds. Set $n'=\frac{1}{42\eps}+1$. Let $Q$ be an $n'\times n'$-grid. The diameter of $Q$ is $2\cdot \frac{1}{42\eps} = \frac{1}{21\eps}$. Let $Q_{k}$ be a $k$-subdivision of $Q$ for $k = 21$. Observe that the diameter of  $Q_k$ is $21\cdot\frac{1}{21\eps} = \frac{1}{\eps}$. Furthermore, $|V(Q_k)| = O(1/\eps^2)$. Finally, we add $n - |V(Q_k)|$ dummy vertices and connect all of them to some vertex of $Q_k$ in a star-like way. Let $G$ be the resulting graph; $G$ is planar by construction. Furthermore, the diameter of $G$ is $D =  1/\eps + 1$.  
	
	Suppose that there is a deterministic, dominating embedding $f$ of $G$ with an additive distortion at most $\eps D$ into a treewidth $o(1/\eps)$ graph $H$. Observe that $\eps D = \eps(\frac{1}{\eps} + 2) = 1+2\eps \leq 2$ when $\eps \leq 1/2$. It follows that $f$ has multiplicative distortion at most $2$.   As $H$ has treewidth $o(1/\eps)$ (see e.g. \cite{NST94}), it excludes $Q$ as a minor. Thus by \Cref{lm:CGLB}, any embedding of $Q_k$ into $H$, including $f_{Q_k}$, must have a multiplicative distortion at least $\frac{k-3}{6} = 3$, a contradiction. Thus, the treewidth of $H$ must be $\Omega(1/\eps)$ as claimed.
\end{proof}

\subsection{Multiplicative Distortion (proof of \Cref{thm:multLBStochastic})}\label{subsec:multDistLB}

In this section, we follow the lower bound constructions of  Carroll and Goel~\cite{CG04} and Chakrabarti, Jaffe, Lee, and Vincent~\cite{CJLV08} (which by itself is based on \cite{GNRS04})
to derive lower bounds on the multiplicative distortion for stochastic embeddings of planar graphs into low-treewidth graphs.
Chakrabarti \etal \cite{CJLV08} showed that there exists an infinite family of graphs such that for any graph $G$ in the family with $n$ vertices, any embedding of $G$ into  a distribution over graphs of \emph{constant} treewidth must have an expected (multiplicative) distortion $\Omega(\log n)$. By carefully adapting the parameters of the construction of  Chakrabarti \etal~\cite{CJLV08}, we obtain a quantitative lower bound w.r.t. the treewidth. The rest of the section is devoted to proving \Cref{thm:multLBStochastic}.
\begin{theorem}\label{thm:multLBStochastic} There exists an infinite family of planar graphs $\mathcal{G}$ such that for any graph $G_n\in \mathcal{G}$ of $n$ vertices for a sufficiently large $n$, any embedding of $G_n$ into a distribution over graphs of treewidth $O(\frac{\log^{\nicefrac13} n}{\log\log n})$ must have expected multiplicative distortion $\Omega(\frac{\log^{\nicefrac13} n}{\log\log n})$.
\end{theorem}

Given an edge-weighted graph $G = (V,E,w)$ and two vertices $s\ne t$ in $G$, we say that $G$ is \emph{$(s,t)$-geodesic} if we can partition the edge set of $G$ into a set of edge-disjoint shortest $s$-to-$t$ paths. 

Denote by $Q$ the $(n+1)\times (n+1)$ unweighted grid, and by $Q_{p}$ the $p$-subdivision of $Q$. 
Denote by $\mathcal{X}_n$ the family of graphs with treewidth at most $n$. As every graph of treewidth at most $n$ excludes $Q$ as a minor~\cite{RS86}, by \Cref{lm:CGLB} we observe:
\begin{observation}\label{obs:multLBDeterministic} 
	For every $p\in\N$, every \emph{deterministic} embedding of $Q_{p}$ into a graph in $\mathcal{X}_n$ must incur multiplicative distortion $\frac{p-3}{6}$.
\end{observation}

Next we construct a geodesic version of the grid graph $Q$. Metric embedding $f:X\rightarrow Y$ between two metric spaces $(X,d_X),(Y,d_Y)$ is called isometric if for every $x,y\in X$, $d_X(x,y)=\alpha\cdot d_Y(f(x),f(y))$ for some constant $\alpha$.
\begin{lemma}\label{lem:GeodesicGrid}
	There is an unweighted $(s,t)$-geodesic planar graph $\hat{Q}$ with $<8n^2$ edges, and $2n+2$ disjoint $s-t$ shortest paths $\mathcal{P}$ of length $2n$, such that $Q$ embeds isometrically into $\hat{Q}$.
\end{lemma}
\begin{proof}
	Denote by $s,t$ the upper-left, and lower-right vertices of the grid $Q$, respectively.
	There is a set of $2(n+1)$ shortest paths $\{P_i\}_{i=1}^{2(n+1)}$ from $s$ to $t$ such that every edge of $Q$ belongs to some path in $Q$ (each path can ``cover'' all the edges in a single row/column). By duplicating edges, we can obtain a planar multi-graph $Q^M$, such that the edges of $Q^M$ can be partitioned into disjoint paths $\{P_i^M\}_{i=1}^{2(n+1)}$, while the shortest path distance of $Q$ and $Q^M$ is the same.
	$Q^M$ has $|E(Q^M)|=2n\cdot2(n+1)$ edges (as it a collection of $2(n+1)$ paths of length $2n$). The graph $Q^M_2$ (where each edge is subdivided to two edges) is a simple graph, with  $|E(Q^M_2)|=2|E(Q^M)|=4n\cdot2(n+1)$ edges, and $|\mathcal{P}(Q^M_2)|=2(n+1)$ disjoint $s-t$ shortest paths of length $2n$. Note that $Q$ indeed embeds isometrically into $Q^M_2$, as all distances are increased by a factor of $2$. 
	See \Cref{fig:geodesic} for an illustration.
\end{proof}
	\begin{figure}[t]
	\centering{\includegraphics[scale=.64]{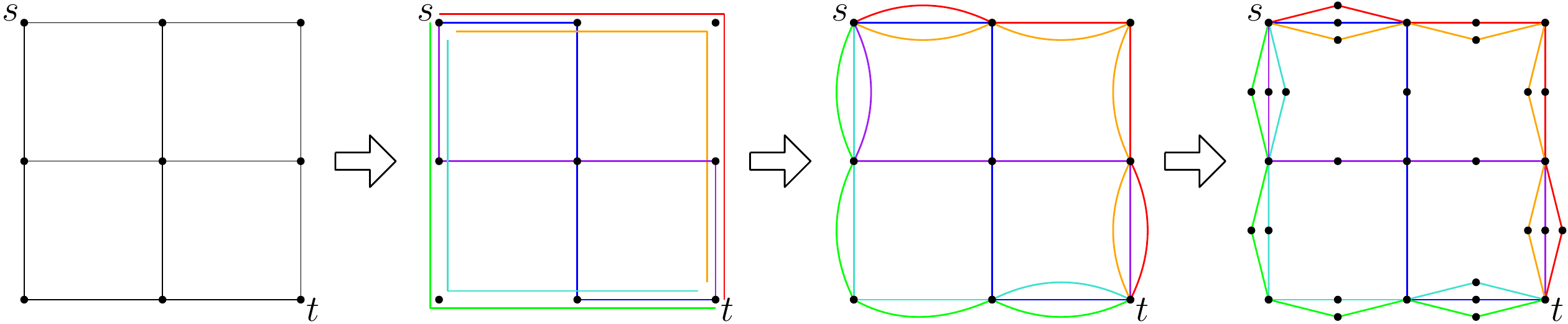}}
	\caption{\label{fig:geodesic}\small The $3\times3$ grid $Q$ is illustrated in the first figure on the left. In the second figure, all $Q$ edges are covered by $6$ paths. In the third figure, we obtain an isometric multi-graph whose edges can be partitioned into disjoint shortest paths. In the right figure, we obtain a simple graph by subdividing each edge.}
\end{figure}
	\begin{figure}[t]
	\centering{\includegraphics[scale=.27]{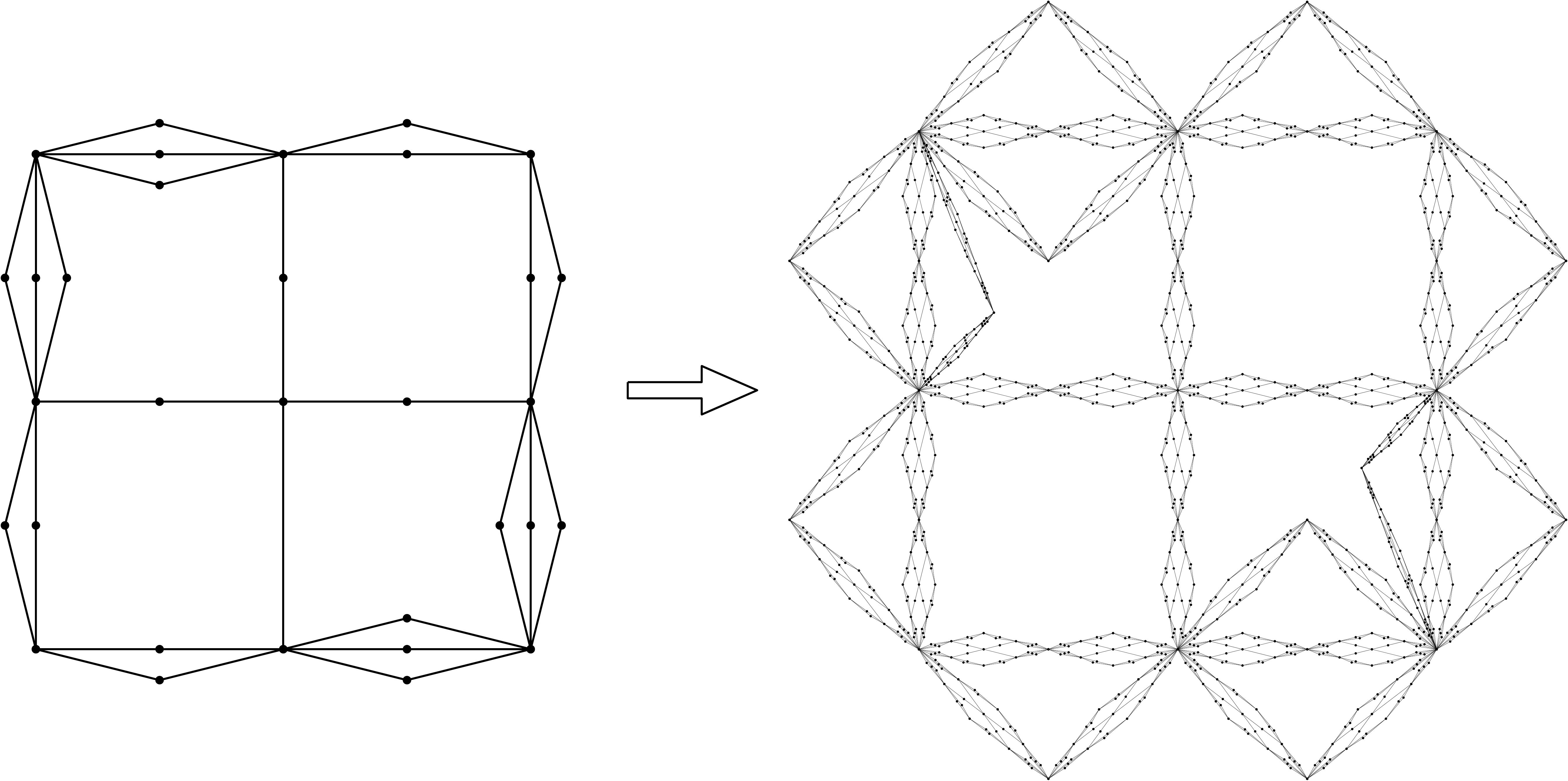}}
	\caption{\label{fig:fractal}\small On the left illustrated the geodesic version $\hat{Q}$ of the $3\times3$ grid $Q$, also denoted by $H_1$.
	On the right the graph $H_2$ is illustrated, where we replaced each edge of $H_1$ with a copy of $H_1$.}
\end{figure}
Note that as $Q$ embeds isometrically into $\hat{Q}$, every $p$-subdivision of $Q$, denoted $Q_p$, embeds isometrically into the $p$-subdivision of $\hat{Q}$, denoted $\hat{Q}_p$. In particular, 
every deterministic embedding of $\hat{Q}_{p}$ into $\mathcal{X}_n$ incurs a multiplicative distortion at least $\frac{p-3}{6}$.

Let $H_1$ by the graph $\hat{Q}$. For $k\ge 2$, we construct $H_k$, by replacing every edge in $H_{k-1}$ with a copy of $H_1$, identifying $s,t$ with the endpoints of the edge arbitrarily.
Note that $H_k$ is a planar graph. See \Cref{fig:fractal} for an illustration.
Following a mini-max argument in \cite{GNRS04}, in order to show that every stochastic embedding of $H_k$ into graphs of $\mathcal{X}_n$ has expected multiplicative distortion $\alpha$, it is enough to show that for every deterministic embedding, the average distortion over the edges is $\alpha$. More formally, it is enough to show that for every dominating embedding $f$ into $G\in\mathcal{X}_n$ it holds that $\sum_{\{u,v\}\in E(H_k)}d_G(f(u),f(v))\ge\alpha\cdot|E(H_k)|$.

Denote $m=|E(H_1)|=|E(\hat{Q})|$. The following claim follows by a simple induction:
\begin{claim}\label{clm:disjointCopies}
	For all $1\le i\le k$, $H_k$ contains $m^{k-i}$ disjoint copies of $H_i$.
\end{claim}
$H_1$ contains $\rho=2(n+1)$ shortest paths, all of length $\ell=2n$. Note that $\ell\cdot\rho=m$. By another induction, for $i\ge 1$ it follows:
\begin{claim}\label{clm:disjointSubdivisions}
	$H_i$ contains $\rho^{i-1}$ edge-disjoint $\ell^{i-1}$-subdivisions of $H_1$ (that is $\hat{Q}_{(2n)^{i-1}}$).
\end{claim}
Fix some embedding $f$ of $H_k$ into a graph in $\mathcal{X}_n$.
By \Cref{obs:multLBDeterministic}, each such subdivision  $\hat{Q}_{\ell^{i-1}}$ has distortion $\frac{\ell^{i-1}-3}{6}\ge \frac{\ell^{i-1}}{7}$ w.r.t. $f$ (we assume $n$ is large enough, and for $i=1$ the distortion is at least $1\ge\frac17$).
In particular,  some edge $\{u,v\}$ in each subdivision has $d_G(f(u),f(v))\ge \frac{1}{7}\cdot\ell^{i-1}$.
Let $C_i=\left\{\{u,v\}\in E(H_k)\mid d_G(f(u),f(v))\ge \frac{1}{7}\cdot\ell^{i-1} \right\}$ be the set of of all edges that are distorted by at least $\frac{1}{7}\cdot\ell^{i-1}$.
By combining \Cref{clm:disjointCopies} and \Cref{clm:disjointSubdivisions}, $|C_i|\ge m^{k-i}\cdot\rho^{i-1}$.
For an edge $\{u,v\}\in E(H_k)$, let $C_{\{u,v\}}=\left\{i\mid \{u,v\}\in C_i\right\}$.
It holds that $\sum_{i\in C_{\{u,v\}}}\frac{1}{7}\cdot\ell^{i-1}\le\frac{2}{7}\cdot\ell^{\max C_{\{u,v\}}-1}\le2\cdot d_{G}(f(u),f(v))$.
We conclude:
\begin{align*}
	\sum_{\{u,v\}\in E(H_{k})}d_{G}(f(u),f(v)) & \ge\sum_{\{u,v\}\in E(H_{k})}\sum_{i\in C_{\{u,v\}}}\frac{1}{14}\cdot\ell^{i-1}=\frac{1}{14}\cdot\sum_{i=1}^{k}|C_{i}|\cdot\ell^{i-1}\\
	& =\frac{1}{14}\cdot\sum_{i=1}^{k}m^{k-i}\cdot\rho^{i-1}\cdot\ell^{i-1}=\frac{1}{14}\cdot\sum_{i=1}^{k}m^{k-i}\cdot m^{i-1}=\frac{k}{14m}\cdot m^{k}=\frac{k}{14m}\cdot|E(H_{k})|~.
\end{align*}
Set $k=n^3$. Then $H_k$ has $N=|V(H_k)|\le |E(H_k)|=m^k<(8n)^{2k}=2^{O(n^3\log n)}$ vertices, and every stochastic embedding into a graphs of treewidth $n=O(\frac{\log^{\nicefrac13} N}{\log\log N})$, will have expected distortion at least $\frac{k}{14m}=\Omega(n)=\Omega(\frac{\log^{\nicefrac13} N}{\log\log N})$. \Cref{thm:multLBStochastic} now follows.

\begin{remark}
	It is interesting to understand the exact trade-off between the treewidth of the host graphs, and the expected distortion. \cite{FRT04} implies that one can embed every $N$-vertex planar graph (actually general graph) into a distribution over treewidth-$1$ graphs with expected distortion $O(\log N)$, while \Cref{thm:multLBStochastic} states that one cannot obtain expected distortion smaller that $\Omega(\frac{\log^{\nicefrac13} N}{\log\log N})$ for embedding planar graphs into treewidth-$\Omega(\frac{\log^{\nicefrac13} N}{\log\log N})$ graphs.
	By choosing $k=n^{2+\alpha}$ in the proof of \Cref{thm:multLBStochastic} for constant $\alpha>0$, we derive that one cannot obtain distortion smaller than $\Omega\left(\frac{\log^{\frac{\alpha}{2+\alpha}}N}{\log\log N}\right)$
	for embedding planar graphs into treewidth-$O\left(\frac{\log^{\frac{1}{2+\alpha}}N}{\log\log N}\right)$ graphs.
\end{remark}

\section*{Acknowledgments} The authors are grateful to Philip Klein and Vincent Cohen-Addad for helpful conversations.

	\bibliographystyle{alphaurlinit}
	\bibliography{RamseyTreewidthBib,RPTALGbib,spanner}

\newcommand{\etalchar}[1]{$^{#1}$}
\begin{thebibliography}{CAdVK{\etalchar{+}}16}

\bibitem[ACE{\etalchar{+}}20]{ACEFN20}
I.~Abraham, S.~Chechik, M.~Elkin, A.~Filtser, and O.~Neiman.
\newblock Ramsey spanning trees and their applications.
\newblock {\em {ACM} Trans. Algorithms}, 16(2):19:1--19:21, 2020.
\newblock preliminary version published in SODA 2018.
\newblock \href {http://dx.doi.org/10.1145/3371039}
  {\path{doi:10.1145/3371039}}.

\bibitem[ACL09]{ACL09}
A.~Adamaszek, A.~Czumaj, and A.~Lingas.
\newblock {PTAS} for $k$-tour cover problem on the plane for moderately large
  values of $k$.
\newblock In {\em International Symposium on Algorithms and Computation}, ISAAC
  `09, pages 994--1003, 2009.
\newblock \href {http://dx.doi.org/10.1007/978-3-642-10631-6_100}
  {\path{doi:10.1007/978-3-642-10631-6_100}}.

\bibitem[AFGN18]{AFGN18}
I.~Abraham, A.~Filtser, A.~Gupta, and O.~Neiman.
\newblock Metric embedding via shortest path decompositions.
\newblock In {\em Proceedings of the 50th Annual {ACM} {SIGACT} Symposium on
  Theory of Computing, {STOC} 2018, Los Angeles, CA, USA, June 25-29, 2018},
  pages 952--963, 2018.
\newblock full version: \url{https://arxiv.org/abs/1708.04073}.
\newblock \href {http://dx.doi.org/10.1145/3188745.3188808}
  {\path{doi:10.1145/3188745.3188808}}.

\bibitem[AGG{\etalchar{+}}19]{AGGNT19}
I.~Abraham, C.~Gavoille, A.~Gupta, O.~Neiman, and K.~Talwar.
\newblock Cops, robbers, and threatening skeletons: Padded decomposition for
  minor-free graphs.
\newblock {\em {SIAM} J. Comput.}, 48(3):1120--1145, 2019.
\newblock preliminary version published in STOC 2014.
\newblock \href {http://dx.doi.org/10.1137/17M1112406}
  {\path{doi:10.1137/17M1112406}}.

\bibitem[AGK{\etalchar{+}}98]{AGKKW98}
S.~Arora, M.~Grigni, D.~R. Karger, P.~N. Klein, and A.~Woloszyn.
\newblock A polynomial-time approximation scheme for weighted planar graph
  {TSP}.
\newblock In {\em Proceedings of the Ninth Annual {ACM-SIAM} Symposium on
  Discrete Algorithms, 25-27 January 1998, San Francisco, California, {USA}},
  pages 33--41, 1998.
\newblock see \href{http://dl.acm.org/citation.cfm?id=314613.314632}{here}.

\bibitem[AHLT05]{AHLT05}
S.~Alstrup, J.~Holm, K.~D. Lichtenberg, and M.~Thorup.
\newblock Maintaining information in fully dynamic trees with top trees.
\newblock {\em {ACM} Transactions on Algorithms}, 1(2):243--264, 2005.
\newblock \href {http://dx.doi.org/10.1145/1103963.1103966}
  {\path{doi:10.1145/1103963.1103966}}.

\bibitem[AKTT97]{AKTT97}
T.~Asano, N.~Katoh, H.~Tamaki, and T.~Tokuyama.
\newblock Covering points in the plane by $k$-tours: {T}owards a polynomial
  time approximation scheme for general $k$.
\newblock In {\em Proceedings of the 29thh Annual ACM Symposium on Theory of
  Computing}, STOC `97, page 275–283, 1997.
\newblock \href {http://dx.doi.org/10.1145/258533.258602}
  {\path{doi:10.1145/258533.258602}}.

\bibitem[AMS99]{AMS99Sketch}
N.~Alon, Y.~Matias, and M.~Szegedy.
\newblock The space complexity of approximating the frequency moments.
\newblock {\em J. Comput. Syst. Sci.}, 58(1):137--147, 1999.
\newblock preliminary version published in STOC 1996.
\newblock \href {http://dx.doi.org/10.1006/jcss.1997.1545}
  {\path{doi:10.1006/jcss.1997.1545}}.

\bibitem[AS87]{AS87}
N.~Alon and B.~Schieber.
\newblock {\em Optimal preprocessing for answering on-line product queries}.
\newblock 1987.
\newblock see
  \href{http://citeseerx.ist.psu.edu/viewdoc/download?doi=10.1.1.79.1554&rep=rep1&type=pdf}{here}.

\bibitem[AS03]{AS03}
V.~Athitsos and S.~Sclaroff.
\newblock Database indexing methods for 3d hand pose estimation.
\newblock In {\em Gesture-Based Communication in Human-Computer Interaction,
  5th International Gesture Workshop, {GW} 2003, Genova, Italy, April 15-17,
  2003, Selected Revised Papers}, pages 288--299, 2003.
\newblock \href {http://dx.doi.org/10.1007/978-3-540-24598-8\_27}
  {\path{doi:10.1007/978-3-540-24598-8\_27}}.

\bibitem[Bak94]{Baker94}
B.~S. Baker.
\newblock Approximation algorithms for {NP}-complete problems on planar graphs.
\newblock {\em Journal of the ACM}, 41(1):153--180, 1994.
\newblock preliminary version published in FOCS 1983.
\newblock \href {http://dx.doi.org/10.1145/174644.174650}
  {\path{doi:10.1145/174644.174650}}.

\bibitem[Bar96]{Bar96}
Y.~Bartal.
\newblock Probabilistic approximations of metric spaces and its algorithmic
  applications.
\newblock In {\em 37th Annual Symposium on Foundations of Computer Science,
  {FOCS} '96, Burlington, Vermont, USA, 14-16 October, 1996}, pages 184--193,
  1996.
\newblock \href {http://dx.doi.org/10.1109/SFCS.1996.548477}
  {\path{doi:10.1109/SFCS.1996.548477}}.

\bibitem[Bar98]{Bartal98}
Y.~Bartal.
\newblock On approximating arbitrary metrices by tree metrics.
\newblock In {\em Proceedings of the Thirtieth Annual {ACM} Symposium on the
  Theory of Computing, Dallas, Texas, USA, May 23-26, 1998}, pages 161--168,
  1998.
\newblock \href {http://dx.doi.org/10.1145/276698.276725}
  {\path{doi:10.1145/276698.276725}}.

\bibitem[Bar04]{Bartal04}
Y.~Bartal.
\newblock Graph decomposition lemmas and their role in metric embedding
  methods.
\newblock In {\em Algorithms - {ESA} 2004, 12th Annual European Symposium,
  Bergen, Norway, September 14-17, 2004, Proceedings}, pages 89--97, 2004.
\newblock \href {http://dx.doi.org/10.1007/978-3-540-30140-0\_10}
  {\path{doi:10.1007/978-3-540-30140-0\_10}}.

\bibitem[Bar21]{Bar21}
Y.~Bartal.
\newblock Advances in metric ramsey theory and its applications.
\newblock {\em CoRR}, abs/2104.03484, 2021.
\newblock \href {http://arxiv.org/abs/2104.03484} {\path{arXiv:2104.03484}}.

\bibitem[BBM06]{BBM06}
Y.~Bartal, B.~Bollob{\'{a}}s, and M.~Mendel.
\newblock Ramsey-type theorems for metric spaces with applications to online
  problems.
\newblock {\em J. Comput. Syst. Sci.}, 72(5):890--921, 2006.
\newblock Special Issue on FOCS 2001.
\newblock \href {http://dx.doi.org/10.1016/j.jcss.2005.05.008}
  {\path{doi:10.1016/j.jcss.2005.05.008}}.

\bibitem[BCL{\etalchar{+}}18]{BCLLM18}
S.~Bubeck, M.~B. Cohen, Y.~T. Lee, J.~R. Lee, and A.~Madry.
\newblock k-server via multiscale entropic regularization.
\newblock In {\em Proceedings of the 50th Annual {ACM} {SIGACT} Symposium on
  Theory of Computing, {STOC} 2018, Los Angeles, CA, USA, June 25-29, 2018},
  pages 3--16, 2018.
\newblock \href {http://dx.doi.org/10.1145/3188745.3188798}
  {\path{doi:10.1145/3188745.3188798}}.

\bibitem[Bec18]{Becker18}
A.~Becker.
\newblock A tight 4/3 approximation for capacitated vehicle routing in trees.
\newblock In 118, editor, {\em Approximation, Randomization, and Combinatorial
  Optimization. Algorithms and Techniques (APPROX/RANDOM 2018)}, pages
  3:1--3:15, 2018.
\newblock \href {http://dx.doi.org/10.4230/LIPIcs.APPROX-RANDOM.2018.3}
  {\path{doi:10.4230/LIPIcs.APPROX-RANDOM.2018.3}}.

\bibitem[BFC00]{BF00}
M.~A. Bender and M.~Farach-Colton.
\newblock The {LCA} problem revisited.
\newblock In {\em Latin American Symposium on Theoretical Informatics}, LATIN
  '00, pages 88--94, 2000.
\newblock \href {http://dx.doi.org/10.1007/10719839_9}
  {\path{doi:10.1007/10719839_9}}.

\bibitem[BFM86]{BFM86}
J.~Bourgain, T.~Figiel, and V.~Milman.
\newblock On {H}ilbertian subsets of finite metric spaces.
\newblock {\em Israel J. Math.}, 55(2):147--152, 1986.
\newblock \href {http://dx.doi.org/10.1007/BF02801990}
  {\path{doi:10.1007/BF02801990}}.

\bibitem[BGS16]{BGS16}
G.~E. Blelloch, Y.~Gu, and Y.~Sun.
\newblock A new efficient construction on probabilistic tree embeddings.
\newblock {\em CoRR}, abs/1605.04651, 2016.
\newblock \url{https://arxiv.org/abs/1605.04651}.
\newblock \href {http://arxiv.org/abs/1605.04651} {\path{arXiv:1605.04651}}.

\bibitem[BKS17]{BKS17}
A.~Becker, P.~N. Klein, and D.~Saulpic.
\newblock A quasi-polynomial-time approximation scheme for vehicle routing on
  planar and bounded-genus graphs.
\newblock In {\em 25th Annual European Symposium on Algorithms}, volume~87 of
  {\em ESA `17}, pages 12:1--12:15, 2017.
\newblock \href {http://dx.doi.org/10.4230/LIPIcs.ESA.2017.12}
  {\path{doi:10.4230/LIPIcs.ESA.2017.12}}.

\bibitem[BKS19]{BKS19}
A.~Becker, P.~N. Klein, and A.~Schild.
\newblock A {PTAS} for bounded-capacity vehicle routing in planar graphs.
\newblock In {\em Algorithms and Data Structures}, WADS `19, pages 99--111,
  2019.
\newblock \href {http://dx.doi.org/10.1007/978-3-030-24766-9_8}
  {\path{doi:10.1007/978-3-030-24766-9_8}}.

\bibitem[BL16]{BL16}
G.~Borradaile and H.~Le.
\newblock Optimal dynamic program for r-domination problems over tree
  decompositions.
\newblock In {\em 11th International Symposium on Parameterized and Exact
  Computation, {IPEC} 2016, August 24-26, 2016, Aarhus, Denmark}, pages
  8:1--8:23, 2016.
\newblock \href {http://dx.doi.org/10.4230/LIPIcs.IPEC.2016.8}
  {\path{doi:10.4230/LIPIcs.IPEC.2016.8}}.

\bibitem[BLMN05]{BLMN03}
Y.~Bartal, N.~Linial, M.~Mendel, and A.~Naor.
\newblock On metric ramsey-type phenomena.
\newblock {\em Annals of Mathematics}, 162(2):643--709, 2005.
\newblock \href {http://dx.doi.org/10.4007/annals.2005.162.643}
  {\path{doi:10.4007/annals.2005.162.643}}.

\bibitem[BLW17]{BLW17}
G.~Borradaile, H.~Le, and C.~Wulff{-}Nilsen.
\newblock Minor-free graphs have light spanners.
\newblock In {\em 2017 IEEE 58th Annual Symposium on Foundations of Computer
  Science}, FOCS '17, pages 767--778, 2017.
\newblock \href {http://dx.doi.org/10.1109/FOCS.2017.76}
  {\path{doi:10.1109/FOCS.2017.76}}.

\bibitem[BM04]{BM04}
J.~M. Boyer and W.~J. Myrvold.
\newblock On the cutting edge: Simplified o(n) planarity by edge addition.
\newblock {\em J. Graph Algorithms Appl.}, 8(3):241--273, 2004.
\newblock \href {http://dx.doi.org/10.7155/jgaa.00091}
  {\path{doi:10.7155/jgaa.00091}}.

\bibitem[BTS94]{BTS94}
H.~L. Bodlaender, G.~Tel, and N.~Santoro.
\newblock Trade-offs in non-reversing diameter.
\newblock {\em Nord. J. Comput.}, 1(1):111--134, 1994.
\newblock See: \href{https://dl.acm.org/doi/10.5555/640186.640193}{here}.

\bibitem[CAdVK{\etalchar{+}}16]{CCKMM16}
V.~Cohen-Addad, {\'E}.~C. de~Verdi{\`e}re, P.~N. Klein, C.~Mathieu, and
  D.~Meierfrankenfeld.
\newblock Approximating connectivity domination in weighted bounded-genus
  graphs.
\newblock In {\em Proceedings of the 48th Annual ACM SIGACT Symposium on Theory
  of Computing}, pages 584--597. ACM, 2016.

\bibitem[CFKL20]{CFKL20}
V.~Cohen{-}Addad, A.~Filtser, P.~N. Klein, and H.~Le.
\newblock On light spanners, low-treewidth embeddings and efficient traversing
  in minor-free graphs.
\newblock In {\em 61th Annual {IEEE} Symposium on Foundations of Computer
  Science, {FOCS}}, pages 589--600, 2020.
\newblock See:
  \href{http://ieee-focs.org/FOCS-2020-Papers/pdfs/FOCS2020-SFPLmbQgSLgOwZlanGgzq/962100a589/962100a589.pdf}{conference
  version}, \href{https://arxiv.org/abs/2009.05039}{arXiv version},.

\bibitem[CG04]{CG04}
D.~E. Carroll and A.~Goel.
\newblock Lower bounds for embedding into distributions over excluded minor
  graph families.
\newblock In {\em Algorithms - {ESA} 2004, 12th Annual European Symposium,
  Bergen, Norway, September 14-17, 2004, Proceedings}, pages 146--156, 2004.
\newblock \href {http://dx.doi.org/10.1007/978-3-540-30140-0\_15}
  {\path{doi:10.1007/978-3-540-30140-0\_15}}.

\bibitem[CG08]{CG08}
T.-H.~H. Chan and A.~Gupta.
\newblock Small hop-diameter sparse spanners for doubling metrics.
\newblock {\em Discrete {\&} Computational Geometry}, 41(1):28--44, 2008.
\newblock \href {http://dx.doi.org/10.1007/s00454-008-9115-5}
  {\path{doi:10.1007/s00454-008-9115-5}}.

\bibitem[CG12]{CG12}
T.~H. Chan and A.~Gupta.
\newblock Approximating {TSP} on metrics with bounded global growth.
\newblock {\em {SIAM} J. Comput.}, 41(3):587--617, 2012.
\newblock preliminary version published in SODA 2008.
\newblock \href {http://dx.doi.org/10.1137/090749396}
  {\path{doi:10.1137/090749396}}.

\bibitem[Cha87]{Chazelle87}
B.~Chazelle.
\newblock Computing on a free tree via complexity-preserving mappings.
\newblock {\em Algorithmica}, 2(1-4):337--361, 1987.
\newblock \href {http://dx.doi.org/10.1007/bf01840366}
  {\path{doi:10.1007/bf01840366}}.

\bibitem[CJLV08]{CJLV08}
A.~Chakrabarti, A.~Jaffe, J.~R. Lee, and J.~Vincent.
\newblock Embeddings of topological graphs: Lossy invariants, linearization,
  and 2-sums.
\newblock In {\em 49th Annual {IEEE} Symposium on Foundations of Computer
  Science, {FOCS} 2008, October 25-28, 2008, Philadelphia, PA, {USA}}, pages
  761--770, 2008.
\newblock \href {http://dx.doi.org/10.1109/FOCS.2008.79}
  {\path{doi:10.1109/FOCS.2008.79}}.

\bibitem[CZ00]{CZ00}
S.~Chaudhuri and C.~D. Zaroliagis.
\newblock Shortest paths in digraphs of small treewidth. part {I:} sequential
  algorithms.
\newblock {\em Algorithmica}, 27(3):212--226, 2000.
\newblock \href {http://dx.doi.org/10.1007/s004530010016}
  {\path{doi:10.1007/s004530010016}}.

\bibitem[DFHT05]{DFHT05}
E.~D. Demaine, F.~V. Fomin, M.~T. Hajiaghayi, and D.~M. Thilikos.
\newblock Fixed-parameter algorithms for (\emph{k}, \emph{r})-center in planar
  graphs and map graphs.
\newblock {\em {ACM} Trans. Algorithms}, 1(1):33--47, 2005.
\newblock preliminary version published in ICALP 2003.
\newblock \href {http://dx.doi.org/10.1145/1077464.1077468}
  {\path{doi:10.1145/1077464.1077468}}.

\bibitem[DH05]{DH05}
E.~D. Demaine and M.~Hajiaghayi.
\newblock Bidimensionality: New connections between { FPT} algorithms and
  {PTAS}s.
\newblock In {\em Proceedings of the Sixteenth Annual ACM-SIAM Symposium on
  Discrete Algorithms}, SODA'05, pages 590--601, 2005.

\bibitem[DHK05]{DHK05}
E.~D. Demaine, M.~Hajiaghayi, and K.~Kawarabayashi.
\newblock Algorithmic graph minor theory: Decomposition, approximation, and
  coloring.
\newblock In {\em Proceedings of the 46th Annual IEEE Symposium on Foundations
  of Computer Science}, pages 637--646, 2005.
\newblock \href {http://dx.doi.org/10.1109/SFCS.2005.14}
  {\path{doi:10.1109/SFCS.2005.14}}.

\bibitem[DHK11]{DHK11}
E.~D. Demaine, M.~Hajiaghayi, and K.~Kawarabayashi.
\newblock Contraction decomposition in h-minor-free graphs and algorithmic
  applications.
\newblock In {\em Proceedings of the 43rd {ACM} Symposium on Theory of
  Computing, {STOC} 2011, San Jose, CA, USA, 6-8 June 2011}, pages 441--450,
  2011.
\newblock \href {http://dx.doi.org/10.1145/1993636.1993696}
  {\path{doi:10.1145/1993636.1993696}}.

\bibitem[DHM07]{DHM07}
E.~D. Demaine, M.~Hajiaghayi, and B.~Mohar.
\newblock Approximation algorithms via contraction decomposition.
\newblock In {\em Proceedings of the Eighteenth Annual ACM-SIAM Symposium on
  Discrete Algorithms}, SODA '07, pages 278--287, 2007.

\bibitem[DM15]{DM15}
A.~Das and C.~Mathieu.
\newblock A quasipolynomial time approximation scheme for euclidean capacitated
  vehicle routing.
\newblock {\em Algorithmica}, 73(1):115–142, 2015.
\newblock \href {http://dx.doi.org/10.1007/s00453-014-9906-4}
  {\path{doi:10.1007/s00453-014-9906-4}}.

\bibitem[DR59]{DR59}
G.~B. Dantzig and J.~H. Ramser.
\newblock The truck dispatching problem.
\newblock {\em Management science}, 6(1):80--91, 1959.
\newblock \href {http://dx.doi.org/10.1287/mnsc.6.1.80}
  {\path{doi:10.1287/mnsc.6.1.80}}.

\bibitem[EKM14]{EKM14}
D.~Eisenstat, P.~N. Klein, and C.~Mathieu.
\newblock Approximating \emph{k}-center in planar graphs.
\newblock In {\em Proceedings of the Twenty-Fifth Annual {ACM-SIAM} Symposium
  on Discrete Algorithms, {SODA} 2014, Portland, Oregon, USA, January 5-7,
  2014}, pages 617--627, 2014.
\newblock \href {http://dx.doi.org/10.1137/1.9781611973402.47}
  {\path{doi:10.1137/1.9781611973402.47}}.

\bibitem[ELS10]{ELS10}
D.~Eppstein, M.~L{\"o}ffler, and D.~Strash.
\newblock Listing all maximal cliques in sparse graphs in near-optimal time.
\newblock In {\em International Symposium on Algorithms and Computation}, pages
  403--414, 2010.
\newblock \href {http://dx.doi.org/10.1007/978-3-642-17517-6_36}
  {\path{doi:10.1007/978-3-642-17517-6_36}}.

\bibitem[Epp00]{Eppstein2000}
D.~Eppstein.
\newblock Diameter and treewidth in minor-closed graph families.
\newblock {\em Algorithmica}, 27(3):275--291, 2000.
\newblock \href {http://dx.doi.org/10.1007/s004530010020}
  {\path{doi:10.1007/s004530010020}}.

\bibitem[FFKP18]{FFKP18}
A.~E. Feldmann, W.~S. Fung, J.~K{\"{o}}nemann, and I.~Post.
\newblock A (1+{\(\epsilon\)})-embedding of low highway dimension graphs into
  bounded treewidth graphs.
\newblock {\em {SIAM} J. Comput.}, 47(4):1667--1704, 2018.
\newblock preliminary version published in ICALP 2015.
\newblock \href {http://dx.doi.org/10.1137/16M1067196}
  {\path{doi:10.1137/16M1067196}}.

\bibitem[Fil20a]{Fil20face}
A.~Filtser.
\newblock A face cover perspective to $\ell_1$ embeddings of planar graphs.
\newblock In S.~Chawla, editor, {\em Proceedings of the 2020 {ACM-SIAM}
  Symposium on Discrete Algorithms, {SODA} 2020, Salt Lake City, UT, USA,
  January 5-8, 2020}, pages 1945--1954. {SIAM}, 2020.
\newblock \href {http://dx.doi.org/10.1137/1.9781611975994.120}
  {\path{doi:10.1137/1.9781611975994.120}}.

\bibitem[Fil20b]{Fil20scattering}
A.~Filtser.
\newblock Scattering and sparse partitions, and their applications.
\newblock In A.~Czumaj, A.~Dawar, and E.~Merelli, editors, {\em 47th
  International Colloquium on Automata, Languages, and Programming, {ICALP}
  2020, July 8-11, 2020, Saarbr{\"{u}}cken, Germany (Virtual Conference)},
  volume 168 of {\em LIPIcs}, pages 47:1--47:20. Schloss Dagstuhl -
  Leibniz-Zentrum f{\"{u}}r Informatik, 2020.
\newblock \href {http://dx.doi.org/10.4230/LIPIcs.ICALP.2020.47}
  {\path{doi:10.4230/LIPIcs.ICALP.2020.47}}.

\bibitem[Fil21]{Fil21}
A.~Filtser.
\newblock Hop-constrained metric embeddings and their applications.
\newblock In {\em 62nd {IEEE} Annual Symposium on Foundations of Computer
  Science, {FOCS} 2021, Denver, CO, USA, February 7-10, 2022}, pages 492--503.
  {IEEE}, 2021.
\newblock \href {http://dx.doi.org/10.1109/FOCS52979.2021.00056}
  {\path{doi:10.1109/FOCS52979.2021.00056}}.

\bibitem[Fis95]{Fisher95}
M.~Fisher.
\newblock Chapter 1 vehicle routing.
\newblock In {\em Handbooks in Operations Research and Management Science},
  pages 1--33. Elsevier, 1995.
\newblock \href {http://dx.doi.org/10.1016/s0927-0507(05)80105-7}
  {\path{doi:10.1016/s0927-0507(05)80105-7}}.

\bibitem[FKM15]{FNM15}
K.~Fox, P.~N. Klein, and S.~Mozes.
\newblock A polynomial-time bicriteria approximation scheme for planar
  bisection.
\newblock In {\em Proceedings of the 47th annual {ACM} symposium on Theory of
  Computing}, 2015.
\newblock \href {http://dx.doi.org/10.1145/2746539.2746564}
  {\path{doi:10.1145/2746539.2746564}}.

\bibitem[FKS19]{FKS19}
E.~Fox{-}Epstein, P.~N. Klein, and A.~Schild.
\newblock Embedding planar graphs into low-treewidth graphs with applications
  to efficient approximation schemes for metric problems.
\newblock In {\em Proceedings of the 30th Annual ACM-SIAM Symposium on Discrete
  Algorithms}, SODA `19, page 1069–1088, 2019.
\newblock \href {http://dx.doi.org/10.1137/1.9781611975482.66}
  {\path{doi:10.1137/1.9781611975482.66}}.

\bibitem[FL21]{FL21}
A.~Filtser and H.~Le.
\newblock Clan embeddings into trees, and low treewidth graphs.
\newblock In S.~Khuller and V.~V. Williams, editors, {\em {STOC} '21: 53rd
  Annual {ACM} {SIGACT} Symposium on Theory of Computing, Virtual Event, Italy,
  June 21-25, 2021}, pages 342--355. {ACM}, 2021.
\newblock Full version at \url{https://arxiv.org/abs/2101.01146}.
\newblock \href {http://dx.doi.org/10.1145/3406325.3451043}
  {\path{doi:10.1145/3406325.3451043}}.

\bibitem[FL22]{FL22}
A.~Filtser and H.~Le.
\newblock Reliable spanners: Locality-sensitive orderings strike back.
\newblock In A.~Gupta, editor, {\em {STOC} '22: 54rd Annual {ACM} {SIGACT}
  Symposium on Theory of Computing, June 20-24, 2022}. {ACM}, 2022.
\newblock to appear, see \href{https://arxiv.org/abs/2101.07428}{full version}.

\bibitem[FLRS11]{FLRS11}
F.~V. Fomin, D.~Lokshtanov, V.~Raman, and S.~Saurabh.
\newblock Bidimensionality and eptas.
\newblock In {\em Proceedings of the 22nd Annual ACM-SIAM Symposium on Discrete
  Algorithms}, SODA '11, page 748–759, 2011.

\bibitem[FLSZ18]{FLSZ18}
F.~V. Fomin, D.~Lokshtanov, S.~Saurabh, and M.~Zehavi.
\newblock {\em Kernelization}.
\newblock Cambridge University Press, December 2018.
\newblock \href {http://dx.doi.org/10.1017/9781107415157}
  {\path{doi:10.1017/9781107415157}}.

\bibitem[FRT04]{FRT04}
J.~Fakcharoenphol, S.~Rao, and K.~Talwar.
\newblock A tight bound on approximating arbitrary metrics by tree metrics.
\newblock {\em J. Comput. Syst. Sci.}, 69(3):485--497, November 2004.
\newblock preliminary version published in STOC 2003.
\newblock \href {http://dx.doi.org/10.1016/j.jcss.2004.04.011}
  {\path{doi:10.1016/j.jcss.2004.04.011}}.

\bibitem[GKK17]{GKK17}
L.~Gottlieb, A.~Kontorovich, and R.~Krauthgamer.
\newblock Efficient regression in metric spaces via approximate lipschitz
  extension.
\newblock {\em {IEEE} Trans. Inf. Theory}, 63(8):4838--4849, 2017.
\newblock preliminary version published in SIMBAD 2013.
\newblock \href {http://dx.doi.org/10.1109/TIT.2017.2713820}
  {\path{doi:10.1109/TIT.2017.2713820}}.

\bibitem[GKP95]{GKP95}
M.~Grigni, E.~Koutsoupias, and C.~H. Papadimitriou.
\newblock An approximation scheme for planar graph {TSP}.
\newblock In {\em 36th Annual Symposium on Foundations of Computer Science,
  Milwaukee, Wisconsin, USA, 23-25 October 1995}, pages 640--645, 1995.
\newblock \href {http://dx.doi.org/10.1109/SFCS.1995.492665}
  {\path{doi:10.1109/SFCS.1995.492665}}.

\bibitem[GNRS04]{GNRS04}
A.~Gupta, I.~Newman, Y.~Rabinovich, and A.~Sinclair.
\newblock Cuts, trees and l\({}_{\mbox{1}}\)-embeddings of graphs.
\newblock {\em Comb.}, 24(2):233--269, 2004.
\newblock \href {http://dx.doi.org/10.1007/s00493-004-0015-x}
  {\path{doi:10.1007/s00493-004-0015-x}}.

\bibitem[HBK{\etalchar{+}}03]{HBKKW03}
E.~Halperin, J.~Buhler, R.~M. Karp, R.~Krauthgamer, and B.~Westover.
\newblock {Detecting protein sequence conservation via metric embeddings}.
\newblock {\em Bioinformatics}, 19(suppl 1):i122--i129, 07 2003.
\newblock \href {http://dx.doi.org/10.1093/bioinformatics/btg1016}
  {\path{doi:10.1093/bioinformatics/btg1016}}.

\bibitem[HHZ21a]{HHZ21copies}
B.~Haeupler, D.~E. Hershkowitz, and G.~Zuzic.
\newblock Deterministic tree embeddings with copies for algorithms against
  adaptive adversaries.
\newblock {\em CoRR}, abs/2102.05168, 2021.
\newblock \href {http://arxiv.org/abs/2102.05168} {\path{arXiv:2102.05168}}.

\bibitem[HHZ21b]{HHZ21hop}
B.~Haeupler, D.~E. Hershkowitz, and G.~Zuzic.
\newblock Tree embeddings for hop-constrained network design.
\newblock In S.~Khuller and V.~V. Williams, editors, {\em {STOC} '21: 53rd
  Annual {ACM} {SIGACT} Symposium on Theory of Computing, Virtual Event, Italy,
  June 21-25, 2021}, pages 356--369. {ACM}, 2021.
\newblock \href {http://dx.doi.org/10.1145/3406325.3451053}
  {\path{doi:10.1145/3406325.3451053}}.

\bibitem[HK85]{HR85}
M.~Haimovich and A.~H.~G.~R. Kan.
\newblock Bounds and heuristics for capacitated routing problems.
\newblock {\em Mathematics of Operations Research}, 10(4):527--542, 1985.
\newblock \href {http://dx.doi.org/10.2307/3689422}
  {\path{doi:10.2307/3689422}}.

\bibitem[HKRS97]{HKRS97}
M.~R. Henzinger, P.~Klein, S.~Rao, and S.~Subramanian.
\newblock Faster shortest-path algorithms for planar graphs.
\newblock {\em Journal of Computer and System Sciences}, 55(1):3--23, 1997.
\newblock \href {http://dx.doi.org/10.1006/jcss.1997.1493}
  {\path{doi:10.1006/jcss.1997.1493}}.

\bibitem[HT84]{HT84}
D.~Harel and R.~E. Tarjan.
\newblock Fast algorithms for finding nearest common ancestors.
\newblock {\em {SIAM} J. Comput.}, 13(2):338--355, 1984.
\newblock \href {http://dx.doi.org/10.1137/0213024}
  {\path{doi:10.1137/0213024}}.

\bibitem[JS22]{JS22}
A.~Jayaprakash and M.~R. Salavatipour.
\newblock Approximation schemes for capacitated vehicle routing on graphs of
  bounded treewidth, bounded doubling, or highway dimension.
\newblock In {\em Proceedings of the 2022 Annual {ACM}-{SIAM} Symposium on
  Discrete Algorithms}, SODA '22, pages 877--893. 2022.
\newblock \href {http://dx.doi.org/10.1137/1.9781611977073.37}
  {\path{doi:10.1137/1.9781611977073.37}}.

\bibitem[KKM{\etalchar{+}}12]{KKMPT12}
M.~Khan, F.~Kuhn, D.~Malkhi, G.~Pandurangan, and K.~Talwar.
\newblock Efficient distributed approximation algorithms via probabilistic tree
  embeddings.
\newblock {\em Distributed Comput.}, 25(3):189--205, 2012.
\newblock preliminary version published in PODC 2008.
\newblock \href {http://dx.doi.org/10.1007/s00446-012-0157-9}
  {\path{doi:10.1007/s00446-012-0157-9}}.

\bibitem[KKR12]{KKR12}
K.~Kawarabayashi, Y.~Kobayashi, and B.~Reed.
\newblock The disjoint paths problem in quadratic time.
\newblock {\em Journal of Combinatorial Theory, Series B}, 102(2):424--435,
  2012.
\newblock \href {http://dx.doi.org/10.1016/j.jctb.2011.07.004}
  {\path{doi:10.1016/j.jctb.2011.07.004}}.

\bibitem[KKS11]{KKS11}
K.~Kawarabayashi, P.~N. Klein, and C.~Sommer.
\newblock Linear-space approximate distance oracles for planar, bounded-genus
  and minor-free graphs.
\newblock In {\em The 38th International Colloquium on Automata, Languages and
  Programming}, ICALP `11, pages 135--146, 2011.
\newblock \href {http://dx.doi.org/10.1007/978-3-642-22006-7_12}
  {\path{doi:10.1007/978-3-642-22006-7_12}}.

\bibitem[Kle02]{Klein02}
P.~N. Klein.
\newblock Preprocessing an undirected planar network to enable fast approximate
  distance queries.
\newblock In {\em Proceedings of the Thirteenth Annual {ACM-SIAM} Symposium on
  Discrete Algorithms, January 6-8, 2002, San Francisco, CA, {USA}}, pages
  820--827, 2002.
\newblock See: \href{http://dl.acm.org/citation.cfm?id=545381.545488}{here}.

\bibitem[Kle05]{Klein05}
P.~N. Klein.
\newblock A linear-time approximation scheme for planar weighted {TSP}.
\newblock In {\em Proceedings of the 46th Annual IEEE Symposium on Foundations
  of Computer Science}, FOCS '05, pages 647--657, 2005.
\newblock \href {http://dx.doi.org/10.1109/SFCS.2005.7}
  {\path{doi:10.1109/SFCS.2005.7}}.

\bibitem[KLMN05]{KLMN04}
R.~Krauthgamer, J.~R. Lee, M.~Mendel, and A.~Naor.
\newblock Measured descent: a new embedding method for finite metrics.
\newblock {\em Geometric and Functional Analysis}, 15(4):839--858, 2005.
\newblock preliminary version published in FOCS 2004.
\newblock \href {http://dx.doi.org/10.1007/s00039-005-0527-6}
  {\path{doi:10.1007/s00039-005-0527-6}}.

\bibitem[KLP19]{KLP19}
I.~Katsikarelis, M.~Lampis, and V.~T. Paschos.
\newblock Structural parameters, tight bounds, and approximation for (k,
  r)-center.
\newblock {\em Discret. Appl. Math.}, 264:90--117, 2019.
\newblock preliminary version published in ISAAC 2017.
\newblock \href {http://dx.doi.org/10.1016/j.dam.2018.11.002}
  {\path{doi:10.1016/j.dam.2018.11.002}}.

\bibitem[KLP20]{KLP20}
I.~Katsikarelis, M.~Lampis, and V.~T. Paschos.
\newblock Structurally parameterized d-scattered set.
\newblock {\em Discrete Applied Mathematics}, 2020.
\newblock preliminary version published in WG 2018.
\newblock \href {http://dx.doi.org/10.1016/j.dam.2020.03.052}
  {\path{doi:10.1016/j.dam.2020.03.052}}.

\bibitem[Kos82]{Kostochka82}
A.~V. Kostochka.
\newblock The minimum {H}adwiger number for graphs with a given mean degree of
  vertices.
\newblock {\em Metody Diskret. Analiz.}, (38):37--58, 1982.
\newblock \href {http://dx.doi.org/10.1090/trans2/132}
  {\path{doi:10.1090/trans2/132}}.

\bibitem[LLR95]{LLR95}
N.~Linial, E.~London, and Y.~Rabinovich.
\newblock The geometry of graphs and some of its algorithmic applications.
\newblock {\em Comb.}, 15(2):215--245, 1995.
\newblock preliminary version published in FOCS 1994.
\newblock \href {http://dx.doi.org/10.1007/BF01200757}
  {\path{doi:10.1007/BF01200757}}.

\bibitem[MN07]{MN07}
M.~Mendel and A.~Naor.
\newblock Ramsey partitions and proximity data structures.
\newblock {\em Journal of the European Mathematical Society}, 9(2):253--275,
  2007.
\newblock preliminary version published in FOCS 2006.
\newblock \href {http://dx.doi.org/10.4171/JEMS/79}
  {\path{doi:10.4171/JEMS/79}}.

\bibitem[MP15]{MP15}
D.~Marx and M.~Pilipczuk.
\newblock Optimal parameterized algorithms for planar facility location
  problems using voronoi diagrams.
\newblock In {\em Algorithms - {ESA} 2015 - 23rd Annual European Symposium,
  Patras, Greece, September 14-16, 2015, Proceedings}, pages 865--877, 2015.
\newblock \href {http://dx.doi.org/10.1007/978-3-662-48350-3\_72}
  {\path{doi:10.1007/978-3-662-48350-3\_72}}.

\bibitem[MZ21]{MZ21}
C.~Mathieu and H.~Zhou.
\newblock A {PTAS} for capacitated vehicle routing on trees, 2021.
\newblock \href {http://dx.doi.org/10.48550/ARXIV.2111.03735}
  {\path{doi:10.48550/ARXIV.2111.03735}}.

\bibitem[NPSW22]{NPSW22}
J.~Nederlof, M.~Pilipczuk, C.~M.~F. Swennenhuis, and K.~Wegrzycki.
\newblock {Isolation Schemes for Problems on Decomposable Graphs}.
\newblock In {\em 39th International Symposium on Theoretical Aspects of
  Computer Science (STACS 2022)}, Leibniz International Proceedings in
  Informatics (LIPIcs), pages 50:1--50:20, 2022.
\newblock Full version at \url{https://arxiv.org/abs/2105.01465}.

\bibitem[NS07]{NS07}
G.~Narasimhan and M.~H.~M. Smid.
\newblock {\em Geometric spanner networks}.
\newblock Cambridge University Press, 2007.
\newblock \href {http://dx.doi.org/10.1017/CBO9780511546884}
  {\path{doi:10.1017/CBO9780511546884}}.

\bibitem[NT12]{NT12}
A.~Naor and T.~Tao.
\newblock Scale-oblivious metric fragmentation and the nonlinear dvoretzky
  theorem.
\newblock {\em Israel Journal of Mathematics}, 192(1):489--504, 2012.
\newblock \href {http://dx.doi.org/10.1007/s11856-012-0039-7}
  {\path{doi:10.1007/s11856-012-0039-7}}.

\bibitem[PY93]{PY93}
C.~H. Papadimitriou and M.~Yannakakis.
\newblock The {T}raveling {S}alesman {P}roblem with distances one and two.
\newblock {\em Mathematics of Operations Research}, 18(1):1--11, 1993.
\newblock \href {http://dx.doi.org/10.2307/3690150}
  {\path{doi:10.2307/3690150}}.

\bibitem[Rao99]{Rao99}
S.~Rao.
\newblock Small distortion and volume preserving embeddings for planar and
  {E}uclidean metrics.
\newblock In {\em Proceedings of the Fifteenth Annual Symposium on
  Computational Geometry, Miami Beach, Florida, USA, June 13-16, 1999}, pages
  300--306, 1999.
\newblock \href {http://dx.doi.org/10.1145/304893.304983}
  {\path{doi:10.1145/304893.304983}}.

\bibitem[RS86]{RS86}
N.~Robertson and P.~Seymour.
\newblock Graph minors. v. excluding a planar graph.
\newblock {\em Journal of Combinatorial Theory, Series B}, 41(1):92--114, 1986.
\newblock \href
  {http://dx.doi.org/https://doi.org/10.1016/0095-8956(86)90030-4}
  {\path{doi:https://doi.org/10.1016/0095-8956(86)90030-4}}.

\bibitem[RS03]{RS03}
N.~Robertson and P.~D. Seymour.
\newblock Graph minors. {XVI}. {E}xcluding a non-planar graph.
\newblock {\em Journal of Combinatoral Theory Series B}, 89(1):43--76, 2003.
\newblock \href {http://dx.doi.org/10.1016/S0095-8956(03)00042-X}
  {\path{doi:10.1016/S0095-8956(03)00042-X}}.

\bibitem[RST94]{NST94}
N.~Robertson, P.~Seymour, and R.~Thomas.
\newblock Quickly excluding a planar graph.
\newblock 62(2):323--348, 1994.
\newblock \href {http://dx.doi.org/10.1006/jctb.1994.1073}
  {\path{doi:10.1006/jctb.1994.1073}}.

\bibitem[Sol11]{Solomon11}
S.~Solomon.
\newblock An optimal-time construction of sparse euclidean spanners with tiny
  diameter.
\newblock In D.~Randall, editor, {\em Proceedings of the Twenty-Second Annual
  {ACM-SIAM} Symposium on Discrete Algorithms, {SODA} 2011, San Francisco,
  California, USA, January 23-25, 2011}, pages 820--839. {SIAM}, 2011.
\newblock \href {http://dx.doi.org/10.1137/1.9781611973082.65}
  {\path{doi:10.1137/1.9781611973082.65}}.

\bibitem[SV88]{SV88}
B.~Schieber and U.~Vishkin.
\newblock On finding lowest common ancestors: Simplification and
  parallelization.
\newblock {\em {SIAM} Journal on Computing}, 17(6):1253--1262, 1988.
\newblock \href {http://dx.doi.org/10.1137/0217079}
  {\path{doi:10.1137/0217079}}.

\bibitem[Tal04]{Tal04}
K.~Talwar.
\newblock Bypassing the embedding: algorithms for low dimensional metrics.
\newblock In {\em Proceedings of the 36th Annual {ACM} Symposium on Theory of
  Computing, Chicago, IL, USA, June 13-16, 2004}, pages 281--290, 2004.
\newblock \href {http://dx.doi.org/10.1145/1007352.1007399}
  {\path{doi:10.1145/1007352.1007399}}.

\bibitem[Tho84]{Thomason84}
A.~Thomason.
\newblock An extremal function for contractions of graphs.
\newblock {\em Mathematical Proceedings of the Cambridge Philosophical
  Society}, 95(2):261--265, 1984.
\newblock \href {http://dx.doi.org/10.1017/s0305004100061521}
  {\path{doi:10.1017/s0305004100061521}}.

\bibitem[Tho97]{Thorup97}
M.~Thorup.
\newblock Parallel shortcutting of rooted trees.
\newblock {\em Journal of Algorithms}, 23(1):139--159, 1997.
\newblock \href {http://dx.doi.org/10.1006/jagm.1996.0829}
  {\path{doi:10.1006/jagm.1996.0829}}.

\bibitem[Tho04]{Thorup04}
M.~Thorup.
\newblock Compact oracles for reachability and approximate distances in planar
  digraphs.
\newblock {\em Journal of the ACM}, 51(6):993–1024, 2004.
\newblock \href {http://dx.doi.org/10.1145/1039488.1039493}
  {\path{doi:10.1145/1039488.1039493}}.

\bibitem[You12]{Young12}
On classes of summable functions and their fourier series.
\newblock {\em Proceedings of the Royal Society of London. Series A, Containing
  Papers of a Mathematical and Physical Character}, 87(594):225--229, 1912.
\newblock \href {http://dx.doi.org/10.1098/rspa.1912.0076}
  {\path{doi:10.1098/rspa.1912.0076}}.

\end{thebibliography}

\end{document}